\documentclass[enabledeprecatedfontcommands,
11pt, a4paper
]{scrartcl}

\usepackage[utf8]{inputenc}
\usepackage[T1]{fontenc}
\usepackage{parskip}
\usepackage[scaled=.85]{roboto}
\usepackage{titlesec}

\titlespacing\section{0pt}{12pt plus 4pt minus 2pt}{0pt plus 2pt minus 2pt}
\titlespacing\subsection{0pt}{12pt plus 4pt minus 2pt}{0pt plus 2pt minus 2pt}
\titlespacing\subsubsection{0pt}{12pt plus 4pt minus 2pt}{0pt plus 2pt minus 2pt}

\usepackage[style=authoryear, backend=biber, maxcitenames=2, maxnames=2,maxbibnames=6, giveninits=true, uniquename=false,uniquelist=false,
			sortcites=true,doi=false,url=false, bibencoding=utf8]{biblatex} 
\DeclareLanguageMapping{english}{english-apa}
\renewbibmacro{in:}{}
\usepackage{csquotes}
\usepackage{microtype}

\usepackage{amsmath}
\usepackage{mathtools}
\usepackage{amssymb}
\usepackage{mathrsfs} 
\usepackage{amsthm}
\usepackage{dsfont}
\allowdisplaybreaks

\newcommand\numberthis{\addtocounter{equation}{1}\tag{\theequation}}


\usepackage{xcolor}
\usepackage{colortbl}
\definecolor{lightgray}{gray}{0.8}
\definecolor{lightblue}{RGB}{0, 51, 153}
\definecolor{lightgreen}{RGB}{184,255,176}
\definecolor{marineblue}{RGB}{79, 118, 181}
\usepackage[colorlinks, linkcolor=lightblue, urlcolor=lightblue, bookmarks=false, citecolor=marineblue]{hyperref}

\newtheoremstyle{stheorem}
{11pt}
{11pt}
{\itshape}
{}
{\bfseries}
{.}
{11pt}
{}

\newtheoremstyle{sdef}
{11pt}
{11pt}
{}
{}
{\bfseries}
{.}
{11pt}
{}

\newtheoremstyle{srem}
{11pt}
{11pt}
{\slshape}
{}
{\bfseries\slshape}
{.}
{11pt}
{}

\newtheoremstyle{sex}
{21pt}
{21pt}
{\slshape}
{}
{\sffamily\bfseries}
{.}
{11pt}
{}

\theoremstyle{stheorem}
\newtheorem{theorem}{Theorem}
\newtheorem{lemma}{Lemma}
\newtheorem{proposition}{Proposition}
\newtheorem{corollary}{Corollary}
\theoremstyle{srem}
\newtheorem{remark}{Remark}

\theoremstyle{sdef}
\newtheorem{definition}{Definition}
\newtheorem{assumption}{Assumption}

\usepackage{authblk}


\usepackage{fancyhdr}
\def\T{{ \mathrm{\scriptscriptstyle T} }}

\newcommand{\pr}{\mathrm{pr}}

\title{Large Sample Asymptotics of the Pseudo-Marginal Method}
\author[1]{S. M. Schmon}
\author[1]{G. Deligiannidis}
\author[1]{A. Doucet}
\author[2]{M. K. Pitt}
\affil[1]{Department of Statistics, University of Oxford, UK}
\affil[2]{Department of Mathematics, King's College London, UK}

\setkomafont{author}{\Large\sffamily}
\setkomafont{date}{\large}

\begin{document}

\maketitle

\begin{abstract}
The pseudo-marginal algorithm is a variant of the Metropolis--Hastings algorithm which samples asymptotically from a probability distribution when it is only possible to estimate unbiasedly an unnormalized version of its density. Practically, one has to trade-off the computational resources used to obtain this estimator against the asymptotic variances of the ergodic averages obtained by the pseudo-marginal algorithm.
Recent works optimizing this trade-off rely on some strong assumptions which can cast doubts over their practical relevance. In particular, they all assume that the distribution of the difference between the log-density and its estimate is independent of the parameter value at which it is evaluated. Under regularity conditions we show here that, as the number of data points tends to infinity, a space-rescaled version
of the pseudo-marginal chain converges weakly towards another pseudo-marginal chain for which this assumption indeed holds. A study of this limiting chain allows us to provide parameter dimension-dependent guidelines on how to optimally scale a normal random walk proposal and the number of Monte Carlo samples for the pseudo-marginal method in the large-sample regime. This complements and validates currently available results.
\end{abstract}

\begin{refsection} 

\section{Introduction}

The pseudo-marginal algorithm is a variant of the popular Metropolis--Hastings algorithm where an unnormalized version of the target density is replaced by a non-negative unbiased estimate. The algorithm first appeared in the physics literature \autocite{linliuSloan2000} and has become popular in Bayesian statistics as many intractable
likelihood functions can be estimated unbiasedly using importance
sampling or particle filters \autocite{beaumont2003estimation,andrieu2009pseudo,andrieu:doucet:holenstein2010}.

Replacing the true likelihood in the Metropolis-Hastings algorithm with an estimate results in a trade-off: the asymptotic variance of an ergodic average of a pseudo-marginal chain typically decreases as the number of Monte Carlo samples, $N$, used to obtain the likelihood estimator increases, as established by \textcite{andrieuvihola2016} for importance sampling estimators; however, this comes at the cost of a higher computational burden.
An important task in practice is thus to choose $N$ such that the computational resources required to obtain a given asymptotic variance are minimized.
This problem has already been investigated by \textcite{PittSilvaGiordaniKohn}, \textcite{doucet2015efficient} and \textcite{Sherlock2015efficiency} where guidelines
have been obtained under various assumptions either on the proposal \autocite{PittSilvaGiordaniKohn,doucet2015efficient} or on the proposal and target distribution \autocite{Sherlock2015efficiency}.

Additionally, all these contributions make the assumption that the noise in the log-likelihood estimator, that is the difference between this estimator and the true log-likelihood, is Gaussian with variance inversely proportional to $N$,
its mean and variance being independent of the parameter value at
which it is evaluated. A similar assumption has also been used by \textcite{Nemeth2016} for the analysis of a related algorithm.
This assumption can cast doubts over the practical relevance of the guidelines provided in these contributions.
The normal noise assumption was motivated by \textcite{PittSilvaGiordaniKohn}, \textcite{doucet2015efficient} and \textcite{Sherlock2015efficiency}
by the fact that the error in the log-likelihood estimator for state-space
models computed using a particle filter is asymptotically normal of
variance proportional to $\gamma$ as $T\rightarrow\infty$ with $N=T/\gamma$ \autocite{berarddelmoraldoucet2014} while the constant variance assumption over the parameter space was motivated in \textcite{PittSilvaGiordaniKohn} and \textcite{doucet2015efficient}
by the fact that the posterior typically concentrates as $T$ increases.
However, no formal argument justifying why the pseudo-marginal chain would behave as a Markov chain for which these assumptions hold has
been provided.

We carry out here an original weak convergence analysis
of the pseudo-marginal algorithm in a Bayesian setting which not only justifies these assumptions
but also allows us to obtain novel guidelines on how to optimally tune this algorithm as a function of the parameter dimension $d$.
Weak convergence techniques have become very popular in the Markov chain Monte Carlo literature since their introduction in the
seminal paper of \textcite{RobertsGelmanGilks1997}. To the recent exception of \textcite{deligiannidis2015}, all these analyses have been performed in the asymptotic regime where the parameter dimension $d\rightarrow\infty$. Results of this type typically require making strong structural assumptions on the target distribution such as having $d$ independent and identically distributed components as in \textcite{Sherlock2015efficiency}. We analyse here the pseudo-marginal scheme in the large-sample asymptotic regime where the number of data  points $T$ goes to infinity while $d$ is fixed. Under weak regularity conditions, we show that a space-rescaled version of the pseudo-marginal chain converges to a pseudo-marginal chain targeting a normal distribution for which the noise in the log-likelihood estimator is indeed also normal of constant mean and variance.
We provide numerical results to optimally scale normal random walk proposals and the noise variance to optimize the performance of this limiting Markov chain as a function of $d$.
These guidelines complement and validate the results obtained in \textcite{doucet2015efficient} and \textcite{Sherlock2015efficiency}. All proofs can be found in the supplementary material.

\section{The Pseudo-Marginal Algorithm\label{sec:PseudoMarginal}}

\subsection{Background\label{sec:ModelKernel}}
Consider a Bayesian model on the Borel space $\left\{\Theta,\mathcal{B}\left(\Theta\right)\right\}$
where $\Theta\subseteq\mathbb{R}{}^{d}$. The parameter $\theta\in\Theta$
follows a prior distribution $p(\mathrm{d}\theta)$ while  $\theta\mapsto p(y\mid\theta)$ denotes the likelihood
function, where
$y=(y_{1},\ldots,y_{T})$ denotes the vector of observations. When
the likelihood arises from a complex latent variable model an analytic expression of $p(y\mid \theta)$ might not be available. Hence, the standard Metropolis\textendash Hastings algorithm cannot be used to sample the posterior distribution
$p(\mathrm{d}\theta\mid y)\propto p(\mathrm{d}\theta)\thinspace p(y\mid\theta)$
as the likelihood ratio $p(y\mid\theta^{\prime})/p(y\mid\theta)$
appearing in the Metropolis--Hastings acceptance probability, when at parameter $\theta$
and proposing $\theta^{\prime}$, cannot be computed.
Assume we have access to an unbiased positive estimator $\hat{p}(y\mid\theta,U)$
of the intractable likelihood $p(y\mid\theta)$, where $U\sim m_{\theta}$
represents the auxiliary variables on $\left\{\mathcal{U},\mathcal{B\left(U\right)}\right\}$
used to compute this estimator. We introduce the following probability
measure on $\left\{\Theta\times\mathcal{U},\mathcal{B}\left(\Theta\right)\times\mathcal{B\left(\mathbb{\mathcal{U}}\right)}\right\}$
\[
\pi(\mathrm{d}\theta,\mathrm{d}u)=p(\mathrm{d}\theta\mid y)\frac{\hat{p}(y\mid\theta,u)}{p(y\mid\theta)}m_{\theta}\left(\mathrm{d}u\right),
\]
which satisfies $\pi(\mathrm{d}\theta)=p(\mathrm{d}\theta\mid y)$. The pseudo-marginal algorithm is a Metropolis--Hastings scheme targeting $\pi(\mathrm{d}\theta,\mathrm{d}u)$,
hence marginally $p(\mathrm{d}\theta\mid y)$, using a proposal distribution
$Q\left(\theta,u;\mathrm{d}\theta',\mathrm{d}u'\right)=q(\theta,\mathrm{d}\theta')m_{\theta'}\left(\mathrm{d}u'\right)$.
This yields the acceptance probability
\begin{equation*}
\alpha(\theta,u;\theta',u')=\min\left\{ 1,r(\theta,\theta')\frac{\hat{p}(y\mid\theta',u')/p(y\mid\theta')}{\hat{p}(y\mid\theta,u)/p(y\mid\theta)}\right\},\:\text{where}~~  r(\theta,\theta')=\frac{\pi(\mathrm{d}\theta')}{\pi(\mathrm{d}\theta)}\frac{q(\theta',\mathrm{d}\theta)}{q(\theta,\mathrm{d}\theta')}. 
\end{equation*}
As in previous contributions \autocite{andrieu2009pseudo,PittSilvaGiordaniKohn,andrieuvihola2015,doucet2015efficient,Sherlock2015efficiency},
we analyse the pseudo-marginal algorithm using additive noise in the log-likelihood estimator, writing $Z(\theta)=\mathrm{log}\thinspace\hat{p}(y\mid\theta,U)-\mathrm{log}\thinspace p(y\mid\theta)$. This parameterization allows us to write the target distribution as a measure on $\left\{\Theta\times\mathbb{R},\mathcal{B}\left(\Theta\right)\times\mathcal{B\left(\mathbb{R}\right)}\right\}$
with
\[
\pi(\mathrm{d}\theta,\mathrm{d}z)=p(\mathrm{d}\theta\mid y)\mathrm{exp}\left(z\right)g\left(\mathrm{d}z\mid\theta\right),
\]
where $Z(\theta)\sim g\left(\cdot \mid\theta\right)$ when $U\sim m_{\theta}$ and the pseudo-marginal kernel is
\[
P\left(\theta,z;\mathrm{d}\theta',\mathrm{d}z'\right)=q(\theta,\mathrm{d}\theta')g(\mathrm{d}z'\mid\theta')\alpha\big(\theta,z;\theta',z'\big)+\rho(\theta,z)\delta_{(\theta,z)}(\mathrm{d}\theta',\mathrm{d}z'),
\]
with acceptance probability
\begin{align*}
\alpha\big(\theta,z;\theta',z'\big) & =\min\left\{ 1,r(\theta,\theta')\mathrm{exp}\left(z'-z\right)\right\} ,
\end{align*}
and corresponding rejection probability $\rho(\theta,z)$.

\subsection{Literature review\label{sec:related}}

We review here recent research motivating this work. To this
end, we need to introduce a few additional notations. Let $\mu$
be a probability measure on $\left\{\mathbb{R}^{n},\mathcal{B}(\mathbb{R}^{n})\right\}$
and ${\Pi\colon\mathbb{R}^{n}\times\mathcal{B}(\mathbb{R}^{n})\rightarrow[0,1]}$
a Markov transition kernel. For any measurable function $f$ and measurable
set $A$, we write $\mu(f)=\int f(x)\mu(\mathrm{d}x)$, $\mu(A)=\mu\left\{\mathbb{I}_{A}\left(\cdot\right)\right\}$ and ${\Pi f\left(x\right)=\int\Pi\left(x,\mathrm{d}y\right)f\left(y\right)}$.
We consider the Hilbert space $L^{2}(\mu)$ with inner product ${\langle f,g\rangle_{\mu}=\int f(x)g(x)\mu(\mathrm{d}x)}$.
For a function $f\in L^{2}(\mu)$, the asymptotic variance of averages
of a stationary Markov chain $\left(X_{k}\right)_{k\geqslant1} $ of $\mu$-invariant
transition kernel $\Pi$ is defined as
\begin{equation*}
\mathrm{var}(f,\Pi)=\lim_{M\rightarrow\infty}\frac{1}{M}E\left\{\sum_{k=1}^{M}f(X_k)-\mu(f)\right\}^2,
\end{equation*}
and $\mathrm{var}(f,\Pi)=\mathrm{var}_\mu(f)\thinspace\textsc{iat}(f,\Pi)$ when
the integrated autocorrelation time given by
\begin{equation*}
\textsc{iat}(f,\Pi)=1+2\sum_{k=1}^{\infty}\frac{\mathrm{cov}\left\{f(X_0), f(X_k)\right\}}{\mathrm{var}\left\{f(X_0) \right\}}
\end{equation*}
is finite. We denote by $\varphi(x; m,\Lambda)$ the normal density of argument $x$, mean $m$ and covariance $\Lambda$.

In order to obtain guidelines to balance computational cost and accuracy of the likelihood estimator \textcite{PittSilvaGiordaniKohn}, \textcite{doucet2015efficient} and \textcite{Sherlock2015efficiency} make the simplifying assumption that $g\left(\mathrm{d}z\mid\theta\right)=\varphi(\mathrm{d}z;-\sigma^{2}/2,\sigma^{2})$,
that $\sigma^{2}\propto1/N$, and focus on functions $f\in L^{2}(\pi)$
such that $f\left(\theta,z\right)=f\left(\theta,z'\right)$ for any
$z,z'$. Under these assumptions, it was first proposed by \textcite{PittSilvaGiordaniKohn} to minimize
\begin{equation}
\textsc{ct}(f,P_{\sigma})=\frac{\textsc{iat}(f,P_{\sigma})}{\sigma^{2}},\label{eq:ct}
\end{equation}
with respect to $\sigma$ where
\begin{equation}
P_{\sigma}\left(\theta,z;\mathrm{d}\theta',\mathrm{d}z'\right)=q(\theta,\mathrm{d}\theta')\varphi(\mathrm{d}z;-\sigma^{2}/2,\sigma^{2})\alpha\big(\theta,z;\theta',z'\big)+\rho_{\sigma}(\theta,z)\delta_{(\theta,z)}(\mathrm{d}\theta',\mathrm{d}z'),\label{eq:KernelPMMHlimit}
\end{equation}
$\rho_{\sigma}(\theta,z)$ being the corresponding rejection probability.
The criterion (\ref{eq:ct}) arises from the fact that the computational
time required to evaluate the likelihood is typically proportional
to $N$. Under the additional assumption that $q(\theta,\mathrm{d}\theta')=\pi(\mathrm{d}\theta')$,
the minimizer of $\textsc{ct}(f,P_{\sigma})$ is $\sigma=0$$\cdot$92 \autocite{PittSilvaGiordaniKohn}.
For general proposal distributions \textcite{doucet2015efficient} minimize upper bounds on $\textsc{ct}(f,P_{\sigma})$.
This results in guidelines stating that one should indeed select $\sigma$ around
1$\cdot$0 when the Metropolis--Hastings algorithm using the exact likelihood would provide
an estimator having a small integrated autocorrelation time and around 1$\cdot$7 when this autocorrelation time is
very large \autocite{doucet2015efficient}. In practical scenarios, the
integrated autocorrelation time of the Metropolis--Hastings algorithm using the exact likelihood is unknown and the results in \textcite{doucet2015efficient}
suggest to select $\sigma$ around 1$\cdot$2 as a robust default choice. A slightly different approach is taken by \textcite{Sherlock2015efficiency}.
In addition to similar noise assumptions, it is assumed that the posterior
factorizes into $d$ independent and identically distributed components
and that one uses an isotropic normal random walk proposal of jump
size proportional to $\ell$. In this context, one maximizes with respect to $(\sigma,\ell)$ the expected squared jump
distance associated to the pseudo-marginal sequence of the
first parameter component $(\vartheta_{1,k})_{k\geqslant0}$
divided by the noise variance as $d\rightarrow\infty$.
In this asymptotic regime, a time-rescaled version of $(\vartheta_{1,k})_{k\geqslant0} $ converges weakly to a diffusion process and the adequately rescaled expected squared jumping distance converges to the squared diffusion coefficient of this process. Maximizing this squared jump distance is asymptotically equivalent to minimizing $\textsc{ct}(f,P_{\sigma})$ irrespective of $f$  \autocite[see ][]{RobertsRosenthal2014} and its maximizing arguments are $\sigma=1$$\cdot$8 and $\ell=2\cdot$56 \autocite[Corollary 1]{Sherlock2015efficiency}.

In practice, the standard deviation of the log-likelihood estimator varies over the parameter space and one selects $N$ such that this standard deviation is approximately equal to the desired $\sigma$ for a parameter value around the mode of the posterior obtained through a preliminary run.

The strong assumptions made in those contributions can bring into question the merits of the guidelines provided within these papers.
Our original weak convergence analysis of the pseudo-marginal algorithm justifies this assumption in the large sample regime,  as $T\rightarrow\infty.$
This convergence occurs under fairly weak regularity assumptions on
the posterior distribution. The resulting limiting algorithms can be optimized to provide guidelines for random walk proposals without relying on any upper bound as in \textcite{doucet2015efficient}.

\section{Large Sample Asymptotics of the Pseudo-Marginal Algorithm\label{sec:WeakCV}}

\subsection{Notation and assumptions\label{sec:Assumptions}}
Our analysis of the pseudo-marginal algorithm relies on the assumption that the posterior concentrates (Assumption \ref{ass1}) which is most commonly formulated using convergence in probability with respect to the data distribution, denoted $\mathbb{P}^Y$.
For our result to hold under this weak assumption we take into account the randomness induced by the data, resulting in a random Markov chain and requiring us to deal with weak convergence of random probability measures. To make this more precise we introduce the following notation.

The observations $(Y_t)_{t\geqslant1}$ are regarded as
random variables defined on a probability space $\left\{\mathbb{\mathsf{Y}}^{\mathbb{N}},\mathcal{B}(\mathsf{Y})^{\mathbb{N}},\mathbb{P}^{Y}\right\}$,
where $\mathcal{B}(\mathsf{Y})^{\mathbb{N}}$ denotes the Borel $\sigma$-algebra
and we write $\Omega=\mathsf{Y}{}^{\mathbb{N}}$ for brevity. For
$T\geqslant1$ we can define the random variables $Y_{1:T}=\left(Y_{1},\ldots,Y_{T}\right)$
as the coordinate projections to $\mathsf{Y}^{T}$. Then, for $\omega=\left( y_{t}\right)_{t\geqslant1} \in\Omega$,
$\pi_{T}^{\omega}(\mathrm{d}\theta)=p(\mathrm{d}\theta\mid y_{1:T})$
denotes a regular version of the target posterior distribution and,
for any $\theta\in\Theta$, $g_{T}^{\omega}\left(\mathrm{d}z\mid\theta\right)$
the conditional distribution of the error in the log-likelihood estimator given observations $y_{1:T}$. The measures $\pi_{T}^\omega$ and $g_{T}^\omega$
can be interpreted as random measures. Relevant results for random
measures are briefly discussed in Section \ref{sec:proofoftheorem}
and in more detail in the supplementary material. In the following we will use a superscript $\omega$ to highlight that a certain quantity depends on the data.
All probability densities considered hereafter are with respect to the
Lebesgue measure and we use the same symbols for distributions and
densities, for example $\mu\left(\mathrm{d}\theta\right)=\mu\left(\theta\right)\mathrm{d}\theta$.

In this context, the target distribution of the pseudo-marginal algorithm
is
\begin{equation*}
\pi_{T}^{\omega}(\mathrm{d}\theta,\mathrm{d}z)=\pi_{T}^{\omega}(\mathrm{d}\theta)\mathrm{exp}\left(z\right)g_{T}^{\omega}\left(\mathrm{d}z\mid\theta\right),\label{eq:randomtarget}
\end{equation*}
and its transition kernel is
\begin{equation*}
P_{T}^{\omega}\left(\theta,z;\mathrm{d}\theta',\mathrm{d}z'\right)=q_{T}(\theta,\mathrm{d}\theta')g_{T}^{\omega}(\mathrm{d}z'\mid\theta')\alpha_{T}^{\omega}\big(\theta,z;\theta',z'\big)+\rho_{T}^{\omega}(\theta,z)\delta_{(\theta,z)}\left(\mathrm{d}\theta',\mathrm{d}z'\right),\label{eq:kernelPseudoMarginal}
\end{equation*}
where
\begin{align*}
\alpha_{T}^{\omega}\big(\theta,z;\theta',z'\big) & =\min\left\{ 1,\frac{\pi_{T}^{\omega}(\mathrm{d}\theta')}{\pi_{T}^{\omega}(\mathrm{d}\theta)}\frac{q_{T}(\theta',\mathrm{d}\theta)}{q_{T}(\theta,\mathrm{d}\theta')}\thinspace\mathrm{exp}\left(z'-z\right)\right\},
\end{align*}
$\rho_{T}^{\omega}(\theta,z)$ is the corresponding rejection
probability.

Our first assumption is that the posterior distributions concentrate
towards a normal at rate $1/\surd{T}$. We denote by $\mathcal{Y}_{T}$ the $\sigma$-algebra spanned by $Y_{1:T}$.
\begin{assumption}
\label{ass1} The posterior distributions $\{ \pi_{T}^{\omega}\left(\mathrm{d}\theta\right)\}_{T\geqslant1} $
admit Lebesgue densities and there exists a ${d\times d}$ positive definite
matrix $\Sigma$, a parameter value $\bar{\theta}\in\Theta$ and a
sequence $(\hat{\theta}_{T}^{\omega})_{T\geqslant1}$ of $\mathcal{Y}_{T}$-adapted random variables such that as $T\rightarrow\infty$
\begin{equation}
\int\left|\pi_{T}^{\omega}(\theta)-\varphi\big(\theta;\hat{\theta}_{T}^{\omega},\Sigma/T\big)\right|\mathrm{d}\theta\rightarrow 0,\quad\quad\hat{\theta}_{T}^{\omega}\rightarrow \bar{\theta},\label{eq:bernmis}
\end{equation}
both limits being in $\mathbb{P}^{Y}$-probability.
\end{assumption}

Assumption \ref{ass1} is satisfied if a Bernstein-von
Mises theorem holds; see  \textcite[Theorem 10.1]{vandervaart2000} and \textcite{kleijn2012}.
Our second assumption is that we use random walk proposal distributions
with appropriately scaled increments.
\begin{assumption}
\label{ass2} The proposal distributions $\{q_{T}(\theta,\mathrm{d}\theta')\}_{T\geqslant1}$
admit densities of the form
\[
q_{T}(\theta,\theta')=\surd{T}\nu\left\{\surd{T}(\theta'-\theta)\right\},
\]
where $\nu$ is a continuous density on $\mathrm{\mathbb{R}}^{d}$.
\end{assumption}

Finally, we assume that the error in the log-likelihood estimator
satisfies a central limit theorem conditional upon $\mathcal{Y}_{T}$
and that this convergence holds uniformly in a neighbourhood of $\bar{\theta}$.
\begin{assumption}
\label{ass3} There exists an $\varepsilon$\textup{-ball} $B(\bar{\theta})$
around $\bar{\theta}$ such that the distributions of the error in
the log-likelihood estimator $\left\{ g_{T}^{\omega}(\mathrm{d}z\mid\theta)\right\}_{T\geqslant1}$
satisfy as $T\rightarrow \infty$
\begin{equation}
\sup_{\theta\in B(\bar{\theta})}d_{\mathrm{BL}}\left[g_{T}^{\omega}\left(\,\cdot\mid\theta\right),\varphi\left\{\,\cdot\,;-\sigma^{2}(\theta)/2,\sigma^{2}(\theta)\right\} \right]\rightarrow 0,\quad \text{in } \mathbb{P}^Y\text{-probability}, \label{eq:uniformCLT}
\end{equation}
where $d_{\mathrm{BL}}(\cdot,\cdot )$ denotes the bounded Lipschitz metric and the function $\sigma\colon\Theta\rightarrow[0,\infty)$ is continuous at $\bar{\theta}$ with $0<\sigma(\bar{\theta})<\infty$.
An analogous result holds for $\bar{g}_{T}^{\omega}(\mathrm{d} z\mid\theta)=\exp(z)g_{T}^{\omega}(\mathrm{d}z\mid\theta)$,
the distribution of this error at equilibrium, that is as $T\rightarrow \infty$
\begin{equation}
\sup_{\theta\in B(\bar{\theta})}d_{\mathrm{BL}}\left[\bar{g}_{T}^{\omega}\left(\,\cdot\mid\theta\right),\varphi\left\{\cdot;\sigma^{2}(\theta)/2,\sigma^{2}(\theta)\right\}\right] \rightarrow 0\quad \text{in } \mathbb{P}^Y\text{-probability.}\label{eq:uniformCLTequilibrium}
\end{equation}
\end{assumption}

We will refer to convergence in probability with respect to the bounded Lipschitz metric as weak convergence in probability.
In Section \ref{sec:clt}, we provide sufficient conditions under which Assumption 3 is satisfied for random effect models where the likelihood estimator is a product of $T$ independent importance sampling estimators. This differs from scenarios where the likelihood estimator is given by one single importance sampling estimator studied in \textcite{sherlock2017pseudo}. Empirical evidence in \autocite{PittSilvaGiordaniKohn} and \autocite{doucet2015efficient} also suggests that Assumption 3 might hold for a large class of state-space models when the likelihood is estimated using particle filters. Under strong assumptions, a standard central limit theorem has been established in \autocite{berarddelmoraldoucet2014} for $g_{T}^{\omega}\left(\,\cdot\mid\theta\right)$. However, it would be technically very challenging to provide weak sufficient conditions under which Assumption 3 holds in this context.

\subsection{Weak convergence in the large sample regime\label{sec:WeakCVstatement}}

Denote by $(\vartheta_{T,k}^{\omega},Z_{T,k}^{\omega})_{k\geqslant0}$ the stationary Markov chain defined by the pseudo-marginal kernel,  $(\vartheta_{T,0}^{\omega},Z{}_{T,0}^{\omega})\sim\pi_{T}^{\omega}$
and $(\vartheta_{T,k}^{\omega},Z_{T,k}^{\omega})\sim P_{T}^{\omega}(\vartheta_{T,k-1}^{\omega},Z_{T,k-1}^{\omega};\cdot)$
for $k\geqslant1$. Let $\chi_{T}^{\omega}=(\tilde{\vartheta}_{T,k}^{\omega},Z_{T,k}^{\omega})_{k\geqslant0}$
where $\tilde{\vartheta}_{T,k}^{\omega}=\surd{T}(\vartheta_{T,k}^{\omega}-\hat{\theta}_{T}^{\omega})$
is the Markov chain arising from rescaling the parameter component
of the pseudo-marginal chain. Its transition kernel is thus
\begin{equation}
\tilde{P}_{T}^{\omega}(\tilde{\theta},z;\mathrm{d}\tilde{\theta}^{\prime},\mathrm{d}z')=\tilde{q}_{T}(\tilde{\theta},\mathrm{d}\tilde{\theta}^{\prime})\tilde{g}_{T}^{\omega}(\mathrm{d}z'|\tilde{\theta}^{\prime})\tilde{\alpha}_{T}^{\omega}\big(\tilde{\theta},z;\tilde{\theta}^{\prime},z'\big)+\tilde{\rho}_{T}^{\omega}(\tilde{\theta},z)\delta_{(\tilde{\theta},z)}(\mathrm{d}\tilde{\theta}^{\prime},\mathrm{d}z'),
\label{eq:Markov}
\end{equation}
where
\begin{align*}
\tilde{\alpha}_{T}^{\omega}(\tilde{\theta},z;\tilde{\theta}^{\prime},z'\big) & =\min\left\{ 1,\frac{\tilde{\pi}_{T}^{\omega}(\mathrm{d}\tilde{\theta}^{\prime})}{\tilde{\pi}_{T}^{\omega}(\mathrm{d}\tilde{\theta})}\frac{\tilde{q}_{T}(\tilde{\theta}^{\prime},\mathrm{d}\tilde{\theta})}{\tilde{q}_{T}(\tilde{\theta},\mathrm{d}\tilde{\theta}^{\prime})}\thinspace\mathrm{exp}\left(z'-z\right)\right\} ,
\end{align*}
$\tilde{\rho}_{T}^{\omega}(\theta,z)$ is the corresponding rejection
probability, $\tilde{\pi}_{T}^{\omega}(\tilde{\theta})=\pi_{T}^{\omega}(\hat{\theta}_{T}^{\omega}+\tilde{\theta}/\surd{T})/\surd{T}$,
$\tilde{q}_{T}(\tilde{\theta},\tilde{\theta}^{\prime})=q_{T}(\hat{\theta}_{T}^{\omega}+\tilde{\theta}/\surd{T},\hat{\theta}_{T}^{\omega}+\tilde{\theta}^{\prime}/\surd{T})/\surd{T}$
and $\tilde{g}_{T}^{\omega}(z\mid\tilde{\theta})=g_{T}^{\omega}(z\mid\hat{\theta}_{T}^{\omega}+\tilde{\theta}/\surd{T})$.
Under Assumption \ref{ass2}, we have $\tilde{q}_{T}(\tilde{\theta},\tilde{\theta}^{\prime})=\nu(\tilde{\theta}^{\prime}-\tilde{\theta})=\tilde{q}(\tilde{\theta},\tilde{\theta}^{\prime})$.
We now state the main result of this paper.
\begin{theorem}
\label{theorem} Under Assumptions \ref{ass1}, \ref{ass2} and \ref{ass3},
the sequence of stationary Markov chains $(\chi_{T}^{\omega})_{T\geqslant1}$
converges weakly in \textup{$\mathbb{P}^{Y}$}-probability as $T\rightarrow\infty$
to the law of a stationary Markov chain of initial distribution
\begin{equation}
\tilde{\pi}(\mathrm{d}\tilde{\theta},\mathrm{d}z)=\varphi(\mathrm{d}\tilde{\theta};0,\Sigma)\varphi\left(\mathrm{d}z;\sigma^{2}/2,\sigma^{2}\right)\label{eq:invariantlimit}
\end{equation}
and transition kernel
\begin{equation}
\tilde{P}(\tilde{\theta},z;\mathrm{d}\tilde{\theta}^{\prime},\mathrm{d}z')=\tilde{q}(\tilde{\theta},\mathrm{d}\tilde{\theta}^{\prime})\varphi\left(\mathrm{d}z';-\sigma^{2}/2,\sigma^{2}\right)\tilde{\alpha}(\tilde{\theta},z;\tilde{\theta}^{\prime},z')+\tilde{\rho}(\tilde{\theta},z)\delta_{(\tilde{\theta},z)}(\mathrm{d}\tilde{\theta}^{\prime},\mathrm{d}z')\label{eq:limitingMarkov}
\end{equation}
where $\sigma=\sigma(\bar{\theta})$,
\begin{align*}
\tilde{\alpha}(\tilde{\theta},z;\tilde{\theta}^{\prime},z')=\min\left\{ 1,\frac{\varphi(\tilde{\theta}^{\prime};0,\Sigma)}{\varphi(\tilde{\theta};0,\Sigma)}\frac{\tilde{q}(\tilde{\theta}^{\prime},\tilde{\theta})}{\tilde{q}(\tilde{\theta},\tilde{\theta}^{\prime})}\mathrm{exp}\left(z'-z\right)\right\}  & ,
\end{align*}
and $\tilde{\rho}(\theta,z)$ is the corresponding rejection probability.
\end{theorem}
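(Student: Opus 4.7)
The plan is to establish the weak convergence of the stationary chain $(\chi_T^\omega)$ by separately verifying: (i) weak convergence in $\mathbb{P}^Y$-probability of the initial distribution $\tilde{\pi}_T^\omega$ to $\tilde{\pi}$; (ii) a suitably uniform weak-in-probability convergence of the one-step kernel $\tilde{P}_T^\omega$ to $\tilde{P}$; and then (iii) assembling these into convergence of all finite-dimensional marginals on the path space $(\mathbb{R}^d\times\mathbb{R})^{\mathbb{N}}$. Because every object depends on the data $\omega$, this must be carried out within the framework of random probability measures, with the bounded Lipschitz metric $d_{\mathrm{BL}}$ providing the notion of weak convergence.

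For step (i), the $\tilde{\theta}$-marginal of $\tilde{\pi}_T^\omega$ is the push-forward of $\pi_T^\omega$ by $\theta\mapsto\sqrt{T}(\theta-\hat{\theta}_T^\omega)$. Assumption \ref{ass1} gives $L^1$ convergence of $\pi_T^\omega$ to $\varphi(\cdot;\hat{\theta}_T^\omega,\Sigma/T)$ in probability, and the push-forward of the latter is exactly $\varphi(\cdot;0,\Sigma)$; since total variation is preserved under push-forwards, the $\tilde{\theta}$-marginal converges in probability. For the conditional distribution of $z$ given $\tilde{\theta}$, at stationarity $Z\sim\bar{g}_T^\omega(\cdot\mid\theta)$ with $\theta=\hat{\theta}_T^\omega+\tilde{\theta}/\sqrt{T}\to\bar{\theta}$ in probability, so the uniform-in-$\theta$ conclusion \eqref{eq:uniformCLTequilibrium} together with continuity of $\sigma$ at $\bar{\theta}$ yields convergence to $\varphi(\cdot;\sigma^{2}/2,\sigma^{2})$. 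Combining marginal and conditional gives the joint convergence $\tilde{\pi}_T^\omega\to\tilde{\pi}$.

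For step (ii), Assumption \ref{ass2} gives $\tilde{q}_T=\nu$ exactly; the proposal noise $\tilde{g}_T^\omega(\cdot\mid\tilde{\theta}^\prime)$ converges to $\varphi(\cdot;-\sigma^{2}/2,\sigma^{2})$ uniformly over the relevant $\tilde{\theta}^\prime$ by \eqref{eq:uniformCLT}; and the density ratio $\tilde{\pi}_T^\omega(\tilde{\theta}^\prime)/\tilde{\pi}_T^\omega(\tilde{\theta})$ appearing in $\tilde{\alpha}_T^\omega$ converges to $\varphi(\tilde{\theta}^\prime;0,\Sigma)/\varphi(\tilde{\theta};0,\Sigma)$ by combining Assumption \ref{ass1} with a Scheff\'e-type argument promoting $L^1$ convergence of densities to pointwise convergence along a subsequence, up to a set of arbitrarily small Lebesgue measure. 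Since $\tilde{\alpha}_T^\omega$ is bounded by one and the $z$-dependence in the kernel enters through $\exp(z^\prime-z)$ integrated against an asymptotically normal law, dominated convergence and continuous mapping transfer these ingredients through the kernel integral to yield $d_{\mathrm{BL}}\{\tilde{P}_T^\omega(\tilde{\theta},z;\cdot),\tilde{P}(\tilde{\theta},z;\cdot)\}\to 0$ in $\mathbb{P}^Y$-probability at every continuity point $(\tilde{\theta},z)$ of the limit.

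The final step glues (i) and (ii) using the random-measure theory summarised in Section \ref{sec:proofoftheorem}: for any bounded continuous $F$ depending on finitely many coordinates of the chain, $E\{F(\chi_T^\omega)\}$ is an iterated integral of $\tilde{P}_T^\omega$ against $\tilde{\pi}_T^\omega$, and a Slutsky-type argument for random measures propagates the two convergences through the iteration to give $E\{F(\chi_T^\omega)\}\to E\{F(\chi^\omega)\}$ in $\mathbb{P}^Y$-probability, which is precisely weak convergence in probability on path space. The main obstacle I anticipate is controlling the density ratio inside $\tilde{\alpha}_T^\omega$: Assumption \ref{ass1} provides only $L^1$ convergence in probability, not pointwise control, so care is required to replace $\tilde{\pi}_T^\omega$-ratios by their Gaussian limits while retaining enough integrability to keep the bounded Lipschitz estimates uniform over a set of $(\tilde{\theta},z)$ carrying almost all the limiting mass, and to commute the expectation with respect to $\mathbb{P}^Y$ with the iterated kernel integrals.
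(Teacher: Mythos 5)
Your overall skeleton --- convergence of the initial law, convergence of the one-step kernel, then assembly via finite-dimensional distributions on path space --- is the same as the paper's, which proves Theorem \ref{theorem} by verifying conditions \emph{(T.1)}--\emph{(T.3)} of Theorem \ref{theo:appendix} through Propositions \ref{prop:4.1}--\ref{prop:4.3}. Your step (i) essentially reproduces Proposition \ref{prop:4.1}. The gap is in step (ii), and it is the one you yourself flag at the end.

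You propose to prove \emph{pointwise} convergence $d_{\mathrm{BL}}\{\tilde{P}_{T}^{\omega}(\tilde{\theta},z;\cdot),\tilde{P}(\tilde{\theta},z;\cdot)\}\rightarrow 0$ at each $(\tilde{\theta},z)$, handling the ratio $\tilde{\pi}_{T}^{\omega}(\tilde{\theta}^{\prime})/\tilde{\pi}_{T}^{\omega}(\tilde{\theta})$ by upgrading the $L^{1}$ convergence of Assumption \ref{ass1} to almost-everywhere convergence along subsequences. This does not control the ratio: a.e.\ convergence of the densities says nothing about the denominator $\tilde{\pi}_{T}^{\omega}(\tilde{\theta})$ staying bounded away from zero, and the acceptance probability is genuinely unstable wherever the current-state density is small. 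Moreover, even granting pointwise kernel convergence off a Lebesgue-null set, your step (iii) would then have to integrate this convergence against the \emph{moving} measures $\tilde{\pi}_{T}^{\omega}$, which requires a uniformity or domination you do not supply. The paper's resolution is to change the statement being proved: condition \emph{(T.2)} only asks for $\int|\tilde{P}_{T}^{\omega}f-\tilde{P}f|\,\mathrm{d}\tilde{\pi}_{T}^{\omega}\rightarrow 0$, i.e.\ convergence \emph{averaged against the stationary law}. In the proof of Proposition \ref{prop:4.2} the factor $\tilde{\pi}_{T}^{\omega}(\tilde{\theta})$ from the stationary measure is multiplied into the acceptance ratio, turning $\tilde{\pi}_{T}^{\omega}(\mathrm{d}\tilde{\theta})\,\tilde{\alpha}_{T}^{\omega}$ into $\min\{\tilde{\pi}_{T}^{\omega}(\tilde{\theta})\tilde{q}(\tilde{\theta},\tilde{\theta}^{\prime}),\tilde{\pi}_{T}^{\omega}(\tilde{\theta}^{\prime})\tilde{q}(\tilde{\theta}^{\prime},\tilde{\theta})e^{z'-z}\}$; the elementary bound $|\min(a,b)-\min(c,d)|\leq|a-c|+|b-d|$ then reduces everything to the total-variation difference $\int|\tilde{\pi}_{T}^{\omega}-\varphi(\cdot;0,\Sigma)|$, which is exactly what Assumption \ref{ass1} provides. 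No pointwise control of any density ratio is ever needed. The noise part is handled by the $1$-Lipschitz property of $x\mapsto\min(1,ae^{x})$, which converts the difference of acceptance integrals into $d_{\mathrm{BL}}\{\tilde{g}_{T}^{\omega}(\cdot\mid\theta'),g(\cdot\mid\theta')\}$, controlled by splitting over $B(\bar{\theta})$ and its complement using the concentration of $\pi_{T}^{\omega}q_{T}$.

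A second, smaller omission: for the induction on finite-dimensional distributions to close, the limiting kernel must map $C_{b}$ into $C_{b}$ (condition \emph{(T.3)}, Proposition \ref{prop:4.3}), since the induction step needs $f_{0}\cdots f_{k}\,\tilde{P}f_{k+1}\in C_{b}$. Your sketch gestures at continuity points of the limit but never states or proves this Feller-type property of $\tilde{P}$, which in the paper follows from the continuity of $\nu$ in Assumption \ref{ass2} together with Scheff\'e's lemma and dominated convergence.
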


Under this asymptotic regime, the limiting transition kernel $\tilde{P}$ in (\ref{eq:limitingMarkov})
is also a pseudo-marginal kernel where the noise distribution is $\varphi\left(\mathrm{d}z;-\sigma^{2}/2,\sigma^{2}\right)$
as assumed in previous analyses \autocite{PittSilvaGiordaniKohn,doucet2015efficient,Sherlock2015efficiency}. As Theorem \ref{theorem} is a weak convergence
result, it does not imply that the integrated autocorrelation time of the pseudo-marginal kernel $\tilde{P}_{T}^{\omega}$ converges to the one of $\tilde{P}$.
However, for large $T$, this suggests that some characteristics of $\tilde{P}_{T}^{\omega}$ can indeed be captured by those of the kernel (\ref{eq:KernelPMMHlimit})
which can be obtained from $\tilde{P}$ by using the
change of variables $\theta=\hat{\theta}_{T}^{\omega}+\tilde{\theta}/\surd{T}$
and substituting the true target for its normal approximation $\varphi(\theta;\hat{\theta}_{T}^{\omega},\Sigma/T)$,
hence removing a level of approximation.%

\section{Outline of the Proof of the Main Result\label{sec:proofoftheorem}}

\subsection{Random Markov chains}

The proof of Theorem \ref{theorem} follows from a slightly more general
result on weak convergence of random Markov chains on Polish spaces
given in Theorem \ref{theo:appendix} below. We introduce here some notation and recall some definitions concerning
random probability measures needed to define random
Markov chains; see the supplementary material or \textcite{crauel2003} for more details.

Let $(\Omega,\mathcal{F},\mathbb{P})$ be a probability space and
$S$ a Polish space endowed with its Borel $\sigma$-algebra $\mathcal{B}(S)$.
We equip the product space $\Omega\times S$ with the product $\sigma$-algebra
$\mathcal{F}\otimes\mathcal{B}(S)$. We denote by $\mathcal{P}(S)$
the space of Borel probability measures which is itself endowed with
the Borel $\sigma$-algebra $\mathcal{B}\{\mathcal{P}(S)\}$ generated
by the weak topology. Finally, $C_{b}(S)$, respectively $\mathrm{BL}(S)$,
denote the sets of continuous bounded functions, respectively the
set of bounded Lipschitz functions.
\begin{definition}
\emph{\label{def:random_measure} }A \emph{random probability measure}
is a map $\mu\colon\Omega\times\mathcal{B}\left(S\right)\rightarrow[0,1]$,
$(\omega,B)\mapsto\mu(\omega,B)=\mu^{\omega}(B)$, such that for
every $B\in\mathcal{B}\left(S\right)$ the map $\omega\mapsto\mu(\omega,B)$
is measurable while $\mu^{\omega}\in\mathcal{P}(S)$ $\mathbb{P}-$almost
surely.
\end{definition}

For all bounded and measurable functions $g\colon\Omega\times S\rightarrow\mathbb{R}$,
$\omega\mapsto\int_{S}g(\omega,x)\mu^{\omega}(\mathrm{d}x)$ is measurable
\autocite[Proposition 3.3]{crauel2003} and thus the map $\omega\mapsto\mu^{\omega}(f)$
is a random variable for bounded measurable functions $f\colon S\rightarrow\mathbb{R}$.
Consequently, $\mu^{\omega}\colon\Omega\rightarrow\mathcal{P}(S)$
is a Borel measurable map. Conversely,
it can be shown that any random element of $\left\{\mathcal{P}(S),\mathcal{B}(\mathcal{P}(S))\right\}$
fulfils the conditions set out in Definition \ref{def:random_measure};
see \textcite[Remark 3.20 (i)]{crauel2003} or \textcite[Lemma 1.37]{Kallenberg2006}.
\begin{definition}
 A \emph{random Markov kernel }is a map $K\colon\Omega\times S\times\mathcal{B}(S)\rightarrow[0,1],\quad(\omega,x,B)\mapsto K(\omega,x,B)=K^{\omega}(x,B)$,
such that
\end{definition}
\begin{itemize}
\item[\emph{(i)}]  $(\omega,x)\mapsto K^{\omega}(x,B)$ is $\mathcal{F\otimes\mathcal{B}}(S)$-measurable
for every $B\in\mathcal{B}(S),$
\item[\emph{(ii)}] $K^{\omega}(x,\cdot)\in\mathcal{P}(S)$ $\mathbb{P}-$almost surely
for every $x\in S$.
\end{itemize}
\begin{lemma}
\label{lem:Random-Markov-chain}\textup{ Given a random probability
measure} $\mu^{\omega}$\textup{ and random Markov kernel $K^{\omega}$,
there exists an almost surely unique random probability measure $\mu^{\mathbb{N},\omega}$
on $S^{\mathbb{N}}$ such that
\[
\mu^{\mathbb{N},\omega}(A_{1}\times\ldots\times A_{k}\times E_{k+1})=\int_{A_{1}}\mu^{\omega}(\mathrm{d}x_{1})\int_{A_{2}}K^{\omega}(x_{1},\mathrm{d}x_{2})\ldots\int_{A_{k}}K^{\omega}(x_{k-1},\mathrm{d}x_{k})
\]
for any }$A_{i}\in\mathcal{B}(S)$ $(i=1,\ldots,k)$, $k\in \mathbb{N}$ and $E_{k+1}=\boldsymbol{\times}_{i=k+1}^{\infty}S$.
\end{lemma}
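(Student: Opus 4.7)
The plan is to construct $\mu^{\mathbb{N},\omega}$ pathwise via the classical Ionescu-Tulcea theorem and then upgrade the construction to a bona fide random probability measure by verifying measurability in $\omega$. Concretely, there is a set $\Omega_{0}\in\mathcal{F}$ of full $\mathbb{P}$-measure on which $\mu^{\omega}$ is a probability measure and, after a standard modification of $K$ on a null set, $x\mapsto K^{\omega}(x,\cdot)$ is a Markov kernel on $S$ for every $\omega\in\Omega_{0}$. For each such $\omega$, Ionescu-Tulcea produces a unique probability measure $\mu^{\mathbb{N},\omega}$ on $(S^{\mathbb{N}},\mathcal{B}(S^{\mathbb{N}}))$ with the prescribed finite-dimensional distributions; off $\Omega_{0}$ one defines $\mu^{\mathbb{N},\omega}$ to be any fixed reference probability measure. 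This yields existence for each fixed $\omega$.

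The substantive step is to show that $\omega\mapsto\mu^{\mathbb{N},\omega}(B)$ is $\mathcal{F}$-measurable for every $B\in\mathcal{B}(S^{\mathbb{N}})$. I would first handle finite-dimensional cylinders $A_{1}\times\cdots\times A_{k}\times E_{k+1}$ by induction on $k$, with the base case $k=1$ being immediate since $\mu^{\omega}$ is a random probability measure. The key technical lemma for the inductive step is that, for any bounded $\mathcal{F}\otimes\mathcal{B}(S)$-measurable $h\colon\Omega\times S\to\mathbb{R}$, the map
\[
(\omega,x)\mapsto\int_{S}h(\omega,y)\,K^{\omega}(x,\mathrm{d}y)
\]
is again $\mathcal{F}\otimes\mathcal{B}(S)$-measurable. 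This follows from condition (i) in the definition of a random Markov kernel by the standard machine: verify it for product indicators $h=\mathds{1}_{F}\mathds{1}_{B}$ with $F\in\mathcal{F}$, $B\in\mathcal{B}(S)$, extend by linearity to simple functions, and pass to general bounded $h$ by monotone convergence. Iterating this through the successive kernel integrals preserves joint $\mathcal{F}\otimes\mathcal{B}(S^{j})$-measurability at each stage, and a final integration against $\mu^{\omega}$ produces measurability in $\omega$ alone.

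The passage from cylinder sets to arbitrary Borel sets in $S^{\mathbb{N}}$ is then carried out by a Dynkin argument: the collection of $B\in\mathcal{B}(S^{\mathbb{N}})$ for which $\omega\mapsto\mu^{\mathbb{N},\omega}(B)$ is measurable forms a $\lambda$-system that contains the $\pi$-system of finite-dimensional cylinders, and the latter generates $\mathcal{B}(S^{\mathbb{N}})$ because $S$ is Polish and hence second countable. The $\pi$-$\lambda$ theorem gives measurability on all of $\mathcal{B}(S^{\mathbb{N}})$. Almost-sure uniqueness follows along the same lines: any two candidate random measures agree on finite cylinders for every $\omega\in\Omega_{0}$ by the uniqueness part of Ionescu-Tulcea, hence coincide on all Borel sets for each such $\omega$ by the $\pi$-$\lambda$ theorem, and therefore are equal as random measures up to a $\mathbb{P}$-null set.

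The main obstacle, in my view, is the joint measurability induction, together with the ancillary upgrade of condition (ii) from "for each $x$, $\mathbb{P}$-almost surely" to "$\mathbb{P}$-almost surely, for all $x$ simultaneously"; the latter is essential for Ionescu-Tulcea to apply coherently across $\omega\in\Omega_{0}$ and typically requires either a countable dense skeleton in $S$ combined with weak continuity of $x\mapsto K^{\omega}(x,\cdot)$, or an explicit modification of $K^{\omega}$ on a $\mathbb{P}$-null set. Once those are in hand, the measurability assertion reduces to a careful but mechanical application of the functional monotone class theorem, and uniqueness is essentially a corollary.
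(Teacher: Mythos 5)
Your proposal follows essentially the same route as the paper: pathwise existence and uniqueness via the Ionescu--Tulcea extension theorem, measurability of $\omega\mapsto\mu^{\mathbb{N},\omega}(A)$ checked first on the finite-dimensional cylinders, and then extension to all of $\mathcal{B}(S)^{\mathbb{N}}$ via the $\pi$-system argument (the paper cites Crauel, Remark 3.2, for exactly the $\pi$--$\lambda$ step you spell out). The extra detail you supply --- the functional-monotone-class induction for the joint measurability of the iterated kernel integrals, and the flag about upgrading condition \emph{(ii)} from ``for each $x$, a.s.'' to ``a.s., for all $x$'' --- is a sound elaboration of steps the paper compresses into a citation, so the argument is correct and not a genuinely different proof.
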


\subsection{Convergence of random Markov chains}

For a sequence of random probability measures $(\mu_{n}^\omega)_{n\geqslant1}$,
respectively a sequence of random Markov kernels $(K_{n}^\omega)_{n\geqslant1}$,
converging in a suitable sense towards a probability measure $\mu$,
respectively a Markov kernel $K$, we show here that the distributions
of the associated Markov chains $(\mu_{n}^{\mathbb{N,\omega}})_{n\geqslant1}$
defined in Lemma \ref{lem:Random-Markov-chain} converge weakly in
probability to the distribution $\mu^{\mathbb{N}}$ of the homogeneous
Markov chain of initial distribution $\mu$ and Markov
kernel $K$.
\begin{theorem}
\label{theo:appendix} If the following assumptions hold,
\begin{itemize}
\item[(T.1)]\label{t1} the random probability measures $\left(\mu_{n}^{\omega}\right)_{n\geqslant1}$ converge weakly in probability to a probability measure $\mu$ as $n\rightarrow \infty$,
\item[(T.2)] the random Markov transition kernels $\left(K_n^\omega\right)_{n\geqslant1}$ satisfy
\begin{equation*}
\int\left|K_{n}^{\omega}f(x)-Kf(x)\right|\mu_{n}^{\omega}(\mathrm{d}x)\rightarrow0\label{eq:ass2}
\end{equation*}
in probability as $n\rightarrow \infty$ for all $f\in\mathrm{BL}(S)$ where $K$ is a Markov transition kernel
,
\item[(T.3)] the transition kernel $K$ is such that \textup{$x\mapsto Kf(x)$
is continuous for any $f\in C_{b}(S)$, }
\end{itemize}
then, as $n\rightarrow \infty$, the measures $(\mu_{n}^{\mathbb{N},\omega})_{n\geqslant1}$ on $S^\mathbb{N}$ converge weakly in probability to
the measure $\mu^\mathbb{N}$ induced by the Markov chain with initial distribution $\mu$ and transition kernel $K$.
\end{theorem}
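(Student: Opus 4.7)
The plan is to reduce weak convergence in probability of laws on the Polish space $S^{\mathbb{N}}$ to convergence of all finite-dimensional marginals plus a tightness argument, handling the randomness through a subsequence-extraction device. Since convergence in probability in a separable metric space (here $\mathcal{P}(S^{\mathbb{N}})$ metrised by the bounded-Lipschitz metric) is equivalent to every subsequence admitting a further almost surely convergent subsequence, I would fix an arbitrary subsequence, pass by diagonal extraction to one along which (T.1) and (T.2) both hold almost surely for a countable $\mathrm{BL}$-dense family, and then reduce to the deterministic statement: if $\mu_{n} \to \mu$ weakly, $\int |K_{n}f - Kf|\,\mathrm{d}\mu_{n} \to 0$ for all $f \in \mathrm{BL}(S)$, and $K$ satisfies (T.3), then $\mu_{n}^{\mathbb{N}} \to \mu^{\mathbb{N}}$ weakly.

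For this deterministic statement I would prove convergence of all $(k{+}1)$-dimensional marginals by induction on $k$, testing against product bounded-Lipschitz functions $f_{0} \otimes \cdots \otimes f_{k}$, which form a convergence-determining class on $S^{k+1}$ by a Stone--Weierstrass argument. The base case $k = 0$ is exactly (T.1). For the inductive step, by Lemma \ref{lem:Random-Markov-chain},
\[
\int f_{0}\otimes\cdots\otimes f_{k+1}\,\mathrm{d}\mu_{n}^{[0,k+1]} = \int f_{0}\otimes\cdots\otimes f_{k-1}\otimes \bigl[f_{k}\cdot K_{n}f_{k+1}\bigr]\,\mathrm{d}\mu_{n}^{[0,k]},
\]
and I would decompose $K_{n}f_{k+1} = Kf_{k+1} + (K_{n}f_{k+1} - Kf_{k+1})$. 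The first piece yields the integral of $f_{0} \otimes \cdots \otimes f_{k-1} \otimes [f_{k}\cdot Kf_{k+1}]$, whose integrand is in $C_{b}(S^{k+1})$ by the Feller property (T.3), and thus converges to $\int f_{0}\otimes\cdots\otimes f_{k+1}\,\mathrm{d}\mu^{[0,k+1]}$ by the inductive hypothesis. The second piece is bounded in modulus by $\prod_{i}\|f_{i}\|_{\infty} \int |K_{n}f_{k+1} - Kf_{k+1}|\,\mathrm{d}(\mu_{n}K_{n}^{k})$.

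The main obstacle is this last term: (T.2) gives vanishing of $\int |K_{n}f - Kf|\,\mathrm{d}\mu_{n}$, whereas here we need vanishing against the time-$k$ marginal $\mu_{n}K_{n}^{k}$. I would resolve this with a parallel induction. First, $\mu_{n}K_{n} \to \mu K$ weakly: for $f \in \mathrm{BL}(S)$, $\mu_{n}K_{n}(f) - \mu K(f) = \mu_{n}(K_{n}f - Kf) + (\mu_{n} - \mu)(Kf)$, and both terms vanish by (T.2) and (T.1), using $Kf \in C_{b}(S)$. Iterating gives $\mu_{n}K_{n}^{k} \to \mu K^{k}$ weakly and hence tightness of $(\mu_{n}K_{n}^{k})_{n}$ for every fixed $k$. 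To extend (T.2) from integration against $\mu_{n}$ to integration against $\mu_{n}K_{n}^{k}$, I would proceed by induction on $k$: given $\varepsilon > 0$, tightness yields a compact $C \subset S$ absorbing all but $\varepsilon$ of the mass for each of $\mu_{n}$, $\mu_{n}K_{n}^{k}$ and $\mu K^{k}$; the function $Kf \in C_{b}(S)$ can be approximated uniformly on $C$ by a $\mathrm{BL}$ function (extendable to $S$ with the same sup-norm and Lipschitz constant), and the contributions outside $C$ are $O(\varepsilon)$, reducing matters to (T.2) applied to the $\mathrm{BL}$ approximant together with the inductive hypothesis.

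Once convergence of all finite-dimensional marginals is established, tightness of the laws $(\mu_{n}^{\mathbb{N}})$ on $S^{\mathbb{N}}$ follows from tightness of the time-$k$ marginals: for $\varepsilon > 0$, pick compacts $C_{k} \subset S$ with $\mu_{n}K_{n}^{k}(C_{k}^{c}) < \varepsilon\, 2^{-k-1}$ for all $n$; then $\prod_{k}C_{k}$ is compact in the product topology by Tychonoff's theorem and its complement has $\mu_{n}^{\mathbb{N}}$-mass at most $\varepsilon$. Combined with convergence of finite-dimensional marginals this yields $\mu_{n}^{\mathbb{N}} \to \mu^{\mathbb{N}}$ weakly for the fixed $\omega$, and unwinding the subsequence extraction delivers the stated weak convergence in probability.
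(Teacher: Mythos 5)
Your overall architecture---reduction to finite-dimensional marginals via subsequence extraction, induction on $k$ with the decomposition $K_nf_{k+1}=Kf_{k+1}+(K_nf_{k+1}-Kf_{k+1})$, and the Feller property (T.3) to keep $f_0\otimes\cdots\otimes f_{k-1}\otimes[f_k\cdot Kf_{k+1}]$ in $C_b(S^{k+1})$---is exactly the route taken in the paper (whose lemmas on countable convergence-determining classes and on countable products make your separate Tychonoff tightness step unnecessary: on $S^{\mathbb{N}}$ weak convergence is equivalent to convergence of the finite-dimensional distributions). You are also right to flag that the residual term is integrated against the time-$k$ marginal $\mu_nK_n^k$ rather than against $\mu_n$; the paper passes over this point silently.

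However, your repair of that obstacle does not work, and no repair is available from (T.1)--(T.3) alone. The iteration ``$\mu_nK_n^k\to\mu K^k$'' is circular from $k=2$ onwards: writing $\mu_nK_n^2(f)-\mu K^2(f)=\mu_nK_n(K_nf-Kf)+(\mu_nK_n-\mu K)(Kf)$, the first term is precisely $\int(K_nf-Kf)\,\mathrm{d}(\mu_nK_n)$, which is the quantity whose vanishing you are trying to establish. The compactness/$\mathrm{BL}$-approximation step does not help either: (T.2) gives no pointwise control of $K_nf-Kf$ away from the mass of $\mu_n$, and approximating $Kf$ uniformly on a compact set says nothing about $K_nf$ there. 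Concretely, take $S=\mathbb{R}$, $\mu_n=\mu=\delta_0$, $K_n(x,\cdot)=\delta_{g_n(x)}$ and $K(x,\cdot)=\delta_{g(x)}$ with $g_n,g$ continuous, $g_n(0)=g(0)=1$, $g(1)=0$ and $g_n(1)=n$; then (T.1)--(T.3) hold, with the integral in (T.2) identically zero, yet the law of $X_2$ under $\mu_n^{\mathbb{N}}$ is $\delta_n$, which does not converge to $\delta_0=\mu K^2$. The statement therefore requires an additional hypothesis, and the one operative in the paper is stationarity, $\mu_n^{\omega}K_n^{\omega}=\mu_n^{\omega}$: then every time-$k$ marginal equals $\mu_n^{\omega}$ and (T.2) applies verbatim to the term you isolated. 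You should either impose that hypothesis (it holds in the application, where $\mu_n^{\omega}=\tilde{\pi}_T^{\omega}$ is invariant for $\tilde{P}_T^{\omega}$) or strengthen (T.2) to hold against every $\mu_n^{\omega}(K_n^{\omega})^k$.
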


\subsection{Application to the pseudo-marginal algorithm}

Theorem \ref{theorem}  follows from Theorem \ref{theo:appendix} by showing that, under Assumptions \ref{ass1}, \ref{ass2} and \ref{ass3}, all conditions set out in Theorem \ref{theo:appendix} are
fulfilled. Firstly, as we increase the number of data points, the stationary distribution of the Markov chain will converge weakly to the limiting stationary distribution of Theorem \ref{theo:appendix}.
\begin{proposition}
\label{prop:4.1}Under Assumptions \ref{ass1} and \ref{ass3},
we have
\[
\tilde{\pi}_{T}^{\omega}(\mathrm{d}\tilde{\theta},\mathrm{d}z)\rightarrow \tilde{\pi}(\mathrm{d}\tilde{\theta},\mathrm{d}z),
\]
weakly in $\mathbb{P}^Y$-probability as $T\rightarrow\infty$ where $\tilde{\pi}_{T}^{\omega}(\mathrm{d}\tilde{\theta},\mathrm{d}z)=\tilde{\pi}_{T}^{\omega}(\mathrm{d}\tilde{\theta})\mathrm{exp}\left(z\right)\tilde{g}_{T}^{\omega}(\mathrm{d}z\mid\tilde{\theta})$. $\tilde{\pi}_{T}^{\omega}(\mathrm{d}\tilde{\theta})$ and $\tilde{g}_{T}^{\omega}(\mathrm{d}z)$ are defined in Section \ref{sec:WeakCVstatement} and $\tilde{\pi}(\mathrm{d}\tilde{\theta},\mathrm{d}z)$ in equation \eqref{eq:invariantlimit}.
\end{proposition}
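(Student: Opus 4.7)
My plan is to prove weak convergence in probability for the joint distribution by conditioning on $\tilde\theta$, exploiting that under the rescaling the marginal posterior converges to the normal (from Assumption~\ref{ass1}) and the conditional noise distribution at argument $\hat\theta_T^\omega+\tilde\theta/\sqrt{T}$ converges to a fixed normal (from Assumption~\ref{ass3} combined with $\hat\theta_T^\omega\to\bar\theta$ and continuity of $\sigma$). It suffices to check, for an arbitrary $f\in\mathrm{BL}(\mathbb{R}^d\times\mathbb{R})$ with $\|f\|_{\mathrm{BL}}\leqslant 1$, that $\int f\,\mathrm{d}\tilde\pi_T^\omega\to\int f\,\mathrm{d}\tilde\pi$ in $\mathbb{P}^Y$-probability; weak convergence in probability then follows by standard characterisations of the bounded-Lipschitz metric.

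The first step is to establish weak convergence in $\mathbb{P}^Y$-probability of the marginal $\tilde\pi_T^\omega(\mathrm{d}\tilde\theta)$ to $\varphi(\mathrm{d}\tilde\theta;0,\Sigma)$. By the change of variables $\theta=\hat\theta_T^\omega+\tilde\theta/\sqrt{T}$, Assumption~\ref{ass1} immediately yields $\int|\tilde\pi_T^\omega(\tilde\theta)-\varphi(\tilde\theta;0,\Sigma)|\,\mathrm{d}\tilde\theta\to 0$ in $\mathbb{P}^Y$-probability, which is total-variation convergence and a fortiori implies weak convergence in probability and tightness of $\{\tilde\pi_T^\omega\}$. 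Next, introduce
\[
h(\tilde\theta)=\int f(\tilde\theta,z)\,\varphi(\mathrm{d}z;\sigma^{2}/2,\sigma^{2}),\qquad h_T^\omega(\tilde\theta)=\int f(\tilde\theta,z)\,\bar g_T^\omega\!\left(\mathrm{d}z\mid\hat\theta_T^\omega+\tilde\theta/\sqrt{T}\right),
\]
both bounded by $\|f\|_\infty$; $h$ is continuous in $\tilde\theta$ by dominated convergence. Then the triangle inequality gives
\[
\left|\int f\,\mathrm{d}\tilde\pi_T^\omega-\int f\,\mathrm{d}\tilde\pi\right|\leqslant\int|h_T^\omega-h|\,\mathrm{d}\tilde\pi_T^\omega+\left|\int h\,\mathrm{d}\tilde\pi_T^\omega-\int h\,\varphi(\mathrm{d}\tilde\theta;0,\Sigma)\right|.
\]
The second term vanishes in probability by the established weak convergence of $\tilde\pi_T^\omega$ applied to the bounded continuous function $h$.

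For the first term I would truncate: fix $\varepsilon>0$ and choose a compact $K\subset\mathbb{R}^d$ with $\varphi(K^c;0,\Sigma)<\varepsilon$. On the event $\{\hat\theta_T^\omega+K/\sqrt{T}\subset B(\bar\theta)\}$, which has $\mathbb{P}^Y$-probability tending to one since $\hat\theta_T^\omega\to\bar\theta$, apply Assumption~\ref{ass3} to each $\tilde\theta\in K$:
\begin{align*}
|h_T^\omega(\tilde\theta)-h(\tilde\theta)|&\leqslant d_{\mathrm{BL}}\!\left[\bar g_T^\omega(\cdot\mid\hat\theta_T^\omega+\tilde\theta/\sqrt{T}),\varphi\!\left(\cdot;\sigma^{2}(\hat\theta_T^\omega+\tilde\theta/\sqrt{T})/2,\sigma^{2}(\hat\theta_T^\omega+\tilde\theta/\sqrt{T})\right)\right]\\
&\quad+\left|\int f(\tilde\theta,z)\left\{\varphi(\mathrm{d}z;\sigma^{2}(\hat\theta_T^\omega+\tilde\theta/\sqrt{T})/2,\sigma^{2}(\hat\theta_T^\omega+\tilde\theta/\sqrt{T}))-\varphi(\mathrm{d}z;\sigma^{2}/2,\sigma^{2})\right\}\right|.
\end{align*}
The first summand is majorised by $\sup_{\theta\in B(\bar\theta)}d_{\mathrm{BL}}[\bar g_T^\omega(\cdot\mid\theta),\varphi(\cdot;\sigma^{2}(\theta)/2,\sigma^{2}(\theta))]\to 0$ in $\mathbb{P}^Y$-probability by \eqref{eq:uniformCLTequilibrium}. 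The second summand tends to zero uniformly on $K$ because $\sigma$ is continuous at $\bar\theta$ and $\hat\theta_T^\omega+\tilde\theta/\sqrt{T}\to\bar\theta$ uniformly for $\tilde\theta\in K$, and the Gaussian density depends continuously on its parameters in the bounded-Lipschitz sense. Hence $\sup_{\tilde\theta\in K}|h_T^\omega(\tilde\theta)-h(\tilde\theta)|\to 0$ in probability. Combining this with $|h_T^\omega-h|\leqslant 2\|f\|_\infty$ on $K^c$ and tightness of $\tilde\pi_T^\omega$ (so that $\tilde\pi_T^\omega(K^c)\leqslant 2\varepsilon$ for large $T$ with high probability) controls the first term by $4\|f\|_\infty\varepsilon+o_{\mathbb{P}}(1)$, and letting $\varepsilon\downarrow 0$ concludes.

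The main obstacle is the handling of the parameter-dependent noise distribution when $\tilde\theta$ ranges over all of $\mathbb{R}^d$: Assumption~\ref{ass3} furnishes uniform control only inside $B(\bar\theta)$, so the truncation step to a compact $K$ and the simultaneous use of tightness of $\tilde\pi_T^\omega$ together with $\hat\theta_T^\omega\to\bar\theta$ is what makes the conditional central limit theorem usable. The remaining technicalities are standard: boundedness of $f$ dominates contributions from the tails, and convergence in probability is preserved under the three combined limits by extracting almost-surely convergent subsequences if needed.
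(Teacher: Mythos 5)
Your proof is correct and follows essentially the same route as the paper's: total-variation convergence of the rescaled marginal from Assumption \ref{ass1}, the uniform equilibrium CLT of Assumption \ref{ass3} on $B(\bar{\theta})$ combined with continuity of $\sigma$ at $\bar{\theta}$ for the conditional noise law, and concentration of the posterior to discard the region where the uniform CLT is unavailable. The only cosmetic differences are that you test against general bounded Lipschitz functions of $(\tilde{\theta},z)$ and truncate to a compact $K$ in the rescaled space, whereas the paper reduces to products $f_{1}(\tilde{\theta})f_{2}(z)$ and splits the integral over $B(\bar{\theta})$ versus $B(\bar{\theta})^{\complement}$ in the original parameter space; both devices accomplish the same thing.
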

This follows as the marginal $\pi_T^\omega(\mathrm{d}\theta)$ concentrates around the limiting parameter value $\bar{\theta}$ while the noise uniformly converges towards a normal distribution in a neighbourhood around $\bar{\theta}$. The next proposition ensures the stability of the transition and can be proven using similar arguments.
\begin{proposition}
\label{prop:4.2}Under Assumptions \ref{ass1}, \ref{ass2} and \ref{ass3}, as $T\rightarrow\infty$
we have for any $f\in\mathrm{BL}(\mathbb{R}^{d+1})$
\[
\mathbb{\int}|\tilde{P}_{T}^{\omega}f(\theta,z)-\tilde{P}f(\theta,z)|\tilde{\pi}_{T}^{\omega}(\mathrm{d}\theta,\mathrm{d}z)\rightarrow 0,\quad \text{in } \mathbb{P}^Y\text{-probability},
\]
where the transition kernels $\tilde{P}_{T}^{\omega}$ and $\tilde{P}$ are defined in equations \eqref{eq:Markov} and \eqref{eq:limitingMarkov}.
\end{proposition}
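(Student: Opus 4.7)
The plan is to decompose $\tilde P_T^\omega f - \tilde P f$ into an acceptance-probability piece and a noise-distribution piece, to control each one separately via Assumptions \ref{ass1}--\ref{ass3}, and to reduce the integration against $\tilde\pi_T^\omega$ to compact sets using the tightness inherited from Proposition \ref{prop:4.1}. Absorbing the rejection terms via $\tilde\rho = 1 - \int\int \tilde q\,\tilde\alpha\,\varphi(dz';-\sigma^2/2,\sigma^2)$ and its analogue for $\tilde\rho_T^\omega$ gives
\begin{align*}
\tilde P_T^\omega f(\tilde\theta,z) - \tilde P f(\tilde\theta,z) &= \int \tilde q(\tilde\theta, d\tilde\theta') \int [f(\tilde\theta',z') - f(\tilde\theta,z)] \\
&\qquad \times \{\tilde\alpha_T^\omega\,\tilde g_T^\omega(dz'|\tilde\theta') - \tilde\alpha\,\varphi(dz';-\sigma^2/2,\sigma^2)\},
\end{align*}
and a triangle inequality splits the signed measure in braces into $\tilde\alpha\,[\tilde g_T^\omega - \varphi]$ and $[\tilde\alpha_T^\omega - \tilde\alpha]\,\tilde g_T^\omega$.

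For the noise-distribution piece, the map $z' \mapsto [f(\tilde\theta',z') - f(\tilde\theta,z)]\,\tilde\alpha(\tilde\theta,z;\tilde\theta',z')$ is bounded Lipschitz in $z'$ with seminorm uniformly bounded in $(\tilde\theta,z,\tilde\theta')$, since $\min\{1, A e^{z'-z}\}$ is $1$-Lipschitz in $z'$ and $f \in \mathrm{BL}$. Because $\hat\theta_T^\omega + \tilde\theta'/\surd T$ eventually lies in $B(\bar\theta)$ whenever $\tilde\theta'$ stays in a bounded set (using $\hat\theta_T^\omega \to \bar\theta$ from Assumption \ref{ass1}), Assumption \ref{ass3} and the continuity of $\sigma$ at $\bar\theta$ force this piece to zero uniformly for $\tilde\theta'$ in compact sets, in $\mathbb{P}^Y$-probability.

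For the acceptance-probability piece, the elementary bound $|\min\{1,a\} - \min\{1,b\}| \leqslant |a-b|$ yields $|\tilde\alpha_T^\omega - \tilde\alpha| \leqslant |A_T^\omega - A|\,e^{z'-z}$, where $A_T^\omega$ and $A$ are the acceptance ratios stripped of the noise factor; the proposal ratio cancels by Assumption \ref{ass2}, so that $A_T^\omega - A$ depends only on the posterior density ratios. Integrating this bound against $\tilde g_T^\omega(dz'|\tilde\theta')\,\tilde q(\tilde\theta, d\tilde\theta')\,\tilde\pi_T^\omega(d\tilde\theta, dz)$ exploits the unbiasedness identities $\int e^{z'} \tilde g_T^\omega(dz'|\tilde\theta') = 1$ (since $e^{z'}\tilde g_T^\omega$ is the equilibrium distribution, hence a probability measure) and $\int e^{-z}\,\tilde\pi_T^\omega(d\tilde\theta,dz) = 1$. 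The $z,z'$ factors cancel entirely, leaving a bound proportional to $\int\int |A_T^\omega - A|\,\tilde q(\tilde\theta,d\tilde\theta')\,\tilde\pi_T^\omega(d\tilde\theta)$. On a compact set $\{|\tilde\theta|,|\tilde\theta'|\leqslant R\}$, $\varphi(\cdot;0,\Sigma)$ is bounded below; Assumption \ref{ass1} gives $\tilde\pi_T^\omega \to \varphi(\cdot;0,\Sigma)$ in $L^1$ in probability, and a triangle-inequality manipulation shows this restricted integral is controlled by $\|\tilde\pi_T^\omega - \varphi(\cdot;0,\Sigma)\|_{L^1}$ up to constants depending on $R$, hence tends to zero.

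The remaining contribution from outside $\{|\tilde\theta|\leqslant R,|\tilde\theta'-\tilde\theta|\leqslant R'\}$ is bounded by $2\|f\|_\infty$ times the masses $\tilde\pi_T^\omega(\{|\tilde\theta|>R\})$ and $\nu(\{|\cdot|>R'\})$, which are arbitrarily small by Proposition \ref{prop:4.1} and the tightness of $\nu$ respectively. Letting $T \to \infty$ and then $R, R' \to \infty$ completes the proof. The main obstacle is taming the unbounded factor $e^{z'-z}$ in the acceptance-probability bound: this is precisely what the unbiasedness identity $\int e^{z'}\tilde g_T^\omega = 1$ resolves cleanly, exploiting the pseudo-marginal structure of the target without requiring any extra integrability hypothesis. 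A secondary difficulty is combining two convergences in $\mathbb{P}^Y$-probability (from Assumptions \ref{ass1} and \ref{ass3}) on a common event of large probability, which is handled by a subsequence argument upgrading them to almost sure convergence.
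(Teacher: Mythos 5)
Your proposal is correct, and while the overall architecture (absorb the rejection term, split into a posterior-density error and a noise-distribution error, kill the $e^{z'-z}$ factors with the unbiasedness identities $\int e^{z'}\tilde g_T^\omega(\mathrm{d}z'\mid\tilde\theta')=1$ and $\int e^{-z}\tilde\pi_T^\omega(\mathrm{d}\tilde\theta,\mathrm{d}z)=1$, and localise the uniform CLT of Assumption \ref{ass3} near $\bar\theta$) matches the paper, you handle the acceptance-probability error by a genuinely different device. The paper never forms the ratio $A_T^\omega-A$: it multiplies the acceptance probability into the proposal density, writes $\tilde\pi_T^\omega(\tilde\theta)\tilde q(\tilde\theta,\tilde\theta')\tilde\alpha_T^\omega=\min\{\tilde\pi_T^\omega(\tilde\theta)\tilde q(\tilde\theta,\tilde\theta'),\,\tilde\pi_T^\omega(\tilde\theta')\tilde q(\tilde\theta',\tilde\theta)e^{z'-z}\}$, and applies $|\min(a,b)-\min(c,d)|\leq|a-c|+|b-d|$ to this product form; after the exponentials are absorbed this gives the global bound $2\|f\|_\infty\int|\tilde\pi_T^\omega(\tilde\theta)-\varphi(\tilde\theta;0,\Sigma)|\,\mathrm{d}\tilde\theta$ with no lower bound on $\varphi$ and no truncation. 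Your ratio bound $|\tilde\alpha_T^\omega-\tilde\alpha|\leq|A_T^\omega-A|e^{z'-z}$ instead produces the factor $\varphi(\tilde\theta';0,\Sigma)/\varphi(\tilde\theta;0,\Sigma)$, whose integral against $\nu$ is unbounded in the tails, so the restriction to $\{|\tilde\theta|\leq R,\ |\tilde\theta'-\tilde\theta|\leq R'\}$ and the separate tightness estimate for the complement are not optional conveniences but essential to your argument — you correctly identified and supplied them, at the price of an extra limit $R,R'\rightarrow\infty$ after $T\rightarrow\infty$. For the noise piece the two proofs coincide in substance: you work in the rescaled variable and note that $\hat\theta_T^\omega+\tilde\theta'/\surd{T}$ eventually enters $B(\bar\theta)$ for $\tilde\theta'$ in a compact set, whereas the paper works in the original variable and integrates $d_{\mathrm{BL}}$ against $\pi_T^\omega q_T(\mathrm{d}\theta')$, invoking its concentration at $\bar\theta$ (its Lemma on the marginal proposal at stationarity) to dispose of $B(\bar\theta)^\complement$; both rely on the same Lipschitz-in-$z'$ observation and on the continuity of $\sigma$ at $\bar\theta$ to pass from $\sigma(\theta')$ to $\sigma(\bar\theta)$. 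The closing subsequence argument for combining the several convergences in $\mathbb{P}^Y$-probability is exactly the mechanism the paper sets up in its random-measures section.
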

A further requirement to ensure the stability of the transition is that the application of the transition operator conserves continuity.
\begin{proposition}
\label{prop:4.3}Under Assumption \ref{ass2},
the map $(\theta,z)\mapsto\tilde{P}f(\theta,z)$ is continuous for every $f\in C_b(\mathbb{R}^{d+1})$.
\end{proposition}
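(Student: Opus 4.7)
My approach decomposes $\tilde{P}f$ into its acceptance and rejection contributions, establishes continuity of the acceptance integral via a generalised dominated convergence argument after rewriting the integrand in a manifestly continuous form, and then recovers continuity of the rejection term by specialising to $f\equiv 1$.

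The key algebraic step is to absorb $\tilde{q}(\tilde{\theta},\tilde{\theta}')$ inside the minimum appearing in the acceptance probability, thereby removing the otherwise problematic ratio of proposal densities:
\[
\tilde{q}(\tilde{\theta},\tilde{\theta}')\,\tilde{\alpha}(\tilde{\theta},z;\tilde{\theta}',z') \;=\; \min\!\left\{\tilde{q}(\tilde{\theta},\tilde{\theta}'),\;\frac{\varphi(\tilde{\theta}';0,\Sigma)}{\varphi(\tilde{\theta};0,\Sigma)}\,\tilde{q}(\tilde{\theta}',\tilde{\theta})\,\exp(z'-z)\right\} \;=:\; G(\tilde{\theta},z,\tilde{\theta}',z').
\]
Under Assumption \ref{ass2} we have $\tilde{q}(\tilde{\theta},\tilde{\theta}')=\nu(\tilde{\theta}'-\tilde{\theta})$ with $\nu$ continuous, while $\varphi(\cdot;0,\Sigma)$ is strictly positive and continuous on $\mathbb{R}^d$. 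Consequently both arguments inside the minimum are jointly continuous in $(\tilde{\theta},z,\tilde{\theta}',z')$, and so is $G$.

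Writing
\[
\tilde{P}f(\tilde{\theta},z) \;=\; \int G(\tilde{\theta},z,\tilde{\theta}',z')\,\varphi(z';-\sigma^{2}/2,\sigma^{2})\,f(\tilde{\theta}',z')\,\mathrm{d}\tilde{\theta}'\mathrm{d}z' \;+\; \tilde{\rho}(\tilde{\theta},z)\,f(\tilde{\theta},z),
\]
where $\tilde{\rho}(\tilde{\theta},z)=1-\int G(\tilde{\theta},z,\tilde{\theta}',z')\varphi(z';-\sigma^{2}/2,\sigma^{2})\mathrm{d}\tilde{\theta}'\mathrm{d}z'$, I would then fix any sequence $(\tilde{\theta}_n,z_n)\to(\tilde{\theta},z)$ and show that the acceptance integrand converges pointwise by joint continuity of $G$, $\varphi$ and $f$. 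Because $G\leqslant \tilde{q}(\tilde{\theta}_n,\tilde{\theta}')=\nu(\tilde{\theta}'-\tilde{\theta}_n)$, the integrand is dominated by $g_n(\tilde{\theta}',z'):=\|f\|_\infty\,\nu(\tilde{\theta}'-\tilde{\theta}_n)\,\varphi(z';-\sigma^{2}/2,\sigma^{2})$, which is non-negative, converges pointwise to the analogous $g$ constructed from $\tilde{\theta}$, and satisfies $\int g_n\,\mathrm{d}\tilde{\theta}'\mathrm{d}z'=\|f\|_\infty=\int g\,\mathrm{d}\tilde{\theta}'\mathrm{d}z'$ for every $n$ since $\nu$ and $\varphi$ are probability densities. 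The generalised dominated convergence theorem (variable-domination form) then permits passing to the limit under the integral sign, giving continuity of the acceptance integral.

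Applying the same argument to the constant function $\mathbf{1}$ yields continuity of $\tilde{\rho}$; since $f\in C_b(\mathbb{R}^{d+1})$, the rejection contribution $\tilde{\rho}(\tilde{\theta},z)\,f(\tilde{\theta},z)$ is then continuous as a product of continuous functions, and the conclusion follows. The main technical obstacle is that $\nu$ need not be bounded, which precludes a naive application of ordinary dominated convergence; the crux is the use of a translating dominating sequence whose total mass is preserved by the translation-invariance of the $L^{1}$-norm of $\nu$, which is precisely what makes the variable-domination theorem applicable here.
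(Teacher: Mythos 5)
Your proof is correct and follows essentially the same route as the paper: the same decomposition $\tilde{P}f=\Pi f+f\,(1-\Pi 1)$, the same reduction to continuity of the acceptance integral $\Pi f$, and the same underlying mechanism for justifying the limit interchange despite $\nu$ being possibly unbounded. The only difference is in packaging: the paper splits off the translation error $\int\left|\nu(\theta'-\theta_{n})-\nu(\theta'-\theta^{*})\right|\mathrm{d}\theta'$ and kills it with Scheff\'e's lemma before applying ordinary dominated convergence to the difference of the minima, whereas you absorb $\nu$ into the minimum and invoke the generalised (Pratt) dominated convergence theorem with the translating dominating sequence $g_n$ of constant total mass --- two interchangeable arguments, since that form of generalised dominated convergence is itself a Scheff\'e-type consequence of Fatou's lemma.
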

Theorem \ref{theorem} now follows from a direct application of Theorem
\ref{theo:appendix} as the assumptions \emph{(T.1)}, \emph{(T.2)} and \emph{(T.3)} hold by
Proposition \ref{prop:4.1}, \ref{prop:4.2} and \ref{prop:4.3},
respectively.

\section{Random effects models\label{sec:clt}}

\subsection{Statistical model and likelihood estimator}

We provide here sufficient conditions under which weak convergence
of the pseudo-marginal algorithm is verified for an important class
of latent variable models. Consider the model
\begin{equation}
X_{t} \sim f(\cdot\mid\theta)\text{,}\qquad Y_{t}\mid X_{t}\sim g(\cdot\mid X_{t},\theta),\label{eq:independentlatentvariablemodels}
\end{equation}
where $(X_{t})_{t\geqslant 1}$ are independent $\mathbb{R}^{k}$-valued
latent variables, $f(x\mid\theta)$ is a density
with respect to Lebesgue measure and $\left(Y_{t}\right)_{t\geqslant1} $
are $\mathsf{Y}$-valued observations distributed according to a conditional
density $g(y\mid x,\theta)$ with respect to a dominating measure,
$\mathsf{Y}$ being a topological space. For observations $Y_{1:T}=y_{1:T}$
the likelihood is
\[
p(y_{1:T}\mid\theta)=\prod_{t=1}^{T}p(y_{t}\mid\theta)=\prod_{t=1}^{T}\int g(y_{t}\mid x_{t},\theta)f(x_{t}\mid\theta)\mathrm{d}x_{t}.
\]
In many scenarios, this likelihood is not available analytically.
If one wants to perform Bayesian inference about the parameter $\theta$, we can thus use the pseudo-marginal algorithm as it is
possible to obtain an unbiased non-negative estimator of $p(y_{1:T}\mid\theta)$
using importance sampling. Indeed, we can consider $\hat{p}(y_{1:T}\mid\theta,U)=\prod_{t=1}^{T}\hat{p}(y_{t}\mid\theta,U_{t})$ where $U=\left(U_{1},...,U_{T}\right),$ $U_{t}=\left(U_{t,1},...,U_{t,N}\right)$,
$U_{t,i}$ is $\mathbb{R}^{k}$-valued, $N$ denotes the number of
Monte Carlo samples and $\hat{p}(y_{t}\mid\theta,U_{t})$ is an
importance sampling estimator of $p(y_{t}\mid\theta)$ is
\[
\hat{p}(y_{t}\mid\theta,U_{t})=\frac{1}{N}\sum_{i=1}^{N}w(y_{t},U_{t,i},\theta),\text{\hspace{1cm}}w(y_{t},U_{t,i},\theta)=\frac{g(y_{t}\mid U_{t,i},\theta)f(U_{t,i}\mid\theta)}{h(U_{t,i}\mid y_{t},\theta)},
\]
where $U_{t,i}\sim h(\cdot\mid y_{t},\theta)$, $h(\cdot\mid y_{t},\theta)$ being a probability density on $\mathbb{R}^{k}$
with respect to Lebesgue measure. In this case the joint density $m_{T,\theta}\left(u\right)$ of
all the auxiliary variates used to obtain the likelihood estimator
is given by the product over $t=1,...,T$ and $i=1,...,N$ of $h(u_{t,i}\mid y_{t},\theta)$.
We will assume subsequently that the true observations are independent and identically distributed samples taken from a probability measure $\mu$ so the joint data distribution is the product measure $\mathbb{P}^{Y}(\mathrm{d}\omega)=\prod_{t=1}^{\infty}\mu(\mathrm{d}y_{t})$.

\subsection{Verifying the assumptions}

The Bernstein--von Mises theorem holds under weak regularity assumptions; see 
\textcite[Theorem 10.1]{VanderVaart2000} and the supplementary material (Section S3$\cdot$2) for the case of generalized linear mixed models presented in Section \ref{sec:glmm}. 
This ensures Assumption \ref{ass1} is satisfied while Assumption
\ref{ass2} is easy to satisfy, selecting for example a multivariate normal
proposal of covariance scaling as $1/\surd{T}$. Assumption \ref{ass3}
is more complicated as it requires to establish uniform conditional
central limit theorems for $\hat{p}(Y{}_{1:T}\mid\theta,U)$ in scenarios where
$U\sim m_{T,\theta}$ arise from the proposal, so $Z\sim g_{T}^{\omega}\left(\cdot\mid\theta\right),$
or at stationarity where $U\sim\pi_{T}^{\omega}(\cdot\mid\theta)$
with
\[
\pi_{T}^{\omega}(u\mid\theta)=\frac{\hat{p}(y_{1:T}\mid\theta,u)}{p(y_{1:T}\mid\theta)}m_{T,\theta}(u),
\]
implying that $Z\sim\bar{g}_{T}^{\omega}\left(\cdot\mid\theta\right)$.
We denote
\begin{align*}
\sigma^{2}(y,\theta) & =\mathrm{var}\left\{\overline{w}(y,U_{1,1},\theta)\right\}, \quad \sigma^{2}(\theta) =E\left\{\sigma^{2}(Y_{1},\theta)\right\},
\end{align*}
with $U_{1, 1} \sim h(\cdot \mid y, \theta)$, $Y_{1}\sim\mu$ and
\begin{equation}
\overline{w}(Y_{t},U_{t,i},\theta)=\frac{w(Y_{t},U_{t,i},\theta)}{p(Y_{t}\mid\theta)}.\label{eq:normalizedweight}
\end{equation}
However, under the following assumption, we show here that Assumption \ref{ass3} holds.
\begin{assumption}
\label{ass4} There exists a closed $\varepsilon$-ball $B(\bar{\theta})$
around $\bar{\theta}$ and a function $g$ such that the normalized
weight $\overline{w}(y,U_{1,1},\theta)$ defined in (\ref{eq:normalizedweight})
satisfies for some $0<\Delta<1$
\begin{equation*}
\sup_{\theta\in B(\bar{\theta})}E\left\{\overline{w}(y,U_{1,1},\theta)^{2+\Delta}\right\} \leq g(y),\label{eq:unif:bound}
\end{equation*}
where $U_{1,1}\sim h(\,\cdot\mid y,\theta)$ and $\mu(g)<\infty$. Additionally, $\theta\mapsto \sigma^2(y,\theta)$ is continuous in
$\theta$ on $B(\bar{\theta})$ for all $y\in\mathsf{Y}$.
\end{assumption}

\begin{theorem}
\label{thm:uniform_CLT}
Under Assumption \ref{ass4}, Assumption \ref{ass3} is satisfied.
\end{theorem}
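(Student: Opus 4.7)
The plan is to establish a uniform conditional central limit theorem for
\[
Z_T(\theta) = \sum_{t=1}^{T} \log\{\hat p(y_t \mid \theta, U_t)/p(y_t \mid \theta)\} = \sum_{t=1}^{T} \log\{1 + A_{T,t}(\theta)\},
\]
where $A_{T,t}(\theta) = N^{-1}\sum_{i=1}^{N}\{\overline{w}(y_t, U_{t,i}, \theta) - 1\}$, conditionally on the observations $Y_{1:T}$. To keep the total variance of the leading term of constant order, I work in the regime where $N$ scales linearly with $T$, so that the limiting variance in Assumption \ref{ass3} matches $\sigma^{2}(\theta) = E\{\sigma^{2}(Y_{1}, \theta)\}$ from Assumption \ref{ass4}.

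First I would Taylor expand $\log(1+x) = x - x^{2}/2 + R(x)$ and control the three resulting terms. The linear part is, conditionally on $Y_{1:T}$, a sum over a triangular array of $TN$ mean-zero variables with total conditional variance $N^{-1}\sum_{t=1}^{T}\sigma^{2}(y_t, \theta)$, which converges to $\sigma^{2}(\theta)$ by the strong law of large numbers applied under the i.i.d.\ product data distribution $\mathbb{P}^{Y}$. The Lindeberg condition is verified using the uniform $(2+\Delta)$-moment bound from Assumption \ref{ass4} via Markov's inequality, yielding conditional convergence of $\sum_t A_{T,t}(\theta)$ to $N(0, \sigma^{2}(\theta))$. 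The quadratic term $-\tfrac{1}{2}\sum_t A_{T,t}(\theta)^{2}$ converges to $-\sigma^{2}(\theta)/2$ in conditional probability by a conditional law of large numbers on the $TN$ squared weights, again using the uniform moment bound for uniform integrability. The remainder is handled by $|R(x)| \le C|x|^{3}$ on a neighbourhood of zero combined with a maximal inequality on $\max_{t}|A_{T,t}(\theta)|$ derived from Assumption \ref{ass4}, giving the pointwise weak convergence in \eqref{eq:uniformCLT}.

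The crucial step is then promoting this pointwise convergence to uniform convergence over the compact ball $B(\bar{\theta})$. I would establish asymptotic equicontinuity of the family $\theta \mapsto g_{T}^{\omega}(\,\cdot \mid \theta)$ in the bounded Lipschitz metric, exploiting continuity of $\theta \mapsto \sigma^{2}(y, \theta)$ and the uniform $(2+\Delta)$-moment bound; combined with convergence on a countable dense subset of $B(\bar{\theta})$ this yields the uniform supremum in \eqref{eq:uniformCLT}. For the equilibrium statement \eqref{eq:uniformCLTequilibrium}, I observe that $\bar{g}_{T}^{\omega}(\,\cdot \mid \theta)$ is the law of $Z_T(\theta)$ when $U$ is drawn from the tilted density $\hat p(y_{1:T} \mid \theta, u) m_{T,\theta}(u)/p(y_{1:T} \mid \theta)$ instead of $m_{T,\theta}$. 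Repeating the Taylor analysis under this tilted law shifts the asymptotic mean from $-\sigma^{2}(\theta)/2$ to $+\sigma^{2}(\theta)/2$ while leaving the variance unchanged, a calculation that can be streamlined by a characteristic-function computation exploiting that $e^{z}$ is the Radon--Nikodym derivative between $\bar{g}_{T}^{\omega}$ and $g_{T}^{\omega}$.

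The main obstacle is extracting uniformity in $\theta$ from the single hypothesis of a uniform $(2+\Delta)$-moment bound, which must serve simultaneously as a Lindeberg tail control for the conditional CLT, a uniform integrability tool for the conditional LLN on the quadratic part, a maximal-inequality input for bounding the Taylor remainder, and an equicontinuity estimate for the family $\{g_{T}^{\omega}(\,\cdot \mid \theta)\}_{\theta \in B(\bar{\theta})}$. A secondary subtlety is keeping straight the two layers of randomness: I would perform each argument conditionally on $Y_{1:T}$ and only afterwards invoke the strong law under $\mathbb{P}^{Y}$ to replace empirical averages of functionals of the data by their deterministic limits.
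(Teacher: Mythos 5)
Your decomposition of $Z_T(\theta)$ via $\log(1+x)=x-x^2/2+R(x)$, the conditional treatment of the quadratic term, and the use of the Radon--Nikodym derivative $e^{z}$ for the equilibrium case all match the paper's strategy, and your handling of the remainder (a local cubic bound plus a maximal inequality derived from the $(2+\Delta)$-moment bound) is a workable variant of the paper's explicit estimate of $\int_0^x u^2/(1+u)\,\mathrm{d}u$. The genuine gap is in the step you yourself identify as crucial: promoting the pointwise conditional CLT to a supremum over $B(\bar{\theta})$. You propose asymptotic equicontinuity of $\theta\mapsto g_T^{\omega}(\cdot\mid\theta)$ in $d_{\mathrm{BL}}$ combined with convergence on a countable dense subset, exploiting continuity of $\theta\mapsto\sigma^2(y,\theta)$ and the uniform moment bound. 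But Assumption 4 gives no modulus-of-continuity control on $\theta\mapsto\overline{w}(y,U,\theta)$ itself: continuity of the variance $\sigma^2(y,\cdot)$ constrains the limiting law, not the prelimit laws, and a uniform $(2+\Delta)$-moment bound is a size condition, not a smoothness condition. To bound $d_{\mathrm{BL}}\{g_T^{\omega}(\cdot\mid\theta),g_T^{\omega}(\cdot\mid\theta')\}$ for nearby $\theta,\theta'$ uniformly in $T$ you would need something like $\sup_T E|\epsilon_T(y,\theta)-\epsilon_T(y,\theta')|\rightarrow 0$ as $\theta'\rightarrow\theta$, which is not implied by the hypotheses; so the equicontinuity step fails as stated.

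The paper circumvents this by replacing the qualitative Lindeberg CLT with a quantitative Stein-type bound (Theorem 3.2 of Chen, Goldstein and Shao): for the conditionally standardized sum $S_T(\theta)$ it bounds $\sup_{\|f\|_{\mathrm{BL}}\leq 1}\left|E\left[f\{S_T(\theta)\}\mid\mathcal{Y}_T\right]-E\{f(Z)\}\right|$ directly by truncated second and third conditional moments of the summands. The supremum over $\theta$ then acts on explicit moment functionals, each dominated by $C\{g(Y_t)+1\}/T^{1+\Delta/2}$ via Assumption 4, and these vanish by the Marcinkiewicz--Zygmund law of large numbers; the only place where $\theta$-uniformity of an empirical average is genuinely needed is $T^{-1}\sum_t\sigma^2(Y_t,\theta)\rightarrow\sigma^2(\theta)$, which is where the continuity of $\sigma^2(y,\cdot)$ enters through a uniform law of large numbers (Jennrich). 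For the equilibrium part you would additionally need that the second moment of the Radon--Nikodym derivative $\prod_{t=1}^{T}\{1+\epsilon_T(Y_t,\theta)/\surd{T}\}$ is asymptotically bounded uniformly on $B(\bar{\theta})$, after which Cauchy--Schwarz transfers part (a) to the tilted law; simply repeating the Taylor analysis under the tilted law is harder than it looks, since tilted moments of order $k$ cost untilted moments of order $k+1$, which the $(2+\Delta)$-moment hypothesis does not supply.
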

Theorem \ref{thm:uniform_CLT} strengthens earlier results of \textcite[Theorem 1]{deligiannidis2015} which obtain standard central limit theorems for the error in the log-likelihood estimator.

\subsection{Generalized linear mixed models}
\label{sec:glmm}
A common example of random effects models is the class of generalized linear mixed models \autocite[see ]{mcculloch2005generalized}, where the observation density is a member of the exponential family and the latent variable follows a centred Gaussian distribution. The densities with respect to some dominating measure can be written as
\begin{equation}\label{eq:exp_model}
	g(y \mid x, \theta) = \prod_{j=1}^J m(y_j) \exp\left[\eta_j(x) T(y_j) - A\{\eta_j(x)\}\right], \quad f(x \mid \theta) = \varphi(x; 0, \tau^2),
\end{equation}
where $\eta_j(x) = c_j^{\T}\beta + x$, $c$ is a vector of covariates with corresponding parameter vector $\beta$, $A(\eta)$ denotes the $\log$-partition function and $m(y)$ is a base measure. In section S3$\cdot$2 of the supplementary material, we show that for many such models the assumptions of Theorem \ref{theorem} can be verified. In particular, we show that Assumption 4 holds thus Assumption 3 holds by Theorem \ref{thm:uniform_CLT}.

\section{Efficient Implementation of the Pseudo-Marginal Random Walk Algorithm\label{sec:optimalscaling}}

\subsection{Optimal tuning}

We optimize the performance of the limiting pseudo-marginal
chain identified in Theorem \ref{theorem} as a proxy for the optimization
of the original pseudo-marginal chain. We assume that the limiting
covariance matrix $\Sigma$ in \eqref{eq:bernmis} is the identity matrix $I_{d}$ with $d$
denoting the parameter dimension. For general covariance matrices,
we can use a Cholesky decomposition and a change of variables as in \autocite{Sherlock2015efficiency, Nemeth2016}. We
denote by $\tilde{P}_{\ell,\sigma}$ the transition kernel (\ref{eq:limitingMarkov})
using the proposal density
\begin{equation*}
q(\theta,\theta')=\varphi\left(\theta';\theta,\ell^{2}I_{d}/d\right).
\end{equation*}
As in \textcite{PittSilvaGiordaniKohn} and \textcite{doucet2015efficient}, we propose to minimize $\textsc{ct}(f,\tilde{P}_{\ell,\sigma})$, as defined in \eqref{eq:ct},
with respect to the noise standard deviation $\sigma$ but, contrary
to these contributions, also with respect to
the scale parameter $\ell$. We restrict attention here to the case
where $f\left(\theta,z\right)=\theta_{1}$, the first component of
$\theta$, and write $\textsc{ct}(f,\tilde{P}_{\ell,\sigma})=\textsc{ct}(\ell,\sigma)$
in this case. As this criterion is not available in closed-form, we
simulate the limiting Markov chain initialized in its stationary regime
with different noise levels $\sigma$ and scales $\ell$
on a fine grid to obtain empirical estimates of $\textsc{ct}(\ell,\sigma)$
computed using the overlapping batch mean estimator. 
This simulation is straightforward as the target and noise distributions in this limiting
case are both Gaussian. We then find the approximate minimizer $(\hat{\ell}_{\mathrm{opt}},\hat{\sigma}_{\mathrm{opt}})$
of $\textsc{ct}(\ell,\sigma)$ over this grid. This
set-up is applied for parameter dimension
$d$ ranging from $1$ to 50. The results are summarized in Table~\ref{tab:multi_dim}.

\begin{table}
\centering
\begin{tabular}{crrrr}
Dimension $d$  & $\hat{\ell}_{\mathrm{opt}}$   & $\hat{\sigma}_{\mathrm{opt}}$   & $\textsc{ct}(\hat{\ell}_{\mathrm{opt}},\hat{\sigma}_{\mathrm{opt}})$  & $\pr_{\mathrm{acc}}(\hat{\ell}_{\mathrm{opt}},\hat{\sigma}_{\mathrm{opt}})$ \tabularnewline
$d=1$  & 2$\cdot$05 (0$\cdot$25)  & 1$\cdot$16 (0$\cdot$07) &  8$\cdot$47 & 25$\cdot$73\%\tabularnewline
$d=2$  & 1$\cdot$97 (0$\cdot$14)  & 1$\cdot$21 (0$\cdot$06) & 12$\cdot$71 & 22$\cdot$92\%\tabularnewline
$d=3$  & 2$\cdot$11 (0$\cdot$07)  & 1$\cdot$24 (0$\cdot$05) & 16$\cdot$79 & 19$\cdot$97\%\tabularnewline
$d=5$  & 2$\cdot$17 (0$\cdot$12)  & 1$\cdot$30 (0$\cdot$05) & 23$\cdot$18 & 17$\cdot$35\%\tabularnewline
$d=10$ & 2$\cdot$20 (0$\cdot$08)  & 1$\cdot$44 (0$\cdot$05) & 37$\cdot$93 & 14$\cdot$27\%\tabularnewline
$d=15$ & 2$\cdot$33 (0$\cdot$08)  & 1$\cdot$50 (0$\cdot$00) & 53$\cdot$43 & 12$\cdot$07\%\tabularnewline
$d=20$ & 2$\cdot$34 (0$\cdot$10)  & 1$\cdot$54 (0$\cdot$05) & 65$\cdot$62 & 11$\cdot$44\%\tabularnewline
$d=30$ & 2$\cdot$36 (0$\cdot$11)  & 1$\cdot$61 (0$\cdot$03) & 90$\cdot$46 & 10$\cdot$41\%\tabularnewline
$d=50$ & 2$\cdot$41 (0$\cdot$10)  & 1$\cdot$74 (0$\cdot$05) &136$\cdot$38  & 8$\cdot$66\%\tabularnewline
\end{tabular}
\caption{Optimal values for scaling $\ell$ and noise $\sigma$ and associated
value of computing time and average acceptance probability: mean and standard deviation of the minimizers over $10$ runs.}
\label{tab:multi_dim}
\end{table}

Table \ref{tab:multi_dim} also lists the computing time at these values and the average acceptance probability of the proposal under $\tilde{P}_{\ell,\sigma}$
at stationarity by using 5 million iterates of the chain. The obtained results
are consistent with those in \textcite{doucet2015efficient} and \textcite{Sherlock2015efficiency}.
For low dimensions, $1\leq d\leq5$, the ideal Metropolis--Hastings algorithm mixes well and $\hat{\sigma}_{\mathrm{opt}}$
is around 1$\cdot$1-1$\cdot$3 as suggested by \textcite{doucet2015efficient} and
it increases slowly as $d$ increases to the values ${\left(\ell_{\infty},\sigma_{\infty}\right)=(2{\cdot}56,1{\cdot}81)}$
obtained by the diffusion limit \autocite{Sherlock2015efficiency}. For
example, for $d=50$, we obtain $(\hat{\ell}_{\mathrm{opt}},\hat{\sigma}_{\mathrm{opt}})=(2{\cdot}41,1{\cdot}74)$
and the resulting optimal computing time $\textsc{ct}(\hat{\ell}_{\mathrm{opt}},\hat{\sigma}_{\mathrm{opt}})$
is close to $\textsc{ct}(\ell_{\infty},\sigma_{\mathrm{\infty}})$.
For lower dimensions, however, the performance in terms of computing
time can be increased by reducing $\sigma$ and $\ell$ in comparison to $\sigma_{\infty}$ and $\ell_{\infty}$;
see Table \ref{tab:comparison}. We also observed empirically that
the cost function $\ell\mapsto\textsc{ct}(\ell,\sigma)$ is fairly
flat as noticed in the limiting case by \textcite{Sherlock2015efficiency}.

\subsection{Implementation}
We now show how to exploit the results of the last section in practice to design an efficient implementation of the pseudo-marginal algorithm. Using a preliminary run, we compute estimates $\hat{\theta}$, $\hat{\Sigma}$  of the posterior mean and posterior covariance matrix. For the parameter dimension $d$, we choose $\ell$ according to Table~\ref{tab:multi_dim} and use a Gaussian random walk proposal with covariance matrix $\hat{\ell}_{\mathrm{opt}}^2\hat{\Sigma}/d$. Finally we select the number of Monte Carlo samples $N$ such that the sample standard deviation of the log-likelihood estimate at $\hat{\theta}$ matches the optimal value $\hat{\sigma}_{\mathrm{opt}}$ listed in Table~\ref{tab:multi_dim}. This approach is similar to the one followed in \textcite{Sherlock2015efficiency} except for the dimension dependence of the recommended parameters $(\hat{\ell}_{\mathrm{opt}},\hat{\sigma}_{\mathrm{opt}})$.

\begin{table}
\centering
\begin{tabular}{crrr}
Dimension $d$ & $\textsc{ct}(\ell_{\infty},\hat{\sigma}_{\mathrm{opt}})$  & $\textsc{ct}(\ell_{\mathrm{\infty}},\sigma=1.2)$  & $\textsc{ct}(\ell_{\mathrm{\infty}},\sigma_\infty)$ \tabularnewline
$d=1$  &  9$\cdot$04 (0$\cdot$25) & 9$\cdot$05  (0$\cdot$21) & 17$\cdot$10 (1$\cdot$34)\tabularnewline
$d=2$  & 13$\cdot$48 (0$\cdot$32) & 13$\cdot$37 (0$\cdot$28) & 22$\cdot$45 (0$\cdot$81)\tabularnewline
$d=3$  & 17$\cdot$63 (0$\cdot$28) & 17$\cdot$43 (0$\cdot$26) & 26$\cdot$71 (0$\cdot$64)\tabularnewline
$d=5$  & 24$\cdot$38 (0$\cdot$44) & 24$\cdot$72 (0$\cdot$31) & 34$\cdot$14 (0$\cdot$88)\tabularnewline
$d=10$ & 40$\cdot$17 (0$\cdot$71) & 41$\cdot$60 (0$\cdot$24) & 47$\cdot$08 (1$\cdot$03)\tabularnewline
$d=15$ & 53$\cdot$69 (0$\cdot$72) & 58$\cdot$01 (0$\cdot$50) & 59$\cdot$08 (0$\cdot$79)\tabularnewline
$d=20$ & 67$\cdot$15 (0$\cdot$53) & 74$\cdot$34 (0$\cdot$36) & 71$\cdot$41 (1$\cdot$48)\tabularnewline
$d=30$ & 91$\cdot$36 (0$\cdot$95) & 106$\cdot$08 (0$\cdot$34) & 93$\cdot$73 (1$\cdot$08)\tabularnewline
$d=50$ &136$\cdot$49 (1$\cdot$18) & 167$\cdot$83 (0$\cdot$75) & 135$\cdot$92 (1$\cdot$27)\tabularnewline
\end{tabular}
\caption{Comparison of the computing time for different noise levels. $\hat{\sigma}_{\mathrm{opt}}$
denotes the minimizer of the estimated integrated autocorrelation
time, as shown in Table \ref{tab:multi_dim}.}
\label{tab:comparison}
\end{table}

\section{Simulation study: Random Effects Model}
\label{sec:sim_rem}

We now illustrate how the guidelines derived from the limiting pseudo-marginal chain compare to a practical implementation of the pseudo-marginal algorithm. We consider a logistic mixed effects model applied to a real data set. Mixed models are popular in econometrics, survey
analysis and medical statistics amongst others and are often used
to describe heterogeneity between groups. Here we consider a subset
of a cohort study of Indonesian preschool children.
This dataset was previously analysed using Bayesian mixed models by \textcite{zeger1991}.
It contains 1200 observations of 275 children.
We model the probability of a respiratory infection based on the following covariates: age, sex, height, an indicator for presence of vitamin deficiency, an indicator for subnormal height and two seasonal components.
Including the intercept we have 8 covariates. Cluster
effects due to repeated measurements of the same children are modelled with
individual random intercepts. In this case the linear predictor of
a regression model based on covariates $c_{t,j}$~$(t=1,\ldots,T,j=1,\ldots J)$
reads $\eta_{t,j}=c_{t,j}^{\T}\beta+X_{t}$ where $X_{t}\sim\mathcal{N}(0,\tau)$
denotes the random intercept for children $t=1,\ldots,T$ and $\beta$ the
regression parameters. For every child, we have an observation vector
$y_{t}=\left(y_{t,1}\ldots,y_{t,J}\right)\in\{0,1\}^{J}$.
The unknown parameter is $\theta=(\beta,\tau\text{)}\in\mathbb{R}{}^{d}$
where $d=9$. The observations are
assumed conditionally independent given the random effects and are
modelled through
\begin{equation*}
	g(y_t \mid x_t, \theta) = \prod_{j=1}^{J}\frac{\exp(y_{t,j}\eta_{t,j})}{1+\exp(\eta_{t,j})}, \quad f(x_t \mid \theta) = \varphi(x_t;0,\tau), \quad t=1,\ldots, T.
\end{equation*}
Inference in mixed effects models often aims at finding the population
effects and thus one is interested in integrating out the random effects.
Since the marginal likelihood contains intractable integrals, this
model lends itself to the pseudo-marginal approach. We obtain an unbiased estimator of the marginal likelihood by estimating the integrals using an importance sampling estimator
\begin{equation*}
	h(u \mid y_t, \theta) = \varphi(u; \widehat{x}_t, \tau_q^2), \quad \widehat{x}_t = \arg\max_{x_t} g(y_t\mid x_t, \theta)f(x_t \mid \theta)
\end{equation*}
with proposal variance $\tau_q>0$. We provide more details to importance sampling for mixed effects models in Section S3$\cdot$3 where we also show that Assumption 4 is satisfied in the present example.
For the covariate parameters we assume a diffuse Gaussian prior and
the variance of the random effects are assigned an inverse gamma prior.
We run a pseudo-marginal algorithm with a Gaussian random walk proposal for 500000 iterations. The covariance of
the proposal is set equal to the posterior covariance of the parameters
estimated in a preliminary run and scaled by $\ell^{2}/d=(2$$\cdot$$2)^{2}/9$.
We compare the average integrated autocorrelation time and the acceptance
rate with that of the limiting chain using the same $\ell=2$$\cdot$2 and $\sigma=\hat{\sigma}$, the average being defined as $\widehat{\textsc{iat}}(P_T^\omega) = \sum_{i=1}^d \textsc{iat}(f_i, P_T^\omega)$ for $f_i\left(\theta,z\right)=\theta_{i}$ the $i^{\textrm{th}}$ parameter component.
Here, $\hat{\sigma}$ is the standard deviation of the log-likelihood estimator obtained using 10000 samples of the marginal likelihood evaluated at an estimate $\hat{\theta}$  of the posterior mean. The results are summarized in Table
\ref{tab:comparison3}. For a given number of particles $N$ we report
the associated estimate of the noise in the log-likelihood estimator, the average integrated autocorrelation time averaged and the average acceptance rate.

\begin{table}
\centering

\begin{tabular}{rrrrrr}
Particles $N$ & $\hat{\sigma}$ & $\widehat{\textsc{iat}}$ & $\hat{\pr}_{\mathrm{acc}}$ & $\widehat{\textsc{iat}}\left(\tilde{P}_{\ell=2{\cdot}2,\sigma=\hat{\sigma}}\right)$ & $\hat{\pr}_{\mathrm{acc}}\left(\tilde{P}_{\ell=2{\cdot}2,\sigma=\hat{\sigma}}\right)$\tabularnewline
$12$ & 2$\cdot$00 & 140$\cdot$22  &  8$\cdot$93\% &  162$\cdot$57 & 7$\cdot$67\%\tabularnewline
$15$ & 1$\cdot$76 & 112$\cdot$06  & 10$\cdot$70\% &  121$\cdot$70 & 9$\cdot$93\%\tabularnewline
$18$ & 1$\cdot$63&  98$\cdot$69   & 12$\cdot$30\% &  94$\cdot$14  & 11$\cdot$73\%\tabularnewline
$21$ & 1$\cdot$46 &  72$\cdot$42  & 13$\cdot$93\% &  72$\cdot$31  & 14$\cdot$00\%\tabularnewline
$24$ & 1$\cdot$34 &  66$\cdot$29  & 15$\cdot$10\% &  64$\cdot$45  & 15$\cdot$55\%\tabularnewline
$27$ & 1$\cdot$29 &  61$\cdot$95  & 16$\cdot$08\% &  58$\cdot$08  & 16$\cdot$39\%\tabularnewline
$30$ & 1$\cdot$22 &  58$\cdot$70  & 16$\cdot$85\% &  54$\cdot$12  & 17$\cdot$52\%\tabularnewline
$33$ & 1$\cdot$16 &  52$\cdot$39  & 17$\cdot$77\% &  50$\cdot$26  & 18$\cdot$16\%\tabularnewline
\end{tabular}
\caption{For $N$ particles: standard deviation $\hat{\sigma}$ of the log-likelihood
estimator at the mean, average integrated autocorrelation time $\widehat{\textsc{iat}}$
and average acceptance probability $\hat{\pr}_{\mathrm{acc}}$ for pseudo-marginal
kernel and limiting kernel $\tilde{P}_{\ell,\hat{\sigma}}$ for $\ell=2$$\cdot$2.
}
\label{tab:comparison3}
\end{table}

\begin{figure}
\centering
\includegraphics[width=\linewidth]{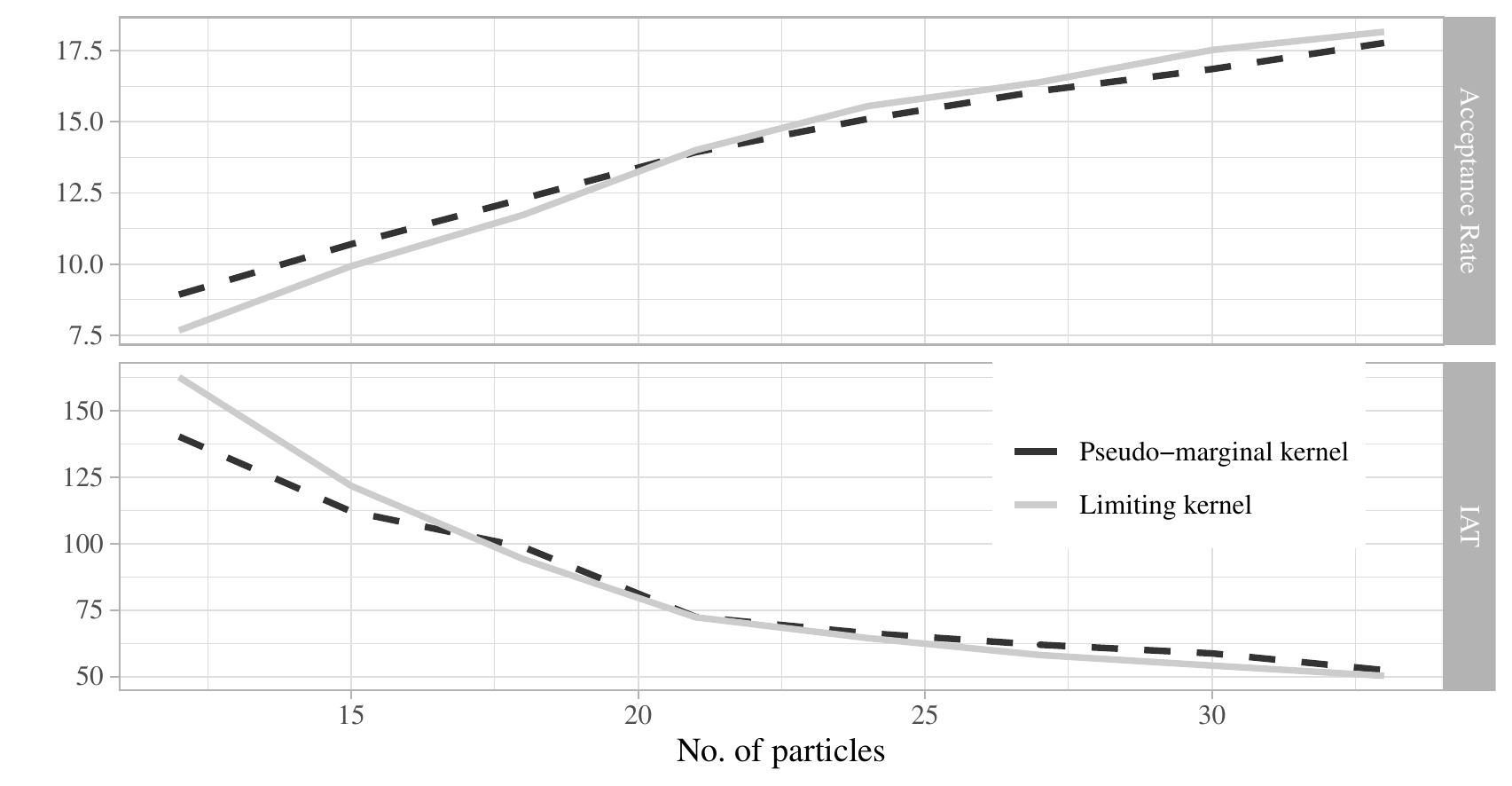}
\caption{Average integrated autocorrelation time (top) and average acceptance rate (bottom) for the pseudo-marginal algorithm as a function of $N$ and the limiting transition kernel $\tilde{P}_{\ell,\hat{\sigma}}$ for $\ell=2{\cdot}2$.}
\label{fig:comparison}
\end{figure}

The average integrated autocorrelation time and the acceptance rate are very close to those of the limiting algorithm. This is visualized in Figure \ref{fig:comparison} where we plot the same quantities against the number of particles $N$. The computing time of the pseudo-marginal algorithm targeting the posterior, $\widehat{\textsc{ct}}(P_T^\omega) = \widehat{\textsc{iat}}(P_T^\omega)/\hat{\sigma}^{2}$, and the computing time of the limiting algorithm, $\widehat{\textsc{ct}}(\tilde{P}_{\ell,\sigma})$, are both optimized for $\hat{\sigma}=1{\cdot}46$, as expected from Table \ref{tab:multi_dim}. In this example, the limiting kernel captures very well the behaviour of the pseudo-marginal algorithm for large data sets and  Table \ref{tab:multi_dim} thus provides useful guidelines on how to tune this scheme.

\section*{Acknowledgement}
Sebastian M. Schmon's research is supported by the Engineering and Physical
Sciences Research Council (EPSRC) grant EP/K503113/1 and Arnaud Doucet's
research is partially supported by the EPSRC grants EP/R018561/1 and EP/R034710/1.

\section*{Supplementary material}
Supplementary material below includes the proofs to all propositions and theorems as well as a set of generalized linear mixed models for which all assumptions hold. It also includes a short review of weak convergence of random measures and some further simulation studies, including a 3-dimensional Lotka-Volterra model.

\printbibliography[heading=subbibliography] 
\end{refsection}

\begin{refsection}

\appendix

\section*{Supplementary Material}

This supplementary material contains the proofs to all theorems and propositions, some background material and additional simulation studies. Section~\ref{appendix:randommeasures} includes a brief survey of weak convergence results for random probability measures on Polish spaces which play an important role in this article. We have not been able to find some of the precise statements we require in the literature so we present their proofs here without any claim of originality. Sections~\ref{sec:prof_randomchain} and \ref{app:clt} provide the proofs for sections 4 and 5, respectively. Finally, section~\ref{sec:furthersim} includes some additional numerical examples: a toy example and a Lotka-Volterra model where the likelihood is estimated using a particle filter as opposed to importance sampling.

\renewcommand{\thesection}{S1}
\section{Random Measures and Weak Convergence on Polish Spaces\label{appendix:randommeasures} }

\subsection{Weak Convergence}
Let $S$ be a Polish space, endowed with the Borel $\sigma$-algebra
$\mathcal{B}\left(S\right)$. We denote $d$ the metric inducing the
topology on $S$ and $\mathcal{P}(S)$ the space of Borel probability
measures on $S$. In the following, we will only consider
(random) probability measures in $\mathcal{P}(S)$ unless stated otherwise.

\begin{definition}[\emph{Weak convergence}]
A sequence of probability measures $(\mu_{n})_{n\geqslant1} $
converges weakly to a probability measure $\mu$, denoted $\mu_{n}\rightsquigarrow\mu$,
if for all $f\in C_{b}(S)$
\begin{equation}
\mu_{n}(f)\rightarrow\mu(f)\quad\text{as }n\rightarrow\infty,\label{eq:weak}
\end{equation}
where $C_{b}(S)$ is the set of bounded continuous real-valued functions
of domain $S$.
\end{definition}

The set of test functions generating this topology can be restricted
to bounded continuous functions $f\colon S\rightarrow[0,1]$ or bounded Lipschitz functions, see for example \textcite[Lemma A.1 and Theorem A.2]{crauel2003}.
The topology of weak convergence can be metrized using the bounded
Lipschitz metric which is given for $\mu,\nu\in\mathcal{P}(S)$ by
\begin{equation}
d_{\mathrm{BL}}(\mu,\nu)=\sup\left\{ \left|\mu(f)-\nu\left(f\right)\right|;f\in\mathrm{BL}(S),\|f\|_{\mathrm{BL}}\leq1\right\} ,\label{eq:metrizeweakconvergence}
\end{equation}
see for example \textcite[Proposition 11.3.2]{dudley2002}. Here, the set $\mathrm{BL}(S)$ denotes the set of bounded Lipschitz functions and we follow \textcite{pollard2002user} by defining the norm
\begin{align}\label{def:BL}
\|f\|_{\mathrm{BL}} & =\max\left\{\|f\|_\mathrm{L}, 2\|f\|_\infty\right\}, \\
\intertext{where}
\|f\|_\mathrm{L} = \sup_{x,y:x\neq y}\frac{|f(x)-f(y)|}{d(x,y)} & \quad \text{and} \quad \|f\|_\infty = \sup_{x}|f(x)|.
\end{align}
This definition gives us the inequality
\begin{equation}\label{ineq:lipschitz}
	\left|f(x) - f(y)\right| \leq \|f\|_\mathrm{BL}\left[\min\left\{1, d(x,y) \right\} \right]
\end{equation}
for every $x, y$.

\subsection{Weak Convergence of Random Measures}
\label{sec:weak_random_measures}

We recall here some facts about random probability measures. Let $(\Omega,\mathcal{F},\mathbb{P})$
denote a probability space. We equip the product space $\Omega\times S$
with the product $\sigma$-algebra, $\mathcal{F}\otimes\mathcal{B}(S)$.
\begin{definition}[\emph{Random probability measure}]
\emph{} A \emph{random probability measure} is a map $\mu\colon\Omega\times\mathcal{B}\left(S\right)\rightarrow[0,1]$
such that for every $B\in\mathcal{B}\left(S\right)$ the map $\omega\mapsto\mu(\omega,B)=\mu^{\omega}(B)$
is measurable while $\mu(\omega,\cdot)\in\mathcal{P}(S)$ for almost
every $\omega\in\Omega$.
\end{definition}

For all bounded and measurable functions $g\colon\Omega\times S\rightarrow\mathbb{R}$,
the assignment $\omega\mapsto\int_{S}g(\omega,x)\mu^{\omega}(\mathrm{d}x)$
is measurable \autocite[see, for example,][Proposition 3.3]{crauel2003}
and thus, for random measures, the map $\omega\mapsto\mu^{\omega}(f)$
is a random variable. As a consequence we have that $\mu^{\omega}\colon\Omega\rightarrow\mathcal{P}(S)$
is a Borel measurable map. Conversely,
it can be shown that any random element of $\left[\mathcal{P}(S),\mathcal{B}\{\mathcal{P}(S)\}\right]$
fulfils the condition set out in Definition \ref{def:random_measure},
see \autocite[Remark 3.20 (i)]{crauel2003} or \autocite[Lemma 1.37]{Kallenberg2006}
for details.%
\begin{definition}[\emph{Weak convergence of random measures}]
A sequence of random probability measures $(\mu_{n}^{\omega})_{n\geqslant1}$ converges weakly almost surely to a probability measure $\mu$,
denoted $\mu_{n}^{\omega}\rightsquigarrow_{a.s.}\mu$, if
\begin{equation}
\mathbb{P}\left(\omega\in\Omega:\,\,\mu_{n}^{\omega}\rightsquigarrow\mu\right)=1.\label{eq:wealconv}
\end{equation}
Further, we say that $(\mu_{n}^{\omega})_{n\geqslant1}$ converges
weakly in probability, denoted $\mu_{n}^{\omega}\rightsquigarrow_{\mathbb{P}}\mu$,
if every subsequence contains a further subsequence which converges
weakly almost surely.
\end{definition}

One can easily verify that the above definition of almost sure weak
convergence, respectively weak convergence in probability, is equivalent
to $\rho(\mu_{n}^{\omega},\mu)\rightarrow0$ almost surely, respectively
in probability, for some metric $\rho$ on $\mathcal{\mathcal{P}}(S)$
metrizing weak convergence, e.g., the bounded Lipschitz metric (\ref{eq:metrizeweakconvergence}),
see for example Theorem \ref{thm:EasycheckWeakCV}.
\begin{remark}[\emph{Measurability of probability metric}]
\emph{ As already mentioned above, for any random measure the map
$\omega\mapsto\mu^{\omega}$ is measurable with respect to the Borel
$\sigma$-algebra $\mathcal{B}\left\{\mathcal{P}(S)\right\}$. Moreover,
any metric $\rho$ inducing the weak topology on $\mathcal{P}(S)$
is trivially continuous in its first argument and hence the map $\mu^{\omega}\mapsto\rho(\mu^{\omega},\nu)$
for some fixed measure $\nu$ is measurable with respect to the Borel
$\sigma$-algebra $\mathcal{B}(\mathbb{R})$. This implies (Borel) measurability of the map $\omega\mapsto\rho(\mu^{\omega},\nu)$ for a non-random measure $\nu.$ }%
\end{remark}

In light of the definition of weak convergence (\ref{eq:weak}) it
is natural to ask whether almost sure weak convergence holds if
\begin{equation}
\mu_{n}^{\omega}(f)\overset{\mathrm{a.s.}}{\longrightarrow}\mu(f)\qquad\text{for all}\qquad f\in C_{b}(S),\label{eq:pointwiseweakAS}
\end{equation}
and similarly whether weak convergence in probability holds if
\begin{equation}
\mu_{n}^{\omega}(f)\overset{\mathbb{P}}{\longrightarrow}\mu(f)\qquad\text{for all}\qquad f\in C_{b}(S).\label{eq:pointwiseweakProba}
\end{equation}
In many practical applications, it appears easier to check (\ref{eq:pointwiseweakAS})
rather than (\ref{eq:wealconv}), similarly checking (\ref{eq:pointwiseweakProba})
appears easier than having to check that every subsequence of $(\mu_{n}^{\omega})_{n\geqslant1}$
contains a subsequence which converges weakly almost surely. Relating
those statements is inconvenienced by the fact that weak convergence
is usually checked using an uncountable convergence determining class
of functions, e.g., the space of bounded continuous functions. However,
we show here that these equivalences hold true for Polish spaces; see Theorem
\ref{thm:EasycheckWeakCV} below.

Almost sure weak convergence can be shown using the existence of a
countable convergence determining subclass $\mathcal{C}\subset\mathrm{BL}(S)\subset C_{b}(S)$.
Considering subsequences and using a diagonal argument we can show
the equivalence of the statement also holds if almost sure convergence
is replaced by convergence in probability. For the purposes of this
paper we confine our attention to weak convergence in probability.
To prove the statements above we first need an auxiliary result, which
also appeared in \textcite[Lemma 4]{sweeting1989}.
\begin{proposition}
\label{prop:subsequence} Suppose $A$ is a countable set and consider
random variables $X_{n}(a)\colon\Omega\rightarrow\mathbb{R}$ indexed
by $a\in A$ and $n\in \mathbb{N}$. Moreover, assume that for every $a\in A$ the sequence
$\{X_{n}(a)\}_{n\geqslant1}$ converges to $X(a)$ in probability, i.e.,
\[
X_{n}(a)\overset{\mathbb{P}}{\rightarrow}X(a)\quad\forall a\in A.
\]
Then there exists a subsequence $N'\subset\mathbb{N}$ such that along
$N'$
\[
\mathbb{P}\left\{\omega:X_{n}(a)\rightarrow X(a)\quad\forall a\in A\right\}=1.
\]
\end{proposition}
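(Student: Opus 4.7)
The plan is to combine two classical facts: that convergence in probability of a single real-valued sequence admits an almost surely convergent subsequence, and a Cantor diagonal argument that handles the countably many indices $a \in A$ simultaneously.

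I would begin by enumerating $A = \{a_1, a_2, \ldots\}$ (the finite case is a strict simplification). Applied to $a_1$, the hypothesis $X_n(a_1) \to X(a_1)$ in probability yields an infinite subsequence $N_1 \subset \mathbb{N}$ along which the convergence is almost sure, with exceptional null set $\Omega_1^c$. Restricting the hypothesis to $N_1$, the sequence $\{X_n(a_2)\}_{n \in N_1}$ still converges in probability to $X(a_2)$, so one extracts a further infinite subsequence $N_2 \subset N_1$ along which $X_n(a_2) \to X(a_2)$ almost surely. Iterating produces a nested family $N_1 \supset N_2 \supset N_3 \supset \cdots$ of infinite subsets of $\mathbb{N}$, together with full-measure events $\Omega_j$ on which $X_n(a_j) \to X(a_j)$ along $N_j$ for each $j$.

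Next I would extract the diagonal subsequence: let $n_k$ be any element of $N_k$ strictly larger than $n_{k-1}$, which is possible since every $N_k$ is infinite, and set $N' = \{n_k : k \geq 1\}$. Then for each fixed $j$, the tail $\{n_k : k \geq j\}$ lies inside $N_j$, so $X_{n_k}(a_j) \to X(a_j)$ as $k \to \infty$ on the event $\Omega_j$. Setting $\Omega^\ast = \bigcap_{j \geq 1} \Omega_j$, countable subadditivity gives $\mathbb{P}(\Omega^\ast) = 1$, and on $\Omega^\ast$ the convergence $X_n(a) \to X(a)$ along $N'$ holds simultaneously for every $a \in A$, which is the required conclusion.

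The only real content is the step one input, the standard fact that $X_n \xrightarrow{\mathbb{P}} X$ admits an almost surely convergent subsequence; this is a one-line application of Borel--Cantelli after choosing $n_k$ so that $\mathbb{P}\{|X_{n_k} - X| > 2^{-k}\} \leq 2^{-k}$. Beyond this there is no genuine obstacle: the argument is combinatorial bookkeeping, and the key design choice is simply to make $N'$ a strictly increasing subsequence of $\mathbb{N}$, which the construction above delivers.
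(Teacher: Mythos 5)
Your proposal is correct and follows essentially the same route as the paper: extract nested almost-surely convergent subsequences for $a_1, a_2, \ldots$ in turn and then pass to the diagonal. If anything, you spell out the two details the paper leaves implicit (the Borel--Cantelli justification for the single-index case and the countable intersection of full-measure events), so nothing is missing.
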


\begin{proof}
Choose $a_{1}\in A$. Since we have $X_{n}(a_{1})\overset{\mathbb{P}}{\rightarrow}X(a_{1})$
we can extract a subsequence $n_{1,1},n_{1,2},\ldots$ such that
\[
\left\{X_{n_{1,1}}(a_{1}),X_{n_{1,2}}(a_{1}),X_{n_{1,3}}(a_{1}),\ldots\right\}
\]
converges almost surely. Pick now $a_{2}\in A$, we can now extract
a further subsequence
\[
\left\{X_{n_{2,1}}(a_{2}),X_{n_{2,2}}(a_{2}),X_{n_{2,3}}(a_{2}),\ldots\right\}
\]
along which we have almost sure convergence. We can iterate this procedure
to get another subsequence
\[
\left\{X_{n_{3,1}}(a_{3}),X_{n_{3,2}}(a_{3}),X_{n_{3,3}}(a_{3}),\ldots\right\}.
\]
Along the subsequence $N'=\left(n_{1,1},n_{2,2},n_{3,3},...\right)$,
we have almost sure convergence of $X_{n'}(a)\rightarrow X(a)$ for
all $a\in A$.
\end{proof}
The existence of a countable convergence determining class for Polish
spaces is guaranteed by the following Proposition. The proof is adapted
from \textcite[Theorem 2.2]{bertipratellirigo2006}.
\begin{proposition}
\label{prop:countsubclass} Consider $\mathcal{P}(S)$ equipped with
the Borel $\sigma$-algebra generated by the topology of weak convergence.
There exists a countable convergence determining subclass $\mathcal{C}\subset\mathrm{BL}(S)$.
\end{proposition}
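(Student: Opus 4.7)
The plan is to leverage the separability of the Polish space $S$ to build an explicit countable subclass of $\mathrm{BL}(S)$ that detects weak convergence via approximating indicators of open sets. The idea is that bounded Lipschitz ``cut-off'' functions of distances to complements of basic open sets already see enough structure of $\mathcal{P}(S)$, and that countability can be preserved throughout.

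First, I would fix a compatible complete metric $d$ on $S$ and a countable dense subset $\{x_i\}_{i \geq 1}$, which exists since Polish implies separable. Then $\mathcal{B} = \{B(x_i, r) : i \in \mathbb{N},\, r \in \mathbb{Q}_{>0}\}$ is a countable base of open sets, and the collection $\mathcal{V}$ of all finite unions of elements of $\mathcal{B}$ is likewise countable. For each $V \in \mathcal{V}$ and each $k \in \mathbb{N}$, set
\[
f_{V,k}(x) = \min\{1, k\, d(x, V^c)\},
\]
which belongs to $\mathrm{BL}(S)$, takes values in $[0,1]$, has Lipschitz constant at most $k$, and satisfies $f_{V,k} \uparrow \mathbb{1}_V$ pointwise as $k \to \infty$. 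Let $\mathcal{C} = \{f_{V,k} : V \in \mathcal{V},\, k \in \mathbb{N}\}$; this is countable by construction and lies in $\mathrm{BL}(S)$.

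Next, I would show $\mathcal{C}$ is convergence determining. Suppose a sequence $(\mu_n) \subset \mathcal{P}(S)$ and $\mu \in \mathcal{P}(S)$ satisfy $\mu_n(f) \to \mu(f)$ for every $f \in \mathcal{C}$. For each fixed $V \in \mathcal{V}$ the pointwise bound $f_{V,k} \leq \mathbb{1}_V$ gives
\[
\liminf_{n} \mu_n(V) \;\geq\; \lim_n \mu_n(f_{V,k}) \;=\; \mu(f_{V,k}),
\]
and letting $k \to \infty$ together with monotone convergence yields $\liminf_n \mu_n(V) \geq \mu(V)$. Any open $U \subseteq S$ can be written as a countable union of elements of $\mathcal{B}$ and hence as an increasing union of $V_m \in \mathcal{V}$ with $V_m \uparrow U$; continuity from below of $\mu$ then gives $\mu(V_m) \to \mu(U)$, and therefore
\[
\liminf_n \mu_n(U) \;\geq\; \sup_m \liminf_n \mu_n(V_m) \;\geq\; \sup_m \mu(V_m) \;=\; \mu(U).
\]
By the Portmanteau theorem this is equivalent to $\mu_n \rightsquigarrow \mu$, so the countable family $\mathcal{C}$ is convergence determining.

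The construction and manipulations are essentially standard, so the main subtlety is organisational: one must ensure the class is genuinely countable while still large enough to deliver the Portmanteau inequality for \emph{every} open set. This is why I include all finite unions of basic open balls in $\mathcal{V}$ rather than just the balls themselves; without this enlargement one obtains the liminf inequality only for elements of the base, which would then require an additional argument that a $\pi$-system generating the Borel $\sigma$-algebra suffices for weak convergence. Including finite unions is countable, costs nothing, and makes the passage to arbitrary open sets via monotone continuity completely transparent.
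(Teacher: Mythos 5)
Your proof is correct, but it takes a genuinely different route from the paper. The paper embeds $S$ homeomorphically into the Hilbert cube $H=[0,1]^{\mathbb{N}}$ via $x\mapsto\{d(x,s_i)\wedge 1\}_{i\geq1}$, notes that the closure of the image is compact, invokes the Arzel\`a--Ascoli theorem to conclude that the bounded Lipschitz functions on that closure are separable in the $\|\cdot\|_\infty$-norm, and pulls back a countable dense family to obtain $\mathcal{C}$; convergence on $\mathcal{C}$ then upgrades to convergence on all of $\mathrm{BL}$ by sup-norm density. You instead build $\mathcal{C}$ completely explicitly as the cut-off functions $f_{V,k}=\min\{1,k\,d(\cdot,V^c)\}$ indexed by finite unions $V$ of basic rational balls, and verify the Portmanteau lower bound $\liminf_n\mu_n(U)\geq\mu(U)$ for every open $U$ by monotone approximation $f_{V,k}\uparrow\mathbb{1}_V$ and exhaustion $V_m\uparrow U$. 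Your argument is more elementary (no compactification, no Arzel\`a--Ascoli) and uses only separability of $S$, not completeness, so it applies verbatim to any separable metric space; your remark about why finite unions of balls must be included (to make the passage from the base to arbitrary open sets a matter of monotone continuity rather than a $\pi$-system argument) is exactly the right organisational point. What the paper's approach buys in exchange is a class that is dense in $\mathrm{BL}\{h(S)\}$ in the uniform norm, which is occasionally convenient for quantitative approximation arguments, though it is not needed for the uses made of Proposition \ref{prop:countsubclass} elsewhere in the paper. The only cosmetic caveat in your construction is the degenerate case $V=S$ (where $V^c=\emptyset$), which is handled by the usual convention $d(x,\emptyset)=+\infty$ or by simply excluding it, since the Portmanteau inequality for $U=S$ is trivial.
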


\begin{proof}
Take a countable set $\{s_{1},s_{2},\ldots\}$ dense in $S$ and let
$H=[0,1]^{\mathbb{N}}$ be the Hilbert cube. For $x\in S$, define
the map $h\colon S\rightarrow H$ by

\[
h(x)=\left\{d(x,s_{1})\wedge1,d(x,s_{2})\wedge1,\ldots\right\}.
\]

We can equip $H$ with the topology of coordinate wise convergence.
Writing $u=(u_{1,}u_{2,}\ldots)$ and $v=(v_{1,}v_{2,}\ldots)$ for
elements $u,v\in H$, this topology is induced by the metric
\[
\alpha(u,v)=\sum_{i=1}^{\infty}\frac{|u_{i}-v_{i}|}{2^{i}}.
\]

The Hilbert cube $H$ is compact by Tychonoff's Theorem \autocite[see for example][Theorem 2.2.8.]{dudley2002},
$h$ is a homeomorphism from $S$ to $h(S)$ \autocite[Theorem A.1.1.]{borkar1991}
and its closure $\overline{h(S)}\subset H$ is compact. For $\mu\in\mathcal{P}(S)$
denote $\nu=\mu\circ h^{-1}$ the image measure on $h(S)$.

Note that any Lipschitz continuous function on $h(S)$ can be extended
to $\overline{h(S)}$ without increasing its norm \autocite[Proposition 11.2.3.]{dudley2002}.
By the Arzel\`a--Ascoli theorem, the sets $B_{n}=[f\in\mathrm{BL}\{\overline{h(S)}\}:\:\|f\|_\mathrm{BL}\leq n]$
are compact and thus separable under the $\|\cdot\|_{\infty}$-norm.
Therefore $\mathrm{BL}\{\overline{h(S)}\}=\bigcup_{n=1}^{\infty}B_{n}$
is separable under the $\|\cdot\|_{\infty}$-norm and so is $\mathrm{BL}\{h(S)\}$.
Hence, we can pick a countable set $\mathcal{D}$ which is dense in
$\mathrm{BL}\{h(S)\}$. Defining $\mathcal{C}=\{g\circ h:\:g\in\mathcal{D}\}$
we have $\mathcal{C}\subset\mathrm{BL}(S)$ since for all $x,y\in S$
and $i\in\mathbb{N}$
\[
|d(x,s_{i})\wedge1-d(y,s_{i})\wedge1|\leq d(x,y)
\]
and thus

\[
|g\circ h(x)-g\circ h(y)|\leq L_{g}\alpha\{h(x),h(y)\}=L_{g}\sum_{i=1}^{\infty}\frac{|d(x,s_{i})\wedge1-d(y,s_{i})\wedge1|}{2^{i}}\leq L_{g}d(x,y),
\]

where $L_{g}$ denotes the Lipschitz constant of the function $g.$

Now assume that $\mu_{n}(f)\rightarrow\mu(f)$ for all $f\in\mathcal{C}$.
Then by a change of variable
\[
\int_{S}f\,d\mu_{n}=\int_{S}g\circ h\:d\mu_{n}=\int_{h(S)}g\:d\nu_{n}\rightarrow\int_{h(S)}g\:d\nu
\]
for all $g\in\mathcal{D}.$ Since $\mathcal{D}$ is dense in $\mathrm{BL}\{h(S)\}$
with respect to the $\|\cdot\|_{\infty}$-norm we have convergence
for all bounded Lipschitz functions and thus $\nu_{n}\rightsquigarrow\nu.$
By continuity of $h^{-1}$ we also have convergence $\mu_{n}\rightsquigarrow\mu.$
\end{proof}
Equipped with these results we can now prove some equivalences which
facilitate the verification of weak convergence of random probability
measures in the sense introduced above. We will prove the following
statements only for convergence in probability. The modifications
for almost sure convergence are obvious.
\begin{theorem}
\label{thm:EasycheckWeakCV}Let $\left(\mu_{n}^{\omega}\right)_{n\geqslant1}$ be a sequence of random probability measures and $\mu$ a probability
measure. Then the following statements are equivalent
\begin{itemize}
\item[(i)]  $d_{\mathrm{BL}}(\mu_{n}^{\omega},\mu)\overset{\mathbb{P}}{\rightarrow}0,$
\item[(ii)]  $\mu_{n}^{\omega}\rightsquigarrow_{\mathbb{P}}\mu$
\item[(iii)] $\mu_{n}^{\omega}(f)\overset{\mathbb{P}}{\longrightarrow}\mu(f)\qquad\text{for all}\qquad f\in C_{b}(S)$
\item[(iv)] $\mu_{n}^{\omega}(f)\overset{\mathbb{P}}{\longrightarrow}\mu(f)\qquad\text{for all}\qquad f\in\mathrm{BL}(S)$.
\end{itemize}
The same results hold if convergence in probability is replaced by
almost sure convergence throughout.
\end{theorem}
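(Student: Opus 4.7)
The plan is to prove the cycle $(i) \Leftrightarrow (ii)$, $(ii) \Rightarrow (iii) \Rightarrow (iv) \Rightarrow (i)$, leveraging the two technical propositions above and the standard subsequence principle: a real-valued sequence $Y_n$ converges to $0$ in probability if and only if every subsequence admits a further subsequence converging to $0$ almost surely.

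For $(i) \Leftrightarrow (ii)$, I would simply observe that since $d_{\mathrm{BL}}$ metrizes weak convergence on $\mathcal{P}(S)$, almost sure weak convergence of $\mu_n^\omega$ to $\mu$ is equivalent to $d_{\mathrm{BL}}(\mu_n^\omega,\mu)\to 0$ almost surely. Applying the subsequence principle to the real random variables $Y_n=d_{\mathrm{BL}}(\mu_n^\omega,\mu)$ then yields the equivalence between convergence in probability and the ``every subsequence has a weakly a.s.-convergent further subsequence'' formulation of $\rightsquigarrow_{\mathbb{P}}$. Measurability of $Y_n$ was noted in the remark preceding the theorem.

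For $(ii) \Rightarrow (iii)$, fix $f\in C_b(S)$ and an arbitrary subsequence. By $(ii)$ we can extract a further subsequence along which $\mu_n^\omega\rightsquigarrow\mu$ almost surely, hence $\mu_n^\omega(f)\to\mu(f)$ almost surely by definition of weak convergence. The subsequence principle then gives $\mu_n^\omega(f)\overset{\mathbb{P}}{\to}\mu(f)$. The implication $(iii)\Rightarrow(iv)$ is immediate from $\mathrm{BL}(S)\subset C_b(S)$.

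The main content lies in $(iv) \Rightarrow (i)$, and this is where Propositions \ref{prop:subsequence} and \ref{prop:countsubclass} enter. Fix an arbitrary subsequence of $\mathbb{N}$. By Proposition \ref{prop:countsubclass}, there exists a countable set $\mathcal{C}\subset\mathrm{BL}(S)$ which is convergence determining. By $(iv)$ we have $\mu_n^\omega(f)\overset{\mathbb{P}}{\to}\mu(f)$ for each $f\in\mathcal{C}$, so applying Proposition \ref{prop:subsequence} to the countable family of real random variables $\{\mu_n^\omega(f)\}_{f\in\mathcal{C}}$ along the chosen subsequence, we extract a further subsequence $N'$ and an event of full measure on which $\mu_n^\omega(f)\to\mu(f)$ simultaneously for every $f\in\mathcal{C}$. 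Since $\mathcal{C}$ is convergence determining, this forces $\mu_n^\omega\rightsquigarrow\mu$ along $N'$ on that event, hence $d_{\mathrm{BL}}(\mu_n^\omega,\mu)\to 0$ almost surely along $N'$. As the original subsequence was arbitrary, the subsequence principle delivers $(i)$. The main obstacle is really this step: passing from pointwise-in-$f$ convergence in probability to joint convergence along a subsequence, which is precisely what Propositions \ref{prop:subsequence} and \ref{prop:countsubclass} are tailored to accomplish. The almost sure version follows by the same argument with Proposition \ref{prop:subsequence} replaced by a direct diagonal extraction using almost sure convergence at each $f\in\mathcal{C}$.
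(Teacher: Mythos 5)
Your proof is correct and follows essentially the same route as the paper: the equivalence $(i)\Leftrightarrow(ii)$ via metrization, the trivial implications, and the key step handled by combining the countable convergence determining class of Proposition \ref{prop:countsubclass} with the diagonal subsequence extraction of Proposition \ref{prop:subsequence} and intersecting countably many full-measure events. The only cosmetic difference is that you close the cycle at $(i)$ rather than $(ii)$, which is immaterial given the metrization equivalence.
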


\begin{proof}
The equivalence $(i)\Leftrightarrow(ii)$ is immediate since $d_{\mathrm{BL}}$
metrizes weak convergence. The implications $(ii)\Rightarrow(iii)\Rightarrow(iv)$ are trivial.
To show $(iv)\Rightarrow(ii),$ note that by Proposition \ref{prop:countsubclass}
there exists a countable convergence determining subclass $\mathcal{C}\subset\mathrm{BL}(S)$.
By virtue of Proposition \ref{prop:subsequence} there exists a subsequence
$(n_{1},n_{2},\ldots)$ such that for all $g\in\mathcal{C}$
\[
\mu_{n_{k}}^{\omega}(g)\overset{\mathrm{a.s.}}{\longrightarrow}\mu(g)\qquad\text{as }k\rightarrow\infty.
\]
Now, given $(n_{k})_{k\in\mathbb{N}}$ define
\[
A(g)=\left\{ \omega\in\Omega:\mu_{n_{k}}^{\omega}(g)\longrightarrow\mu(g)\quad\text{as}\quad k\rightarrow\infty\right\} .
\]
We have $\mathbb{P}\{A(g)\}=1$ for all $g\in\mathcal{C}$ and for $\bigcap_{g\in\mathcal{C}}A(g)=A\in\mathcal{B}(S)$
we find $\mathbb{P}(A)=1$. Since we can apply this reasoning to any
subsequence we always find a further subsequence such that $(\mu_{n_{k_{j}}}^{\omega})$
converges almost surely. See also \textcite[Theorem 9]{sweeting1989}
and \textcite[Theorem 2.2]{bertipratellirigo2006}.
\end{proof}

\begin{remark}
\label{rem:conditional_weak}If the random measure is induced by a
regular conditional distribution, i.e., let $\left(\mu_{n}^{\omega}\right)_{n\geqslant1}$ denote a sequence of transition kernels such that
\[
\mu_{n}^{\omega}(\cdot)=\mathbb{P}(X_{n}\in\cdot\mid\mathscr{\mathcal{F}}_{n})(\omega)\qquad\mathbb{P}-a.s.
\]
for some filtration $\left(\mathscr{\mathcal{F}}_{n}\right)_{n\geqslant 1}$, we
have
\[
\int f(x)\mu_{n}^{\omega}(\mathrm{d}x)=E\left\{f(X_{n})\mid\mathcal{F}_{n}\right\}(\omega)\qquad\mathbb{P}-a.s.
\]
and thus equivalently to $\mu_{n}\rightsquigarrow_{\mathbb{P}}\mu$
then we can write
\begin{equation}
E\left\{f(X_{n})\mid\mathcal{F}_{n}\right\}\overset{\mathbb{P}}{\longrightarrow}E\left\{f(X)\right\},\label{eq:cond_conv}
\end{equation}
where $X\sim\mu$. For brevity we will also use the notation $X_{n}\mid\mathcal{F}_{n}\rightsquigarrow_{\mathbb{P}}\mu$
instead of (\ref{eq:cond_conv}).%
\end{remark}

\subsection{Product Spaces}

We address here the setting where the spaces are of the form $S^{k}=S\times S\times\cdots\times S$
or $S^{\mathbb{N}}=S\times S\times\ldots$. We will equip these product
spaces with the product topology and the respective Borel $\sigma$-algebra.
The following lemma is helpful to characterize weak convergence in
probability in this context.
\begin{lemma}
\label{lem:finfitedim} For fixed $k$, let $(\mu_{n}^{\omega})_{n\geqslant 1}$
denote random measures on $S^{k}$ and $\mu$ a non-random measure
on $S^{k}$. Then the following are equivalent
\end{lemma}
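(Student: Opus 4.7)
}

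The plan is to reduce weak convergence in probability on the Polish product space $S^k$ to a condition involving only tensor-product test functions $g(x_1,\ldots,x_k) = f_1(x_1)\cdots f_k(x_k)$ with $f_i\in\mathrm{BL}(S)$ (or $C_b(S)$), and then to appeal to Theorem \ref{thm:EasycheckWeakCV}. The first step is to observe that $S^k$ endowed with the product topology is itself Polish, so Theorem \ref{thm:EasycheckWeakCV} applies directly on $S^k$: the condition $\mu_n^{\omega}\rightsquigarrow_{\mathbb{P}}\mu$ is equivalent to $\mu_n^{\omega}(g)\overset{\mathbb{P}}{\longrightarrow}\mu(g)$ for every $g\in\mathrm{BL}(S^k)$. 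The forward direction of the lemma is then immediate, since tensor products of bounded Lipschitz functions on $S$ are bounded Lipschitz on $S^k$.

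For the reverse direction, I would use Proposition \ref{prop:countsubclass} to pick a countable convergence-determining class $\mathcal{C}\subset\mathrm{BL}(S)$ and form the countable collection $\mathcal{C}^{\otimes k}=\{f_1\otimes\cdots\otimes f_k : f_i\in\mathcal{C}\}\subset\mathrm{BL}(S^k)$. By the assumed convergence on product-form functions and Proposition \ref{prop:subsequence}, given any subsequence I can pass to a further subsequence $(n_j)$ along which $\mu_{n_j}^{\omega}(g)\to\mu(g)$ for every $g\in\mathcal{C}^{\otimes k}$ simultaneously, on a set of full $\mathbb{P}$-measure. On this set I then need the deterministic sequence $(\mu_{n_j}^{\omega})_j$ to converge weakly to $\mu$ on $S^k$.

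The main obstacle is therefore to show that the linear span (or rather the algebra) generated by $\mathcal{C}^{\otimes k}$ is convergence-determining on $S^k$. The plan is to combine a Stone--Weierstrass argument with tightness. Specifically, the algebra generated by $\mathcal{C}^{\otimes k}$ contains constants and separates points of $S^k$ (since the Lipschitz functions $x\mapsto d(x,s_i)\wedge 1$ from the proof of Proposition \ref{prop:countsubclass} already separate points of $S$), so by Stone--Weierstrass it is uniformly dense in $C(K)$ for every compact $K\subset S^k$. Since $\mu$ is a Borel probability measure on the Polish space $S^k$ it is tight, so given $\varepsilon>0$ we can pick $K$ compact with $\mu(K)\geq 1-\varepsilon$. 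Asymptotic tightness of $(\mu_{n_j}^{\omega})$ follows from Prohorov's theorem applied to the limit, and convergence on $\mathcal{C}^{\otimes k}$ plus uniform approximation on $K$ upgrades to convergence against arbitrary $f\in\mathrm{BL}(S^k)$, hence $\mu_{n_j}^{\omega}\rightsquigarrow\mu$.

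Finally, since every subsequence admits such a further almost-surely convergent subsequence, the definition of weak convergence in probability gives $\mu_n^{\omega}\rightsquigarrow_{\mathbb{P}}\mu$, and the equivalent characterisations from Theorem \ref{thm:EasycheckWeakCV} transfer. The hard part is really the density/tightness step, since the algebra generated by $\mathcal{C}^{\otimes k}$ (rather than just $\mathcal{C}^{\otimes k}$ itself) is what Stone--Weierstrass supplies, so one has to confirm that the appearance of sums and products of elements of $\mathcal{C}^{\otimes k}$ does not create new convergence difficulties; this is handled by noting that convergence $\mu_{n_j}^{\omega}(g)\to\mu(g)$ along the subsequence is preserved under finite linear combinations and finite products.
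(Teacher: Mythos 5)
Your overall architecture (countable class, subsequence extraction, then tightness plus a measure-determining argument) matches the paper's, but two of your key steps are not correctly justified. First, the tightness step: you write that asymptotic tightness of $(\mu_{n_j}^{\omega})$ ``follows from Prohorov's theorem applied to the limit''. Prohorov's theorem applied to the single measure $\mu$ only yields tightness of $\mu$ itself, which says nothing about the sequence; and convergence against a merely measure-determining class does not by itself force tightness of the sequence. The paper closes this gap by first taking $f_j\equiv 1$ for all $j\neq i$ to deduce that each marginal $\mu_{n,i}^{\omega}$ converges weakly on $S$ (since $\mathcal{C}$ is convergence determining there), hence each marginal sequence is tight, hence the joint sequence is tight because $\mu_n^{\omega}\left\{(K_1\times\cdots\times K_k)^{\complement}\right\}\leq\sum_{i}\mu_{n,i}^{\omega}(K_i^{\complement})$. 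You need this (or an equivalent) argument; without it your Stone--Weierstrass step has nothing to stand on, since the uniform approximation only helps on a compact set carrying most of the mass of every $\mu_{n_j}^{\omega}$, not just of $\mu$.

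Second, your closing claim that convergence $\mu_{n_j}^{\omega}(g)\to\mu(g)$ ``is preserved under finite linear combinations and finite products'' is false for products: $\mu_{n_j}^{\omega}(gh)$ is not determined by $\mu_{n_j}^{\omega}(g)$ and $\mu_{n_j}^{\omega}(h)$, because integration is not multiplicative. This is repairable --- a product of tensor-product functions is again a tensor product of bounded Lipschitz factors, so it is covered by hypothesis \emph{(iv)} --- but then the countable class on which you perform the subsequence extraction must be chosen closed under products (and, say, rational linear combinations) from the outset; otherwise the almost-sure convergence you extracted does not apply to the elements of the algebra you feed into Stone--Weierstrass. The paper sidesteps the algebra entirely: once tightness is in hand, every subsequence of $(\mu_n^{\omega})$ has a weakly convergent further subsequence, and it then suffices that the tensor products form a measure-determining (not convergence-determining) class on $S^{k}$, which is quoted from Ethier and Kurtz. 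If you supply the marginal tightness argument and either enlarge your countable class or switch to the measure-determining formulation, your route goes through and is essentially a self-contained proof of the cited Ethier--Kurtz fact.
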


\begin{itemize}
\item[\emph{(i)}]
\[
\mu_{n}^{\omega}\rightsquigarrow_{\mathbb{P}}\mu,
\]
\item[\emph{(ii)}]
\[
\mu_{n}^{\omega}(f)\overset{\mathbb{P}}{\rightarrow}\mu(f)
\]
for all $f\in C_{b}(S^{k})$.
\item[\emph{(iii)}]
\[
\int_{S^{k}}\prod_{i=1}^{k}f_{i}(x_{i})\mu_{n}^{\omega}(\mathrm{d}x_{1}\ldots\mathrm{d}x_{k})\overset{\mathbb{P}}{\rightarrow}\int_{S^{k}}\prod_{i=1}^{k}f_{i}(x_{i})\mu(\mathrm{d}x_{1}\ldots\mathrm{d}x_{k})
\]
for all $f_{1},\ldots f_{k}\in C_{b}(S)$.
\item[\emph{(iv)}]
\[
\int_{S^{k}}\prod_{i=1}^{k}f_{i}(x_{i})\mu_{n}^{\omega}(\mathrm{d}x_{1}\ldots\mathrm{d}x_{k})\overset{\mathbb{P}}{\rightarrow}\int_{S^{k}}\prod_{i=1}^{k}f_{i}(x_{i})\mu(\mathrm{d}x_{1}\ldots\mathrm{d}x_{k})
\]
for all $f_{1},\ldots f_{p}\in\mathrm{BL}(S)$.
\end{itemize}

\begin{proof}
The implications $(i)\Rightarrow(ii)\Rightarrow(iii)\Rightarrow(iv)$
are trivial. Thus, we only need to show $(iv)\Rightarrow(i)$. We
now by Proposition \ref{prop:countsubclass} that there exists a countable
convergence determining class $\mathcal{C}\subset\mathrm{BL}(S)$,
so we can assume $f_{1},f_{2},\ldots\in\mathcal{C}$. Without loss
of generality we can assume $\|f_{i}\|_{\infty}\leq1$ for all $i$
and $1\in\mathcal{C}$. Then we have that for every $i\in\{1,\ldots,k\}$
the marginal of the $i$th coordinate, denoted $\mu_{n,i}^{\omega}$,
converges to $\mu_{i}$ weakly in probability, i.e. for all $i$ and
all $f_{i}\in\mathcal{C}$ we have
\[
\int_{S}f_{i}(x)\mu_{n,i}^{\omega}(\mathrm{d}x_{i})\overset{\mathbb{P}}{\rightarrow}\int_{S}f_{i}(x)\mu_{i}(\mathrm{d}x_{i}).
\]
Now by Proposition \ref{prop:subsequence} for every $i\in\{1,\ldots,k\}$
every subsequence $N\subset\mathbb{N}$ contains a further subsequence
$N'\subset N$ such that we have convergence almost sure convergence
for all $g\in\mathcal{C}$, i.e. denoting
\[
A_{i}:=\left\{ \omega\in\Omega:\int_{S}g(x_{i})\mu_{n',i}^{\omega}(\mathrm{d}x_{i})\longrightarrow\int_{S}g(x_{i})\mu_{i}(\mathrm{d}x_{i})\quad\text{for all }g\in\mathcal{C}\right\}
\]
we have $P(A_{i})=1$. We can extract a further subsequence $N''\subset N'$
such that along $N''$ we have convergence almost surely for all $i$
and all $g$ and thus for $\omega\in A:=\cap_{i=1}^{k}A_{i}$ the
sequence $\left\{ \mu_{n}^{\omega};n\in N''\right\} $ is tight, since
$\left\{ \mu_{n,i}^{\omega};n\in N''\right\} $ is tight for every
$i$ \autocite[see][Chapter 3 Proposition 2.4.]{ethier2005}. We can
conclude that for every such $\omega$ every subsequence of $(\mu_{n}^{\omega})_{n\geqslant1}$
has a further subsequence that converges. It remains to show that
the functions of the form $\prod_{i=1}^{k}f_{i}$ are measure determining.
However, by \textcite[Chapter 2 Proposition 4.6.]{ethier2005} if $\mathcal{C}$
is measure determining on $S$ then so is the product for $S^{k}$.
\end{proof}

If $S=\mathbb{R}^{k}$ for some $k\in\mathbb{N}$ we can check weak
convergence in probability by considering moment generating functions.
The following result is shown by \textcite[Corollary 3]{sweeting1989};
see also \textcite[Lemma 1]{castillo2015}.
\begin{proposition}
\label{prop:weakconv_moment}Let $\left(\mu_{n}^{\omega}\right)_{n\geqslant1}$
be a sequence of random probability measures and assume there exists
$u_{0}>0$ such that for all $n\in\mathbb{N}$ the moment generating
functions
\[
m_{n}(u,\omega)=\int\exp\left(u^{\mathtt{\mathsf{T}}}x\right)\mu_{n}^{\omega}(\mathrm{d}x)
\]
exist for $|u|<u_{0}$ then $\mu_{n}^{\omega}\rightsquigarrow_{\mathbb{P}}\mu$
if and only if for every $u\in\mathbb{R}^{k}$
\[
m_{n}(u,\cdot)\overset{\mathbb{P}}{\longrightarrow}m(u,\cdot)=\int\exp\left(u^{\mathtt{\mathsf{T}}}x\right)\mu^{\omega}(\mathrm{d}x).
\]
\end{proposition}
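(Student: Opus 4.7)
The plan is to reduce the random statement to the classical deterministic equivalence between weak convergence and pointwise convergence of moment generating functions on a neighbourhood of the origin, using the subsequence principle already developed in Proposition \ref{prop:subsequence} and Theorem \ref{thm:EasycheckWeakCV}. Both implications are handled symmetrically: to prove convergence in probability it suffices to extract, from an arbitrary subsequence, a further subsequence along which the desired convergence holds $\mathbb{P}$-almost surely, and then invoke the corresponding deterministic result $\omega$ by $\omega$.

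For the ``only if'' direction, suppose $\mu_n^{\omega}\rightsquigarrow_{\mathbb{P}}\mu$ and fix $u$ with $|u|<u_{0}$. Choose $\varepsilon>0$ so that $|u|+\varepsilon<u_{0}$. By Theorem \ref{thm:EasycheckWeakCV}, any subsequence contains a sub-subsequence $(n_{k})$ along which $\mu_{n_{k}}^{\omega}\rightsquigarrow\mu$ for $\mathbb{P}$-almost every $\omega$. For such an $\omega$, the hypothesis that $m_{n_{k}}(u',\omega)$ is finite for every $|u'|<u_{0}$ combined with a uniform bound of the form $\sup_{k}m_{n_{k}}(u+\varepsilon v,\omega)<\infty$ for unit vectors $v$ yields, through the classical uniform-integrability argument, $\mu_{n_{k}}^{\omega}\!\left\{x\mapsto\exp(u^{\mathsf{T}}x)\right\}\to\mu\!\left\{x\mapsto\exp(u^{\mathsf{T}}x)\right\}=m(u)$. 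Since every subsequence admits such an almost surely convergent sub-subsequence, $m_{n}(u,\cdot)\xrightarrow{\mathbb{P}}m(u)$.

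For the ``if'' direction, pick a countable dense subset $D\subset\{u\in\mathbb{R}^{k}:|u|<u_{0}\}$. The random variables $\{m_{n}(u,\cdot):u\in D\}$ form a countable family converging in probability, so by Proposition \ref{prop:subsequence} any subsequence contains a sub-subsequence $(n_{k})$ along which $m_{n_{k}}(u,\omega)\to m(u)$ for every $u\in D$ simultaneously, $\mathbb{P}$-almost surely. For a fixed $\omega$ in the resulting probability-one set, the almost-sure bound $\sup_{k}m_{n_{k}}(u,\omega)<\infty$ on a neighbourhood of the origin implies, via Chebyshev's inequality, that $(\mu_{n_{k}}^{\omega})_{k\geqslant1}$ is tight. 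Prokhorov's theorem extracts a weakly convergent sub-sub-subsequence; its weak limit must be a probability measure whose MGF agrees with $m$ on $D$, hence on $\{|u|<u_{0}\}$ by continuity. Because MGFs that exist in a neighbourhood of the origin uniquely determine the distribution, this limit is $\mu$. Consequently every weak sub-limit along $(n_{k})$ equals $\mu$, so $\mu_{n_{k}}^{\omega}\rightsquigarrow\mu$ almost surely, and Theorem \ref{thm:EasycheckWeakCV} concludes $\mu_{n}^{\omega}\rightsquigarrow_{\mathbb{P}}\mu$.

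The main obstacle is the transfer of uniform integrability from the deterministic to the random setting. Almost sure finiteness of the MGFs in a neighbourhood of the origin does not, on the face of it, provide a single bound uniform in $n$ and $\omega$ that would justify convergence of the MGFs directly from weak convergence of measures. The resolution is to handle this pointwise in $\omega$: after the diagonal extraction guaranteed by Proposition \ref{prop:subsequence}, one works on a probability-one event on which the extracted MGFs are uniformly bounded near the origin, reducing matters to the classical deterministic equivalence. Any remaining bookkeeping concerns measurability of the selected events and compatibility of the chosen countable set $D$ with both tightness and uniqueness, both of which follow from standard continuity and analyticity arguments for moment generating functions.
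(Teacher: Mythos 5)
Your ``if'' direction (convergence of the moment generating functions implies $\mu_{n}^{\omega}\rightsquigarrow_{\mathbb{P}}\mu$) is correct and follows the paper's route almost exactly: restrict to a countable dense set of arguments $u$, extract an almost surely convergent sub-subsequence via Proposition \ref{prop:subsequence}, deduce tightness of $(\mu_{n_{k}}^{\omega})$ from the boundedness of the converging moment generating functions through a Markov-type inequality, identify the limit because a moment generating function finite near the origin determines the law, and conclude with Theorem \ref{thm:EasycheckWeakCV}. This is the only direction the paper actually uses (in Lemma \ref{cor1}) and essentially the only one its own proof establishes.

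The ``only if'' direction of your argument has a genuine gap. You invoke ``a uniform bound of the form $\sup_{k}m_{n_{k}}(u+\varepsilon v,\omega)<\infty$'' to obtain uniform integrability of $x\mapsto\exp(u^{\mathsf{T}}x)$ under $(\mu_{n_{k}}^{\omega})$, but no such bound follows from the hypothesis, which only guarantees that each individual $m_{n}(\cdot,\omega)$ is finite on $\{|u|<u_{0}\}$. Passing to a subsequence and working $\omega$ by $\omega$, as your closing paragraph proposes, does not produce it: finiteness for every $n$ is not uniformity in $n$, and there is no reason the event on which the extracted moment generating functions are uniformly bounded near the origin has full (or even positive) probability. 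Indeed, without a supplementary domination or uniform integrability assumption the implication fails already for deterministic measures: $\mu_{n}=(1-n^{-1})\delta_{0}+n^{-1}\delta_{ne_{1}}$ converges weakly to $\delta_{0}$ and has everywhere finite moment generating functions, yet $m_{n}(u)\rightarrow\infty$ whenever $u^{\mathsf{T}}e_{1}>0$. So this half of the equivalence cannot be closed as you argue it; it needs an extra hypothesis (the paper's own one-line remark that the $f_{u}$ are ``measure determining'' likewise only delivers the ``if'' half). Since only the ``if'' half feeds into Lemma \ref{cor1}, nothing downstream is affected, but you should either state and use the missing uniform integrability condition or restrict the claim to the implication you actually need.
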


\begin{proof}
This can be seen by considering the class of functions of the form
$f_{u}(x)=\exp(u^{\mathtt{\T}}x)$ for $u\in\mathbb{Q},\quad|u|<u_{0}$
and showing that they form a countable convergence determining class,
see \textcite[Corollary 3]{sweeting1989}. Consider the case $k=1$ and
a sequence of measures $(\mu_{n})_{n\geqslant1} $ and $\mu$
such that
\[
m_{n}(u)=\int e^{ux}\mu_{n}(\mathrm{d}x)\rightarrow m(u)=\int e^{ux}\mu(\mathrm{d}x).
\]

Denote a compact set $K=[-c,c]$. Then by the Markov inequality
\[
\mu_{n}\big(K^{\complement}\big)=\int_{|x|\geq c}\mu_{n}(dx)\leq\frac{m_{n}(u_{0})}{e^{u_{0}c}}
\]
and $m_{n}(u_{0})\rightarrow m(u_{0}).$ Hence, $\mu_{n}(K^{\complement})$
is bounded and we can find $c$ such that $\sup_{n}\mu_{n}(K^{\complement})<\epsilon$
and $(\mu_{n})_{n\geqslant1} $ is tight. By continuity the
$f_{u}$ are measure determining so we can conclude that the limit
is unique. For $k>1$ we can use the same argument to show that the
marginals are tight, see the proof of Lemma \ref{lem:finfitedim}.
\end{proof}
Lemma \ref{lem:finfitedim} can be readily extended to countably
infinite product spaces by considering convergence of the finite dimensional
distribution. Let us therefore denote $\mu\circ\pi_{k}^{-1}\colon S^{\mathbb{N}}\rightarrow S^{k};k\in\mathbb{N}$
the canonical projections. For non-random measures, it is well-known
that convergence of the projections already implies convergence on
the whole of $S^{\mathbb{N}}$ \autocite[Example 2.6]{billingsley1999}.
Since there are countably many such projections, we can apply the
reasoning of Proposition \ref{prop:subsequence} to conclude that for checking
$\mu_{n}^{\omega}\rightsquigarrow_{\mathbb{P}}\mu$ on $S^{\mathbb{N}}$
we just need to show
\[
\int_{S^{k}}\prod_{i=1}^{k}f_{i}(x_{i})\mu_{n}^{\omega}(\mathrm{d}x_{1},\ldots,\mathrm{d}x_{k})\overset{\mathbb{P}}{\rightarrow}\int_{S^{k}}\prod_{i=1}^{k}f_{i}(x_{i})\mu(\mathrm{d}x_{1},\ldots,\mathrm{d}x_{k})
\]
for all $f_{1},\ldots f_{k}\in\mathrm{BL}(S)$ and $k\in\mathbb{N}$.
The following Lemma is essentially a version of \textcite[Chapter 3 Proposition 4.6 b]{ethier2005}
extended to random measures.
\begin{lemma}
\label{lem:infinitedist}Let $\left(\mu_{n}^{\omega}\right)_{n\geqslant1}$
be a sequence of random probability measures and $\mu$ a non-random
probability measure on $S^{\mathbb{N}}$. Then $\mu_{n}^{\omega}\rightsquigarrow_{\mathbb{P}}\mu$
is equivalent to
\[
\int_{S^{k}}\prod_{i=1}^{k}f_{i}(x_{i})\mu_{n}^{\omega}(\mathrm{d}x_{1},\ldots,\mathrm{d}x_{k})\overset{\mathbb{P}}{\rightarrow}\int_{S^{k}}\prod_{i=1}^{k}f_{i}(x_{i})\mu(\mathrm{d}x_{1},\ldots,\mathrm{d}x_{k})
\]
for all $f_{1},\ldots f_{k}\in\mathrm{BL}(S)$ and $k\in\mathbb{N}$.
\end{lemma}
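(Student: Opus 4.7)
The plan is to mirror the structure of Lemma \ref{lem:finfitedim}, using the fact that on $S^{\mathbb{N}}$ with the product topology, convergence of finite dimensional distributions already implies weak convergence for non-random probability measures \autocite[Example 2.6]{billingsley1999}, and then upgrade this to a statement about random measures by a diagonal subsequence argument of the type in Proposition \ref{prop:subsequence}.

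The forward implication is immediate: if $\mu_n^\omega\rightsquigarrow_{\mathbb{P}}\mu$, then for any $k\in\mathbb{N}$ and any $f_1,\ldots,f_k\in \mathrm{BL}(S)$ the map
\[
(x_1,x_2,\ldots)\mapsto \prod_{i=1}^k f_i(x_i)
\]
lies in $C_b(S^{\mathbb{N}})$ because the projections $\pi_i\colon S^{\mathbb{N}}\to S$ are continuous and each $f_i$ is bounded and continuous; the claimed convergence in probability then follows from Theorem~\ref{thm:EasycheckWeakCV}.

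For the reverse implication, invoke Proposition~\ref{prop:countsubclass} to fix a countable class $\mathcal{C}\subset\mathrm{BL}(S)$ that is convergence determining on $S$, and augment it by $1\in\mathcal{C}$. The countable family
\[
\mathcal{F}=\bigcup_{k\in\mathbb{N}}\bigl\{(x_1,x_2,\ldots)\mapsto\textstyle\prod_{i=1}^k f_i(x_i):\,f_1,\ldots,f_k\in\mathcal{C}\bigr\}
\]
satisfies, by hypothesis, $\mu_n^\omega(F)\overset{\mathbb{P}}{\rightarrow}\mu(F)$ for every $F\in\mathcal{F}$. Given an arbitrary subsequence $N\subset\mathbb{N}$, Proposition~\ref{prop:subsequence} applied to the countable collection $\{\mu_n^\omega(F)\}_{F\in\mathcal{F}}$ yields a further subsequence $N'\subset N$ and an event $A$ with $\mathbb{P}(A)=1$ such that, for every $\omega\in A$ and every $F\in\mathcal{F}$, $\mu_{n}^\omega(F)\to\mu(F)$ along $N'$.

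Fix such an $\omega\in A$. Because $\mathcal{C}$ is convergence determining on $S$, Lemma~\ref{lem:finfitedim} applied to each fixed $k$ (together with the projection identity $\mu_n^{\omega}\circ\pi_k^{-1}$) shows that the $k$-dimensional marginals of $\mu_n^\omega$ converge weakly to those of $\mu$ along $N'$, for every $k\in\mathbb{N}$. By the cited result on infinite product spaces for non-random measures, convergence of all finite dimensional marginals forces $\mu_n^\omega\rightsquigarrow\mu$ on $S^{\mathbb{N}}$ along $N'$. Since $\mathbb{P}(A)=1$ and $N$ was arbitrary, every subsequence of $(\mu_n^\omega)_{n\geq 1}$ contains a further subsequence converging weakly almost surely to $\mu$, which is the definition of $\mu_n^\omega\rightsquigarrow_{\mathbb{P}}\mu$.

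The main technical point is the diagonal extraction: one must extract a single subsequence along which the almost sure convergence holds simultaneously for every test function in the countable family $\mathcal{F}$ and hence for every finite dimension $k$ at once. Once this is in place, the passage from finite dimensional weak convergence to weak convergence on $S^{\mathbb{N}}$ is a classical fact and requires no randomness.
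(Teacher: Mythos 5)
Your proposal is correct and follows essentially the same route as the paper: reduce to the countable family of products of test functions from a countable convergence determining class, extract via Proposition \ref{prop:subsequence} a single subsequence along which almost sure convergence holds simultaneously for all finite dimensions, and then invoke the classical non-random fact that convergence of all finite dimensional marginals implies weak convergence on $S^{\mathbb{N}}$. The only differences are cosmetic: you spell out the (trivial) forward implication, which the paper omits, and you apply the diagonal extraction directly to the countable family of real-valued random variables $\mu_n^\omega(F)$ rather than to the projections indexed by $k$, which is if anything a slightly more careful phrasing of the same step.
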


\begin{proof}
Suppose for any $k$ that the above convergence holds for all test
functions $f_{1},\ldots f_{k}\in\mathrm{BL}(S)$. We have shown in
Lemma \ref{lem:finfitedim} that this is equivalent of convergence
of the canonical projections $\mu_{n}^{\omega}\circ\pi_{k}^{-1}$
on $S^{k}$ (in probability) for any given $k$. Hence, using Proposition
\ref{prop:subsequence} for every subsequence $N\subset\mathbb{N}$
there is a subsequence $N'\subset N$ such that along $N'$
\[
\mathbb{P}\left(\omega\in\Omega:\mu_{n}^{\omega}\circ\pi_{k}^{-1}\rightsquigarrow\mu\circ\pi_{k}^{-1}\text{ as }n\rightarrow\infty\quad\text{for all }k\in\mathbb{N}\right)=1.
\]
An application of \textcite[Chapter 3 Proposition 4.6 b]{ethier2005}
concludes the proof.
\end{proof}

\renewcommand{\thesection}{S2}
\section{Proofs of Section 4}\label{sec:prof_randomchain}

\setcounter{lemma}{0}

\subsection{Proofs for Section 4.1}
\begin{lemma}
\textup{ Given a random probability
measure} $\mu^{\omega}$\textup{ and random Markov kernel $K^{\omega}$,
there exists an almost surely unique random probability measure $\mu^{\mathbb{N},\omega}$
on $S^{\mathbb{N}}$ such that
\[
\mu^{\mathbb{N},\omega}(A_{1}\times\ldots\times A_{k}\times E_{k+1})=\int_{A_{1}}\mu^{\omega}(\mathrm{d}x_{1})\int_{A_{2}}K^{\omega}(x_{1},\mathrm{d}x_{2})\ldots\int_{A_{k}}K^{\omega}(x_{k-1},\mathrm{d}x_{k})
\]
for any }$A_{i}\in\mathcal{B}(S)$ $(i=1,\ldots,k)$, $k\in \mathbb{N}$ and $E_{k+1}=\boldsymbol{\times}_{i=k+1}^{\infty}S$.
\end{lemma}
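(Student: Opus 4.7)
The plan is to build $\mu^{\mathbb{N},\omega}$ pathwise by applying the Ionescu--Tulcea extension theorem for each fixed $\omega$ outside a null set, and then to verify joint measurability of $\omega\mapsto\mu^{\mathbb{N},\omega}(B)$ for every $B\in\mathcal{B}(S)^{\otimes\mathbb{N}}$ so that the resulting object is a random probability measure in the sense of Definition \ref{def:random_measure}.

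First I would fix $\omega$ outside the $\mathbb{P}$-null set $N$ on which either $\mu^{\omega}$ fails to lie in $\mathcal{P}(S)$ or $K^{\omega}(x,\cdot)$ fails to lie in $\mathcal{P}(S)$ for some $x$. On $\Omega\setminus N$, $\mu^{\omega}$ and $K^{\omega}$ form a valid initial distribution and stochastic kernel on the Polish space $S$, so the Ionescu--Tulcea theorem furnishes a unique probability measure $\mu^{\mathbb{N},\omega}$ on $(S^{\mathbb{N}},\mathcal{B}(S)^{\otimes\mathbb{N}})$ whose finite-dimensional cylinder values are given by the iterated integrals on the right-hand side of the display in the statement. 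Uniqueness for fixed $\omega$ follows because any other extension agrees on the $\pi$-system of cylinders and hence on all of $\mathcal{B}(S)^{\otimes\mathbb{N}}$; redefining $\mu^{\mathbb{N},\omega}$ arbitrarily on $N$ (for example as the product of $\delta_{x_0}$ for some fixed $x_0\in S$) preserves the almost sure identity and does not affect the statement.

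Next I would establish that $\omega\mapsto\mu^{\mathbb{N},\omega}(B)$ is $\mathcal{F}$-measurable for every $B\in\mathcal{B}(S)^{\otimes\mathbb{N}}$. For a cylinder $B=A_{1}\times\cdots\times A_{k}\times E_{k+1}$ the value $\mu^{\mathbb{N},\omega}(B)$ is an iterated integral against $\mu^{\omega}$ and $K^{\omega}$. By hypothesis $\omega\mapsto\mu^{\omega}(A)$ and $(\omega,x)\mapsto K^{\omega}(x,A)$ are measurable, and repeatedly applying the standard result that integrating a jointly measurable function against a random Markov kernel yields a jointly measurable function (Proposition 3.3 in Crauel, quoted in the excerpt) gives measurability of $\omega\mapsto\mu^{\mathbb{N},\omega}(B)$ for every cylinder $B$. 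The class $\mathcal{D}$ of sets $B\in\mathcal{B}(S)^{\otimes\mathbb{N}}$ with this property is a $\lambda$-system: it contains $S^{\mathbb{N}}$, is closed under proper differences by linearity, and is closed under monotone countable unions by the monotone convergence theorem. Since $\mathcal{D}$ contains the $\pi$-system of cylinder sets, Dynkin's $\pi$--$\lambda$ theorem gives $\mathcal{D}=\mathcal{B}(S)^{\otimes\mathbb{N}}$, so $\mu^{\mathbb{N},\omega}$ is a random probability measure.

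The main technical obstacle is the joint measurability step; existence and uniqueness for each fixed $\omega$ are classical. The delicate point is that the Ionescu--Tulcea construction is not a priori performed in a way that visibly depends measurably on $\omega$, so one has to recover joint measurability a posteriori. Isolating measurability at the level of cylinders via the assumptions on $\mu^{\omega}$ and $K^{\omega}$ and then promoting it to the full product $\sigma$-algebra through the $\pi$--$\lambda$ argument sidesteps any explicit construction and keeps the proof essentially free of Polish-space structure beyond what is already used for Ionescu--Tulcea.
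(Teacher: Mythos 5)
Your proof is correct and follows essentially the same route as the paper: pathwise application of the Ionescu--Tulcea extension theorem for existence and uniqueness, followed by measurability on the $\pi$-system of cylinder sets (via iterated integration against the random kernel) and an extension to the full product $\sigma$-algebra. The only cosmetic difference is that you spell out the Dynkin $\pi$--$\lambda$ argument explicitly where the paper delegates it to a cited remark of Crauel.
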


\begin{proof}[Proof of Lemma \ref{lem:Random-Markov-chain}]
For $\mathbb{P}-$almost all $\omega$, the existence and uniqueness
of the distribution $\mu^{\mathbb{N},\omega}$ on $\{S^{\mathbb{N}},\mathcal{B}(S)^{\mathbb{N}}\}$
can be obtained using the Ionescu-Tulcea extension theorem; see, e.g.,
\textcite[Theorem 6.17]{Kallenberg2006} or \textcite[Theorem 14.32]{Klenke2013}.
Measurability follows analogously by noting that $\omega\mapsto\mu^{\mathbb{N}}(\omega,A)$
is measurable for any $A\in\mathcal{E}=\{A_{1}\times\ldots\times A_{k}\times E_{k+1};A_{i}\in\mathcal{B}(S),i=1,\ldots,k,k\in\mathbb{N}\}$ and that $\mathcal{E}$ forms a $\pi-$system that generates $\mathcal{B}(S)^{\mathbb{N}}.$
By \textcite[Remark 3.2]{crauel2003} this is enough to obtain measurability
for every $A\in\mathcal{B}(S)^{\mathbb{N}}$.
\end{proof}

\begin{theorem}
If the following assumptions hold,
\begin{itemize}
\item[(T.1)]\label{t1} the random probability measures $\left(\mu_{n}^{\omega}\right)_{n\geqslant1}$ converge weakly in probability to a probability measure $\mu$ as $n\rightarrow \infty$,
\item[(T.2)] the random Markov transition kernels $\left(K_n^\omega\right)_{n\geqslant1}$ satisfy
\begin{equation*}
\int\left|K_{n}^{\omega}f(x)-Kf(x)\right|\mu_{n}^{\omega}(\mathrm{d}x)\rightarrow0\label{eq:ass2}
\end{equation*}
in probability as $n\rightarrow \infty$ for all $f\in\mathrm{BL}(S)$ where $K$ is a Markov transition kernel
,
\item[(T.3)] the transition kernel $K$ is such that \textup{$x\mapsto Kf(x)$
is continuous for any $f\in C_{b}(S)$, }
\end{itemize}
then, as $n\rightarrow \infty$, the measures $(\mu_{n}^{\mathbb{N},\omega})_{n\geqslant1}$ on $S^\mathbb{N}$ converge weakly in probability to
the measure $\mu^\mathbb{N}$ induced by the Markov chain with initial distribution $\mu$ and transition kernel $K$.
\end{theorem}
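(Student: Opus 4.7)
The plan is to reduce the statement on $S^{\mathbb{N}}$ to convergence of finite-dimensional distributions, and then to argue by induction on the number of coordinates, exploiting the Markov structure together with the Feller property (T.3). Specifically, by Lemma \ref{lem:infinitedist} it is enough to show, for every $k \geq 1$ and every $f_{1},\dots,f_{k}\in\mathrm{BL}(S)$, that
\[
\int \prod_{i=1}^{k} f_{i}(x_{i})\,\mu_{n}^{\mathbb{N},\omega}(\mathrm{d}x_{1},\dots,\mathrm{d}x_{k}) \overset{\mathbb{P}}{\longrightarrow} \int \prod_{i=1}^{k} f_{i}(x_{i})\,\mu^{\mathbb{N}}(\mathrm{d}x_{1},\dots,\mathrm{d}x_{k}).
\]
The base case $k=1$ is immediate from (T.1) together with Theorem \ref{thm:EasycheckWeakCV}.

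For the inductive step I would use the Markov structure of Lemma \ref{lem:Random-Markov-chain} to rewrite the left-hand integral as $\int \prod_{i=1}^{k-1} f_{i}(x_{i})\,(K_{n}^{\omega}f_{k})(x_{k-1})\,\mu_{n}^{\mathbb{N},\omega}(\mathrm{d}x_{1},\dots,\mathrm{d}x_{k-1})$, and then add and subtract $(Kf_{k})(x_{k-1})$ to split it into a main term and a remainder. The integrand of the main term is $\prod_{i=1}^{k-1} f_{i}(x_{i})(Kf_{k})(x_{k-1})$, which lies in $C_{b}(S^{k-1})$ by (T.3) and the continuity and boundedness of each $f_{i}$; by the inductive hypothesis combined with the $C_{b}$-characterisation of weak convergence in probability in Lemma \ref{lem:finfitedim}, the main term converges in probability to the desired value. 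The remainder is bounded in absolute value by
\[
\prod_{i=1}^{k-1}\|f_{i}\|_{\infty}\,\int \bigl|K_{n}^{\omega}f_{k} - Kf_{k}\bigr|(x_{k-1})\,\eta_{n}^{(k-1),\omega}(\mathrm{d}x_{k-1}),
\]
where $\eta_{n}^{(k-1),\omega} = \mu_{n}^{\omega}(K_{n}^{\omega})^{k-2}$ denotes the time-$(k-1)$ marginal of the chain.

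The hard part is controlling this remainder, because assumption (T.2) delivers convergence of the kernels against the base distribution $\mu_{n}^{\omega}$ rather than against an iterated marginal. In the concrete application to Theorem \ref{theorem} the chain is initialised at stationarity, as verified in Proposition \ref{prop:4.2}, so all marginals coincide with $\mu_{n}^{\omega}$ and the remainder vanishes in probability by a direct application of (T.2), closing the induction. In the general setting, I would use the subsequence characterisation of convergence in probability in Theorem \ref{thm:EasycheckWeakCV} together with the countable convergence determining class of Proposition \ref{prop:countsubclass} to extract a subsequence along which (T.1) and (T.2) hold almost surely; for each such $\omega$ the problem reduces to a deterministic convergence of Markov kernels, and a tightness and approximation argument leveraging the Feller property (T.3) then extends (T.2) from $\mathrm{BL}$ to $C_{b}$ test functions and propagates the convergence through the successive marginals. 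Arranging this extension cleanly is the only step that is not entirely routine.
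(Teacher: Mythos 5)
Your proposal follows essentially the same route as the paper's proof: reduction to convergence of finite-dimensional distributions via Lemma \ref{lem:infinitedist}, induction on the number of coordinates, and a split of the inductive step into a main term handled by (T.3) together with the inductive hypothesis (upgraded from products of bounded Lipschitz functions to all of $C_{b}(S^{k})$ via Lemma \ref{lem:finfitedim}) and a remainder handled by (T.2). The one place where you are more careful than the paper is the remainder: the paper bounds it by $E^{\omega}\{|K_{n}^{\omega}f_{k+1}(X_{n,k}^{\omega})-Kf_{k+1}(X_{n,k}^{\omega})|\}$ and asserts convergence ``due to (T.2)'', even though the integrating measure there is the time-$k$ marginal $\mu_{n}^{\omega}(K_{n}^{\omega})^{k}$ rather than $\mu_{n}^{\omega}$; you correctly note that these coincide only when $\mu_{n}^{\omega}$ is $K_{n}^{\omega}$-invariant, which is precisely the situation in the application (Proposition \ref{prop:4.2}), where the pseudo-marginal chain is initialised at stationarity. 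Your sketched repair for the general non-stationary case is left vague, but since the paper's own proof silently relies on the same identification of the marginals, your proposal is no less complete than the published argument on this point.
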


\begin{proof}[Proof of Theorem \ref{theo:appendix}]
By Section \ref{sec:weak_random_measures} Lemma \ref{lem:infinitedist}, we need to show that for any $k\geq0$
and any $f_{0},\ldots,f_{k}\in\mathcal{\mathrm{BL}}(S)$
\begin{equation}
E^{\omega}\left\{f_{0}(X_{n,0}^{\omega})\cdots f_{k}(X_{n,k}^{\omega})\right\}\overset{\mathbb{P}}{\longrightarrow}E\left\{f_{0}(X_{0})\cdots f_{k}(X_{k})\right\}\label{eq:fdds}
\end{equation}
where $E^{\omega}$, resp. $E$, denotes the expectation
w.r.t. the law of $\mathbf{X}_{n}^{\omega}$, respectively w.r.t.
the law of $\mathbf{X}$.
We prove this by induction. For $k=0$,
this follows directly from ($T.1$). Now assume that (\ref{eq:fdds}) is true for $k\geq0$, i.e.
\[
\left|E^{\omega}\left\{f_{0}(X_{n,0}^{\omega})f_{1}(X_{n,1}^{\omega})\cdots f_{k}(X_{n,k}^{\omega})\right\} - E\left\{f_{0}(X_{0})f_{1}(X_{1})\cdots f_{k}(X_{k})\right\}\right|\overset{\mathbb{P}}{\longrightarrow}0.
\]
By Lemma \ref{lem:finfitedim} this is equivalent to weak convergence
in probability of the vector of the first $k$ states, i.e., for all
$f\in C_{b}(S^{k})$

\begin{equation}
E^{\omega}\left\{f(X_{0}^{n},\ldots,X_{k}^{n})\right\}\overset{\mathbb{P}}{\longrightarrow}E\left\{f(X_{0},\ldots,X_{k})\right\}.\label{eq:joint_convergence}
\end{equation}

For $k+1$, we have
\begin{align}
 & \left|E^{\omega}\left\{f_{0}(X_{n,0}^{\omega})\cdots f_{k}(X_{n,k}^{\omega})f_{k+1}(X_{n,k+1}^{\omega})\right\}-E\left\{f_{0}(X_{0})\cdots f(X_{k})f_{k+1}(X_{k+1})\right\}\right|\nonumber \\
 & =\left|{E^{\omega}}\left\{f_{0}(X_{n,0}^{\omega})\cdots f_{k}(X_{n,k}^{\omega})K_{n}^{\omega}f_{k+1}(X_{n,k}^{\omega})\right\}-E\left\{f_{0}(X_{0})\cdots f(X_{k})Kf_{k+1}(X_{k})\right\}\right|\nonumber \\
 & \leq\left|E^{\omega}\left\{f_{0}(X_{n,0}^{\omega})\cdots f_{k}(X_{n,k}^{\omega})K_{n}^{\omega}f_{k+1}(X_{n,k}^{\omega})-f_{0}(X_{n,0}^{\omega})\cdots f_{k}(X_{n,k}^{\omega})Kf_{k+1}(X_{n,k}^{\omega})\right\}\right|\nonumber \\
 & \qquad+\left|E^{\omega}\left\{f_{0}(X_{n,0}^{\omega})\cdots f_{k}(X_{n,k}^{\omega})Kf_{k+1}(X_{n,k}^{\omega})\right\}-E\left\{f_{0}(X_{0})\cdots f_{k}(X_{k})Kf_{k+1}(X_{k})\right\}\right|\nonumber \\
 & \leq E^{\omega}\left\{\left|K_{n}^{\omega}f_{k+1}(X_{n,k}^{\omega})-Kf_{k+1}(X_{n,k}^{\omega})\right|\right\}\label{eq:fin1}\\
 & \qquad+\left|E^{\omega}\left\{f_{0}(X_{n,0}^{\omega})\cdots f_{k}(X_{n,k}^{\omega})Kf_{k+1}(X_{n,k}^{\omega})\right\}-E\left\{f_{0}(X_{0})\cdots f_{k}(X_{k})Kf_{k+1}(X_{k})\right\}\right|.\label{eq:fin2}
\end{align}

The term (\ref{eq:fin1}) converges due to (\ref{eq:ass2}). For the
term (\ref{eq:fin2}), the function $Kf_{k+1}$ is bounded and it
is assumed continuous so the function $f_{0}\cdots f_{k}Kf_{k+1}\in C_{b}(S^{k})$.
Hence this term vanishes by (\ref{eq:joint_convergence}).
\end{proof}

\subsection{Some Auxiliary Results \label{sec:appendix}}
\begin{lemma}
\label{cor1} Under Assumption \ref{ass1}, we have
\[
\varphi(\mathrm{d}\theta;\hat{\theta}_{T}^{\omega},\Sigma/T)\rightsquigarrow_{\mathbb{P}^{Y}}\delta_{\bar{\theta}}(\mathrm{d}\theta)
\]
and
\[
\pi_{T}^{\omega}(\mathrm{d}\theta)\rightsquigarrow_{\mathbb{P}^{Y}}\delta_{\bar{\theta}}(\mathrm{d}\theta).
\]
\end{lemma}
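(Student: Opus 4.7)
The plan is to verify each claim using Theorem \ref{thm:EasycheckWeakCV}: since the limit is the Dirac mass at $\bar{\theta}$, it suffices to check that $\mu_{T}^{\omega}(f)\to f(\bar{\theta})$ in $\mathbb{P}^{Y}$-probability for every $f\in\mathrm{BL}(\mathbb{R}^{d})$, for each of the two sequences of random measures. Both claims will follow by combining the $L^{1}$ control and the convergence $\hat{\theta}_{T}^{\omega}\to\bar{\theta}$ supplied by Assumption \ref{ass1}.

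For the first statement, I would perform the affine change of variables $\theta=\hat{\theta}_{T}^{\omega}+\Sigma^{1/2}z/\surd{T}$ to write
\begin{equation*}
\int f(\theta)\varphi(\theta;\hat{\theta}_{T}^{\omega},\Sigma/T)\mathrm{d}\theta
=\int f\bigl(\hat{\theta}_{T}^{\omega}+\Sigma^{1/2}z/\surd{T}\bigr)\varphi(z;0,I_{d})\mathrm{d}z.
\end{equation*}
Given any subsequence in $T$, Assumption \ref{ass1} allows extraction of a further subsequence along which $\hat{\theta}_{T}^{\omega}\to\bar{\theta}$ $\mathbb{P}^{Y}$-almost surely. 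For $\omega$ in the corresponding full-probability set and for every fixed $z$, continuity of $f$ yields $f(\hat{\theta}_{T}^{\omega}+\Sigma^{1/2}z/\surd{T})\to f(\bar{\theta})$. An application of dominated convergence, with dominating function $\|f\|_{\infty}$ against $\varphi(\cdot;0,I_{d})$, gives almost sure convergence of the integral to $f(\bar{\theta})$ along the subsequence. The subsequence characterisation of convergence in probability then delivers the desired convergence.

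For the second statement, for any $f\in\mathrm{BL}(\mathbb{R}^{d})$ the triangle inequality gives
\begin{equation*}
\bigl|\pi_{T}^{\omega}(f)-f(\bar{\theta})\bigr|
\leq
\|f\|_{\infty}\int\bigl|\pi_{T}^{\omega}(\theta)-\varphi(\theta;\hat{\theta}_{T}^{\omega},\Sigma/T)\bigr|\mathrm{d}\theta
+\Bigl|\int f(\theta)\varphi(\theta;\hat{\theta}_{T}^{\omega},\Sigma/T)\mathrm{d}\theta-f(\bar{\theta})\Bigr|.
\end{equation*}
The first summand vanishes in $\mathbb{P}^{Y}$-probability by the total-variation bound in Assumption \ref{ass1}, and the second by the first claim just established. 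Invoking Theorem \ref{thm:EasycheckWeakCV} concludes the argument.

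I do not anticipate any real obstacle. The only mildly delicate point is the need to pass between convergence in probability of $\hat{\theta}_{T}^{\omega}$ and the almost sure convergence required to apply dominated convergence inside the integral; this is handled cleanly by the standard subsequence device. Everything else is a direct consequence of Assumption \ref{ass1}.
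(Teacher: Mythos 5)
Your proof is correct, and the second half (the triangle-inequality reduction of $\pi_{T}^{\omega}$ to the Gaussian approximation plus the total-variation bound of Assumption \ref{ass1}) is exactly the paper's argument. Where you genuinely diverge is in the first claim: the paper proves $\varphi(\mathrm{d}\theta;\hat{\theta}_{T}^{\omega},\Sigma/T)\rightsquigarrow_{\mathbb{P}^{Y}}\delta_{\bar{\theta}}$ in one line by computing the moment generating function $\exp\bigl(u^{\T}\hat{\theta}_{T}^{\omega}+u^{\T}\Sigma u/2T\bigr)\to\exp(u^{\T}\bar{\theta})$ and invoking the MGF criterion for weak convergence in probability of random measures (Proposition \ref{prop:weakconv_moment}), whereas you test directly against bounded Lipschitz functions via a change of variables, pass to an almost surely convergent subsequence of $\hat{\theta}_{T}^{\omega}$, apply dominated convergence, and conclude through Theorem \ref{thm:EasycheckWeakCV}. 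Both routes are valid. Yours is more elementary in that it needs only the bounded Lipschitz characterisation already developed for the rest of the paper and avoids the auxiliary MGF result entirely; in fact the subsequence device could itself be dispensed with, since $\bigl|f(\hat{\theta}_{T}^{\omega}+\Sigma^{1/2}z/\surd{T})-f(\bar{\theta})\bigr|\leq\|f\|_{\mathrm{BL}}\min\bigl\{1,\|\hat{\theta}_{T}^{\omega}-\bar{\theta}\|+\|\Sigma^{1/2}z\|/\surd{T}\bigr\}$ gives the convergence in probability directly after integrating against $\varphi(z;0,I_{d})$. The paper's MGF route is shorter on the page but leans on a less standard piece of machinery; either is acceptable here.
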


\begin{proof}
Using the moment generating function of the normal distribution, we
have as $T\rightarrow\infty$
\[
\int e^{u^{\T}\theta}\varphi(\theta;\hat{\theta}_{T}^{\omega},\Sigma/T)\mathrm{d}\theta=\exp\left(u^\T\hat{\theta}_{T}^{\omega}+u^\T\Sigma u/{2T}\right)\overset{\mathbb{P}^{Y}}{\longrightarrow}\exp(u^\T\bar{\theta})=\int\exp(u^\T\theta)\delta_{\bar{\theta}}(\mathrm{d}\theta),
\]
where $\delta_{\bar{\theta}}$ denotes the Dirac measure at $\bar{\theta}$
and thus $\varphi(\mathrm{d}\theta;\hat{\theta}_{T}^{\omega},\Sigma/T)\rightsquigarrow_{\mathbb{P}^{Y}}\delta_{\bar{\theta}}(\mathrm{d}\theta)$
by Proposition \ref{prop:weakconv_moment}. This implies that for
$f\in C_{b}(\mathbb{R}^{d})$
\begin{align*}
 & \left|\int f(\theta)\pi_{T}^{\omega}(\theta)\mathrm{d}\theta-\int f(\theta)\delta_{\bar{\theta}}(\mathrm{d}\theta)\right|\\
 & \leq\left|\int f(\theta)\pi_{T}^{\omega}(\theta)\mathrm{d}\theta-\int f(\theta)\varphi(\theta;\hat{\theta}_{T}^{\omega},\Sigma/T)\mathrm{d}\theta\right|+\left|\int f(\theta)\varphi(\theta;\hat{\theta}_{T}^{\omega},\Sigma/T)\mathrm{d}\theta-\int f(\theta)\delta_{\bar{\theta}}(\mathrm{d}\theta)\right|\\
 & \leq\|f\|_{\infty}\int\left|\pi_{T}^{\omega}(\theta)-\varphi(\theta;\hat{\theta}_{T}^{\omega},\Sigma/T)\right|\mathrm{d}\theta+\left|\int f(\theta)\varphi(\theta;\hat{\theta}_{T}^{\omega},\Sigma/T)\mathrm{d}\theta-\int f(\theta)\delta_{\bar{\theta}}(\mathrm{d}\theta)\right|,
\end{align*}
where the first term on the r.h.s. converges to zero in probability
under Assumption \ref{ass1} while the second term converges to zero
as $\varphi(\mathrm{d}\theta;\hat{\theta}_{T}^{\omega},\Sigma/T)\rightsquigarrow_{\mathbb{P}^{Y}}\delta_{\bar{\theta}}(\mathrm{d}\theta)$.
Hence, it follows that $\pi_{T}^{\omega}(\mathrm{d}\theta)\rightsquigarrow_{\mathbb{P}^{Y}}\delta_{\bar{\theta}}(\mathrm{d}\theta).$
\end{proof}
To analyse the asymptotic properties of the pseudo-marginal algorithm,
we rescale the parameter component. A simple change of variables and
the fact that convergence in total variation in probability implies
weak convergence in probability shows that the following result holds.
\begin{lemma}
\label{cor} Under Assumption \ref{ass1}, we have
\[
\int\left|\tilde{\pi}_{T}^{\omega}(\tilde{\theta})-\varphi(\tilde{\theta};0,\Sigma)\right|\mathrm{d}\theta\overset{\mathbb{P}^{Y}}{\longrightarrow}0,\quad\text{as}\quad T\rightarrow\infty,
\]
and thus $\tilde{\pi}_{T}^{\omega}(\mathrm{d}\tilde{\theta})\rightsquigarrow_{\mathbb{P}^{Y}}\mathcal{\varphi}(\mathrm{d}\tilde{\theta};0,\Sigma)$.
\end{lemma}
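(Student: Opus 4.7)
The plan is to reduce everything to Assumption \ref{ass1} via the change of variables $\theta\mapsto\tilde\theta=\surd T(\theta-\hat\theta_T^\omega)$, and then to pass from $L^{1}$-convergence of densities (in probability) to weak convergence in probability by invoking Theorem \ref{thm:EasycheckWeakCV}.

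First I would record the key algebraic identity. By definition, $\tilde\pi_T^\omega$ is the pushforward of $\pi_T^\omega$ under the affine bijection $\theta\mapsto\surd T(\theta-\hat\theta_T^\omega)$, so by the change-of-variables formula its Lebesgue density satisfies $\tilde\pi_T^\omega(\tilde\theta)=T^{-d/2}\pi_T^\omega(\hat\theta_T^\omega+\tilde\theta/\surd T)$. The same change of variables transforms the normal density $\varphi(\,\cdot\,;\hat\theta_T^\omega,\Sigma/T)$ into $\varphi(\,\cdot\,;0,\Sigma)$, since a $N(\hat\theta_T^\omega,\Sigma/T)$ random variable rescales to a $N(0,\Sigma)$ one. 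Because the $L^{1}$ norm of a density is invariant under such bijections, we get the pointwise-in-$\omega$ identity
\begin{equation*}
\int\bigl|\tilde\pi_T^\omega(\tilde\theta)-\varphi(\tilde\theta;0,\Sigma)\bigr|\mathrm{d}\tilde\theta=\int\bigl|\pi_T^\omega(\theta)-\varphi(\theta;\hat\theta_T^\omega,\Sigma/T)\bigr|\mathrm{d}\theta.
\end{equation*}
By Assumption \ref{ass1}, the right-hand side converges to $0$ in $\mathbb{P}^{Y}$-probability as $T\rightarrow\infty$, which immediately gives the first claim of the lemma.

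Finally, to obtain weak convergence in $\mathbb{P}^{Y}$-probability, for any $f\in C_{b}(\mathbb{R}^{d})$ I would estimate
\begin{equation*}
\left|\tilde\pi_T^\omega(f)-\varphi(f;0,\Sigma)\right|\leq\|f\|_\infty\int\bigl|\tilde\pi_T^\omega(\tilde\theta)-\varphi(\tilde\theta;0,\Sigma)\bigr|\mathrm{d}\tilde\theta.
\end{equation*}
The right-hand side tends to $0$ in $\mathbb{P}^{Y}$-probability by the previous display, so $\tilde\pi_T^\omega(f)\overset{\mathbb{P}^{Y}}{\rightarrow}\varphi(f;0,\Sigma)$ for every $f\in C_{b}(\mathbb{R}^{d})$. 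The implication $(iii)\Rightarrow(ii)$ of Theorem \ref{thm:EasycheckWeakCV} then yields $\tilde\pi_T^\omega(\mathrm{d}\tilde\theta)\rightsquigarrow_{\mathbb{P}^{Y}}\varphi(\mathrm{d}\tilde\theta;0,\Sigma)$.

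There is essentially no technical obstacle here: the entire content of the lemma lies in the observation that both the posterior approximation of Assumption \ref{ass1} and the rescaling commute in the right way, so that the rescaled normal acquires the identity covariance structure required for Theorem \ref{theorem}. The only point requiring minimal care is writing the Jacobian $T^{-d/2}$ correctly (the statement in Section \ref{sec:WeakCVstatement} suppresses this for notational brevity in the $d=1$ case); once this is done, the result is a one-line consequence of Assumption \ref{ass1} and the dominated bound $|\tilde\pi_T^\omega(f)-\varphi(f)|\leq\|f\|_\infty\|\tilde\pi_T^\omega-\varphi\|_{L^{1}}$.
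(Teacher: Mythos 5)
Your proof is correct and takes essentially the same route the paper intends: the paper omits the argument entirely, remarking only that ``a simple change of variables and the fact that convergence in total variation in probability implies weak convergence in probability shows that the result holds,'' and your write-up supplies exactly those details (the $L^{1}$-invariance under the affine rescaling, Assumption \ref{ass1}, the bound $|\tilde\pi_T^\omega(f)-\varphi(f;0,\Sigma)|\leq\|f\|_\infty\int|\tilde\pi_T^\omega-\varphi(\cdot;0,\Sigma)|$, and Theorem \ref{thm:EasycheckWeakCV}). Your remark that the Jacobian should be $T^{-d/2}$ rather than the $1/\surd T$ written in Section \ref{sec:WeakCVstatement} is a fair point of notational care for $d>1$ and does not affect the argument.
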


\begin{lemma}[Convergence of marginal distributions]
 \label{prop:prep} Under Assumptions \ref{ass1} and \ref{ass2},
the marginal distribution of the proposal at stationarity
\[
\pi_{T}^{\omega}q_{T}(\mathrm{d}\vartheta)=\int\pi_{T}^{\omega}(\mathrm{d}\theta)q_{T}(\theta,\mathrm{d}\vartheta)
\]
satisfies
\[
\pi_{T}^{\omega}q_{T}(\mathrm{d}\vartheta)\rightsquigarrow_{\mathbb{P}^{Y}}\delta_{\bar{\theta}}(\mathrm{d}\vartheta).
\]
\end{lemma}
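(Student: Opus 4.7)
The plan is to use the characterization of weak convergence in probability via bounded Lipschitz test functions (Theorem \ref{thm:EasycheckWeakCV}), combined with the concentration of the posterior $\pi_T^\omega$ at $\bar{\theta}$ established in Lemma \ref{cor1} and the vanishing step size of the proposal under Assumption \ref{ass2}. Concretely, fix $f\in\mathrm{BL}(\mathbb{R}^d)$ with $\|f\|_{\mathrm{BL}}\leq 1$ and change variables $u=\sqrt{T}(\vartheta-\theta)$ to rewrite
\[
(\pi_T^{\omega}q_T)(f)=\int\pi_T^{\omega}(\mathrm{d}\theta)\int\nu(u)\,f\!\left(\theta+u/\sqrt{T}\right)\mathrm{d}u = \pi_T^{\omega}(\bar f_T),
\]
where $\bar f_T(\theta)=\int\nu(u)f(\theta+u/\sqrt{T})\mathrm{d}u$.

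The first step is to show that $\bar f_T\to f$ uniformly on $\mathbb{R}^d$. Using the Lipschitz/boundedness inequality \eqref{ineq:lipschitz} one obtains
\[
|\bar f_T(\theta)-f(\theta)|\leq \int\nu(u)\min\!\left\{2\|f\|_\infty,\,\|f\|_\mathrm{L}\,|u|/\sqrt{T}\right\}\mathrm{d}u,
\]
and the right-hand side is independent of $\theta$ and tends to $0$ as $T\to\infty$ by dominated convergence (the integrand is dominated by $2\|f\|_\infty\nu(u)$ and vanishes pointwise). Hence $\|\bar f_T-f\|_\infty\to 0$.

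The second step is to split
\[
\big|(\pi_T^{\omega}q_T)(f)-f(\bar{\theta})\big|\leq\|\bar f_T-f\|_\infty+\big|\pi_T^{\omega}(f)-f(\bar{\theta})\big|.
\]
The first summand is deterministic and vanishes by the uniform convergence above, while the second converges to $0$ in $\mathbb{P}^Y$-probability because $\pi_T^{\omega}(\mathrm{d}\theta)\rightsquigarrow_{\mathbb{P}^Y}\delta_{\bar{\theta}}(\mathrm{d}\theta)$ by Lemma \ref{cor1} and $f\in C_b(\mathbb{R}^d)$. Therefore $(\pi_T^{\omega}q_T)(f)\overset{\mathbb{P}^Y}{\longrightarrow}f(\bar{\theta})=\delta_{\bar{\theta}}(f)$ for every $f\in\mathrm{BL}(\mathbb{R}^d)$, which by the equivalence $(iv)\Leftrightarrow(ii)$ in Theorem \ref{thm:EasycheckWeakCV} yields the desired weak convergence in probability.

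There is no real obstacle here; the only point requiring a little care is the uniform convergence $\bar f_T\to f$, which exploits the specific scaling of the proposal in Assumption \ref{ass2} together with the boundedness and Lipschitz structure of the test class. Everything else is an application of results already proved earlier in the excerpt.
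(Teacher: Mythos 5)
Your proof is correct and follows essentially the same route as the paper's: the same decomposition into a term controlled by the $1/\surd{T}$ scaling of the proposal via the bounded Lipschitz inequality (with dominated convergence) and a term controlled by the posterior concentration $\pi_{T}^{\omega}\rightsquigarrow_{\mathbb{P}^{Y}}\delta_{\bar{\theta}}$ from Lemma \ref{cor1}. Packaging the first step as uniform convergence of $\bar f_T$ to $f$ is only a cosmetic difference from the paper's direct bound on the double integral.
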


\begin{proof}
Let $f\in\mathrm{BL}(\mathbb{\mathbb{R}}),$ then we have
\begin{align*}
\left|\int f(\vartheta)\pi_{T}^{\omega}q_{T}(\mathrm{d}\vartheta)-f(\bar{\theta})\right| & =\left|\int f(\theta+\xi/\surd{T})\int\pi_{T}^{\omega}(\mathrm{d}\theta)\nu(\mathrm{d}\xi)-f(\bar{\theta})\right|\\
 & \leq\left|\iint\left(f(\theta+\xi/\surd{T})-f(\theta)\right)\nu(\mathrm{d}\xi)\pi_{T}^{\omega}(\mathrm{d}\theta)\right|+\left|\iint f(\theta)\pi_{T}^{\omega}(\mathrm{d}\theta)\nu(\mathrm{d}\xi)-f(\bar{\theta})\right|\\
 & \leq\iint\left|f(\theta+\xi/\surd{T})-f(\theta)\right|\nu(\mathrm{d}\xi)\pi_{T}^{\omega}(\mathrm{d}\theta)+\left|\int f(\theta)\pi_{T}^{\omega}(\mathrm{d}\theta)-f(\bar{\theta})\right|.
\end{align*}
The second term on the r.h.s. vanishes due to Lemma \ref{cor1}.
For the first term we use the fact that $f$ is bounded Lipschitz, hence
\begin{align*}
\iint\left|f(\theta+\xi/\surd{T})-f(\theta)\right|\nu(\mathrm{d}\xi)\pi_{T}^{\omega}(\mathrm{d}\theta) & \leq \|f\|_\mathrm{BL} \iint \min\left\{1,\frac{\left\Vert \xi\right\Vert}{\surd{T}}\right\} \nu(\mathrm{d}\xi)\pi_{T}^{\omega}(\mathrm{d}\theta)\\
 & = \|f\|_\mathrm{BL} \iint \min\left\{1,\frac{\left\Vert \xi\right\Vert}{\surd{T}}\right\} \nu(\mathrm{d}\xi) \rightarrow 0.
\end{align*}
\end{proof}

The proof of the following Lemmas are straightforward and thus omitted.
\begin{lemma}
\label{prop_lip} The map $x\mapsto\min\left(1,ae^{x}\right)$
with $a>0$ is $1-$Lipschitz, i.e., for all $x,y\in\mathbb{R}$
\[
\left|\min\left( 1,ae^{x}\right) -\min\left(1,ae^{y}\right) \right|\leq|x-y|.
\]
\end{lemma}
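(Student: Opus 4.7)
The claim is elementary, so the plan is to reduce everything to two observations about monotone Lipschitz compositions. My first step is to rewrite the function in the form $f(x) = \min(1, a e^x) = \exp\{g(x)\}$ where $g(x) = \min(0, x + \log a)$. Since $g$ is the pointwise minimum of the two $1$-Lipschitz functions $x \mapsto 0$ and $x \mapsto x + \log a$, it is itself $1$-Lipschitz, which is the standard fact that $|\min\{u_1,v_1\} - \min\{u_2,v_2\}| \leq \max\{|u_1-u_2|, |v_1-v_2|\}$. This isolates the only nontrivial ingredient, namely understanding the composition with the exponential.

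Next I would note that by construction $g(x) \leq 0$ for every $x$, so the composition $\exp \circ g$ takes values in $(0,1]$. On the half-line $(-\infty, 0]$ the exponential is $1$-Lipschitz because its derivative $e^{u}$ is bounded by $1$ there; consequently, for all $x, y \in \mathbb{R}$,
\[
|f(x) - f(y)| = |e^{g(x)} - e^{g(y)}| \leq |g(x) - g(y)| \leq |x - y|,
\]
which is the desired inequality. This yields the lemma in one line once the rewriting has been carried out.

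If one prefers a direct case analysis without the $\exp$ trick, the alternative is to split into the three regions determined by the threshold $x^{\star} = -\log a$: if both $x, y \leq x^{\star}$ the bound follows from the mean value theorem applied to $a e^{\cdot}$ since $a e^{\xi} \leq 1$ in that region; if both $x, y \geq x^{\star}$ the left-hand side is zero; and in the mixed case one inserts $x^{\star}$ and uses the previous two bounds together with the triangle inequality. Either route is routine, and the only mild subtlety, if any, is remembering to handle the mixed case via an intermediate point at the threshold; the exponential rewriting avoids even that, which is why I would present it as the main argument.
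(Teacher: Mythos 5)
Your proof is correct. Note that the paper does not actually supply an argument for this lemma --- it is dismissed with ``the proof of the following Lemmas are straightforward and thus omitted'' --- so there is no official proof to compare against. Your rewriting $\min(1,ae^{x})=\exp\{\min(0,x+\log a)\}$ is exact, the inner function is $1$-Lipschitz as a minimum of two $1$-Lipschitz functions, and since it is nonpositive the outer exponential contributes a Lipschitz factor of at most $e^{0}=1$; the three-region case analysis you sketch as an alternative is equally valid. Either version is a complete and acceptable filling-in of the omitted proof.
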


\begin{lemma}
\label{lem:Gaussian_continuity} Under Assumption \ref{ass3}
\begin{itemize}
\item[(i)]  the function
\[
\theta\mapsto d_{\mathrm{BL}}\left[\varphi\left\{\,\cdot\,;\sigma^{2}(\theta)/2,\sigma^{2}(\theta)\right)\},\varphi\left\{\,\cdot\,;\sigma^{2}(\bar{\theta})/2,\sigma^{2}(\bar{\theta})\right\}\mid \mathcal{Y}_T\right]
\]
is bounded for all $\theta$ and continuous at $\bar{\theta}$;
\item[(ii)]  for all $f\in\mathrm{BL}(\mathbb{R})$ the functions
\[
\theta\mapsto\left|\int f(z)\varphi\left\{\mathrm{d}z;\sigma^{2}(\theta)/2,\sigma^{2}(\theta)\right\}-\int f(z)\varphi\left\{\mathrm{d}z;\sigma^{2}(\bar{\theta})/2,\sigma^{2}(\bar{\theta})\right\}\right|
\]
are bounded for all $\theta$ and continuous at $\bar{\theta}$.
\end{itemize}
\end{lemma}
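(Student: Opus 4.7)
The key observation is that both parts concern continuity of Gaussian integrals as a function of the parameters $m_\theta = \pm \sigma^2(\theta)/2$ and $s_\theta = \sigma(\theta)$, and these quantities are continuous at $\bar\theta$ by Assumption \ref{ass3}. Since $0<\sigma(\bar\theta)<\infty$, the limiting Gaussian is non-degenerate, so standard continuity in the parameters applies. I would carry out both parts with the same coupling argument, following the style of the bound already used in the proof of Lemma \ref{prop:prep} via inequality \eqref{ineq:lipschitz}. The $\mid\mathcal{Y}_T$ appearing in (i) does not play a role since the two measures are deterministic; I interpret it as notational residue and prove the deterministic statement.

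The natural coupling is to write, for $W\sim N(0,1)$, the random variable $Z_\theta = \sigma^2(\theta)/2 + \sigma(\theta) W$, which is distributed according to $\varphi(\,\cdot\,;\sigma^2(\theta)/2,\sigma^2(\theta))$. Then $Z_\theta$ and $Z_{\bar\theta}$ are coupled on the same probability space with
\[
|Z_\theta - Z_{\bar\theta}| \leq \tfrac{1}{2}\bigl|\sigma^2(\theta)-\sigma^2(\bar\theta)\bigr| + \bigl|\sigma(\theta)-\sigma(\bar\theta)\bigr|\,|W|.
\]
For any $f\in\mathrm{BL}(\mathbb{R})$, inequality \eqref{ineq:lipschitz} gives
\[
\Bigl|\int f\,\mathrm{d}\varphi_\theta - \int f\,\mathrm{d}\varphi_{\bar\theta}\Bigr|
\leq \|f\|_\mathrm{BL}\, E\bigl[\min\{1,|Z_\theta-Z_{\bar\theta}|\}\bigr],
\]
where $\varphi_\theta = \varphi(\,\cdot\,;\sigma^2(\theta)/2,\sigma^2(\theta))$. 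Boundedness in (ii) is then immediate from $|\int f\,\mathrm{d}\mu-\int f\,\mathrm{d}\nu|\leq 2\|f\|_\infty\leq \|f\|_\mathrm{BL}$. Continuity at $\bar\theta$ follows from bounded convergence: as $\theta\to\bar\theta$, $\sigma(\theta)\to\sigma(\bar\theta)$ by Assumption \ref{ass3}, so $|Z_\theta-Z_{\bar\theta}|\to 0$ almost surely, and the integrand $\min\{1,|Z_\theta-Z_{\bar\theta}|\}$ is dominated by $1$.

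For part (i), boundedness is automatic: for any $f$ with $\|f\|_\mathrm{BL}\leq 1$ the definition \eqref{def:BL} gives $\|f\|_\infty\leq 1/2$, so $d_\mathrm{BL}(\varphi_\theta,\varphi_{\bar\theta})\leq 1$. Continuity at $\bar\theta$ follows from the same coupling: taking the supremum over $\|f\|_\mathrm{BL}\leq 1$ in the bound above,
\[
d_\mathrm{BL}(\varphi_\theta,\varphi_{\bar\theta}) \leq E\bigl[\min\{1,|Z_\theta-Z_{\bar\theta}|\}\bigr] \leq \tfrac{1}{2}\bigl|\sigma^2(\theta)-\sigma^2(\bar\theta)\bigr| + \bigl|\sigma(\theta)-\sigma(\bar\theta)\bigr|\,E|W|,
\]
which tends to zero as $\theta\to\bar\theta$ by continuity of $\sigma$ at $\bar\theta$. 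There is no real obstacle here; the only care needed is in using the precise form of the $\mathrm{BL}$-norm inequality recorded in \eqref{ineq:lipschitz} and in noting that the deterministic continuity of $\sigma$ at $\bar\theta$ from Assumption \ref{ass3} is exactly the input required.
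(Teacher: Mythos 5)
Your proof is correct. The paper itself states that ``the proof of the following Lemmas are straightforward and thus omitted,'' so there is no argument to compare against; your coupling $Z_\theta=\sigma^2(\theta)/2+\sigma(\theta)W$ together with inequality \eqref{ineq:lipschitz} and the bound $2\|f\|_\infty\leq\|f\|_{\mathrm{BL}}$ is exactly the kind of routine verification the authors had in mind, and it correctly uses only the continuity of $\sigma$ at $\bar{\theta}$ from Assumption \ref{ass3} (note the bounds remain valid even where $\sigma(\theta)=0$, since the coupling then produces a point mass). Your reading of the stray ``$\mid\mathcal{Y}_T$'' in part \emph{(i)} as notational residue is also the right one, as both measures are deterministic.
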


\subsection{Proof of Theorem \ref{theorem}}

In order to prove Theorem$\,$\ref{theorem}, we need to prove Propositions$\:$\ref{prop:4.1},
\ref{prop:4.2} and \ref{prop:4.3} of Section 4$\cdot$3.

\setcounter{proposition}{0}

\begin{proposition}
Under Assumptions \ref{ass1} and \ref{ass3},
we have
\[
\tilde{\pi}_{T}^{\omega}(\mathrm{d}\tilde{\theta},\mathrm{d}z)\rightarrow \tilde{\pi}(\mathrm{d}\tilde{\theta},\mathrm{d}z),
\]
weakly in $\mathbb{P}^Y$-probability as $T\rightarrow\infty$ where $\tilde{\pi}_{T}^{\omega}(\mathrm{d}\tilde{\theta},\mathrm{d}z)=\tilde{\pi}_{T}^{\omega}(\mathrm{d}\tilde{\theta})\mathrm{exp}\left(z\right)\tilde{g}_{T}^{\omega}(\mathrm{d}z\mid\tilde{\theta})$.
\end{proposition}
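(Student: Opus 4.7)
The plan is to reduce weak convergence in probability of the joint measures to convergence in probability against test functions of product form, and then handle the parameter component via Lemma \ref{cor} and the conditional noise component via the uniform convergence in Assumption \ref{ass3}. By Theorem \ref{thm:EasycheckWeakCV} together with Lemma \ref{lem:finfitedim}, it suffices to show that for every $f_1 \in \mathrm{BL}(\mathbb{R}^d)$ and $f_2 \in \mathrm{BL}(\mathbb{R})$
\[
\int f_1(\tilde{\theta}) f_2(z)\, \tilde{\pi}_T^\omega(\mathrm{d}\tilde{\theta},\mathrm{d}z) \xrightarrow{\mathbb{P}^Y} \int f_1(\tilde{\theta}) \varphi(\mathrm{d}\tilde{\theta};0,\Sigma) \int f_2(z)\varphi(\mathrm{d}z;\sigma^2/2,\sigma^2).
\]

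Setting $\bar{g}_T^\omega(\mathrm{d}z\mid\theta) = \exp(z)g_T^\omega(\mathrm{d}z\mid\theta)$, the inner $z$-integral equals $h_T^\omega(\tilde{\theta}) := \int f_2(z) \bar{g}_T^\omega(\mathrm{d}z\mid \hat{\theta}_T^\omega + \tilde{\theta}/\surd{T})$, and the limit has the form $h := \int f_2(z) \varphi(\mathrm{d}z;\sigma^2/2,\sigma^2)$. The plan is to apply the triangle inequality
\begin{align*}
&\left|\int f_1 h_T^\omega\, \mathrm{d}\tilde{\pi}_T^\omega - h\int f_1\, \mathrm{d}\varphi(\cdot;0,\Sigma)\right| \\
&\qquad\leq \int |f_1(\tilde{\theta})|\, |h_T^\omega(\tilde{\theta}) - h|\, \tilde{\pi}_T^\omega(\mathrm{d}\tilde{\theta}) + |h|\cdot\left|\int f_1\, \mathrm{d}\tilde{\pi}_T^\omega - \int f_1\, \mathrm{d}\varphi(\cdot;0,\Sigma)\right|.
\end{align*}
The second term vanishes in $\mathbb{P}^Y$-probability directly from Lemma \ref{cor}. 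For the first term, for a fixed radius $R$ I would decompose over $B_R = \{|\tilde{\theta}|\leq R\}$ and its complement, writing
\[
\int |f_1||h_T^\omega - h|\, \mathrm{d}\tilde{\pi}_T^\omega \leq \|f_1\|_\infty \sup_{\tilde{\theta}\in B_R} |h_T^\omega(\tilde{\theta}) - h| + 2\|f_1\|_\infty\|f_2\|_\infty\, \tilde{\pi}_T^\omega(B_R^\complement).
\]
The tail piece is handled by picking $R$ large, since Lemma \ref{cor} gives $L^1$-convergence of densities and hence $\tilde{\pi}_T^\omega(B_R^\complement) \to \varphi(B_R^\complement;0,\Sigma)$ in probability.

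For the supremum on $B_R$, I would insert the parameter-dependent Gaussian reference:
\[
|h_T^\omega(\tilde{\theta}) - h| \leq \|f_2\|_{\mathrm{BL}}\, d_{\mathrm{BL}}\!\left[\bar{g}_T^\omega(\cdot\mid\theta_T^\omega(\tilde{\theta})),\varphi\{\cdot;\sigma^2(\theta_T^\omega)/2,\sigma^2(\theta_T^\omega)\}\right] + \left|\int f_2\, \mathrm{d}\varphi\{\cdot;\sigma^2(\theta_T^\omega)/2,\sigma^2(\theta_T^\omega)\} - h\right|,
\]
where $\theta_T^\omega(\tilde{\theta}) = \hat{\theta}_T^\omega + \tilde{\theta}/\surd{T}$. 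Since $\hat{\theta}_T^\omega \to \bar{\theta}$ in $\mathbb{P}^Y$-probability (Assumption \ref{ass1}) and $\tilde{\theta}\in B_R$, for $T$ large enough $\theta_T^\omega(\tilde{\theta}) \in B(\bar{\theta})$ uniformly in $\tilde{\theta}\in B_R$. The first piece is then dominated by $\|f_2\|_{\mathrm{BL}} \sup_{\theta\in B(\bar\theta)} d_{\mathrm{BL}}[\bar{g}_T^\omega(\cdot\mid\theta),\varphi\{\cdot;\sigma^2(\theta)/2,\sigma^2(\theta)\}]$, which converges to $0$ in $\mathbb{P}^Y$-probability by Assumption \ref{ass3}; the second piece converges to $0$ uniformly in $\tilde{\theta}\in B_R$ by continuity of $\sigma$ at $\bar{\theta}$, via Lemma \ref{lem:Gaussian_continuity}. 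Combining the four estimates and letting first $T\to\infty$ and then $R\to\infty$ finishes the argument.

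The main delicate point I expect is coupling the randomness across the two levels: $h_T^\omega$ depends on $\omega$ both through the shift $\hat\theta_T^\omega$ and through $\bar{g}_T^\omega$, and we integrate against the random measure $\tilde\pi_T^\omega$. What makes this work is that Assumption \ref{ass3} is stated uniformly over a fixed neighbourhood $B(\bar\theta)$, so once we localise $\tilde\theta$ to a compact set we can make the shift $\tilde\theta/\surd{T}$ arbitrarily small deterministically; the remaining randomness enters only through a single global supremum that converges in probability. The tail control is not an obstacle because Lemma \ref{cor} gives total-variation convergence to a Gaussian, so uniform tightness is essentially automatic.
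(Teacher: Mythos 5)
Your proposal is correct and follows essentially the same route as the paper's proof: the same reduction to product test functions, the same triangle-inequality decomposition into the total-variation convergence of $\tilde{\pi}_{T}^{\omega}$ (Lemma \ref{cor}), the uniform conditional CLT for $\bar{g}_{T}^{\omega}$ over $B(\bar{\theta})$ (Assumption \ref{ass3}), and the continuity of the limiting Gaussian in $\theta$ (Lemma \ref{lem:Gaussian_continuity}). The only difference is in the localisation of the noise term: the paper replaces $\tilde{\pi}_{T}^{\omega}$ by $\varphi(\cdot;0,\Sigma)$ first and then works in the original coordinates, splitting over $B(\bar{\theta})$ and its complement, whose mass under the concentrating Gaussian $\varphi(\theta;\hat{\theta}_{T}^{\omega},\Sigma/T)$ vanishes, whereas you keep the random measure, localise to a fixed ball $B_{R}$ in the rescaled coordinates, control the tail by tightness of $\varphi(\cdot;0,\Sigma)$, and use the additional (correct) observation that $\hat{\theta}_{T}^{\omega}+\tilde{\theta}/\surd{T}\in B(\bar{\theta})$ uniformly over $B_{R}$ with probability tending to one.
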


\begin{proof}[Proof of Proposition \ref{prop:4.1}]
As established in Lemma$\,$\ref{lem:finfitedim}, it is enough to
check convergence for products of bounded Lipschitz functions. Now,
without loss of generality, assume that $\|f_{1}\|_{\infty}$,$\|f_{2}\|_{\infty}\leq1/2$.
Then we have
\begin{align}
 & \left|\iint f_{1}(\tilde{\theta})f_{2}(z)\tilde{\pi}_{T}^{\omega}(\mathrm{d}\tilde{\theta})e^{z}\tilde{g}_{T}^{\omega}(\mathrm{d}z\mid\tilde{\theta})-\iint f_{1}(\tilde{\theta})f_{2}(z)\varphi(\mathrm{d}\tilde{\theta};0,\Sigma)\varphi\left\{\mathrm{d}z;\sigma^{2}(\bar{\theta})/2,\sigma^{2}(\bar{\theta})\right\}\right|\nonumber \\
 & \leq\iint e^{z}\tilde{g}_{T}^{\omega}(z\mid\tilde{\theta})\mathrm{d}z\left|\tilde{\pi}_{T}^{\omega}(\tilde{\theta})-\varphi(\tilde{\theta};0,\Sigma)\right|\mathrm{d}\tilde{\theta}\nonumber \\
 & \quad+\int\varphi(\tilde{\theta};0,\Sigma)\left|\int f_{2}(z)e^{z}\tilde{g}_{T}^{\omega}(\mathrm{d}z\mid\tilde{\theta})-\int f_{2}(z)\varphi\left\{\mathrm{d}z;\sigma^{2}(\bar{\theta})/2,\sigma^{2}(\bar{\theta})\right\}\right|\mathrm{d}\tilde{\theta}\nonumber \\
 & \leq\int\left|\tilde{\pi}_{T}^{\omega}(\mathrm{d}\tilde{\theta})-\varphi(\tilde{\theta};0,\Sigma)\right|\mathrm{d}\tilde{\theta}\label{eq:prop1.1}\\
 & \quad+\int\varphi(\theta;\hat{\theta}_{T}^{\omega},\Sigma/T)\left|\int f_{2}(z)e^{z}g_{T}^{\omega}(\mathrm{d}z\mid\theta)-\int f_{2}(z)\varphi\left\{\mathrm{d}z;\sigma^{2}(\theta)2,\sigma^{2}(\theta)\right\}\right|\mathrm{d\theta}\label{eq:prop1.2}\\
 & \quad+\int\varphi(\theta;\hat{\theta}_{T}^{\omega},\Sigma/T)\left|\int f_{2}(z)\varphi\left\{\mathrm{d}z;\sigma^{2}(\theta)/2,\sigma^{2}(\theta)\right\}-\int f_{2}(z)\varphi\left\{\mathrm{d}z;\sigma^{2}(\bar{\theta})/2,\sigma^{2}(\bar{\theta})\right\}\right|\mathrm{d\theta}\label{eq:prop1.3}
\end{align}
The term (\ref{eq:prop1.1}) converges to zero in $\mathbb{P}^{Y}$-probability
by Lemma$\,$\ref{cor}. For (\ref{eq:prop1.2}), write $B(\bar{\theta})\subset\Theta$
for the $\varepsilon$-ball on which the uniform CLT in Assumption
\ref{ass3} holds, that is
\[
\sup_{\theta\in B(\bar{\theta})}h_{T}(\theta)=\sup_{\theta\in B(\bar{\theta})}\left|\int f_{2}(z)e^{z}g_{T}^{\omega}(\mathrm{d}z\mid\theta)\mathrm{d}z-\int f_{2}(z)\varphi\left\{\mathrm{d}z;\sigma^{2}(\theta)/2,\sigma^{2}(\theta)\right\}\mathrm{d}z\right|\overset{\mathbb{P}^{Y}}{\longrightarrow}0.
\]
We can bound (\ref{eq:prop1.2}) as follows
\begin{align*}
 & \int_{B(\bar{\theta})}\varphi(\theta;\hat{\theta}_{T}^{\omega},\Sigma/T)\left|\int f_{2}(z)e^{z}g_{T}^{\omega}(\mathrm{d}z\mid\theta)-\int f_{2}(z)\varphi\left\{\mathrm{d}z;\sigma^{2}(\theta)/2,\sigma^{2}(\theta)\right\}\right|\mathrm{d\theta}\\
 & +\int_{B(\bar{\theta})^{\complement}}\varphi(\theta;\hat{\theta}_{T}^{\omega},\Sigma/T)\left|\int f_{2}(z)e^{z}g_{T}^{\omega}(\mathrm{d}z\mid\theta)-\int f_{2}(z)\varphi\left\{\mathrm{d}z;\sigma^{2}(\theta)/2,\sigma^{2}(\theta)\right\}\right|\mathrm{d\theta}\\
 & \leq\sup_{\theta\in B(\bar{\theta})}h_{T}(\theta)+\int_{B(\bar{\theta})^{\complement}}\varphi(\theta;\hat{\theta}_{T}^{\omega},\Sigma/T)\mathrm{d\theta},
\end{align*}
since $\|f_{2}\|_{\infty}\leq1/2$. We have already mentioned that the
first term vanishes in probability whereas for the second term we
have
\[
\int_{B(\bar{\theta})^{\complement}}\varphi(\theta;\hat{\theta}_{T}^{\omega},\Sigma/T)\mathrm{d\theta}\overset{\mathbb{P}^{Y}}{\longrightarrow}\delta_{\bar{\theta}}\big\{B(\bar{\theta})^{\complement}\big\}=0,
\]
by Lemma$\,$\ref{cor1}. Thus (\ref{eq:prop1.2}) vanishes in $\mathbb{P}^{Y}$-probability.
Finally we consider (\ref{eq:prop1.3}). By Lemma \ref{lem:Gaussian_continuity}
\[
h(\theta)=\left|\int f_{2}(z)\varphi\left\{\mathrm{d}z;\sigma^{2}(\theta)/2,\sigma^{2}(\theta)\right\}-\int f_{2}(z)\varphi\left\{\mathrm{d}z;\sigma^{2}(\bar{\theta})/2,\sigma^{2}(\bar{\theta})\right\}\right|
\]
is bounded and continuous at $\bar{\theta}$. Since $\varphi(\mathrm{d}\theta;\hat{\theta}_{T}^{\omega},\Sigma/T)$
converges weakly in probability to a point mass in $\bar{\theta}$
(by Lemma$\,$\ref{cor1}) we can conclude that
\[
\int f(\theta)\varphi(\mathrm{d}\theta;\hat{\theta}_{T}^{\omega},\Sigma/T)\overset{\mathbb{P}^{Y}}{\rightarrow}\int f(\theta)\delta_{\bar{\theta}}(\mathrm{d}\theta)
\]
for every bounded function $f$ which is continuous at $\overline{\theta}$.
In particular,
\[
\int h(\theta)\varphi(\mathrm{d}\theta;\hat{\theta}_{T}^{\omega},\Sigma/T)\overset{\mathbb{P}^{Y}}{\rightarrow}0.
\]
\end{proof}

\begin{proposition}
Under Assumptions \ref{ass1}, \ref{ass2} and \ref{ass3}, as $T\rightarrow\infty$
we have for any $f\in\mathrm{BL}(\mathbb{R}^{d+1})$
\[
\mathbb{\int}|\tilde{P}_{T}^{\omega}f(\theta,z)-\tilde{P}f(\theta,z)|\tilde{\pi}_{T}^{\omega}(\mathrm{d}\theta,\mathrm{d}z)\rightarrow 0,\quad \text{in } \mathbb{P}^Y\text{-probability}.
\]
\end{proposition}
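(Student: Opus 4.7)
I would follow the same three-step strategy used to prove Proposition~\ref{prop:4.1}: decompose the integrand into a \emph{noise} piece and an \emph{acceptance} piece, and bound each in $\tilde\pi_T^\omega$-expectation using Assumptions~\ref{ass1}--\ref{ass3}. With $h:=f(\tilde\theta',z')-f(\tilde\theta,z)$, cancellation of the rejection terms gives $\tilde P_T^\omega f - \tilde P f = \int \tilde q(\tilde\theta,\mathrm d\tilde\theta')\int[\tilde g_T^\omega(\mathrm dz'\mid\tilde\theta')\tilde\alpha_T^\omega - \varphi(\mathrm dz';-\sigma^2/2,\sigma^2)\tilde\alpha]h$, and the algebraic identity $\tilde g_T^\omega\tilde\alpha_T^\omega-\varphi\tilde\alpha=(\tilde g_T^\omega-\varphi)\tilde\alpha_T^\omega+\varphi(\tilde\alpha_T^\omega-\tilde\alpha)$ splits this into $N_T+A_T$.

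For $N_T$, the map $z'\mapsto\tilde\alpha_T^\omega h$ is bounded Lipschitz uniformly in the remaining variables by Lemma~\ref{prop_lip} and $\|f\|_{\mathrm{BL}}<\infty$, giving $|N_T(\tilde\theta,z)|\leq C\int\tilde q(\tilde\theta,\mathrm d\tilde\theta')\,d_{\mathrm{BL}}(\tilde g_T^\omega(\cdot\mid\tilde\theta'),\varphi(\cdot;-\sigma^2/2,\sigma^2))$. Reverting to $\theta'=\hat\theta_T^\omega+\tilde\theta'/\surd T$ and splitting the integration at $B(\bar\theta)$ allows me to re-run the three-term decomposition \eqref{eq:prop1.1}--\eqref{eq:prop1.3} from the proof of Proposition~\ref{prop:4.1} almost verbatim: Assumption~\ref{ass3} supplies uniform weak convergence on $B(\bar\theta)$, Lemma~\ref{lem:Gaussian_continuity} bridges $\sigma(\theta')$ to $\sigma(\bar\theta)$, and Lemma~\ref{prop:prep} kills the contribution from $B(\bar\theta)^\complement$ since $\pi_T^\omega q_T$ concentrates at $\bar\theta$.

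For $A_T$, applying Lemma~\ref{prop_lip} to $\min(1,e^{z'-z+\log R_T})$ and $\min(1,e^{z'-z+\log R})$ yields $|\tilde\alpha_T^\omega-\tilde\alpha|\leq|\log\rho_T(\tilde\theta')-\log\rho_T(\tilde\theta)|\wedge 1$ with $\rho_T:=\tilde\pi_T^\omega/\varphi(\cdot;0,\Sigma)$, since the $\tilde q$ factor in the Metropolis ratios cancels inside the logarithm. This bound is $(z,z')$-independent, so integrating those variables out reduces the task to showing
\[
\iint\tilde\pi_T^\omega(\tilde\theta)\tilde q(\tilde\theta,\tilde\theta')\bigl[(|\log\rho_T(\tilde\theta')|\wedge 1)+(|\log\rho_T(\tilde\theta)|\wedge 1)\bigr]\mathrm d\tilde\theta\mathrm d\tilde\theta'\xrightarrow{\mathbb P^Y}0,
\]
and the event split at $\{|\rho_T-1|\leq\delta\}$ bounds each summand by $-\log(1-\delta)+\mu_T(\{|\rho_T-1|>\delta\})$ with $\mu_T$ equal to $\tilde\pi_T^\omega$ or to the marginal $\tilde\pi_T^\omega\tilde q$. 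It thus suffices to prove $\mu_T(\{|\rho_T-1|>\delta\})\to 0$ in $\mathbb P^Y$-probability for each $\delta>0$.

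The hard part is this tail control. For $\mu_T=\tilde\pi_T^\omega$, Markov's inequality gives $\int_{\{|\rho_T-1|>\delta\}}\varphi(\cdot;0,\Sigma)\leq\|\tilde\pi_T^\omega-\varphi(\cdot;0,\Sigma)\|_{L^1}/\delta$, and $\tilde\pi_T^\omega(A)\leq\int_A\varphi(\cdot;0,\Sigma)+\|\tilde\pi_T^\omega-\varphi\|_{L^1}$ then finishes the job by Lemma~\ref{cor}. For $\mu_T=\tilde\pi_T^\omega\tilde q$, Young's inequality $\|(\tilde\pi_T^\omega-\varphi)\ast\nu\|_{L^1}\leq\|\tilde\pi_T^\omega-\varphi\|_{L^1}$ transfers the task to the limiting marginal $G:=\varphi(\cdot;0,\Sigma)\tilde q$; a truncation to a large compact $K$ closes the estimate, using boundedness of $G/\varphi(\cdot;0,\Sigma)$ on $K$ to transfer the previous bound and tightness of $G$ to make $G(K^\complement)<\varepsilon$. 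Combining the controls of $N_T$ and $A_T$ yields the proposition.
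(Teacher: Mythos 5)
Your proof is correct. The initial reduction (writing $\tilde P_T^\omega f-\tilde Pf$ as an integral of $[\tilde g_T^\omega\tilde\alpha_T^\omega-\varphi\tilde\alpha]h$ with $h=f(\tilde\theta',z')-f(\tilde\theta,z)$) is a compact equivalent of the paper's split into $\Pi_T^\omega f-\Pi f$ and $\Pi_T^\omega 1-\Pi 1$, and your treatment of the noise piece $N_T$ — uniform Lipschitz control of $z'\mapsto\tilde\alpha_T^\omega h$ via Lemma \ref{prop_lip}, reduction to $\int \pi_T^\omega q_T(\mathrm{d}\theta')\,d_{\mathrm{BL}}$, splitting at $B(\bar\theta)$, and invoking Assumption \ref{ass3}, Lemma \ref{lem:Gaussian_continuity} and Lemma \ref{prop:prep} — is essentially the paper's argument for its terms \eqref{eq:end1}--\eqref{eq:end2}. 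Where you genuinely diverge is the acceptance piece $A_T$. The paper never forms the ratio $\rho_T=\tilde\pi_T^\omega/\varphi(\cdot;0,\Sigma)$: it multiplies the acceptance probability by the stationary density to get the symmetric form $\min\{\tilde\pi_T^\omega(\tilde\theta)\tilde q(\tilde\theta,\tilde\theta'),\tilde\pi_T^\omega(\tilde\theta')\tilde q(\tilde\theta',\tilde\theta)e^{z'-z}\}$ and applies $|\min(a,b)-\min(c,d)|\le|a-c|+|b-d|$, which collapses the whole comparison directly to $2\|f\|_\infty\int|\tilde\pi_T^\omega-\varphi(\cdot;0,\Sigma)|$ and hence to Lemma \ref{cor} in one line. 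Your route instead bounds $|\tilde\alpha_T^\omega-\tilde\alpha|$ by $\min\{1,|\log\rho_T(\tilde\theta')|+|\log\rho_T(\tilde\theta)|\}$ and then needs the extra tail-control step $\mu_T(\{|\rho_T-1|>\delta\})\to0$ for $\mu_T\in\{\tilde\pi_T^\omega,\tilde\pi_T^\omega\tilde q\}$; your Markov-inequality argument and the Young's-inequality transfer to $G=\varphi(\cdot;0,\Sigma)\ast\nu$ (with the compact truncation using that $G/\varphi$ is bounded on compacts and $G$ is tight) do close this correctly, again off the back of the $L^1$ convergence in Lemma \ref{cor}. The trade-off is that the paper's product-form trick is shorter and sidesteps any worry about where $\tilde\pi_T^\omega$ or $\log\rho_T$ degenerates (your $\min\{1,\cdot\}$ cap handles this, but it is an extra thing to check), while your decomposition into a pure noise error and a pure target-ratio error is arguably more transparent about which assumption drives which term.
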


\begin{proof}[Proof of Proposition \ref{prop:4.2}]
Let $f\in\mathrm{BL}(\mathbb{R}^{d+1})$. Denote
\[
\Pi_{T}^{\omega}f(\tilde{\theta},z)=\iint f(\tilde{\theta}^{\prime},z')\tilde{\alpha}_{T}^{\omega}\big\{(\tilde{\theta},z),(\tilde{\theta}^{\prime},z')\big\}\tilde{q}(\tilde{\theta},\mathrm{d}\tilde{\theta}^{\prime})\tilde{g}_{T}^{\omega}(\mathrm{d}z'\mid\tilde{\theta}^{\prime})
\]
and
\[
\Pi f(\tilde{\theta},z)=\iint f(\tilde{\theta}^{\prime},z')\tilde{\alpha}\big\{(\tilde{\theta},z),(\tilde{\theta}^{\prime},z')\big\}\tilde{q}(\tilde{\theta},\mathrm{d}\tilde{\theta}^{\prime})g(\mathrm{d}z'\mid\overline{\theta}),
\]
where $g(\,\cdot\,\mid\vartheta)=\varphi\{\,\cdot\,;-\sigma^{2}(\vartheta)/2,\sigma^{2}(\vartheta)\}$.
Then we have
\begin{align*}
\tilde{P}_{T}^{\omega}f(\tilde{\theta},z) & =\Pi_{T}^{\omega}f(\tilde{\theta},z)+f(\tilde{\theta},z)\left\{1-\Pi_{T}^{\omega}1(\tilde{\theta},z)\right\}
\end{align*}
and
\begin{align}
\tilde{P}f(\tilde{\theta},z) & =\Pi f(\tilde{\theta},z)+f(\tilde{\theta},z)\left\{1-\Pi1(\tilde{\theta},z)\right\}. \label{eq:op_decomp}
\end{align}
Because
\begin{align*}
 & {E^{\omega}}\left\{\left|\tilde{P}_{T}^{\omega}f(\tilde{\vartheta}_{0}^{T},Z_{0}^{T})-\tilde{P}f(\tilde{\vartheta}_{0}^{T},Z_{0}^{T})\right|\right\}\\
 & =E^{\omega}\Bigg[\Big|\Pi_{T}^{\omega}f(\tilde{\vartheta}_{0}^{T},Z_{0}^{T})+f(\tilde{\vartheta}_{0}^{T},Z_{0}^{T})\left\{1-\Pi_{T}^{\omega}1(\tilde{\vartheta}_{0}^{T},Z_{0}^{T})\right\} \\
& \qquad\,\, -\Pi f(\tilde{\vartheta}_{0}^{T},Z_{0}^{T})-f(\tilde{\vartheta}_{0}^{T},Z_{0}^{T})\left\{1-\Pi1(\tilde{\vartheta}_{0}^{T},Z_{0}^{T})\right\}\Big|\Bigg]\\
 & \leq E^{\omega}\left\{\left|\Pi_{T}^{\omega}f(\tilde{\vartheta}_{0}^{T},Z_{0}^{T})-\Pi f(\tilde{\vartheta}_{0}^{T},Z_{0}^{T})\right|\right\}+E^{\omega}\left\{\left|\Pi_{T}^{\omega}1(\tilde{\vartheta}_{0}^{T},Z_{0}^{T})-\Pi 1(\tilde{\vartheta}_{0}^{T},Z_{0}^{T})\right|\right\}
\end{align*}
and $1\in\mathrm{\mathrm{BL}}(\mathbb{R}^{d+1})$ it is sufficient
to show that for any choice of $f\in\mathrm{BL}(\mathbb{R}^{d+1})$
we have $$E^{\omega}\left\{\left|\Pi_{T}^{\omega}f(\tilde{\theta},z)-\Pi f(\tilde{\theta},z)\right|\right\}\overset{\mathbb{P}^{Y}}{\rightarrow}0.$$
Thus
\begin{align}
 & E^{\omega}\left\{\left|\Pi_{T}^{\omega}f(\tilde{\theta},z)-\Pi f(\tilde{\theta},z)\right|\right\}\nonumber \\
 & =\iint\tilde{\pi}_{T}^{\omega}(\mathrm{d}\tilde{\theta},\mathrm{d}z)\Bigg|\iint\tilde{q}(\tilde{\theta},\mathrm{d}\tilde{\theta}^{\prime})\tilde{\alpha}_{T}^{\omega}\big\{(\tilde{\theta},z),(\tilde{\theta}^{\prime},z')\big\}f(\tilde{\theta}^{\prime},z')\tilde{g}_{T}^{\omega}(\mathrm{d}z'\mid\tilde{\theta}^{\prime})\nonumber \\
 & \qquad-\iint\tilde{q}(\tilde{\theta},\mathrm{d}\tilde{\theta}^{\prime})\tilde{\alpha}\{(\tilde{\theta},z),(\tilde{\theta}^{\prime},z')\}f(\tilde{\theta}^{\prime},z')g(\mathrm{d}z'\mid\overline{\theta})\Bigg|\nonumber \\
 & =\iint e^{z}\tilde{g}_{T}^{\omega}(\mathrm{d}z\mid\tilde{\theta})\Bigg|\iint\min\left\{\tilde{\pi}_{T}^{\omega}(\tilde{\theta})\tilde{q}(\tilde{\theta},\tilde{\theta}^{\prime}),\tilde{\pi}_{T}^{\omega}(\tilde{\theta}^{\prime})\tilde{q}(\tilde{\theta}^{\prime},\tilde{\theta})e^{z'-z}\right\}f(\tilde{\theta}^{\prime},z')\tilde{g}_{T}^{\omega}(\mathrm{d}z'\mid\tilde{\theta}^{\prime})\mathrm{d}\tilde{\theta}^{\prime}\nonumber \\
 & \qquad-\iint\tilde{\pi}_{T}^{\omega}(\tilde{\theta})\tilde{q}(\tilde{\theta},\tilde{\theta}^{\prime})\tilde{\alpha}\{(\tilde{\theta},z),(\tilde{\theta}^{\prime},z')\}f(\tilde{\theta}^{\prime},z')g(\mathrm{d}z'\mid\overline{\theta})\mathrm{d}\tilde{\theta}^{\prime}\Bigg|\mathrm{d}\tilde{\theta}\nonumber \\
 & \leq\iint e^{z}\tilde{g}_{T}^{\omega}(\mathrm{d}z\mid\tilde{\theta})\Bigg|\iint\min\left\{\tilde{\pi}_{T}^{\omega}(\tilde{\theta})\tilde{q}(\tilde{\theta},\tilde{\theta}^{\prime}),\tilde{\pi}_{T}^{\omega}(\tilde{\theta}^{\prime})\tilde{q}(\tilde{\theta}^{\prime},\tilde{\theta})e^{z'-z}\right\}f(\tilde{\theta}^{\prime},z')\tilde{g}_{T}^{\omega}(\mathrm{d}z'\mid\tilde{\theta}^{\prime})\mathrm{d}\tilde{\theta}^{\prime}\nonumber \\
 & \qquad-\iint\min\left\{\varphi(\tilde{\theta};0,\Sigma)\tilde{q}(\tilde{\theta},\tilde{\theta}^{\prime}),\varphi(\tilde{\theta}^{\prime};0,\Sigma)\tilde{q}(\tilde{\theta}^{\prime},\tilde{\theta})e^{z'-z}\right\}f(\tilde{\theta}^{\prime},z')\tilde{g}_{T}^{\omega}(\mathrm{d}z'\mid\tilde{\theta}^{\prime})\mathrm{d}\tilde{\theta}^{\prime}\Bigg|\mathrm{d}\tilde{\theta}\nonumber \\
 & +\iint e^{z}\tilde{g}_{T}^{\omega}(\mathrm{d}z\mid\tilde{\theta})\Bigg|\iint\min\left\{\varphi(\tilde{\theta};0,\Sigma)\tilde{q}(\tilde{\theta},\tilde{\theta}^{\prime}),\varphi(\tilde{\theta}^{\prime};0,\Sigma)\tilde{q}(\tilde{\theta}^{\prime},\tilde{\theta})e^{z'-z}\right\}f(\tilde{\theta}^{\prime},z')\tilde{g}_{T}^{\omega}(\mathrm{d}z'\mid\tilde{\theta}^{\prime})\mathrm{d}\tilde{\theta}^{\prime}\nonumber \\
 & \qquad-\iint\tilde{\pi}_{T}^{\omega}(\tilde{\theta})\tilde{q}(\tilde{\theta},\tilde{\theta}^{\prime})\tilde{\alpha}\{(\tilde{\theta},z),(\tilde{\theta}^{\prime},z')\}f(\tilde{\theta}^{\prime},z')g(\mathrm{d}z'\mid\overline{\theta})\mathrm{d}\tilde{\theta}^{\prime}\Bigg|\mathrm{d}\tilde{\theta}.\label{eq:final_part2-1}
\end{align}
By taking $\varphi(\tilde{\theta};0,\Sigma)$ out in last two
lines of (\ref{eq:final_part2-1}), this can be rewritten as
\begin{align}
 & \ \iint e^{z}\tilde{g}_{T}^{\omega}(\mathrm{d}z\mid\tilde{\theta})\Bigg|\iint\min\left\{\tilde{\pi}_{T}^{\omega}(\tilde{\theta})\tilde{q}(\tilde{\theta},\tilde{\theta}^{\prime}),\tilde{\pi}_{T}^{\omega}(\tilde{\theta}^{\prime})\tilde{q}(\tilde{\theta}^{\prime},\tilde{\theta})e^{z'-z}\right\}f(\tilde{\theta}^{\prime},z')\tilde{g}_{T}^{\omega}(\mathrm{d}z'\mid\tilde{\theta}^{\prime})\mathrm{d}\tilde{\theta}^{\prime}\nonumber \\
 & \qquad-\iint\min\left\{\varphi(\tilde{\theta};0,\Sigma)\tilde{q}(\tilde{\theta},\tilde{\theta}^{\prime}),\varphi(\tilde{\theta}^{\prime};0,\Sigma)\tilde{q}(\tilde{\theta}^{\prime},\tilde{\theta})e^{z'-z}\right\}f(\tilde{\theta}^{\prime},z')\tilde{g}_{T}^{\omega}(\mathrm{d}z'\mid\tilde{\theta}^{\prime})\mathrm{d}\tilde{\theta}^{\prime}\Bigg|\mathrm{d}\tilde{\theta}\label{eq:final_part1.1}\\
 & +\iint e^{z}\tilde{g}_{T}^{\omega}(\mathrm{d}z\mid\tilde{\theta})\Bigg|\iint\varphi(\tilde{\theta};0,\Sigma)\tilde{q}(\tilde{\theta},\tilde{\theta}^{\prime})\tilde{\alpha}\big\{(\tilde{\theta},z),(\tilde{\theta}^{\prime},z')\big\}f(\tilde{\theta}^{\prime},z')\tilde{g}_{T}^{\omega}(\mathrm{d}z'\mid\tilde{\theta}^{\prime})\mathrm{d}\tilde{\theta}^{\prime}\nonumber \\
 & \qquad-\iint\tilde{\pi}_{T}^{\omega}(\tilde{\theta})\tilde{q}(\tilde{\theta},\tilde{\theta}^{\prime})\tilde{\alpha}\big\{(\tilde{\theta},z),(\tilde{\theta}^{\prime},z')\big\}f(\tilde{\theta}^{\prime},z')g(\mathrm{d}z'\mid\overline{\theta})\mathrm{d}\tilde{\theta}^{\prime}\Bigg|\mathrm{d}\tilde{\theta}.\label{eq:final_part1.2}
\end{align}
For (\ref{eq:final_part1.1}), we use the inequality $|\min(a,b)-\min(c,d)|\leq|a-c|+|b-d|$:
\begin{align*}
 & \iint e^{z}\tilde{g}_{T}^{\omega}(\mathrm{d}z\mid\tilde{\theta})\Bigg|\iint\min\left\{\tilde{\pi}_{T}^{\omega}(\tilde{\theta})\tilde{q}(\tilde{\theta},\tilde{\theta}^{\prime}),\tilde{\pi}_{T}^{\omega}(\tilde{\theta}^{\prime})\tilde{q}(\tilde{\theta}^{\prime},\tilde{\theta})e^{z'-z}\right\}f(\tilde{\theta}^{\prime},z')\tilde{g}_{T}^{\omega}(\mathrm{d}z'\mid\tilde{\theta}^{\prime})\\
 & \qquad-\min\left\{\varphi(\tilde{\theta};0,\Sigma)\tilde{q}(\tilde{\theta},\tilde{\theta}^{\prime}),\varphi(\tilde{\theta}^{\prime};0,\Sigma)\tilde{q}(\tilde{\theta}^{\prime},\tilde{\theta})e^{z'-z}\right\}f(\tilde{\theta}^{\prime},z')\tilde{g}_{T}^{\omega}(\mathrm{d}z'\mid\tilde{\theta}^{\prime})\mathrm{d}\tilde{\theta}^{\prime}\Bigg|\mathrm{d}\tilde{\theta}\\
 &  \leq \|f\|_{\infty} \iiiint e^{z}\tilde{g}_{T}^{\omega}(\mathrm{d}z\mid\tilde{\theta})\tilde{q}(\tilde{\theta},\mathrm{d}\tilde{\theta}^{\prime})\tilde{g}_{T}^{\omega}(\mathrm{d}z'\mid\tilde{\theta}^{\prime})\left|\tilde{\pi}_{T}^{\omega}(\tilde{\theta})-\varphi(\tilde{\theta};0,\Sigma)\right|\mathrm{d}\tilde{\theta}\\
 & \qquad+ \|f\|_{\infty} \iiiint\tilde{g}_{T}^{\omega}(\mathrm{d}z\mid\tilde{\theta})e^{z'}\tilde{g}_{T}^{\omega}(\mathrm{d}z'\mid\tilde{\theta}^{\prime})\tilde{q}(\tilde{\theta}^{\prime},\tilde{\theta})\left|\tilde{\pi}_{T}^{\omega}(\tilde{\theta}^{\prime})-\varphi(\tilde{\theta}^{\prime};0,\Sigma)\right|\mathrm{d}\tilde{\theta}^{\prime}\mathrm{d}\tilde{\theta}\\
 & =2  \|f\|_{\infty} \int\left|\tilde{\pi}_{T}^{\omega}(\tilde{\theta})-\varphi(\tilde{\theta};0,\Sigma)\right|\mathrm{d}\tilde{\theta}\overset{\mathbb{P}^{Y}}{\longrightarrow}0,
\end{align*}
by Lemma \ref{cor}. For the part (\ref{eq:final_part1.2}) note that
\begin{align}
 & \iint e^{z}\tilde{g}_{T}^{\omega}(\mathrm{d}z\mid\tilde{\theta})\Bigg|\iint\varphi(\tilde{\theta};0,\Sigma)\tilde{q}(\tilde{\theta},\mathrm{d}\tilde{\theta}^{\prime})\tilde{\alpha}\big\{(\tilde{\theta},z),(\tilde{\theta}^{\prime},z')\big\}f(\tilde{\theta}^{\prime},z')\tilde{g}_{T}^{\omega}(\mathrm{d}z'\mid\tilde{\theta}^{\prime})\nonumber \\
 & \qquad-\iint\tilde{\pi}_{T}^{\omega}(\tilde{\theta})\tilde{q}(\tilde{\theta},\mathrm{d}\tilde{\theta}^{\prime})\tilde{\alpha}\big\{(\tilde{\theta},z),(\tilde{\theta}^{\prime},z')\big\}f(\tilde{\theta}^{\prime},z')g(\mathrm{d}z'\mid\overline{\theta})\Bigg|\mathrm{d}\tilde{\theta}\nonumber \\
 & \leq\iint e^{z}\tilde{g}_{T}^{\omega}(\mathrm{d}z\mid\tilde{\theta})\Bigg|\iint\varphi(\tilde{\theta};0,\Sigma)\tilde{q}(\tilde{\theta},\mathrm{d}\tilde{\theta}^{\prime})\tilde{\alpha}\big\{(\tilde{\theta},z),(\tilde{\theta}^{\prime},z')\big\}f(\tilde{\theta}^{\prime},z')\tilde{g}_{T}^{\omega}(\mathrm{d}z'\mid\tilde{\theta}^{\prime})\nonumber \\
 & \qquad-\iint\tilde{\pi}_{T}^{\omega}(\tilde{\theta})\tilde{q}(\tilde{\theta},\mathrm{d}\tilde{\theta}^{\prime})\tilde{\alpha}\big\{(\tilde{\theta},z),(\tilde{\theta}^{\prime},z')\big\}f(\tilde{\theta}^{\prime},z')\tilde{g}_{T}^{\omega}(\mathrm{d}z'\mid\tilde{\theta}^{\prime})\Bigg|\mathrm{d}\tilde{\theta}\label{eq:1}\\
 & \qquad+\iint\tilde{\pi}_{T}^{\omega}(\mathrm{d}\tilde{\theta})e^{z}\tilde{g}_{T}^{\omega}(\mathrm{d}z\mid\tilde{\theta})\left|\iint\tilde{q}(\tilde{\theta},\mathrm{d}\tilde{\theta}^{\prime})\tilde{g}_{T}^{\omega}(\mathrm{d}z'\mid\tilde{\theta}^{\prime})\tilde{\alpha}\big\{(\tilde{\theta},z),(\tilde{\theta}^{\prime},z')\big\}f(\tilde{\theta}^{\prime},z')\right.\nonumber \\
 & \qquad-\left.\iint\tilde{q}(\tilde{\theta},\mathrm{d}\tilde{\theta}^{\prime})g(\mathrm{d}z'\mid\overline{\theta})\tilde{\alpha}\big\{(\tilde{\theta},z),(\tilde{\theta}^{\prime},z')\big\}f(\tilde{\theta}^{\prime},z')\right|.\label{eq:2}
\end{align}
For the first part (\ref{eq:1}) we have
\begin{align*}
 & \iint e^{z}\tilde{g}_{T}^{\omega}(\mathrm{d}z\mid\tilde{\theta})\Bigg|\iint\varphi(\tilde{\theta};0,\Sigma)\tilde{q}(\tilde{\theta},\tilde{\theta}^{\prime})\tilde{\alpha}\big\{(\tilde{\theta},z),(\tilde{\theta}^{\prime},z')\big\}f(\tilde{\theta}^{\prime},z')\tilde{g}_{T}^{\omega}(\mathrm{d}z'\mid\tilde{\theta}^{\prime})\mathrm{d}\tilde{\theta}^{\prime}\\
 & \qquad-\iint\tilde{\pi}_{T}^{\omega}(\tilde{\theta})\tilde{q}(\tilde{\theta},\tilde{\theta}^{\prime})\tilde{\alpha}\big\{(\tilde{\theta},z),(\tilde{\theta}^{\prime},z')\big\}f(\tilde{\theta}^{\prime},z')\tilde{g}_{T}^{\omega}(\mathrm{d}z'\mid\tilde{\theta}^{\prime})\mathrm{d}\tilde{\theta}^{\prime}\Bigg|\mathrm{d}\tilde{\theta}\\
 &  \leq \|f\|_{\infty} \iiiint e^{z}\tilde{g}_{T}^{\omega}(\mathrm{d}z\mid\tilde{\theta})\tilde{q}(\tilde{\theta},\mathrm{d}\tilde{\theta}^{\prime})\tilde{g}_{T}^{\omega}(\mathrm{d}z'\mid\tilde{\theta}^{\prime})\left|\varphi(\tilde{\theta};0,\Sigma)-\tilde{\pi}_{T}^{\omega}(\tilde{\theta})\right|\mathrm{d}\tilde{\theta}\\
 & =  \|f\|_{\infty} \int\left|\varphi(\tilde{\theta};0,\Sigma)-\tilde{\pi}_{T}^{\omega}(\tilde{\theta})\right|\mathrm{d}\tilde{\theta}\overset{\mathbb{P}^{Y}}{\longrightarrow}0,
\end{align*}
again by Lemma \ref{cor}. The second part (\ref{eq:2})
\begin{align}
 & \iint\tilde{\pi}_{T}^{\omega}(\mathrm{d}\tilde{\theta})e^{z}\tilde{g}_{T}^{\omega}(\mathrm{d}z\mid\tilde{\theta})\left|\iint\tilde{q}(\tilde{\theta},\mathrm{d}\tilde{\theta}^{\prime})\tilde{g}_{T}^{\omega}(\mathrm{d}z'\mid\tilde{\theta}^{\prime})\tilde{\alpha}\big\{(\tilde{\theta},z),(\tilde{\theta}^{\prime},z')\big\}f(\tilde{\theta}^{\prime},z')\right.\nonumber \\
 & \qquad-\left.\iint\tilde{q}(\tilde{\theta},\mathrm{d}\tilde{\theta}^{\prime})g(\mathrm{d}z'\mid\overline{\theta})\tilde{\alpha}\big\{(\tilde{\theta},z),(\tilde{\theta}^{\prime},z')\big\}f(\tilde{\theta}^{\prime},z')\right|\nonumber \\
 & \leq\iiint\tilde{\pi}_{T}^{\omega}(\mathrm{d}\tilde{\theta})e^{z}\tilde{g}_{T}^{\omega}(\mathrm{d}z\mid\tilde{\theta})\tilde{q}(\tilde{\theta},\mathrm{d}\tilde{\theta}^{\prime})\left|\int\tilde{g}_{T}^{\omega}(\mathrm{d}z'\mid\tilde{\theta}^{\prime})\tilde{\alpha}\big\{(\tilde{\theta},z),(\tilde{\theta}^{\prime},z')\big\}f(\tilde{\theta}^{\prime},z')\right.\nonumber \\
 & \qquad-\left.\int\tilde{\alpha}\big\{(\tilde{\theta},z),(\tilde{\theta}^{\prime},z')\big\}g(\mathrm{d}z'\mid\hat{\theta}_{T}^{\omega}+\tilde{\theta}^{\prime}/\surd{T})f(\tilde{\theta}^{\prime},z')\right|\label{eq:end1}\\
 & \qquad+\iiint\tilde{\pi}_{T}^{\omega}(\mathrm{d}\tilde{\theta})e^{z}\tilde{g}_{T}^{\omega}(\mathrm{d}z\mid\tilde{\theta})\tilde{q}(\tilde{\theta},\mathrm{d}\tilde{\theta}^{\prime})\left|\int g(\mathrm{d}z'\mid\hat{\theta}_{T}^{\omega}+\tilde{\theta}^{\prime}/\surd{T})f(\tilde{\theta}^{\prime},z')\tilde{\alpha}\big\{(\tilde{\theta},z),(\tilde{\theta}^{\prime},z')\big\}f(\tilde{\theta}^{\prime},z')\right.\nonumber \\
 & \qquad-\left.\int\tilde{\alpha}\{(\tilde{\theta},z),(\tilde{\theta}^{\prime},z')\}g(\mathrm{d}z'\mid\bar{\theta})f(\tilde{\theta}^{\prime},z')\right|\label{eq:end2}
\end{align}
We first consider (\ref{eq:end1}) using $\theta=\hat{\theta}_{T}^{\omega}+\tilde{\theta}/\surd{T}$,
and similarly for $\theta'$,
\begin{align*}
 & \iiint\tilde{\pi}_{T}^{\omega}(\mathrm{d}\tilde{\theta})e^{z}\tilde{g}_{T}^{\omega}(\mathrm{d}z\mid\tilde{\theta})\tilde{q}(\tilde{\theta},\mathrm{d}\tilde{\theta}^{\prime})\Big|\int\min\left\{1,\frac{\varphi(\tilde{\theta}^{\prime};0,\Sigma)}{\varphi(\tilde{\theta};0,\Sigma)}\frac{\tilde{q}(\tilde{\theta}^{\prime},\tilde{\theta})}{\tilde{q}(\tilde{\theta},\tilde{\theta}^{\prime})}e^{z'-z}\right\}\tilde{g}_{T}^{\omega}(\mathrm{d}z'\mid\tilde{\theta}^{\prime})f(\tilde{\theta}^{\prime},z')\\
 & \qquad-\int\min\left\{1,\frac{\varphi(\tilde{\theta}^{\prime};0,\Sigma)}{\varphi(\tilde{\theta};0,\Sigma)}\frac{\tilde{q}(\tilde{\theta}^{\prime},\tilde{\theta})}{\tilde{q}(\tilde{\theta},\tilde{\theta}^{\prime})}e^{z'-z}\right\}g(\mathrm{d}z'\mid\hat{\theta}_{T}^{\omega}+\tilde{\theta}^{\prime}/\surd{T})f(\tilde{\theta}^{\prime},z')\Big|\\
 & =\iiint\pi_{T}^{\omega}(\mathrm{d}\theta)e^{z}g_{T}^{\omega}(\mathrm{d}z\mid\theta)q_{T}(\theta,\mathrm{d}\theta') \\
& \quad \times \Big|\int\min\left\{1,\frac{\varphi(\theta';\hat{\theta}_{T}^{\omega},\Sigma/T)}{\varphi(\theta;\hat{\theta}_{T}^{\omega},\Sigma/T)}\frac{q_{T}(\theta',\theta)}{q_{T}(\theta,\theta')}e^{z'-z}\right\}g_{T}^{\omega}(\mathrm{d}z'\mid\theta')f\big\{\surd{T}(\theta'-\hat{\theta}_{T}^{\omega}),z'\big\}\\
 & \qquad-\int\min\left\{1,\frac{\varphi(\theta';\hat{\theta}_{T}^{\omega},\Sigma/T)}{\varphi(\theta;\hat{\theta}_{T}^{\omega},\Sigma/T)}\frac{q_{T}(\theta',\theta)}{q_{T}(\theta,\theta')}e^{z'-z}\right\}g(\mathrm{d}z'\mid\theta')f\big\{\surd{T}(\theta'-\hat{\theta}_{T}^{\omega}),z'\big\}\Big|
\end{align*}
In the rest of the proof, without loss of generality, we will consider $f$ such that $\|f\|_\mathrm{L}\leq 1$ 
\begin{align*}
& \left|f\big\{\surd{T}(\theta'-\hat{\theta}_{T}^{\omega}),x\big\}-f\big\{\surd{T}(\theta'-\hat{\theta}_{T}^{\omega}),y\big\}\right|\\
&\leq d\left[\big\{\surd{T}(\theta'-\hat{\theta}_{T}^{\omega}),x\big\},\big\{\surd{T}(\theta'-\hat{\theta}_{T}^{\omega}),y\big\}\right] =|x-y|
 \end{align*}
and thus $x\mapsto f\big\{\surd{T}(\theta'-\hat{\theta}_{T}^{\omega}),x\big\}$
is Lipschitz with coefficient 1 uniformly in $T$. Moreover, due to
Lemma \ref{prop_lip}, the map
\[
z'\mapsto\min\left\{1,e^{-z}\frac{\varphi(\theta';\hat{\theta}_{T}^{\omega},\Sigma/T)}{\varphi(\theta;\hat{\theta}_{T}^{\omega},\Sigma/T)}\frac{q_{T}(\theta',\theta)}{q_{T}(\theta,\theta')}e^{z'}\right\}
\]
is Lipschitz with Lipschitz constant 1 uniformly for all $\theta,\theta',z$
and $T$. Thus, using the triangle inequality, we can write
\begin{align*}
 & \iiint\pi_{T}^{\omega}(\mathrm{d}\theta)e^{z}g_{T}^{\omega}(\mathrm{d}z\mid\theta)q_{T}(\theta,\mathrm{d}\theta')\Big|\int\min\left\{1,\frac{\varphi(\theta';\hat{\theta}_{T}^{\omega},\Sigma/T)}{\varphi(\theta;\hat{\theta}_{T}^{\omega},\Sigma/T)}e^{z'-z}\right\}f\big\{\surd{T}(\theta'-\hat{\theta}_{T}^{\omega}),z'\big\}g_{T}^{\omega}(z'\mid\theta')\\
 & \qquad-\min\left\{1,\frac{\varphi(\theta';\hat{\theta}_{T}^{\omega},\Sigma/T)}{\varphi(\theta;\hat{\theta}_{T}^{\omega},\Sigma/T)}e^{z'-z}\right\}f\big\{\surd{T}(\theta'-\hat{\theta}_{T}^{\omega}),z'\big\}g(z'\mid\theta')\Big|\mathrm{d}z'\\
 & \leq2\iint\pi_{T}^{\omega}(\mathrm{d}\theta)q_{T}(\theta,\mathrm{d}\theta')\cdot\sup_{f\in\mathrm{BL}(\mathbb{R}),\thinspace\|f\|_{\mathrm{BL}}\leq1}\quad\Big|\int f(z')g_{T}^{\omega}(\mathrm{d}z'\mid\theta')-\int f(z')g(\mathrm{d}z'\mid\theta')\Big|\mathrm{d}\theta\\
 & =2\iint\pi_{T}^{\omega}(\mathrm{d}\theta)q_{T}(\theta,\mathrm{d}\theta')d_{\mathrm{BL}}\big\{g_{T}^{\omega}(\cdot|\theta'),g(\cdot|\theta')\big\}\\
 & =2\int_{B(\bar{\theta})}\pi_{T}^{\omega}q_{T}(\mathrm{d}\theta')d_{\mathrm{BL}}\big\{g_{T}^{\omega}(\cdot|\theta'),g(\cdot|\theta')\big\}+2\int_{B(\bar{\theta})^{\complement}}\pi_{T}^{\omega}q_{T}(\mathrm{d}\theta')d_{\mathrm{BL}}\left(g_{T}^{\omega}(\cdot|\theta'),g(\cdot|\theta')\right),
\end{align*}
where $B(\bar{\theta})$ is given in Assumption$\,$\ref{ass3}. Since
the bounded Lipschitz norm metrizes weak convergence (for non-random
probability measures) we know that for $\theta'\in B(\bar{\theta})$
\[
d_{\mathrm{BL}}\left(g_{T}^{\omega}(\cdot|\theta'),g(\cdot|\theta')\right)=\sup_{f\in\mathrm{BL}(\mathbb{R}),\thinspace\|f\|_{\mathrm{BL}}\leq1}\quad\Big|\int f(z')g_{T}^{\omega}(\mathrm{d}z'\mid\theta')-\int f(z')g(\mathrm{d}z'\mid\theta')\Big|
\]
vanishes in $\mathbb{P}^{Y}$-probability by Assumption \ref{ass3}.
From Lemma \ref{prop:prep} we know that the marginal distribution
of the proposal at stationarity $\pi_{T}^{\omega}q_{T}(\mathrm{d}\theta')=\int\pi_{T}^{\omega}(\mathrm{d}\theta)q(\theta,\mathrm{d}\theta')$
concentrates around the true parameter value. Since the bounded Lipschitz
metric cannot exceed 1 we have
\[
\int\pi_{T}^{\omega}q_{T}(\mathrm{d}\theta')\mathrm{\mathbb{I}}_{B(\bar{\theta})^{\complement}}(\theta')d_{\mathrm{BL}}\left(g_{T}^{\omega}(\cdot|\theta'),g(\cdot|\theta')\right)\leq \pi_{T}^{\omega}q_{T}\big\{B(\bar{\theta})^{\complement}\big\}\overset{\mathbb{P}^{Y}}{\longrightarrow}\delta_{\bar{\theta}}\big\{B(\bar{\theta})^{\complement}\big\}=0.
\]

In addition from Assumption$\,$\ref{ass3}
\[
\left|\int_{B(\bar{\theta})}\pi_{T}^{\omega}q_{T}(\mathrm{d}\theta')d_{\mathrm{BL}}\big\{g_{T}^{\omega}(\cdot|\theta'),g(\cdot|\theta')\big\}\right|\leq\sup_{\theta\in B(\bar{\theta})}\left|d_{\mathrm{BL}}\big\{g_{T}^{\omega}(\cdot|\theta),g(\cdot|\theta)\big\}\right|\overset{\mathbb{P}^{Y}}{\longrightarrow}0.
\]
Finally, using a similar argument for (\ref{eq:end2}) we have
\begin{align}
 & \iiint\tilde{\pi}_{T}^{\omega}(\mathrm{d}\tilde{\theta})e^{z}\tilde{g}_{T}^{\omega}(\mathrm{d}z\mid\tilde{\theta})\tilde{q}(\tilde{\theta},\mathrm{d}\tilde{\theta}^{\prime})\left|\int g(z'\mid\hat{\theta}_{T}^{\omega}+\tilde{\theta}^{\prime}/\surd{T})f(\tilde{\theta}^{\prime},z')\tilde{\alpha}\big\{(\tilde{\theta},z),(\tilde{\theta}^{\prime},z')\big\}f(\tilde{\theta}^{\prime},z')\right.\nonumber \\
 & \qquad-\left.\tilde{q}(\tilde{\theta},\mathrm{d}\tilde{\theta}^{\prime})\tilde{\alpha}\big\{(\tilde{\theta},z),(\tilde{\theta}^{\prime},z')\big\}g(z'\mid\bar{\theta})f(\tilde{\theta}^{\prime},z')\right|\mathrm{d}z'\nonumber \\
 & \quad \leq 2 \iint\pi_{T}^{\omega}(\theta)q_{T}(\theta,\theta')\mathrm{d}\theta d_{\mathrm{BL}}\left(g(\cdot|\theta'),g(\cdot|\bar{\theta})\right)\mathrm{d}\theta'.\label{eq:dbl_expectation}
\end{align}
By Lemma \ref{lem:Gaussian_continuity} the bounded Lipschitz metric, $d_{\mathrm{BL}}\big\{g(\cdot|\theta'),g(\cdot|\bar{\theta})\big\}$, is bounded and continuous at $\bar{\theta}$. Thus (\ref{eq:dbl_expectation})
converges to zero by Lemma \ref{prop:prep}.
\end{proof}
\begin{proposition}
Under Assumption \ref{ass2},
the map $(\theta,z)\mapsto\tilde{P}f(\theta,z)$ is continuous for every $f\in C_b(\mathbb{R}^{d+1})$.
\end{proposition}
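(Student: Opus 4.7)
The plan is to use the decomposition already exploited in the proof of Proposition \ref{prop:4.2}, namely
\[
\tilde{P}f(\tilde{\theta},z) = \Pi f(\tilde{\theta},z) + f(\tilde{\theta},z)\left\{1 - \Pi 1(\tilde{\theta},z)\right\},
\]
where $\Pi f(\tilde{\theta},z) = \iint f(\tilde{\theta}',z')\tilde{\alpha}(\tilde{\theta},z;\tilde{\theta}',z')\tilde{q}(\tilde{\theta},\mathrm{d}\tilde{\theta}')\varphi(\mathrm{d}z';-\sigma^{2}/2,\sigma^{2})$ as in \eqref{eq:op_decomp}. Since $f \in C_b(\mathbb{R}^{d+1})$ and the constant function $1 \in C_b(\mathbb{R}^{d+1})$, it suffices to show that $(\tilde{\theta},z) \mapsto \Pi f(\tilde{\theta},z)$ is continuous for every $f \in C_b(\mathbb{R}^{d+1})$; continuity of $\tilde{P}f$ then follows from the product and sum rule.

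First I would perform the change of variables $\xi = \tilde{\theta}' - \tilde{\theta}$. Using Assumption \ref{ass2}, which gives $\tilde{q}(\tilde{\theta},\tilde{\theta}+\xi) = \nu(\xi)$, and the fact that $\min\{1,a\}\cdot b = \min\{b,ab\}$ for $b \geq 0$, this rewrites $\Pi f$ as
\[
\Pi f(\tilde{\theta},z) = \iint f(\tilde{\theta}+\xi,z')\min\left\{\nu(\xi),\frac{\varphi(\tilde{\theta}+\xi;0,\Sigma)}{\varphi(\tilde{\theta};0,\Sigma)}\nu(-\xi)e^{z'-z}\right\}\mathrm{d}\xi\,\varphi(\mathrm{d}z';-\sigma^{2}/2,\sigma^{2}).
\]
The advantage of this form is that the domain of integration no longer depends on $(\tilde{\theta},z)$, and the ratio $\nu(-\xi)/\nu(\xi)$ that would otherwise be problematic when $\nu$ vanishes has been absorbed into a well-defined minimum.

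Next I would show that the integrand is jointly continuous in $(\tilde{\theta},z)$ for every fixed $(\xi,z')$: the map $\tilde{\theta}\mapsto f(\tilde{\theta}+\xi,z')$ is continuous by hypothesis on $f$; the ratio $\varphi(\tilde{\theta}+\xi;0,\Sigma)/\varphi(\tilde{\theta};0,\Sigma)$ is a continuous function of $\tilde{\theta}$; the factor $e^{z'-z}$ is continuous in $z$; and the minimum of continuous functions is continuous. Finally I would invoke dominated convergence along any sequence $(\tilde{\theta}_n,z_n)\to(\tilde{\theta},z)$, using the uniform dominator
\[
\left|f(\tilde{\theta}_n+\xi,z')\min\left\{\nu(\xi),\tfrac{\varphi(\tilde{\theta}_n+\xi;0,\Sigma)}{\varphi(\tilde{\theta}_n;0,\Sigma)}\nu(-\xi)e^{z'-z_n}\right\}\right| \leq \|f\|_\infty \nu(\xi),
\]
which is integrable with respect to $\mathrm{d}\xi \otimes \varphi(\mathrm{d}z';-\sigma^{2}/2,\sigma^{2})$ with total mass $\|f\|_\infty$. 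This yields $\Pi f(\tilde{\theta}_n,z_n)\to \Pi f(\tilde{\theta},z)$ and completes the proof.

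The only mildly delicate point is the choice of the dominator: one must avoid working with the raw acceptance ratio (which is unbounded in $z'$ and can be singular where $\nu(\xi)=0$). Writing the integrand as in the display above, so that the proposal density $\nu(\xi)$ is absorbed inside the minimum, removes both issues and makes the dominating function independent of $(\tilde{\theta},z)$, which is exactly what dominated convergence requires.
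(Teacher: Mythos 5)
Your proof is correct, and it follows the same overall strategy as the paper: decompose $\tilde{P}f=\Pi f+f(1-\Pi 1)$, reduce to continuity of $\Pi f$, and conclude by dominated convergence. The one genuine difference is how you handle the dependence of the proposal density on the current state. The paper keeps the integration variable $\tilde{\theta}'$ fixed, adds and subtracts to isolate the term $\int|\nu(\theta'-\theta_{n})-\nu(\theta'-\theta^{*})|\,\mathrm{d}\theta'$, and kills it with Scheff\'e's lemma, then applies dominated convergence to the remaining difference of acceptance probabilities integrated against the fixed density $\nu(\theta'-\theta^{*})$. You instead substitute $\xi=\tilde{\theta}'-\tilde{\theta}$ so that, by the translation invariance granted by Assumption \ref{ass2}, the proposal density becomes the fixed function $\nu(\xi)$, and you fold it into the minimum via $\min\{1,a\}\,b=\min\{b,ab\}$ for $b\geq 0$. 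This buys you two things: a single application of dominated convergence with the state-independent dominator $\|f\|_{\infty}\nu(\xi)$, and a clean treatment of points where $\nu$ vanishes --- the paper's assertion that the raw acceptance ratio is ``a composition of continuous functions'' is only unambiguous when $\nu>0$ everywhere, whereas your rewritten minimum is manifestly well defined and continuous regardless. Both arguments are valid; yours is marginally tidier and slightly more robust to non-positive $\nu$.
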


\begin{proof}[Proof of Proposition \ref{prop:4.3}]
Without loss of generality let $\|f\|_{\infty}\leq1$, consider $(\theta^{*},z^{*})\in\Theta\times\mathbb{R}$
and denote $(\theta_{n},z_{n})_{n\in\mathbb{N}}$ a sequence converging
to $(\theta^{*},z^{*})$ as $n\rightarrow\infty$. Using the decomposition \eqref{eq:op_decomp} we have
\begin{align*}
&\left|\tilde{P}f(\theta_{n},z_{n})-\tilde{P}f(\theta^{*},z^{*})\right|\\
&=\left|\Pi f(\theta_{n},z_{n})+f(\theta_{n},z_{n})\left\{1-\Pi 1(\theta_{n},z_{n})\right\}-\Pi f(\theta^{*},z^{*})-f(\theta^{*},z^{*})\left\{1-\Pi1(\theta^{*},z^{*})\right\}\right|\\
&\leq\left|\Pi f(\theta_{n},z_{n})-\Pi f(\theta^{*},z^{*})\right|+\left|f(\theta_{n},z_{n}) - f(\theta^{*},z^{*})\right| + \left|\Pi 1(\theta_{n},z_{n})-\Pi 1(\theta^{*},z^{*})\right|
\end{align*}
By continuity of $f$ we have $f(\theta_n, z_n) \rightarrow f(\theta^*, z^*)$ as $n\rightarrow \infty$. Since $1 \in C_b(\mathbb{R}^{d+1})$ it remains to show that $\Pi f$ is continuous for every $f\in C_b(\mathbb{R}^{d+1})$. Now
\begin{align}
 & \left|\Pi f(\theta_{n},z_{n})-\Pi f(\theta^{*},z^{*})\right|\nonumber \\
 & =\bigg|\int f(\theta',z')\min\left\{1,\frac{\varphi(\theta';0,\Sigma)}{\varphi(\theta_{n};0,\Sigma)}\frac{\nu(\theta_{n}-\theta')}{\nu(\theta'-\theta_{n})}e^{z'-z_{n}}\right\}\nu(\theta'-\theta_{n})g(\mathrm{d}z'\mid\overline{\theta})\mathrm{d}\theta'\\
 & \qquad-\int f(\theta',z')\min\left\{1,\frac{\varphi(\theta';0,\Sigma)}{\varphi(\theta^{*};0,\Sigma)}\frac{\nu(\theta^{*}-\theta')}{\nu(\theta'-\theta^{*})}e^{z'-z^{*}}\right\}\nu(\theta'-\theta^{*})g(\mathrm{d}z'\mid\overline{\theta})\mathrm{d}\theta'\bigg|\nonumber \\
 & \leq\int\left|\nu(\theta'-\theta_{n})-\nu(\theta'-\theta^{*})\right|\mathrm{d}\theta'\label{eq:densitytv}\\
\begin{split}
 & \quad+\int\Bigg|\min\left\{1,\frac{\varphi(\theta';0,\Sigma)}{\varphi(\theta_{n};0,\Sigma)}\frac{\nu(\theta_{n}-\theta')}{\nu(\theta'-\theta_{n})}e^{z'-z_{n}}\right\} \\ 
& \qquad\quad -\min\left\{1,\frac{\varphi(\theta';0,\Sigma)}{\varphi(\theta^{*};0,\Sigma)}\frac{\nu(\theta^{*}-\theta')}{\nu(\theta'-\theta^{*})}e^{z'-z^{*}}\right\}\Bigg|\nu(\theta'-\theta^{*})g(\mathrm{d}z'\mid\overline{\theta})\mathrm{d}\theta'.\label{eq:mintv}
\end{split}
\end{align}
For (\ref{eq:densitytv}), Assumption \ref{ass2} implies $\nu(\theta'-\theta_{n})\rightarrow\nu(\theta'-\theta^{*})$
as $n\rightarrow\infty$ and hence Scheff\'e's lemma yields

\[
\int\left|\nu(\theta'-\theta_{n})-\nu(\theta'-\theta^{*})\right|\mathrm{d}\theta'\rightarrow0.
\]
For (\ref{eq:mintv}), the map
\[
(\theta,z)\mapsto\min\left\{1,\frac{\varphi(\theta';0,\Sigma)}{\varphi(\theta;0,\Sigma)}\frac{\nu(\theta-\theta')}{\nu(\theta'-\theta)}e^{z'-z}\right\}
\]
is continuous for all $\theta',z'$ since it is just a composition
of continuous functions. Hence,
\[
\left|\min\left\{1,\frac{\varphi(\theta';0,\Sigma)}{\varphi(\theta_{n};0,\Sigma)}\frac{\nu(\theta_{n}-\theta')}{\nu(\theta'-\theta_{n})}e^{z'-z_{n}}\right\}-\min\left\{1,\frac{\varphi(\theta';0,\Sigma)}{\varphi(\theta^{*};0,\Sigma)}\frac{\nu(\theta^{*}-\theta')}{\nu(\theta'-\theta^{*})}e^{z'-z^{*}}\right\}\right|\rightarrow0
\]
for every $(\theta',z')$ and an application of dominated convergence
shows that (\ref{eq:mintv}) goes to zero.
\end{proof}

\renewcommand{\thesection}{S3}
\section{Proofs of Section 5\label{app:clt}}
\subsection{Central Limit Theorem for Likelihood Estimators}
We detail here the proof of Theorem \ref{thm:uniform_CLT}. For clarity we explicitly state the probability space supporting all
random variables that are used to prove our limit theorem. For integers
$N,T,k$ we introduce the space $E_{T}=\Theta\times\mathbb{R}^{NTk}$
where $\Theta\subset\mathbb{R}^{d}$ is the parameter space equipped
with the Borel $\sigma$-algebra and probability measure $\mathbb{P}_{T}(\mathrm{d}\theta,\mathrm{d}u)=\pi_{T}^{\omega}(\mathrm{d}\theta)m_{T,\theta}(\mathrm{d}u).$
Finally, we will work with the Borel probability measure $\mathbb{P}$
on $E$ where $E=\mathsf{Y}^{\mathbb{N}}\times\prod_{T=1}^{\infty}E_{T},~\mathbb{P}=\mathbb{P}^{Y}\otimes\bigotimes_{T=1}^{\infty}\mathbb{P}_{T}.$

We are interested in the asymptotic distribution of the relative error
of the log-likelihood
\[
Z_{T}(\theta)=\log\widehat{p}(Y_{1:T}\mid\theta,U)-\log p(Y_{1:T}\mid\theta),
\]
where $U\sim m_{T,\theta}(\cdot)$ or $U\sim\pi_{T}^{\omega}(\cdot\mid\theta).$
Indeed, we have $\mathcal{L}\mathrm{aw}\left\{ Z_{T}(\theta)\right\} =g_{T}^{\omega}\left(\cdot\mid\theta\right)$
when $U\sim m_{T,\theta}(\cdot)$ and $\mathcal{L}\mathrm{aw}\left\{ Z_{T}(\theta)\right\} =\bar{g}_{T}^{\omega}\left(\cdot\mid\theta\right)$
when $U\sim\pi_{T}^{\omega}(\cdot\mid\theta).$ Weak convergence results for $Z_T(\theta)$ have been established in \textcite[Theorem 1]{deligiannidis2015} using a Taylor expansion. However, the CLTs introduced therein do not provide a bound on the Lipschitz metric $d_{\mathrm{BL}}$ and are not uniform in the parameter $\theta$ as required
in Assumption \ref{ass3}. In order to obtain a uniform bound for
all functions in $\mathrm{BL(\mathbb{R})}$ with $\|f\|_{\mathrm{BL}}\leq1$ and all parameter values for some neighbourhood $B(\bar{\theta})$
we need to introduce further assumptions. We follow the approach in \textcite{deligiannidis2015} and write
\begin{align*}
Z_{T}(\theta) & =\sum_{t=1}^{T}\log\left\{ 1+\frac{\widehat{p}(Y_{t}\mid\theta,U_{t})-p(Y_{t}\mid\theta)}{p(Y_{t}\mid\theta)}\right\} \\
 & =\sum_{t=1}^{T}\log\left\{ 1+\frac{\epsilon_{N}(Y_{t},\theta)}{\surd N}\right\}
\end{align*}
where
\begin{align*}
\epsilon_{N}(Y_{t},\theta) & =\frac{1}{\surd N}\sum_{i=1}^{N}\left\{ \overline{w}(Y_{t},U_{t,i},\theta)-1\right\} ,
\end{align*}
$\overline{w}(Y_{t},U_{t,i},\theta)$ being a normalized importance
weight defined in (\ref{eq:normalizedweight}).
Recall that
\begin{align*}
\sigma^{2}(y,\theta) & =E\left\{\epsilon_{T}(y,\theta)^{2}\right\}=\mathrm{Var}\left\{\overline{w}(y,U_{1,1},\theta)\right\}, \quad \sigma^{2}(\theta) =E\left\{\sigma^{2}(Y_{1},\theta)\right\}.
\end{align*}
Here the number of particles, $N$, is scaled proportionally to the
number of observations, that is $N=\left\lceil \gamma T\right\rceil $
for some $\gamma>0.$ In the following we will take $\gamma=1$ (that
is $N=T$) for simplicity and without loss of generality. In order
to show convergence of the bounded Lipschitz metric uniformly in $\theta$,
we will exploit the relation
\begin{equation*}
	\log(1+x) = x - \frac{x^2}{2} + \int_0^x \frac{u^2}{1+u} \mathrm{d}u,
\end{equation*}
where for $x<0$ we use the convention
$$\int_0^x \frac{u^2}{1+u} \mathrm{d}u = -\int_x^0 \frac{u^2}{1+u} \mathrm{d}u.$$
We thus obtain
\begin{align}\label{eq:ZT}
Z_{T}(\theta)&=\frac{1}{\surd T}\sum_{t=1}^{T}\epsilon_{T}(Y_{t},\theta)-\frac{1}{2T}\sum_{t=1}^{T}\epsilon_{T}(Y_{t},\theta)^{2}+\sum_{t=1}^{T}R_{T}(Y_{t},\theta),
\intertext{with}
R_T(y, \theta) &= \int_{0}^{\epsilon_{T}(y,\theta)/\surd T}\frac{u^{2}}{1+u}\mathrm{d}u.
\end{align}
We recall the following assumptions regarding the normalized weights.
\setcounter{assumption}{3}
\begin{assumption}
There exists a closed $\varepsilon$-ball $B(\bar{\theta})$
around $\bar{\theta}$ and a function $g$ such that the normalized
weight $\overline{w}(y,U_{1,1},\theta)$ defined in (\ref{eq:normalizedweight})
satisfies for some $0<\Delta<1$
\begin{equation*}
\sup_{\theta\in B(\bar{\theta})}E\left\{\overline{w}(y,U_{1,1},\theta)^{2+\Delta}\right\} \leq g(y),\label{eq:unif:bound}
\end{equation*}
where $U_{1,1}\sim h(\,\cdot\mid y,\theta)$ and $\mu(g)<\infty$. Additionally, $\theta\mapsto \sigma^2(y,\theta)$ is continuous in
$\theta$ on $B(\bar{\theta})$ for all $y\in\mathsf{Y}$.
\end{assumption}

We can relate expectations of powers of $\epsilon_T(y, \theta)$ to that of $\overline{w}(y,U_{1, 1},\theta)$ in the following way.

\begin{lemma}
For any $k\geq2$ and any $T\geq 1$
\begin{equation*}
	E\left\{\left|\epsilon_T(y, \theta) \right|^k \right\} \leq c(k)\left[E\left\{  \overline{w}(y,U_{1,1},\theta))^{k}\right\}+1\right]
\end{equation*}
where $c(k)$ is a constant only depending on $k$.
\end{lemma}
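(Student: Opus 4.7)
The plan is to recognise $\epsilon_T(y,\theta)$ as a centred, normalised i.i.d.\ sum and then apply a Rosenthal-type moment inequality. Writing $X_i = \overline{w}(y,U_{t,i},\theta)-1$, the variables $\{X_i\}_{i=1}^T$ are i.i.d., and because $U_{1,1}\sim h(\cdot\mid y,\theta)$ makes importance sampling unbiased we have $E(\overline{w})=1$, so each $X_i$ has mean zero. Thus
\[
\epsilon_T(y,\theta)=\frac{1}{\surd T}\sum_{i=1}^T X_i .
\]

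The key tool is Rosenthal's inequality: there exists a constant $C(k)$ such that for any i.i.d.\ centred random variables with finite $k$-th moment ($k\geq 2$),
\[
E\!\left\{\Bigl|\sum_{i=1}^T X_i\Bigr|^{k}\right\}
\leq C(k)\Bigl\{ T\,E(|X_1|^k) + \bigl(T\,E(X_1^2)\bigr)^{k/2}\Bigr\}.
\]
Dividing by $T^{k/2}$ yields
\[
E\!\left\{|\epsilon_T(y,\theta)|^k\right\}
\leq C(k)\Bigl\{ T^{1-k/2}\,E(|X_1|^k) + E(X_1^2)^{k/2}\Bigr\}
\leq C(k)\bigl\{ E(|X_1|^k) + E(X_1^2)^{k/2}\bigr\},
\]
since $T^{1-k/2}\leq 1$ for $k\geq 2$ and $T\geq 1$. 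By Jensen's inequality applied to the convex map $x\mapsto x^{k/2}$ (valid since $k/2\geq 1$), $E(X_1^2)^{k/2}\leq E(|X_1|^k)$, so the bound collapses to $2C(k)\,E(|X_1|^k)$.

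Finally, the elementary inequality $|a-1|^k\leq 2^{k-1}(|a|^k+1)$ gives, using $\overline{w}\geq 0$,
\[
E(|X_1|^k) = E\bigl\{|\overline{w}(y,U_{1,1},\theta)-1|^k\bigr\}
\leq 2^{k-1}\bigl[\,E\{\overline{w}(y,U_{1,1},\theta)^k\} + 1\,\bigr].
\]
Combining the two bounds with $c(k)=2^{k}C(k)$ proves the claim. There is no real obstacle; the only point of care is invoking Rosenthal rather than the cruder Marcinkiewicz--Zygmund bound, because for $k>2$ the latter loses a factor of $T^{k/2-1}$ and would not give a constant independent of $T$.
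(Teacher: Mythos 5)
Your proof is correct, but it takes a different route from the paper and your closing remark about Marcinkiewicz--Zygmund is mistaken. The paper proves the lemma precisely with the Marcinkiewicz--Zygmund inequality followed by Jensen and the $c_r$-inequality: from
\[
E\left\{\left|\tfrac{1}{\surd T}\textstyle\sum_{i=1}^T X_i\right|^{k}\right\}\leq c_1(k)\,E\left\{\left|\tfrac{1}{T}\textstyle\sum_{i=1}^T X_i^{2}\right|^{k/2}\right\}\leq c_1(k)\,\tfrac{1}{T}\textstyle\sum_{i=1}^T E\left(|X_i|^{k}\right)=c_1(k)\,E\left(|X_1|^{k}\right),
\]
where the middle step is Jensen applied to the convex map $x\mapsto x^{k/2}$ on the \emph{average} $\tfrac{1}{T}\sum X_i^2$. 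The factor $T^{k/2-1}$ you worry about arises when one bounds $(\sum X_i^2)^{k/2}\leq T^{k/2-1}\sum|X_i|^k$ before normalising, but it is exactly absorbed by the $T^{-k/2}$ coming from the $\surd T$ scaling of $\epsilon_T$, so Marcinkiewicz--Zygmund does yield a $T$-independent constant here. Your Rosenthal route is equally valid (and Rosenthal is in fact the sharper inequality, since $(EX_1^2)^{k/2}\leq E|X_1|^k$ means its bound dominates the MZ--Jensen one), and your handling of the two resulting terms, the Jensen step $E(X_1^2)^{k/2}\leq E(|X_1|^k)$, and the final $c_r$ bound $|a-1|^k\leq 2^{k-1}(|a|^k+1)$ are all fine. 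The only thing to correct is the assertion that the cruder bound ``would not give a constant independent of $T$'': it would, and that is the argument the paper actually uses.
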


\begin{proof}
This is Lemma 2 in \textcite{deligiannidis2015}. We repeat it here for convenience. It holds
\begin{align*}
 E\left\{\left|\epsilon_T(y, \theta) \right|^k \right\} & = E\left[\left|\frac{1}{\surd T}\sum_{i=1}^{T}\left\{ \overline{w}(y,U_{1,i},\theta)-1\right\} \right|^{k}\right]\\
 & \leq c_{1}(k)E\left[ \left|\frac{1}{T}\sum_{i=1}^{T}\left\{  \overline{w}(y,U_{1,i},\theta)-1\right\} ^{2}\right|^{k/2}\right] \\
 & \leq c_{1}(k)\frac{1}{T}\sum_{i=1}^{T}E\left\{ \left| \overline{w}(y,U_{1,i},\theta)-1\right|^{k}\right\} \\
 & \leq c_{1}(k)c_{2}(k)\left[E\left\{  \overline{w}(y,U_{1,1},\theta))^{k}\right\}+1\right]
\end{align*}
for some constants $c_{1}(k),c_{2}(k)$ by application of the Marcinkiewicz--Zygmund, Jensen and $c_r$-inequalities.
\end{proof}
As a result we have thus
\begin{equation}\label{eq:momentepsilon}
	\sup_{\theta\in B(\bar{\theta})}E\left\{\left|\epsilon_T(y, \theta) \right|^k \right\} \leq c(k)\sup_{\theta\in B(\bar{\theta})}\left[E\left\{  \overline{w}(y,U_{1,1},\theta))^{k}\right\} +1\right]
\end{equation}
and the left-hand-side is finite whenever the right-hand-side is finite.

\subsection{Moment Conditions for Weak Convergence}

Denote $\mathcal{Y}_{T}$ the $\sigma-$algebra spanned by the data $Y_{1:T}=(Y_{1},\ldots,Y_{T})$
observed up to $T$.

\setcounter{theorem}{2}
\begin{theorem}[Moment conditions for UCLT]
\label{thm:moment_uclt}
Under Assumption \ref{ass4} we have the following uniform central limit theorems
\begin{itemize}
\item[a)]
\[
\sup_{\theta\in B(\bar{\theta})}d_\mathrm{BL}\left[g_{T}^{\omega}(\cdot\mid\theta),\varphi\left\{ \cdot;-\sigma^{2}(\theta)/2,\sigma^{2}(\theta)\right\} \mid\mathcal{Y}_{T}\right]\overset{\mathbb{P}}{\rightarrow}0,
\]
and
\item[b)]
\[
\sup_{\theta\in B(\bar{\theta})}d_\mathrm{BL}\left[\bar{g}_{T}^{\omega}(\cdot\mid\theta),\varphi\left\{ \cdot;\sigma^{2}(\theta)/2,\sigma^{2}(\theta)\right\} \mid\mathcal{Y}_{T}\right]\overset{\mathbb{P}}{\rightarrow}0.
\]
\end{itemize}
\end{theorem}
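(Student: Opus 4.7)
I would start from the expansion \eqref{eq:ZT} of $Z_T(\theta)$ and show that each of its three pieces has the correct conditional limit uniformly on $B(\bar\theta)$. Under $m_{T,\theta}$ the variables $(U_{t,i})$ are, for each fixed $\theta$, conditionally i.i.d.\ within each $t$ given $\mathcal Y_T$, so $\{\overline w(Y_t,U_{t,i},\theta)-1\}_{t,i}$ is a triangular array of conditionally independent, mean-zero summands and the linear term in \eqref{eq:ZT} reads $T^{-1}\sum_{t,i}\{\overline w(Y_t,U_{t,i},\theta)-1\}$. For this leading term I would verify a conditional Lyapunov condition: by Assumption \ref{ass4} and \eqref{eq:momentepsilon}, the Lyapunov ratio is bounded by $cT^{-\Delta}\cdot T^{-1}\sum_t\{g(Y_t)+1\}$, which vanishes in $\mathbb P^Y$-probability since $\mu(g)<\infty$ by the strong law. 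The conditional variance $T^{-1}\sum_t\sigma^2(Y_t,\theta)$ tends to $\sigma^2(\theta)$ pointwise in $\theta$, delivering the Gaussian limit $\mathcal N(0,\sigma^2(\theta))$.

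For the quadratic correction, the conditional mean equals $\frac{1}{2T}\sum_t\sigma^2(Y_t,\theta)\to\sigma^2(\theta)/2$, and a truncation argument based on \eqref{eq:momentepsilon} shows its conditional variance vanishes, so the correction converges to $\sigma^2(\theta)/2$ in conditional probability. For the remainder, I would use the pointwise bound $|u^2/(1+u)|\leq C(u^2+|u|^{2+\Delta})$ on $\{u\geq-\tfrac12\}$, split each summand according to whether $|\epsilon_T(Y_t,\theta)|\leq\delta\sqrt T$, and bound both pieces by the $(2+\Delta)$-moment, producing a conditional $o_{\mathbb P}(1)$. Combined, these give the pointwise-in-$\theta$ conditional weak convergence $g_T^\omega(\,\cdot\mid\theta)\rightsquigarrow\mathcal N(-\sigma^2(\theta)/2,\sigma^2(\theta))$ in $\mathbb P^Y$-probability.

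To upgrade this pointwise statement to uniformity on $B(\bar\theta)$, I would pass to conditional characteristic functions: every quantitative bound above depends on $\theta$ only through the continuous map $\theta\mapsto\sigma^2(y,\theta)$ and the $\theta$-free envelope $g$, which combined with compactness of $B(\bar\theta)$ gives joint equicontinuity in $(\theta,u)$ of $u\mapsto E[e^{iuZ_T(\theta)}\mid\mathcal Y_T]$ and hence uniform convergence on compacts. A standard smoothing argument then converts uniform convergence of characteristic functions into uniform convergence in $d_{\mathrm{BL}}$ over $B(\bar\theta)$.

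For part (b) I would redo the same decomposition under $\pi_T^\omega(\cdot\mid\theta)$, where the joint density of $(U_{t,i})$ factorises into the mixture $\prod_t T^{-1}\sum_i\big[q(u_{t,i}\mid y_t,\theta)\prod_{j\neq i} h(u_{t,j}\mid y_t,\theta)\big]$ with $q(u\mid y,\theta)\propto g(y\mid u,\theta)f(u\mid\theta)$. Thus a single particle per $t$ is drawn from the true posterior while the others remain distributed as $h(\cdot\mid y_t,\theta)$; by Assumption \ref{ass4} the contribution of that single particle is asymptotically negligible, so the Lyapunov CLT carries over, while the conditional mean of $Z_T(\theta)$ picks up an extra $+\sigma^2(\theta)$ shift, delivering the $\mathcal N(\sigma^2(\theta)/2,\sigma^2(\theta))$ limit. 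The main obstacle is precisely this uniformity in $\theta$: pointwise conditional CLTs are classical, but obtaining bounded Lipschitz convergence uniform on $B(\bar\theta)$ hinges on the $\theta$-free nature of the envelope $g$ in Assumption \ref{ass4} together with the continuity of $\sigma^2(y,\theta)$ in $\theta$.
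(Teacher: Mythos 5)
Your skeleton is the same as the paper's: the expansion \eqref{eq:ZT}, control of the quadratic term and the remainder through the $(2+\Delta)$-moment envelope $g$ of Assumption \ref{ass4}, and a uniform law of large numbers for $T^{-1}\sum_t\sigma^2(Y_t,\theta)$ exploiting continuity of $\sigma^2(y,\cdot)$. Two steps, however, are under-justified and are precisely where the paper has to work.

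First, the passage from the pointwise conditional Lyapunov CLT to uniformity over $B(\bar\theta)$. You assert joint equicontinuity in $(\theta,u)$ of $u\mapsto E[e^{iuZ_T(\theta)}\mid\mathcal Y_T]$, but Assumption \ref{ass4} provides no modulus of continuity of $\theta\mapsto\overline w(y,u,\theta)$ — only a $\theta$-free moment envelope and continuity of $\theta\mapsto\sigma^2(y,\theta)$ — so equicontinuity \emph{in $\theta$, uniformly in $T$,} of the conditional characteristic functions cannot be extracted from the hypotheses; an Arzel\`a--Ascoli/compactness route is therefore not available. What does work, and what the paper does, is to replace the qualitative CLT by a quantitative normal-approximation bound (Stein's method, Lemma \ref{lem:bl_bound}) whose right-hand side involves only truncated second and third conditional moments of the summands; via \eqref{eq:momentepsilon} these are dominated by $T^{-\Delta/2}\cdot T^{-1}\sum_t\{g(Y_t)+1\}$ uniformly over $\theta\in B(\bar\theta)$, so the supremum over $\theta$ can be taken before passing to the limit. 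If you intend your "Lyapunov ratio" bound in this quantitative (Berry--Esseen type) sense, your argument for part (a) closes; if you genuinely need equicontinuity of characteristic functions, it does not. Note also that you only claim pointwise convergence of the conditional variance; the uniform version $\sup_\theta|T^{-1}\sum_t\sigma^2(Y_t,\theta)-\sigma^2(\theta)|\to0$ (Lemma \ref{lem:uniformconv}) is what is actually needed and is where the continuity hypothesis on $\sigma^2(y,\cdot)$ enters.

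Second, part (b). Your mixture representation of $\pi_T^\omega(\cdot\mid\theta)$ — one particle per $t$ drawn from the conditional posterior $q\propto wh$, the rest from $h$ — is correct, and the $+\sigma^2(\theta)$ mean shift from $E_q\{\overline w\}=E_h\{\overline w^2\}=1+\sigma^2(y,\theta)$ is the right heuristic. But the tilted particle has one fewer moment: $E_q\{\overline w^{1+\Delta}\}=E_h\{\overline w^{2+\Delta}\}\le g(y)$, so under $q$ you only control $(1+\Delta)$-moments with $\Delta<1$, which is not enough to rerun the Lyapunov/Stein argument or to handle the remainder term for that particle; calling its contribution "asymptotically negligible" is exactly the claim that needs proof. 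The paper avoids this entirely: it writes $\tilde E[\,\cdot\,]=E[\,\cdot\,\prod_t\{1+\epsilon_T(Y_t,\theta)/\surd T\}]$, applies Cauchy--Schwarz, and uses the uniform second-moment bound on the Radon--Nikodym derivative (Proposition \ref{prop:radon_nikodym_bounded}, which needs only Assumption \ref{ass4}) to transfer all part-(a) estimates to the stationary law, with the Gaussian exponential-tilting identity supplying the shifted limit. You should either adopt that change-of-measure argument or supply a genuinely new treatment of the tilted particle under the weaker moment condition.
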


We will need the following auxiliary results.
\begin{lemma}
\label{lem:bl_bound} Let $S_{T}(\theta)=\sum_{i=1}^{T}\xi_{i}(\theta)$
denote the sum of zero mean independent random variables $\xi_{1}(\theta),\ldots,\xi_{T}(\theta)$
such that $\mathrm{Var}(S_{T})=1$.
Then for any Lipschitz function $f$ with Lipschitz constant $L$ and $Z\sim \mathcal{N}(0,1)$
\[
\left|E\left[f\left\{ S_{T}(\theta)\right\} -f\left(Z\right)\right]\right|\leq L\left(4E\left[\sum_{i=1}^{T}\xi_{i}^{2}(\theta)1_{\{\left|\xi_{i}(\theta)\right|>1\}}\right]+3E\left[\sum_{i=1}^{T}|\xi_{i}(\theta)|^{3}1_{\{\left|\xi_{i}(\theta)\right|\leq1\}}\right]\right).
\]
\end{lemma}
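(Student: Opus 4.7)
The plan is to prove this as a classical Lindeberg-type bound for the Wasserstein-$1$ distance between $S_T$ and $Z$, with the Lipschitz test function $f$ handled by a mollification step.

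First I would introduce an auxiliary independent family of centred Gaussians $\eta_1,\dots,\eta_T$, independent of the $\xi_i$, with $\eta_k\sim\mathcal{N}(0,\sigma_k^2)$ where $\sigma_k^2=E[\xi_k^2(\theta)]$, so that $\sum_k\eta_k\stackrel{d}{=}Z$. Setting $V_k=\sum_{j<k}\xi_j(\theta)+\sum_{j>k}\eta_j$, which is independent of both $\xi_k(\theta)$ and $\eta_k$, a telescoping identity gives
\begin{equation*}
E\left[f\{S_T(\theta)\}-f(Z)\right]=\sum_{k=1}^T E\bigl[f\{V_k+\xi_k(\theta)\}-f(V_k+\eta_k)\bigr],
\end{equation*}
reducing the task to a term-by-term comparison of $\xi_k(\theta)$ against its matched-moment Gaussian replacement $\eta_k$ in the presence of an independent background variable.

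Next I would mollify $f$ by Gaussian convolution at scale $\delta>0$: $f_\delta=f\ast\varphi_\delta$. Standard convolution estimates give $\|f_\delta'\|_\infty\leq L$, $\|f_\delta^{(j)}\|_\infty\lesssim L\delta^{1-j}$ for $j\geq 2$ and $\|f-f_\delta\|_\infty\leq L\delta\sqrt{2/\pi}$. A second-order Taylor expansion of $f_\delta$ around $V_k$, combined with the independence of $V_k$ from $\xi_k(\theta)$ and from $\eta_k$ together with the matching of first and second moments, cancels the zeroth, first and second-order terms of the two expansions once we take expectations, leaving only the Taylor remainders. On the $\xi_k(\theta)$ side the remainder is controlled by $\min\bigl(\tfrac{|\xi_k(\theta)|^3}{6}\|f_\delta'''\|_\infty,\xi_k(\theta)^2\|f_\delta''\|_\infty\bigr)$; splitting on $\{|\xi_k(\theta)|\leq 1\}$ and $\{|\xi_k(\theta)|>1\}$ and applying the cubic bound on the first event and the quadratic one on the second produces exactly the two Lindeberg sums $\sum_k E[|\xi_k(\theta)|^3\,1_{\{|\xi_k(\theta)|\leq 1\}}]$ and $\sum_k E[\xi_k(\theta)^2\,1_{\{|\xi_k(\theta)|>1\}}]$ appearing in the statement, with prefactors depending on $\delta$ through $\|f_\delta''\|_\infty$ and $\|f_\delta'''\|_\infty$. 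The Gaussian-side remainder contributes only $\sigma_k^3$-type terms, which can be reabsorbed by using $\sigma_k\leq 1$ and the decomposition $\sigma_k^2=E[\xi_k(\theta)^2\,1_{\{|\xi_k(\theta)|\leq 1\}}]+E[\xi_k(\theta)^2\,1_{\{|\xi_k(\theta)|>1\}}]$. Undoing the mollification contributes an additional $2\|f-f_\delta\|_\infty\lesssim L\delta$, and an appropriate choice of $\delta$ delivers a bound of the stated form.

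The main obstacle is tracking the sharp constants $4$ and $3$: this requires tight bookkeeping through the Taylor remainder, the mollifier derivative bounds and the handling of the Gaussian-side contribution, none of which is difficult but each of which must be executed carefully. A cleaner alternative that avoids mollification altogether exploits the almost-everywhere existence of $f'$ for Lipschitz $f$: writing $f(V_k+t)-f(V_k)=\int_0^t f'(V_k+s)\,\mathrm{d}s$, one carries out the Lindeberg comparison by an integration-by-parts identity against the laws of $\xi_k(\theta)$ and $\eta_k$, in the spirit of the Stein-method treatment of the Wasserstein-$1$ distance, and the indicator split between $\{|\xi_k(\theta)|\leq 1\}$ and $\{|\xi_k(\theta)|>1\}$ then produces the two explicit constants directly. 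In either route the structural argument is the classical Lindeberg CLT, adapted so as to preserve the Lipschitz regularity of $f$ throughout.
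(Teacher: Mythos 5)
The paper does not prove this lemma at all: it is quoted verbatim as Theorem 3.2 of Chen, Goldstein and Shao (2010), whose proof is by Stein's method (solve the Stein equation $g'(w)-wg(w)=f(w)-Ef(Z)$, use that the solution satisfies $\|g''\|_\infty\leq 2\|f'\|_\infty=2L$, and run a leave-one-out expansion of $E[S_Tg(S_T)]$ with the indicator split on $\{|\xi_i|\leq 1\}$). Your second, briefly sketched route is essentially this argument, but you only name it; the decisive step --- that the Stein solution gains a derivative with the explicit factor $2$, which is where the constants $4$ and $3$ ultimately come from --- is not stated or used.

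Your primary route (Lindeberg replacement plus Gaussian mollification) has a genuine flaw and cannot yield the stated bound. With $\|f'\|_\infty\leq L$ you only get $\|f_\delta''\|_\infty\lesssim L/\delta$ and $\|f_\delta'''\|_\infty\lesssim L/\delta^{2}$, while undoing the mollification costs an additive $O(L\delta)$. Writing $A=\sum_iE[\xi_i^21_{\{|\xi_i|>1\}}]$ and $B=\sum_iE[|\xi_i|^31_{\{|\xi_i|\leq1\}}]$, your scheme therefore produces a bound of the form $L(A/\delta+B/\delta^{2}+\delta)$ up to constants, and optimising over $\delta$ gives sublinear dependence (roughly $L(A^{1/2}+B^{1/3})$), not the linear form $L(4A+3B)$. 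In the regime where $A$ and $B$ are small --- which is exactly the regime needed in the paper, since the lemma is applied with $A,B\to0$ to deduce the uniform CLT --- the mollified bound is strictly weaker and the rate is lost. So the "appropriate choice of $\delta$" step does not exist; to get a bound linear in the Lindeberg sums you must avoid mollification entirely, which is precisely what Stein's method accomplishes. I would either cite the reference, as the paper does, or commit to the Stein-equation argument and actually carry the constants through.
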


\begin{proof}
This is Theorem 3.2 in \textcite{chen2010normal}.
\end{proof}
The above result reduces the problem of showing weak convergence uniformly over some neighbourhood $B(\bar{\theta})$ to uniform laws of large numbers for conditional higher order moments.
Conditions to ensure uniformity in the convergence of averages are
widely established. We will use the following result given in
\autocite[Theorem 2]{jennrich1969}.
\begin{lemma}
\label{lem:uniformconv} Let $A\subset\mathbb{R}^{d}$ be compact and
let $f\colon\mathbb{R}^{k}\times A\rightarrow\mathbb{R}$ be continuous
in $\theta$ for each $y\in\mathbb{R}^{k}$ and measurable in y for
each $\theta\in A$. Further assume that there exists an integrable
function $g$, such that $|f(y,\theta)|\leq g(y)$ for all $y$ and
$\theta$. For independent random variables $Y_{i}\sim\mu$ $(i = 1, \ldots, T)$ then \textup{$\mathbb{P}^{Y}$}-almost surely
\[
\sup_{\theta\in A}\left|\frac{1}{T}\sum_{t=1}^{T}f\left(Y_{t},\theta\right)-E\left\{f(Y_{1},\theta)\right\}\right|\rightarrow0,
\]
as $T\rightarrow \infty$.
\end{lemma}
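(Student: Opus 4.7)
The plan is to carry out the classical bracketing argument of Jennrich, reducing the uniform convergence to the pointwise strong law of large numbers on a finite set of parameter values combined with a continuity-based modulus of oscillation. The key device is the function
\[
m(y,\theta,\delta)=\sup_{\theta'\in A\,:\,|\theta'-\theta|\leq\delta}\left|f(y,\theta')-f(y,\theta)\right|.
\]
Since $\theta\mapsto f(y,\theta)$ is continuous for each $y$, we have $m(y,\theta,\delta)\downarrow 0$ as $\delta\downarrow 0$, and the envelope bound $|f|\leq g$ yields $m(y,\theta,\delta)\leq 2g(y)$. Dominated convergence then gives $E\{m(Y_1,\theta,\delta)\}\to 0$ as $\delta\to 0$, and, applied similarly to $f$ itself, yields continuity of the map $\theta\mapsto E\{f(Y_1,\theta)\}$ on $A$.

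Fix $\varepsilon>0$. For each $\theta\in A$, pick $\delta_\theta>0$ small enough that $E\{m(Y_1,\theta,\delta_\theta)\}<\varepsilon$ and $|E\{f(Y_1,\theta')\}-E\{f(Y_1,\theta)\}|<\varepsilon$ whenever $|\theta'-\theta|\leq\delta_\theta$. Since $A$ is compact, extract a finite subcover $\{B(\theta_j,\delta_{\theta_j})\}_{j=1}^{J}$. For any $\theta\in B(\theta_j,\delta_{\theta_j})$ the triangle inequality gives
\begin{align*}
\left|\frac{1}{T}\sum_{t=1}^{T}f(Y_t,\theta)-E\{f(Y_1,\theta)\}\right|
 &\leq \frac{1}{T}\sum_{t=1}^{T}m(Y_t,\theta_j,\delta_{\theta_j})\\
 &\quad+\left|\frac{1}{T}\sum_{t=1}^{T}f(Y_t,\theta_j)-E\{f(Y_1,\theta_j)\}\right|+\varepsilon.
\end{align*}
Taking the supremum over $\theta\in A$ produces a maximum over the finite index $j$, so the bound reads
\[
\sup_{\theta\in A}\left|\frac{1}{T}\sum_{t=1}^{T}f(Y_t,\theta)-E\{f(Y_1,\theta)\}\right|
\leq\max_{1\leq j\leq J}\frac{1}{T}\sum_{t=1}^{T}m(Y_t,\theta_j,\delta_{\theta_j})+\max_{1\leq j\leq J}\left|\frac{1}{T}\sum_{t=1}^{T}f(Y_t,\theta_j)-E\{f(Y_1,\theta_j)\}\right|+\varepsilon.
\]
Both $m(\cdot,\theta_j,\delta_{\theta_j})$ and $f(\cdot,\theta_j)$ are integrable by the envelope $g$, so Kolmogorov's strong law applied to each of the finitely many $j$ gives, $\mathbb{P}^Y$-almost surely, a limiting bound of $\max_j E\{m(Y_1,\theta_j,\delta_{\theta_j})\}+\varepsilon<2\varepsilon$. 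As $\varepsilon$ was arbitrary, the supremum tends to zero almost surely.

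The only genuinely delicate issue is measurability of the supremum $\sup_{\theta\in A}|T^{-1}\sum f(Y_t,\theta)-E\{f(Y_1,\theta)\}|$; this is handled by the observation that, since the integrand is continuous in $\theta$ and $A$ is separable, the supremum over $A$ coincides with the supremum over a countable dense subset and is therefore measurable. The rest of the argument is a routine compactness-plus-dominated-convergence exercise, with the main technical point being the need to establish the oscillation bound $E\{m(Y_1,\theta,\delta)\}\to 0$ uniformly enough to furnish, for each $\theta$, a suitable $\delta_\theta$ — precisely what dominated convergence supplies under the envelope assumption $|f|\leq g$ with $\mu(g)<\infty$.
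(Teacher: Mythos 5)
Your proof is correct; the paper does not prove this lemma itself but simply quotes it as Theorem 2 of Jennrich (1969), and your argument --- dominated convergence applied to the oscillation modulus $m(y,\theta,\delta)$, a finite subcover of the compact set $A$, and Kolmogorov's strong law at the finitely many centres, with measurability of the suprema handled via separability and continuity in $\theta$ --- is exactly the standard covering proof underlying that reference. Nothing is missing: the envelope bound $|f|\leq g$ with $\mu(g)<\infty$ is used precisely where it is needed, both to justify the dominated-convergence step for $E\{m(Y_1,\theta,\delta)\}\rightarrow 0$ and to make the strong law applicable at each of the finitely many points $\theta_j$.
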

Before we proceed with the proof of Theorem \ref{thm:moment_uclt}, we note that Lemma \ref{lem:bl_bound} is not formulated in terms of conditional laws. However, considering conditionally (upon $\mathcal{Y}_{T}$)
centred and independent random variables ${\xi}_{T,1},\ldots,{\xi}_{T,T}$ such that $\sum_{i=1}^{T}\mathrm{Var}\left\{ \xi_{i}(\theta)|Y_{1:T}\right\} =1$, we can apply the above lemma for every realization $Y_{1:T}=y_{1:T}$.
Denote $P_{T}^{y}$ a regular conditional distribution associated
with the law of $S_{T}={\xi}_{T,1}+\ldots+{\xi}_{T,T}$ given
$Y_{1:T}=y_{1:T}.$
By applying Lemma \ref{lem:bl_bound}, we get
\begin{align}
 \notag & d_{\mathrm{BL}}\big\{P_{T}^{y},\varphi(\,\cdot\,;0,1)\mid Y_{1:T} = y_{1:T}\big\} \\
 & \leq 4E\left[\sum_{i=1}^{T}\xi_{i}^{2}(\theta)1_{\{\left|\xi_{i}(\theta)\right|>1\}}\mid Y_{1:T}=y_{1:T}\right]+3E\left[\sum_{i=1}^{T}|\xi_{i}(\theta)|^{3}1_{\{\left|\xi_{i}(\theta)\right|\leq1\}}\mid Y_{1:T}=y_{1:T}\right].
\end{align}
Thus, if the terms on the r.h.s. go to zero in $\mathbb{P}^Y$-probability then $d_{\mathrm{BL}}\big\{P_{T}^{Y},\varphi(\,\cdot\,;0,1)\big\}\overset{\mathbb{P}^{Y}}{\longrightarrow}0$.
With this reasoning we can apply Lemma \ref{lem:bl_bound} to
prove Theorem \ref{thm:moment_uclt}.
\begin{proof}[Proof of Theorem \ref{thm:moment_uclt}, part \emph{a)}]
Define
\[
\xi_{T,t}(\theta)=\frac{\epsilon_{T}(Y_{t},\theta)}{\surd T \sigma_T(Y_{1:T}, \theta)},\quad S_{T}(\theta)=\sum_{t=1}^T \xi_{T,t}(\theta),
\]
where
\begin{align}\label{eq:condvarST}
\sigma_{T}^{2}(Y_{1:T},\theta) & =\frac{1}{T}\sum_{t=1}^{T}\mathrm{Var}\left\{ \epsilon_{T,t}(\theta)\mid\mathcal{\mathcal{Y}}_{T}\right\} .
\end{align}
Thus
\[
\mathrm{Var}  \left\{S_{T}(\theta)\mid\mathcal{\mathcal{Y}}_{T}\right\}=\sum_{t=1}^T \mathrm{Var} \left\{\xi_{T,t}(\theta)\right\}=1.
\]

In the following we will use the shorthands $\sigma_{T}(Y_{1:T},\theta)=\left\{ \sigma_{T}^{2}(Y_{1:T},\theta)\right\} ^{1/2}$
and $\sigma_{T}^{r}(Y_{1:T},\theta)=\left\{ \sigma_{T}^{2}(Y_{1:T},\theta)\right\} ^{r/2}$
for any real value $r$.

Then $S_{T}(\theta)$ fulfils the conditions of Lemma \ref{lem:bl_bound} conditionally on $\mathcal{Y}_T$.
The random variable $Z_{T}(\theta)$ defined in (\ref{eq:ZT}) can be rewritten as
\[
Z_{T}(\theta)=S_{T}(\theta)\sigma_{T}(Y_{1:T},\theta)-\frac{1}{2T}\sum_{t=1}^{T}\epsilon_{T}(Y_{t},\theta)^{2}+\sum_{t=1}^{T}R_{T}(Y_{t},\theta).
\]
We have for $Z\sim\mathcal{N}(0,1)$
\begin{align}
 & \sup_{\theta\in B(\bar{\theta})}d_\mathrm{BL}\left[\mathcal{L}aw\left\{ Z_{T}(\theta)\right\} ,\varphi\left\{ \cdot;-\sigma^{2}(\theta)/2,\sigma^{2}(\theta)\right\}\mid \mathcal{Y}_T \right]\nonumber \\
 & =\sup_{\theta\in B(\bar{\theta})}d_\mathrm{BL}\left[\mathcal{L}aw\left\{ Z_{T}(\theta)\right\} ,\mathcal{L}aw\left\{ Z\sigma(\theta)-\frac{\sigma^{2}(\theta)}{2}\right\}\mid \mathcal{Y}_T \right]\nonumber \\
 & =\sup_{\theta\in B(\bar{\theta})}\sup_{\substack{f\in\mathrm{BL}(\mathbb{R})\\
\|f\|_{\mathrm{BL}}\leq1 }}\Bigg|E\left[f\left\{ S_{T}(\theta)\sigma_{T}(Y_{1:T},\theta)-\frac{1}{2T}\sum_{t=1}^{T}\epsilon_{T}(Y_{t},\theta)^{2}+\sum_{t=1}^{T}R_{T}(Y_{t},\theta)\right\} \mid\mathcal{Y}_{T}\right] \\
& \qquad -E\left[f\left\{ Z\sigma(\theta)-\frac{\sigma^{2}(\theta)}{2}\right\} \right]\Bigg|\nonumber \\
 & \leq\sup_{\theta\in B(\bar{\theta})}\sup_{\substack{f\in\mathrm{BL}(\mathbb{R})\\
\|f\|_{\mathrm{BL}}\leq1}}\Bigg|E\left[f\left\{ S_{T}(\theta)\sigma_{T}(Y_{1:T},\theta)-\frac{1}{2T}\sum_{t=1}^{T}\epsilon_{T}(Y_{t},\theta)^{2}+\sum_{t=1}^{T}R_{T}(Y_{t},\theta)-\frac{\sigma^{2}(\theta)}{2}+\frac{\sigma^{2}(\theta)}{2}\right\} \mid\mathcal{Y}_{T}\right]\label{eq:clt_in1}\\
 & \text{\ensuremath{\quad\quad-E\left[f\left\{ S_{T}(\theta)\sigma_{T}(Y_{1:T},\theta)+\sum_{t=1}^{T}R_{T}(Y_{t},\theta)-\frac{\sigma^{2}(\theta)}{2}\right\} \mid\mathcal{Y}_{T}\right]}}\Bigg|\nonumber \\
 & \quad+\sup_{\theta\in B(\bar{\theta})}\sup_{\substack{f\in\mathrm{BL}(\mathbb{R})\\
\|f\|_{\mathrm{BL}}\leq1 }}\left|E\left[f\left\{ S_{T}(\theta)\sigma_{T}(Y_{1:T},\theta)+\sum_{t=1}^{T}R_{T}(Y_{t},\theta)-\frac{\sigma^{2}(\theta)}{2}\right\} \mid\mathcal{Y}_{T}\right]\right. \notag\\
&\qquad\qquad \left.-E\left[f\left\{ S_{T}(\theta)\sigma_{T}(Y_{1:T},\theta)-\frac{\sigma^{2}(\theta)}{2}\right\} \mid\mathcal{Y}_{T}\right]\right|\label{eq:clt_in2}\\
 & \quad+\sup_{\theta\in B(\bar{\theta})}\sup_{\substack{f\in\mathrm{BL}(\mathbb{R})\\
\|f\|_{\mathrm{BL}}\leq1}}\left|E\left[f\left\{ S_{T}(\theta)\sigma_{T}(Y_{1:T},\theta)-\frac{\sigma^{2}(\theta)}{2}\right\} \mid\mathcal{Y}_{T}\right]-E\left[f\left\{ Z\sigma(\theta)-\frac{\sigma^{2}(\theta)}{2}\right\} \right]\right|\label{eq:clt_in3}
\end{align}

Now we have for (\ref{eq:clt_in1})
\begin{align*}
\eqref{eq:clt_in1} & \leq\sup_{\theta\in B(\bar{\theta})}\sup_{\substack{f\in\mathrm{BL}(\mathbb{R})\\
\|f\|_{\mathrm{BL}}\leq1 }}\Bigg|E\Bigg[f\Bigg\{ S_{T}(\theta)\sigma_{T}(Y_{1:T},\theta)-\frac{1}{2T}\sum_{t=1}^{T}\epsilon_{T}(Y_{t},\theta)^{2}\\
 & \qquad\qquad +\sum_{t=1}^{T}R_{T}(Y_{t},\theta)-\frac{\sigma^{2}(\theta)}{2}+\frac{\sigma^{2}(\theta)}{2}\Bigg\} \mid\mathcal{Y}_{T}\Bigg]\\
 & \text{\ensuremath{\qquad-E\left[f\left\{ S_{T}(\theta)\sigma_{T}(Y_{1:T},\theta)+\sum_{t=1}^{T}R_{T}(Y_{t},\theta)-\frac{\sigma^{2}(\theta)}{2}\right\} \mid\mathcal{Y}_{T}\right]}}\Bigg|\\
 & \leq \sup_{\theta\in B(\bar{\theta})}E\left[\min\left\{ 1,\left|\frac{\sigma^{2}(\theta)}{2}-\frac{1}{2T}\sum_{t=1}^{T}\epsilon_{T}(Y_{t},\theta)^{2}\right|\right\} \mid\mathcal{Y}_{T}\right] 
\end{align*}
where we use that $f$ is bounded and Lipschitz.

We can bound this term by
\begin{align}
	\notag& \sup_{\theta\in B(\bar{\theta})}E\left(\min\left\{1,\left|\frac{\sigma^{2}(\theta)}{2}-\frac{1}{2T}\sum_{t=1}^{T}\epsilon_{T}(Y_{t},\theta)^{2}\right|\right\}\Bigg|\mathcal{Y}_{T}\right) \\
	\label{decomposition1stterm}
\begin{split}
	& \leq \sup_{\theta\in B(\bar{\theta})}E\left(\min\left\{1,\left|\frac{1}{2T}\sum_{t=1}^{T}\left\{\sigma^2(Y_{t},\theta) - \sigma^2(\theta)\right\}\right|\right\}\Bigg|\mathcal{Y}_{T}\right)\\
	&\qquad\qquad  + \sup_{\theta\in B(\bar{\theta})}E\left(\min\left\{1,\left|\frac{1}{2T}\sum_{t=1}^{T}\left\{\epsilon_{T}(Y_{t},\theta)^{2}-\sigma^2(Y_t, \theta)\right\}\right|\right\}\Bigg|\mathcal{Y}_{T}\right).
\end{split}
\end{align}
For any $0<\delta<1$, we can bound the first term on the r.h.s. of \eqref{decomposition1stterm} by
\begin{align*}
 & \sup_{\theta\in B(\bar{\theta})}E\left[\min\left\{1,\left|\frac{1}{2T}\sum_{t=1}^{T}\left\{\epsilon_{T}(Y_{t},\theta)^{2}-\sigma^2(Y_t, \theta)\right\}\right| \right\}\Bigg|\mathcal{Y}_T\right]\\
 & \leq \sup_{\theta\in B(\bar{\theta})}\left(E\left[\min\left\{1,\left|\frac{1}{2T}\sum_{t=1}^{T}\left\{\epsilon_{T}(Y_{t},\theta)^{2}-\sigma^2(Y_t, \theta)\right\}\right|^{1+\delta}\right\}\Bigg|\mathcal{Y}_{T}\right]\right)^{\frac{1}{1+\delta}} \\
 & \leq \left[\frac{C}{2^{1+\delta}T^{1+\delta}}\sum_{t=1}^{T}\sup_{\theta\in B(\bar{\theta})}E\left\{\left|\epsilon_{T}(Y_{t},\theta)^{2}-\sigma^2(Y_t, \theta)\right|^{1+\delta}\Bigg| \mathcal{Y}_T\right\}\right]^{\frac{1}{1+\delta}}\\
 &\leq  C
\left[\frac{C'}{2^{1+\delta}T^{1+\delta}}\sum_{t=1}^{T}
\left\{1+g(Y_t)\right\}\right]^{\frac{1}{1+\delta}}
  \rightarrow 0
\end{align*}
in $\mathbb{P}^Y$-probability by the law of large numbers using, in turn, Jensen's inequality, von Bahr--Esseen inequality \autocite{vonbahr1965} as $E\left\{\epsilon_T(Y_t, \theta)^2\mid \mathcal{Y}_T\right\} = \sigma^2(Y_t, \theta)$, $c_r$-inequality, \eqref{eq:momentepsilon} and Assumption \ref{ass4} for $\Delta=2\delta$, noting that
\begin{align*}
	\sigma^2(Y_t, \theta) &= E\left\{\epsilon_T(Y_t, \theta)^2\mid \mathcal{Y}_T \right\} \leq E\left\{|\epsilon_T(Y_t, \theta)|^{2+\Delta} \right\}^{2/(2+\Delta)} \\
	& \leq C \cdot \left\{g(Y_t) + 1\right\}^{2/(2+\Delta)},
\end{align*}
where the last inequality is due to \eqref{eq:momentepsilon}.
The second term on  the right-hand side of \eqref{decomposition1stterm} can be bounded
\begin{align*}
 & \sup_{\theta\in B(\bar{\theta})}E\left(\min\left\{1,\left|\frac{1}{2T}\sum_{t=1}^{T}\left\{\sigma^2(Y_{t},\theta) - \sigma^2(\theta)\right\}\right|\right\}\mid \mathcal{Y}_T\right) \\
 & \leq E\left(\min\left\{1,\sup_{\theta\in B(\bar{\theta})}\left|\frac{1}{2T}\sum_{t=1}^{T}\left\{\sigma^2(Y_{t},\theta) - \sigma^2(\theta)\right\}\right|\right\}\mid \mathcal{Y}_T\right).
\end{align*}
Noting that $\sigma^2(y,\theta)$ is continuous in $\theta$ for all $y$ by Assumption \ref{ass4} and $\sigma^2(y, \theta) \leq C\cdot \left\{1 + g(y) \right\}^{2/(2+\Delta)}$ we can apply Lemma \ref{lem:uniformconv} to get
\begin{align*}
	\sup_{\theta\in B(\bar{\theta})}\left|\frac{1}{2T}\sum_{t=1}^{T}\left\{\sigma^2(Y_{t},\theta) - \sigma^2(\theta)\right\}\right| \overset{\mathbb{P}^Y}{\rightarrow} 0
\end{align*}
and we can use dominated convergence to conclude that
\begin{equation*}
	E\left(E\left[\min\left\{1,\sup_{\theta\in B(\bar{\theta})}\left|\frac{1}{2T}\sum_{t=1}^{T}\left\{\sigma^2(Y_{t},\theta) - \sigma^2(\theta)\right\}\right|\right\}\mid \mathcal{Y}_T \right]\right) \rightarrow 0
\end{equation*}
and thus
\begin{equation*}
		E\left[\min\left\{1,\sup_{\theta\in B(\bar{\theta})}\left|\frac{1}{2T}\sum_{t=1}^{T}\left\{\sigma^2(Y_{t},\theta) - \sigma^2(\theta)\right\}\right|\right\}\mid \mathcal{Y}_T\right] \overset{\mathbb{P}^Y}{\rightarrow}0.
\end{equation*}

The quantity (\ref{eq:clt_in2}) can be upper bounded by
\begin{equation}
	\label{triangleRT}\eqref{eq:clt_in2}\leq E\left[\min\left\{ 1,\left|\sum_{t=1}^{T}R_{T}(Y_{t},\theta)\right|\right\} \mid\mathcal{Y}_{T}\right] \leq \sum_{t=1}^{T}E\left[\min\left\{ 1,\left|R_{T}(Y_{t},\theta)\right|\right\} \mid\mathcal{Y}_{T}\right].
\end{equation}
We will split the expectation into two terms
\begin{align}\label{eq:RTdecomposition}
	\notag & E\left[\min\left\{ 1,\left|R_{T}(Y_{t},\theta)\right|\right\} \mid\mathcal{Y}_{T}\right] \\
	& = E\left[\min\left\{ 1,\left|R_{T}(Y_{t},\theta)\right|\right\}1_{\left\{\left|\frac{\epsilon_T(Y_t,\theta)}{\surd T}\right| \leq 1 \right\}} \mid\mathcal{Y}_{T}\right] + E\left[\min\left\{ 1,\left|R_{T}(Y_{t},\theta)\right|\right\}1_{\left\{\left|\frac{\epsilon_T(Y_t,\theta)}{\surd T}\right| > 1 \right\}} \mid\mathcal{Y}_{T}\right].
\end{align}
Recall
\begin{align*}
	R_T(y, \theta) = \int_{0}^{\epsilon_{T}(y,\theta)/\surd T}\frac{u^{2}}{1+u}\mathrm{d}u.
\end{align*}
We investigate the integral
\begin{equation}\label{eq:psifunction}
	\Psi(x)=\int_0^x \frac{u^2}{1+u} \mathrm{d}u
\end{equation}
in more detail (see also Figure \ref{fig:func}), where in the case $x<0$, we interpret the above as an integral over the interval $[x,0]$. On the interval $(-1,1]$ we can bound the function
\begin{equation*}
	\frac{u^2}{1+u} \leq \frac{|u|^{1+\Delta}}{1+u},
\end{equation*}
as $0<\Delta<1$ where we show $\Delta = 0.1$ as an example in Figure \ref{fig:func}. Subsequently, we bound for $x\in (-1,1]$
\begin{equation*}
	\left|\int_0^x \frac{u^2}{1+u} \mathrm{d}u \right| \leq \left|\int_0^x \frac{|u|^{1+\Delta}}{1+u} \mathrm{d}u\right| \leq \left| x \cdot \frac{|x|^{1+\Delta}}{1+x} \right|,
\end{equation*}
i.e. the box containing the area under the curve. This is visualized in Figure \ref{fig:func}. The integral (shaded blue) is bounded by the striped box.
Hence, on the set $|\epsilon_T(y, \theta)/\surd T| \leq 1$
\begin{equation*}
	\left|\int_0^{\epsilon_T(y, \theta)/\surd T} \frac{u^2}{1+u} \mathrm{d}u\right| \leq
	\left| \frac{|\epsilon_T(y, \theta)|^{2+\Delta}}{T^{1+\Delta/2}}\frac{1}{1+\epsilon_T(y, \theta)/\surd T}  \right|.
\end{equation*}

\begin{figure}
	\centering
	\includegraphics[width=.5\textwidth]{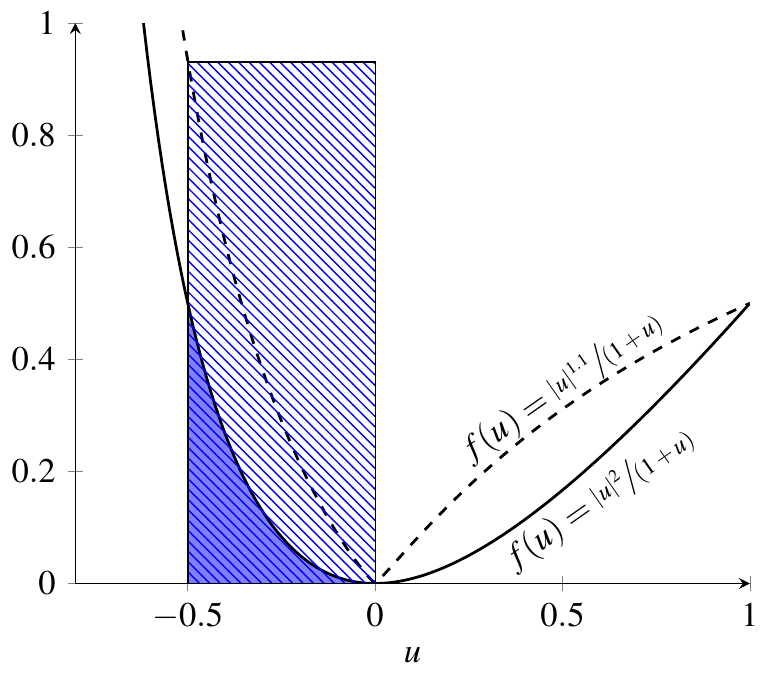}
	\caption{For $x\in (-1,1]$, $x=-0.5$ on the graph, the remainder of our expansion is estimated by the integral under the solid curve (blue shaded area). We bound this integral first by dashed line and then we approximate the integral by the box containing the area (lines).}
	\label{fig:func}
\end{figure}

For any non-negative random variable $X$ and event $A$, we have the identity $$E\left\{\min(1,X) 1_A\right\}\leq E\left\{X 1_{X\leq 1} 1_A\right\}+\mathbb{P}(X>1),$$ so we can bound the first term on the right-hand side of \eqref{eq:RTdecomposition} for every $t=1,\ldots, T$
\begin{align*}
	& E\left[\min\left\{ 1,\left|R_{T}(Y_t,\theta)\right|\right\}1_{\left\{\left|\frac{\epsilon_T(Y_t,\theta)}{\surd T}\right| \leq 1 \right\}} \mid \mathcal{Y}_T \right] \\
	& \leq E\left[\left|\frac{\epsilon_T^{2+\Delta}(Y_t, \theta)/T^{1+\Delta/2}}{1 + \epsilon_T(Y_t, \theta)/\surd T}\right| 1_{\left\{\left|\frac{\epsilon_T^{2+\Delta}(Y_t, \theta)/T^{1+\Delta/2}}{1 + \epsilon_T(Y_t, \theta)/\surd T}\right| \leq 1 \right\}}1_{\left\{\left|\frac{\epsilon_T(Y_t,\theta)}{\surd T}\right| \leq 1 \right\}}\Bigg| \mathcal{Y}_T\right]  \\
	& \qquad + \mathbb{P}\left\{\left|\frac{\epsilon_T^{2+\Delta}(Y_t, \theta)/T^{1+\Delta/2}}{1 + \epsilon_T(Y_t, \theta)/\surd T}\right| > 1\Bigg| \mathcal{Y}_T\right\}.
\end{align*}
By inspection of the function, similarly to before,
\begin{equation*}
	u \mapsto \frac{|u|^{2+\Delta}}{1+u}
\end{equation*}
one can easily verify that there exist $0 < \delta_1 < 1$  and $\delta_2 >0$ such that
\begin{equation*}
	\frac{|u|^{2+\Delta}}{1+u} \leq 1 \Leftrightarrow -\delta_1 \leq u \leq \delta_2.
\end{equation*}
Thus we have
\begin{align}
	\notag & E\left[\left|\frac{\epsilon_T^{2+\Delta}(Y_t, \theta)/T^{1+\Delta/2}}{1 + \epsilon_T(Y_t, \theta)/\surd T}\right| 1_{\left\{\frac{\epsilon_T^{2+\Delta}(Y_t, \theta)/T^{1+\Delta/2}}{1 + \epsilon_T(Y_t, \theta)/\surd T} \leq 1 \right\}} 1_{\left\{\left|\frac{\epsilon_T(Y_t,\theta)}{\surd T}\right| \leq 1 \right\}}\Bigg| \mathcal{Y}_T\right] \\
 \notag & \leq E\left[\left|\frac{\epsilon_T^{2+\Delta}(Y_t, \theta)/T^{1+\Delta/2}}{1 + \epsilon_T(Y_t, \theta)/\surd T}\right| 1_{\left\{-\delta_1 \leq \epsilon_T(Y_t, \theta)/\surd T \leq \delta_2 \right\}}\Bigg| \mathcal{Y}_T\right] \\
	\label{boundRT1}& \leq \frac{1}{(1-\delta_1)T^{1+\Delta/2}} E\left[\left|\epsilon_T(Y_t, \theta)\right|^{2+\Delta}\Big| \mathcal{Y}_T\right],
\end{align}
while
\begin{align}
	\notag& \mathbb{P}\left\{\left|\frac{\epsilon_T^{2+\Delta}(Y_t, \theta)/T^{1+\Delta/2}}{1 + \epsilon_T(Y_t, \theta)/\surd T}\right| > 1\Bigg| \mathcal{Y}_T\right\} \\
	\notag& \leq \mathbb{P}\left\{\left|\frac{\epsilon_T(Y_t, \theta)}{T^{1/2}}\right| > \min\{\delta_1, \delta_2\}\Big| \mathcal{Y}_T\right\} \\
	\label{boundRT2} & \leq \frac{1}{\min\{\delta_1, \delta_2\}^{2+\Delta} T^{1+\Delta/2}} E\left[|\epsilon_T(Y_t, \theta)|^{2+\Delta}\Big| \mathcal{Y}_T\right].
\end{align}
The second term on the right-hand side of \eqref{eq:RTdecomposition} is bounded by
\begin{align}
	\notag& E\left[\min\left\{ 1,\left|R_{T}(Y_{t},\theta)\right|\right\}1_{\left\{\left|\frac{\epsilon_T(Y_t,\theta)}{\surd T}\right| > 1 \right\}} \mid\mathcal{Y}_{T}\right] \\
	\label{eq:RT2}& \leq E\left[\left|R_{T}(Y_{t},\theta)\right|1_{\left\{\left|\frac{\epsilon_T(Y_t,\theta)}{\surd T}\right| > 1 \right\}} \mid\mathcal{Y}_{T}\right].
\end{align}
As $\epsilon_T(Y_t,\theta)/\surd T\geq-1$, \eqref{eq:RT2} is null for $\epsilon_T(Y_t,\theta)<-1$ so writing $X^+ = \max\{0, X\}$ this can be rewritten as
\begin{align*}
& E\left\{\int_{0}^{\epsilon_{T}(Y_t,\theta)/\surd T}\frac{u^{2}}{1+u}\mathrm{d}u 1_{\{\epsilon_{T}(Y_t,\theta)/\surd T \geq 1\}} \Big|\mathcal{Y}_T \right\} \\
& \leq E\left[ \int_0^{(\epsilon_{T}(Y_t,\theta)/\surd T)^+} \frac{u^2}{1+u} \mathrm{d}u \Big|\mathcal{Y}_T\right] \\
& =\int_{0}^{\infty}\frac{u^2}{1+u}\mathbb{P}\left\{\epsilon_{T}(Y_t,\theta)^+ > \surd Tu \Big|\mathcal{Y}_T\right\}\mathrm{d}u,
\end{align*}
where we have used that for the function \eqref{eq:psifunction} is increasing and differentiable on its domain so
\[
E\left\{ \Psi\left(\left|X\right|\right)\right\} =\Psi(0)+\int_{0}^{\infty}\Psi'(u)P(\left|X\right|>u)\mathrm{d}u.
\]
For $\Delta\in(0,1)$, we bound the remainder using
\begin{align}
 \notag & = \int_{0}^{\infty}\frac{u^{2}}{1+u}\mathbb{P}\left\{\epsilon_{T}(Y_t,\theta)^+ > \surd Tu \Big|\mathcal{Y}_T \right\}\mathrm{d}u\\
 \notag  & \leq \int_{0}^{\infty}\frac{u^{2}}{1+u}\frac{E\left\{ \left|\varepsilon_{T}(Y_t,\theta)\right|^{2+\Delta}\Big|\mathcal{Y}_T\right\} }{T^{(2+\Delta)/2}u^{2+\Delta}}\mathrm{d}u\\
  \notag & =\int_{0}^{\infty}\frac{1}{\left(1+u\right)u^{\Delta}}\mathrm{d}u\frac{1}{T^{1+\Delta/2}}E\left\{ \left|\varepsilon_{T}(Y_t,\theta)\right|^{2+\Delta}\Big|\mathcal{Y}_T\right\}\\
  \label{boundRT3} & =C(\Delta)\frac{E\left\{ \left|\varepsilon_{T}(Y_t,\theta)\right|^{2+\Delta}\Big|\mathcal{Y}_T\right\}}{T^{1+\Delta/2}}
\end{align}
noting that
\[
\int_{0}^{\infty}\frac{1}{\left(1+u\right)u^{\Delta}}du=C(\Delta)<\infty
\]
for $\Delta\in(0,1)$. Hence we can bound \eqref{eq:RTdecomposition} by the sum of \eqref{boundRT1}, \eqref{boundRT2} and \eqref{boundRT3} so, by using \eqref{triangleRT}, we obtain a bound for \eqref{eq:clt_in2}
\begin{align*}
	\eqref{eq:clt_in2} & \leq \frac{1}{(1-\delta_1)T^{(1+\Delta)/2}} \sum_{t=1}^{T}\sup_{\theta \in B(\bar{\theta})}E\left\{ \left|\varepsilon_{T}(Y_t,\theta)\right|^{2+\Delta}
	\Big|\mathcal{Y}_T\right\} \\
	& \qquad + \frac{1}{\min\{\delta_1, \delta_2\}^{2+\Delta} T^{(1+\Delta)/2}}\sum_{t=1}^T \sup_{\theta \in B(\bar{\theta})}E\left\{ \left|\varepsilon_{T}(Y_t,\theta)\right|^{2+\Delta}
		\Big|\mathcal{Y}_T\right\}\\
	& \qquad + C(\Delta)\frac{1}{T^{1+\Delta/2}}\sum_{t=1}^{T}\sup_{\theta \in B(\bar{\theta})}E\left\{ \left|\varepsilon_{T}(Y_t,\theta)\right|^{2+\Delta}	\Big|\mathcal{Y}_T\right\}
	\rightarrow 0
\end{align*}
which all converge in $\mathbb{P}^Y$-probability  by \eqref{eq:momentepsilon}, Assumption \ref{ass4} and the law of large numbers.

We are now going to bound \eqref{eq:clt_in3}. We will use the fact that any constant $c$ and any two random variables $X_1, X_2$ we have for $c>0$
\begin{align*}
	\sup_{\substack{f\in\mathrm{BL}(\mathbb{R})\\
\|f\|_{\mathrm{BL}}\leq1}} \left|E\left[f(cX_1) - f(cX_2)\right]\right|&\leq \sup_{\substack{f\in\mathrm{BL}(\mathbb{R})\\
\|f\|_{\mathrm{BL}}\leq c}} \left|E\left[f(X_1) - f(X_2)\right]\right| \\
	& = \sup_{\substack{f\in\mathrm{BL}(\mathbb{R})\\
\|f\|_{\mathrm{BL}}\leq c}} \left|E\left[c\left\{\frac{f(X_1)}{c} - \frac{f(X_2)}{c}\right\}\right]\right| \\
	& \leq c\cdot \sup_{\substack{f\in\mathrm{BL}(\mathbb{R})\\
\|f\|_{\mathrm{L}}\leq 1}} \left|E\left[\left\{f(X_1) - f(X_2)\right\}\right] \right|.
\end{align*}
Note that we only require $\|f\|_\mathrm{L} \leq 1$ ($\|f\|_\mathrm{L}$ denoting the Lipschitz constant) in the last line alleviating the bound on the supremum $\|f\|_\infty$. The aim of the following paragraphs is to apply the above inequality and Lemma \ref{lem:bl_bound} to find a bound on (\ref{eq:clt_in3}). Omitting for the moment the supremum over the set $B(\bar{\theta})$ we compute for (\ref{eq:clt_in3})
\begin{align*}
 &\sup_{\substack{f\in\mathrm{BL}(\mathbb{R})\\
\|f\|_{\mathrm{BL}}\leq1}}\left|E\left[f\left\{ S_{T}(\theta)\sigma_{T}(Y_{1:T},\theta)\right\} \mid\mathcal{Y}_{T}\right]-E\left[f\left\{ Z\sigma(\theta)\right\} \mid\mathcal{Y}_{T}\right]\right|\\
 & \leq \sup_{\substack{f\in\mathrm{BL}(\mathbb{R})\\
\|f\|_{\mathrm{BL}}\leq1}}\left|E\left[f\left\{ S_{T}(\theta)\sigma_{T}(Y_{1:T},\theta)\right\} \mid\mathcal{Y}_{T}\right]-E\left[f\left\{ Z\sigma_{T}(Y_{1:T},\theta)\right\} \mid\mathcal{Y}_{T}\right]\right| \\
 & \qquad +\sup_{\substack{f\in\mathrm{BL}(\mathbb{R})\\
\|f\|_{\mathrm{BL}}\leq1}}\left|E\left[f\left\{ Z\sigma_{T}(Y_{1:T},\theta)\right\} \mid\mathcal{Y}_{T}\right]-E\left[f\left\{ Z\sigma(\theta)\right\} \right]\right|\\
 & \leq\sigma_{T}(Y_{1:T},\theta)\sup_{\substack{f\in\mathrm{BL}(\mathbb{R})\\ \|f\|_{\mathrm{L}}\leq1}}\left|E\left[f\left\{ S_{T}(\theta)\right\} \mid\mathcal{Y}_{T}\right]-E\left[f\left\{ Z\right\} \right]\right|+E\left[|Z|\right]\left|\sigma_{T}(Y_{1:T},\theta)-\sigma(\theta)\right|\\
 & \leq\sigma_{T}(Y_{1:T},\theta)\sup_{\substack{f\in\mathrm{BL}(\mathbb{R})\\
\|f\|_{\mathrm{L}}\leq1 }}\left|E\left[f\left\{ S_{T}(\theta)\right\} \mid\mathcal{Y}_{T}\right]-E\left[f\left\{ Z\right\} \right]\right| + \left(\frac{2}{\pi}\right)^{1/2}\left|\sigma_{T}(Y_{1:T},\theta)-\sigma(\theta)\right|, \numberthis \label{eq:three_eight}
\end{align*}
We have already shown
\begin{align*}
  \sup_{\theta\in B(\bar{\theta})}\left|\sigma_{T}^{2}(Y_{1:T},\theta)-\sigma^{2}(\theta)\right|
  =\sup_{\theta\in B(\bar{\theta})}\left|\sum_{t=1}^{T}\frac{\sigma^{2}\left(Y_{t},\theta\right)}{T}-\sigma^{2}(\theta)\right|\overset{\mathbb{P}^Y}{\longrightarrow}0,
\end{align*}
by the uniform law of large numbers (Lemma \ref{lem:uniformconv}). Using $|\surd{a}-\surd{b}| \leq \surd{|a-b|}$, we have
\[
\sup_{\theta\in B(\bar{\theta})}\left|\sigma_{T}(Y_{1:T},\theta)-\sigma(\theta)\right|\overset{\mathbb{P}^{Y}}{\longrightarrow}0.
\]
For the first part of \eqref{eq:three_eight}, by Lemma~\ref{lem:bl_bound} applied conditionally on $\mathcal{Y}_T$
\begin{align}
 & \sup_{\theta\in B(\bar{\theta})}\sigma_{T}(Y_{1:T},\theta)\sup_{\substack{f\in\mathrm{BL}(\mathbb{R})\\
\|f\|_{\mathrm{L}}\leq1}
}\left|E\left[f\left\{ \frac{1}{\surd T}\sum_{t=1}^{T}\frac{\epsilon_{T}(Y_{t},\theta)}{\sigma_{T}(Y_{1:T},\theta)}\right\} \mid\mathcal{Y}_{T}\right]-E\left[f\left\{ Z\right\} \right]\right|\nonumber \\
 & \leq 4\sup_{\theta\in B(\bar{\theta})}\sigma_{T}(Y_{1:T},\theta)\sum_{t=1}^{T}E\left[\left\{ \frac{\epsilon(Y_{t},\theta)}{\surd T\sigma_{T}(Y_{1:T},\theta)}\right\}^{2}1_{\left\{ \frac{\left|\epsilon_{T}(Y_{t},\theta)\right|}{\surd T\sigma_{T}(Y_{1:T},\theta)}>1\right\} }\mid\mathcal{Y}_{T}\right]\label{eq:stein_est_1} \\
 & +3\sup_{\theta\in B(\bar{\theta})}\sigma_{T}(Y_{1:T},\theta)\sum_{i=1}^{T}E\left[\left|\frac{\epsilon(Y_{t},\theta)}{\surd T\sigma_{T}(Y_{1:T},\theta)}\right|^{3}1_{\left\{ \left|\frac{\epsilon_{T}(Y_{t},\theta)}{\surd T\sigma_{T}(Y_{1:T},\theta)}\right|\leq1\right\} }\mid\mathcal{Y}_{T}\right].\label{eq:stein_est_2}
\end{align}
In order to control the $\sigma^2(Y_{1:T}, \theta)$ term consider the set
\begin{equation*}
	A_T(\delta) = \left\{y_{1:T} : \sup_{\theta\in B(\bar{\theta})}\left|\sigma_{T}^2(y_{1:T},\theta)-\sigma^2(\theta)\right|\leq \delta \right\}.
\end{equation*}
The uniform convergence of $\sigma_{T}^{2}(Y_{1:T},\theta)$ means that for any $\delta>0$
\begin{equation*}
\mathbb{P}^{Y}\left\{A_T(\delta)^\complement\right\} \rightarrow 0
\end{equation*}
as $T\rightarrow\infty$.
Choosing $\delta > 0$  for any family of  random variables $\gamma_{T}(Y_{1:T},\theta)$ we have 
\begin{align*}
 & \mathbb{P}^{Y}\left(\left|\sup_{\theta\in B(\bar{\theta})}\gamma_{T}(Y_{1:T},\theta)\right|>\delta\right)\\
 & =\mathbb{P}^{Y}\left(\left\{ \left|\sup_{\theta\in B(\bar{\theta})}\gamma_{T}(Y_{1:T},\theta)\right|>\delta\right\} \cap A_T(\delta)\right)+\mathbb{P}^{Y}\left(\left\{ \left|\sup_{\theta\in B(\bar{\theta})}\gamma_{T}(Y_{1:T},\theta)\right|>\delta\right\} \cap A_T(\delta)^{\complement}\right)
\end{align*}
where we have already shown
\begin{equation}\label{eq:convergence_complement}
\mathbb{P}^{Y}\left[\left\{ \left|\sup_{\theta\in B(\bar{\theta})}\gamma_{T}(Y_{1:T},\theta)\right|>\eta\right\} \cap A_T(\delta)^\complement\right]\leq\mathbb{P}^{Y}\left\{ A_T(\delta)^\complement\right\} \rightarrow0.
\end{equation}
Hence, for showing the convergence in probability for a random variable
$\gamma_{T}(Y_{1:T},\theta)$ it suffices to ensure convergence on
the set $A(\delta)$.
On the set $A(\delta)$ we can estimate $\sigma_{T}^{2}\left\{ Y_{1:T}(\omega),\theta\right\}\geq\sigma^{2}(\theta)-\delta$ for all $\theta$. By continuity of $\sigma^2(\theta)$---and by shrinking $B(\bar{\theta})$ if necessary---we further have $\sigma^{2}(\theta)\geq\sigma^{2}(\bar{\theta})-\delta$ for all $\theta\in B(\bar{\theta})$ and we get for \eqref{eq:stein_est_1}, ignoring the constant for now
\begin{align*}
 & \sup_{\theta\in B(\bar{\theta})}\sigma_{T}(Y_{1:T},\theta)\sum_{t=1}^{T}E\left[\left\{ \frac{\epsilon(Y_{t},\theta)}{\surd T\sigma_{T}(Y_{1:T},\theta)}\right\}^{2}1_{\left\{ \frac{\left|\epsilon_{T}(Y_{t},\theta)\right|}{\surd T\sigma_{T}(Y_{1:T},\theta)}>1\right\} }\mid\mathcal{Y}_{T}\right] \mathbf{1}_{A_T(\delta)}\\
 & \leq\sup_{\theta\in B(\bar{\theta})}\frac{1}{\sigma_{T}^{1+\Delta}(Y_{1:T},\theta)}\sum_{t=1}^{T}E\left[\left\{ \frac{\epsilon(Y_{t},\theta)}{\surd T}\right\} ^{2+\Delta}1_{\left\{ \frac{\left|\epsilon_{T}(Y_{t},\theta)\right|}{\surd T\sigma_{T}(Y_{1:T},\theta)}>1\right\} }\mid\mathcal{Y}_{T}\right]\mathbf{1}_{A_T(\delta)}\\
 & \leq\sup_{\theta\in B(\bar{\theta})}\frac{1}{\left\{\sigma^2(\theta)-\delta\right\}^{(1+\Delta)/2}}\frac{1}{T^{1+\Delta/2}}\sum_{t=1}^{T}E\left\{\left|\epsilon_{T}(Y_{t},\theta)\right|^{2+\Delta}\mid\mathcal{Y}_{T}\right\}\mathbf{1}_{A_T(\delta)}\\
 & \leq\frac{C}{\left\{ \sigma^{2}(\bar{\theta})-2\delta\right\}^{(1+\Delta)/2}T^{1+\Delta/2}}\sum_{t=1}^{T}\left\{g(Y_{t})+1\right\}\overset{\mathbb{P}^Y}{\longrightarrow}0
\end{align*}
independently of $\theta$ by the Marcinkiewicz-Zygmund law of large
numbers \autocite[Theorem 4.23]{Kallenberg2006}. Together with \eqref{eq:convergence_complement} we can conclude that
\eqref{eq:stein_est_1}, vanishes in probability.

The second part,  \eqref{eq:stein_est_2}, can be controlled similarly via
\begin{align*}
 & \sigma_{T}(Y_{1:T},\theta)\sum_{i=1}^{T}E\left[\left|\frac{\epsilon(Y_{t},\theta)}{\surd T\sigma_{T}(Y_{1:T},\theta)}\right|^{3}1_{\left\{ \left|\frac{\epsilon_{T}(Y_{t},\theta)}{\surd T\sigma_{T}(Y_{1:T},\theta)}\right|\leq1\right\} }\mid\mathcal{Y}_{T}\right]\mathbf{1}_{A_T(\delta)}\\
 & \leq\frac{1}{\sigma_{T}^{1+\Delta}(Y_{1:T},\theta)}\sum_{t=1}^{T}E\left[\left|\frac{\epsilon(Y_{t},\theta)}{\surd T}\right|^{2+\Delta}1_{\left\{ \left|\frac{\epsilon_{T}(Y_{t},\theta)}{\surd T\sigma_{T}(Y_{1:T},\theta)}\right|\leq1\right\} }\mid\mathcal{Y}_{T}\right]\mathbf{1}_{A_T(\delta)} \\
 & \leq\frac{1}{\left\{\sigma^{2}(\theta)-\delta\right\}^{(1+\Delta)/2}}\sum_{t=1}^{T}E\left[\left|\frac{\epsilon(Y_{t},\theta)}{\surd T}\right|^{2+\Delta}1_{\left\{ \left|\frac{\epsilon_{T}(Y_{t},\theta)}{\surd T\sigma_{T}(Y_{1:T},\theta)}\right|\leq1\right\} }\mid\mathcal{Y}_{T}\right]\mathbf{1}_{A_T(\delta)}\\
 & \leq\frac{1}{\left\{ \sigma^{2}(\bar{\theta})-2\delta\right\} ^{(1+\Delta)/2}}\frac{1}{T^{1+\Delta/2}}\sum_{t=1}^{T}E\left\{ \left|\epsilon(Y_{t},\theta)\right|^{2+\Delta}\mid\mathcal{Y}_{T}\right\}\mathbf{1}_{A_T(\delta)} \\
 & \leq\frac{C}{ \left\{\sigma^{2}(\bar{\theta})-2\delta\right\} ^{(1+\Delta)/2}T^{1+\Delta/2}}\sum_{t=1}^{T}\left\{g(Y_{t})+1\right\}\overset{\mathbb{P}^Y}{\longrightarrow}0,
\end{align*}
which also does not depend on $\theta$. A similar argument to the one used to conclude in the case of \eqref{eq:stein_est_1} suffices also in this case.
\end{proof}
Turning to part \emph{b)}, we analyse $Z_{T}(\theta)$ under stationarity.
Therefore we need to introduce the probability measure of the auxiliary
variables under stationarity, i.e. the distribution of the auxiliary
variables conditional on the current state $\theta.$ The conditional
density is given by
\[
\pi(u\mid\theta)=\frac{\pi(u,\theta)}{\pi(\theta)}=\pi(\theta)\frac{\hat{p}(y\mid\theta,u)}{p(y\mid\theta)}m(u)/\pi(\theta)=\frac{\hat{p}(y\mid\theta,u)}{p(y\mid\theta)}m(u)
\]
which gives us the Radon-Nikodym derivative
\[
\frac{d\pi(\cdot\mid\theta)}{dm}=\prod_{t=1}^{T}\frac{\hat{p}(y_{t}\mid\theta,u_{t})}{p(y_{t}\mid\theta)}=\exp\left\{ Z_{T}(\theta)\right\}
\]
or alternatively
\begin{align*}
\prod_{t=1}^{T}\frac{\hat{p}(y_{t}\mid\theta,u_{t})}{p(y_{t}\mid\theta)} & =\prod_{t=1}^{T}\left\{ \frac{\hat{p}(y_{t}\mid\theta,u_{t})-p(y_{t}\mid\theta)}{p(y_{t}\mid\theta)}+1\right\} \\
 & =\prod_{t=1}^{T}\left\{ \frac{\epsilon_{T}(y_{t},\theta)}{\surd T}+1\right\} .
\end{align*}
The limiting distribution will now be Gaussian with a shifted mean, i.e. $\varphi(\cdot; \sigma^2(\theta)/2, \sigma^2(\theta))$. For $Z\sim \mathcal{N}(0, 1)$ we will make use of the following identity
\begin{equation*}
	E\left\{f\left(Z\sigma + \frac{\sigma^2}{2}\right)\right\} = E\left\{f\left(Z\sigma - \frac{\sigma^2}{2}\right)\exp\left(Z\sigma - \frac{\sigma^2}{2}\right)\right\}
\end{equation*}
for every bounded Lipschitz function $f$. The identity is not restricted to this case, but we will only consider bounded Lipschitz functions. Before we present the proof, we have the following useful result.

\setcounter{proposition}{6}
\begin{proposition}
\label{prop:radon_nikodym_bounded}The Radon-Nikodym derivative is
asymptotically uniformly bounded in its second moment,
\[
\limsup_{T\rightarrow\infty}\sup_{\theta\in B(\bar{\theta})}E\left[\prod_{t=1}^{T}\left\{\frac{\epsilon_{T}(Y_{t},\theta)}{\surd T}+1\right\}^{2}\mid\mathcal{Y}_{T}\right]<\infty.
\]
\end{proposition}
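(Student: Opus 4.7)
The plan is to reduce the conditional moment computation to an explicit product, then apply the uniform law of large numbers from Lemma \ref{lem:uniformconv}.

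First, under the reference measure $m_{T,\theta}$, the auxiliary vectors $U_1,\ldots,U_T$ are independent (conditional on $\mathcal{Y}_T$), so the conditional expectation factorises:
\[
E\left[\prod_{t=1}^{T}\left\{1+\frac{\epsilon_{T}(Y_{t},\theta)}{\surd T}\right\}^{2}\mid\mathcal{Y}_{T}\right]=\prod_{t=1}^{T}E\left[\left\{1+\frac{\epsilon_{T}(Y_{t},\theta)}{\surd T}\right\}^{2}\mid\mathcal{Y}_{T}\right].
\]
Next, I would observe that $1+\epsilon_{T}(Y_{t},\theta)/\surd T = T^{-1}\sum_{i=1}^{T}\overline{w}(Y_{t},U_{t,i},\theta)$ is the empirical mean of $T$ conditionally independent random variables with mean $1$ and variance $\sigma^{2}(Y_{t},\theta)$. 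Hence each factor equals $1+\sigma^{2}(Y_{t},\theta)/T$, and using $1+x\leq e^{x}$ for $x\geq 0$ I obtain
\[
E\left[\prod_{t=1}^{T}\left\{1+\frac{\epsilon_{T}(Y_{t},\theta)}{\surd T}\right\}^{2}\mid\mathcal{Y}_{T}\right]\leq\exp\left\{\frac{1}{T}\sum_{t=1}^{T}\sigma^{2}(Y_{t},\theta)\right\}.
\]

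The final step is to bound the right-hand side uniformly in $\theta\in B(\bar{\theta})$. Assumption \ref{ass4} gives $\sigma^{2}(y,\theta)\leq E\{\overline{w}(y,U_{1,1},\theta)^{2+\Delta}\}^{2/(2+\Delta)}\leq g(y)^{2/(2+\Delta)}$, and Jensen's inequality together with $\mu(g)<\infty$ ensure that $g^{2/(2+\Delta)}$ is $\mu$-integrable; furthermore, Assumption \ref{ass4} states that $\theta\mapsto\sigma^{2}(y,\theta)$ is continuous on the closed ball $B(\bar{\theta})$. The hypotheses of Lemma \ref{lem:uniformconv} are therefore satisfied, yielding
\[
\sup_{\theta\in B(\bar{\theta})}\left|\frac{1}{T}\sum_{t=1}^{T}\sigma^{2}(Y_{t},\theta)-\sigma^{2}(\theta)\right|\rightarrow 0\quad \mathbb{P}^{Y}\text{-a.s.}
\]
Since $\sigma^{2}(\theta)=E\{\sigma^{2}(Y_{1},\theta)\}$ inherits continuity in $\theta$ by dominated convergence and is thus bounded on the compact set $B(\bar{\theta})$, we conclude $\mathbb{P}^{Y}$-almost surely
\[
\limsup_{T\to\infty}\sup_{\theta\in B(\bar{\theta})}E\left[\prod_{t=1}^{T}\left\{1+\frac{\epsilon_{T}(Y_{t},\theta)}{\surd T}\right\}^{2}\mid\mathcal{Y}_{T}\right]\leq \exp\left\{\sup_{\theta\in B(\bar{\theta})}\sigma^{2}(\theta)\right\}<\infty.
\]

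The only mild subtlety is verifying that $\sigma^{2}(y,\theta)$ admits an integrable envelope; this is precisely where the $(2+\Delta)$-moment condition of Assumption \ref{ass4} is used, coupled with Jensen's inequality. Everything else is bookkeeping: conditional independence, the elementary inequality $1+x\leq e^{x}$, and an invocation of the uniform law of large numbers.
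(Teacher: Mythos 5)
Your proof is correct and follows essentially the same route as the paper's: factorise by conditional independence of the $U_t$, identify each factor as $1+\sigma^{2}(Y_t,\theta)/T$, apply $1+x\leq e^{x}$, and control the resulting average of $\sigma^{2}(Y_t,\theta)$ via the moment bound of Assumption 4. The only cosmetic difference is at the last step, where you invoke the uniform law of large numbers for $T^{-1}\sum_t\sigma^{2}(Y_t,\theta)$, while the paper simply bounds $\sigma^{2}(Y_t,\theta)$ by the $\theta$-free envelope $C\{1+g(Y_t)\}^{2/(2+\Delta)}$ and uses the ordinary law of large numbers; both are valid.
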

\begin{proof}
Using independence of $\left(U_{t,1:T}\right)_{t\geq1}$ we compute
for all $\theta\in B(\bar{\theta})$
\begin{align*}
\prod_{t=1}^{T}E\left\{ \frac{\epsilon_{T}(Y_{t},\theta)^{2}}{T}+2\frac{\epsilon_{T}(Y_{t},\theta)}{\surd T}+1\mid\mathcal{Y}_{T}\right\}  & =\prod_{t=1}^{T}\left\{\frac{\sigma^{2}(Y_{t},\theta)}{T}+1\right\}\\
 & \leq\exp\left\{\sum_{t=1}^{T}\frac{\sigma^{2}(Y_{t},\theta)}{T}\right\}\\
 & \leq\exp\left\{\sum_{t=1}^{T}\frac{C(1+g(Y_t))^{2/(2+\Delta)}}{T}\right\}\\
 & \rightarrow\exp\left[C E\left\{(1+g\left(Y_{1})\right)^{2/(2+\Delta)}\right\} \right]
\end{align*}
in $\mathbb{P}^Y$-probability, which is clearly finite by Assumption \ref{ass4}.
\end{proof}

In the following we denote $E$ the expectation under $m$ and $\tilde{E}$
the expectation under $\pi(\cdot\mid\theta)$. Using the Radon-Nikodym
derivative, it is possible to relate the expectation of $\epsilon_T(y,\theta)^{k}$
under $U$ at stationarity (conditional on $\theta$) to the expectation
under $U\sim m(\cdot)$ by
\begin{align*}
E_{U\sim\pi(\cdot\mid\theta)}\left\{ \epsilon_T(y,\theta)^{k}\right\}  & =\frac{1}{\surd T}E_{U\sim m(\cdot)}\left\{ \epsilon_T(y,\theta)^{k+1}\right\} +E_{U\sim m(\cdot)}\left\{ \epsilon_T(y,\theta)^{k}\right\};
\end{align*}
see \cite[Lemma 4]{deligiannidis2015} for a proof. We are now able to prove the second part of Theorem \ref{thm:moment_uclt}.
\begin{proof}[Proof of Theorem \ref{thm:moment_uclt}, part \emph{b)}]
 Again we take $Z\sim\mathcal{N}(0,1)$ and use the same decomposition as before, but with all expectations replaced by $\tilde{E}$, the expectation at stationarity:
\begin{align}
 & \sup_{\theta\in B(\bar{\theta})}d_\mathrm{BL}\left[\bar{g}_{T}^{\omega}(\cdot\mid\theta),\varphi\left\{ \cdot;\sigma^{2}(\theta)/2,\sigma^{2}(\theta)\right\} \right]\nonumber \\
 & =\sup_{\theta\in B(\bar{\theta})}\sup_{\substack{f\in\mathrm{BL}(\mathbb{R})\\
\|f\|_{\mathrm{BL}}\leq1}} \Bigg|\tilde{E}\left[f\left\{ S_{T}(\theta)\sigma_{T}(Y_{1:T},\theta)-\frac{1}{2T}\sum_{t=1}^{T}\epsilon_{T}(Y_{t},\theta)^{2}+\sum_{t=1}^{T}R_{T}(Y_{t},\theta)\right\} \mid\mathcal{Y}_{T}\right]\nonumber \\
 & \qquad-\tilde{E}\left[f\left\{ Z\sigma(\theta)+\frac{\sigma^{2}(\theta)}{2}\right\} \right]\Bigg|\nonumber \\
 & \leq\sup_{\theta\in B(\bar{\theta})}\sup_{\substack{f\in\mathrm{BL}(\mathbb{R})\\
\|f\|_{\mathrm{BL}}\leq1}} \Bigg|\tilde{E}\left[f\left\{ S_{T}(\theta)\sigma_{T}(Y_{1:T},\theta)-\frac{1}{2T}\sum_{t=1}^{T}\epsilon_{T}(Y_{t},\theta)^{2}+\sum_{t=1}^{T}R_{T}(Y_{t},\theta)-\frac{\sigma^{2}(\theta)}{2}+\frac{\sigma^{2}(\theta)}{2}\right\} \mid\mathcal{Y}_{T}\right]\label{eq:clt_in1-1}\\
 & \text{\ensuremath{\qquad-\tilde{E}\left[f\left\{ S_{T}(\theta)\sigma_{T}(Y_{1:T},\theta)+\sum_{t=1}^{T}R_{T}(Y_{t},\theta)-\frac{\sigma^{2}(\theta)}{2}\right\} \mid\mathcal{Y}_{T}\right]}}\Bigg|\nonumber \\
 & \quad+\sup_{\theta\in B(\bar{\theta})}\sup_{\substack{f\in\mathrm{BL}(\mathbb{R})\\
\|f\|_{\mathrm{BL}}\leq1}} \Bigg|\tilde{E}\left[f\left\{ S_{T}(\theta)\sigma_{T}(Y_{1:T},\theta)+\sum_{t=1}^{T}R_{T}(Y_{t},\theta)-\frac{\sigma^{2}(\theta)}{2}\right\} \mid\mathcal{Y}_{T}\right]\label{eq:clt_in2-1}\\
 & \qquad-\tilde{E}\left[f\left\{ S_{T}(\theta)\sigma_{T}(Y_{1:T},\theta)-\frac{\sigma^{2}(\theta)}{2}\right\} \mid\mathcal{Y}_{T}\right]\Bigg|\nonumber \\
 & \quad+\sup_{\theta\in B(\bar{\theta})}\sup_{\substack{f\in\mathrm{BL}(\mathbb{R})\\
\|f\|_{\mathrm{BL}}\leq1}} \left|\tilde{E}\left[f\left\{ S_{T}(\theta)\sigma_{T}(Y_{1:T},\theta)-\frac{\sigma^{2}(\theta)}{2}\right\} \mid\mathcal{Y}_{T}\right]-\tilde{E}\left[f\left\{ Z\sigma(\theta)+\frac{\sigma^{2}(\theta)}{2}\right\} \right]\right|.\label{eq:clt_in3-1}
\end{align}
For (\ref{eq:clt_in1-1}) we have
\begin{align*}
\eqref{eq:clt_in1-1} & \leq\sup_{\theta\in B(\bar{\theta})}\tilde{E}\left[\min\left\{ 1,\left|\frac{\sigma^{2}(\theta)}{2}-\frac{1}{2T}\sum_{t=1}^{T}\epsilon_{T}(Y_{t},\theta)^{2}\right|\right\}\mid \mathcal{Y}_T \right].
\end{align*}
An application of Cauchy-Schwartz yields
\begin{align*}
 & \tilde{E}\left[\min\left\{ 1,\left|\frac{\sigma^{2}(\theta)}{2}-\frac{1}{2T}\sum_{t=1}^{T}\epsilon_{T}(Y_{t},\theta)^{2}\right|\right\}\mid \mathcal{Y}_T \right] \\
 & =E\left[\min\left\{ 1,\left|\frac{\sigma^{2}(\theta)}{2}-\frac{1}{2T}\sum_{t=1}^{T}\epsilon_{T}(Y_{t},\theta)^{2}\right|\right\}\prod_{t=1}^{T}\left\{ \frac{\epsilon_{T}(Y_{t},\theta)}{\surd T}+1\right\} \mid \mathcal{Y}_T\right]  \\
 & \leq E\left[\min\left\{ 1,\left|\frac{\sigma^{2}(\theta)}{2}-\frac{1}{2T}\sum_{t=1}^{T}\epsilon_{T}(Y_{t},\theta)^{2}\right|^2\right\}\mid \mathcal{Y}_T\right]^{1/2}E\left[\prod_{t=1}^{T}\left\{ \frac{\epsilon_{T}(Y_{t},\theta)}{\surd T}+1\right\} ^{2}\mid \mathcal{Y}_T\right]^{1/2}\\
 & \leq E\left[\min\left\{ 1,\left|\frac{\sigma^{2}(\theta)}{2}-\frac{1}{2T}\sum_{t=1}^{T}\epsilon_{T}(Y_{t},\theta)^{2}\right|\right\}\mid \mathcal{Y}_T \right]^{1/2}E\left[\prod_{t=1}^{T}\left\{ \frac{\epsilon_{T}(Y_{t},\theta)}{\surd T}+1\right\}^{2}\mid \mathcal{Y}_T\right]^{1/2}.
\end{align*}
By Proposition \ref{prop:radon_nikodym_bounded}
\[
\limsup_{T\rightarrow\infty}\sup_{\theta\in B(\bar{\theta})}E\left[\prod_{t=1}^{T}\left\{ \frac{\epsilon_{T}(Y_{t},\theta)}{\surd T}+1\right\} ^{2}\mid \mathcal{Y}_T\right]^{1/2}<\infty
\]
and we have previously shown that
\[
\sup_{\theta\in B(\bar{\theta})}E\left[\min\left\{ 1,\left|-\frac{1}{2T}\sum_{t=1}^{T}\epsilon_{T}(Y_{t},\theta)^{2}+\frac{\sigma^{2}(\theta)}{2}\right|\right\}\mid \mathcal{Y}_T \right]\overset{}{\rightarrow}0.
\]
As for the remainder \eqref{eq:clt_in2-1} we argue analogously
\begin{align*}
 & \tilde{E}\left[\min\left\{ 1,\left|\sum_{t=1}^{T}R_{T}(Y_{t},\theta)\right|\right\} \mid\mathcal{Y}_{T}\right]\\
 & =E\left[\min\left\{ 1, \left|\sum_{t=1}^{T}R_{T}(Y_{t},\theta)\right|\right\} \prod_{t=1}^{T}\left\{\frac{\epsilon_{T}(Y_{t},\theta)}{\surd T}+1\right\}\mid\mathcal{Y}_{T}\right]\\
 & \leq E\left[\min\left\{ 1,\left|\sum_{t=1}^{T}R_{T}(Y_{t},\theta)\right|\right\} ^{2}\mid\mathcal{Y}_{T}\right]^{1/2}\cdot E\left[\prod_{t=1}^{T}\left\{ \frac{\epsilon_{T}(Y_{t},\theta)}{\surd T}+1\right\} ^{2}\mid\mathcal{Y}_{T}\right]^{1/2}\\
 & \leq E\left[\min\left\{ 1,\left|\sum_{t=1}^{T}R_{T}(Y_{t},\theta)\right|\right\} \mid\mathcal{Y}_{T}\right]^{1/2}\cdot E\left[\prod_{t=1}^{T}\left\{ \frac{\epsilon_{T}(Y_{t},\theta)}{\surd T}+1\right\} ^{2}\mid\mathcal{Y}_{T}\right]^{1/2}.
\end{align*}
The first factor vanishes in probability as we have shown in the proof of Theorem~\ref{thm:moment_uclt}(a), where as the second factor is bounded by Proposition \ref{prop:radon_nikodym_bounded}.

 For \eqref{eq:clt_in3-1}, note first that
\begin{equation*}
	E\left[\prod_{t=1}^T\left\{\frac{\epsilon_{T}(Y_{t},\theta)}{\surd T}+1\right\} \mid \mathcal{Y}_T\right] = 1 \quad \text{and} \quad E\left[e^{Z\sigma(\theta)-\frac{\sigma^{2}(\theta)}{2}}\right] = 1.
\end{equation*}
Hence, we can write
\begin{align*}
	& \tilde{E}\left[f\left\{ S_{T}(\theta)\sigma_{T}(Y_{1:T},\theta)-\frac{\sigma^{2}(\theta)}{2}\right\} \mid\mathcal{Y}_{T}\right] \\
	& = \tilde{E}\left[f\left\{ S_{T}(\theta)\sigma_{T}(Y_{1:T},\theta)-\frac{\sigma^{2}(\theta)}{2}\right\}\prod_{t=1}^T\left\{\frac{\epsilon_{T}(Y_{t},\theta)}{\surd T}+1\right\}  \mid\mathcal{Y}_{T}\right]E\left[e^{Z\sigma(\theta)-\frac{\sigma^{2}(\theta)}{2}}\right] \\
	& = \tilde{E}\left[f\left\{ S_{T}(\theta)\sigma_{T}(Y_{1:T},\theta)-\frac{\sigma^{2}(\theta)}{2}\right\}\prod_{t=1}^T\left\{\frac{\epsilon_{T}(Y_{t},\theta)}{\surd T}+1\right\} e^{Z\sigma(\theta)-\frac{\sigma^{2}(\theta)}{2}} \mid\mathcal{Y}_{T}\right]
\end{align*}
and similarly
\begin{align*}
	\tilde{E}\left[f\left\{ Z\sigma(\theta)+\frac{\sigma^{2}(\theta)}{2}\right\} \right] & = E\left[f\left\{ Z\sigma(\theta)+\frac{\sigma^{2}(\theta)}{2}\right\}e^{Z\sigma(\theta)-\frac{\sigma^{2}(\theta)}{2}} \right] E\left[\prod_{t=1}^T\left\{\frac{\epsilon_{T}(Y_{t},\theta)}{\surd T}+1\right\} \mid \mathcal{Y}_T \right]\\
	& = E\left[f\left\{ Z\sigma(\theta)+\frac{\sigma^{2}(\theta)}{2}\right\}e^{Z\sigma(\theta)-\frac{\sigma^{2}(\theta)}{2}} \prod_{t=1}^T\left\{\frac{\epsilon_{T}(Y_{t},\theta)}{\surd T}+1\right\} \mid \mathcal{Y}_T\right],
\end{align*}
where we used that $Z$ is independent of all other random variables in both cases. Using these identities we obtain
\begin{align*}
	& \left|\tilde{E}\left[f\left\{ S_{T}(\theta)\sigma_{T}(Y_{1:T},\theta)-\frac{\sigma^{2}(\theta)}{2}\right\} \mid\mathcal{Y}_{T}\right]-\tilde{E}\left[f\left\{ Z\sigma(\theta)+\frac{\sigma^{2}(\theta)}{2}\right\} \right]\right| \\
	& \leq \left|\tilde{E}\left[f\left\{ S_{T}(\theta)\sigma_{T}(Y_{1:T},\theta)-\frac{\sigma^{2}(\theta)}{2}\right\} - f\left\{ Z\sigma(\theta)+\frac{\sigma^{2}(\theta)}{2}\right\}\mid \mathcal{Y}_T \right]\right| \\
	& \leq \left|E\left[
	\left(f\left\{ S_{T}(\theta)\sigma_{T}(Y_{1:T},\theta)-\frac{\sigma^{2}(\theta)}{2}\right\} - f\left\{ Z\sigma(\theta)-\frac{\sigma^{2}(\theta)}{2}\right\}\right) \prod_{t=1}^T\left\{\frac{\epsilon_{T}(Y_{t},\theta)}{\surd T}+1\right\} e^{Z\sigma(\theta)-\frac{\sigma^{2}(\theta)}{2}}\mid \mathcal{Y}_T \right]\right| \\
	& \leq \left|E\left(\left[f\left\{ S_{T}(\theta)\sigma_{T}(Y_{1:T},\theta)-\frac{\sigma^{2}(\theta)}{2}\right\} - f\left\{ Z\sigma(\theta)-\frac{\sigma^{2}(\theta)}{2}\right\}\right]^2\mid \mathcal{Y}_T \right)^{\frac{1}{2}}\right| \\
	& \quad\times  \left|E\left[\prod_{t=1}^T\left\{\frac{\epsilon_{T}(Y_{t},\theta)}{\surd T}+1\right\}^2 e^{2\left\{Z\sigma(\theta)-\frac{\sigma^{2}(\theta)}{2}\right\}} \mid \mathcal{Y}_T \right]^{\frac{1}{2}}\right|.
\end{align*}
We investigate the two factors of the product separately. First we use the fact that $\|f\|_\infty \leq 1$ when $\|f\|_{\mathrm{BL}}\leq1$ (see \eqref{def:BL}) and thus
\begin{align*}
	& \left|\sup_{\theta\in B(\bar{\theta})}\sup_{\substack{f\in\mathrm{BL}(\mathbb{R})\\
\|f\|_{\mathrm{BL}}\leq1}} E\left(\left[f\left\{ S_{T}(\theta)\sigma_{T}(Y_{1:T},\theta)-\frac{\sigma^{2}(\theta)}{2}\right\} - f\left\{ Z\sigma(\theta)-\frac{\sigma^{2}(\theta)}{2}\right\}\right]^2\mid \mathcal{Y}_T \right)^{\frac{1}{2}}\right| \\
	& \leq \sup_{\theta\in B(\bar{\theta})}\sup_{\substack{f\in\mathrm{BL}(\mathbb{R})\\
\|f\|_{\mathrm{BL}}\leq1}}\left|E\left(\left[f\left\{ S_{T}(\theta)\sigma_{T}(Y_{1:T},\theta)-\frac{\sigma^{2}(\theta)}{2}\right\} - f\left\{ Z\sigma(\theta)-\frac{\sigma^{2}(\theta)}{2}\right\}\right]\mid \mathcal{Y}_T \right)^{\frac{1}{2}}\right| \rightarrow 0
\end{align*}	
in $\mathbb{P}^Y$-probability as established in the previous part.
For the second factor note that $Z$ is independent of all other random variables and hence
\begin{align*}
	& \sup_{\theta\in B(\bar{\theta})}\left|E\left[\prod_{t=1}^T\left\{\frac{\epsilon_{T}(Y_{t},\theta)}{\surd T}+1\right\}^2 e^{2\left\{Z\sigma(\theta)-\frac{\sigma^{2}(\theta)}{2}\right\}} \mid \mathcal{Y}_T \right]^{\frac{1}{2}}\right| \\
	& = \sup_{\theta\in B(\bar{\theta})}\left|E\left[\prod_{t=1}^T\left\{\frac{\epsilon_{T}(Y_{t},\theta)}{\surd T}+1\right\}^2 \mid \mathcal{Y}_T \right]^{\frac{1}{2}} E\left[ e^{2\left\{Z\sigma(\theta)-\frac{\sigma^{2}(\theta)}{2}\right\}}\right]^{\frac{1}{2}}\right|.
\end{align*}
We know
\begin{equation*}
\sup_{\theta\in B(\bar{\theta})}E\left[\prod_{t=1}^T\left\{\frac{\epsilon_{T}(Y_{t},\theta)}{\surd T}+1\right\}^2 \mid \mathcal{Y}_T \right]^{\frac{1}{2}}
\end{equation*}
converges to a constant in $\mathbb{P}^Y$-probability and
\begin{equation*}
	\sup_{\theta\in B(\bar{\theta})}E\left[ e^{2\left\{Z\sigma(\theta)-\frac{\sigma^{2}(\theta)}{2}\right\}}\right]^{\frac{1}{2}} = \sup_{\theta\in B(\bar{\theta})} \exp\left\{\sigma(\theta)^{2}\right\}^{1/2} < \infty.
\end{equation*}
\end{proof}

\section{Generalized Linear Mixed Models}
\label{sec:glmm_appendix}
\subsection{Exponential Families and Random Effects}
In this section we introduce a class of random effects models for which all assumptions required for Theorem \ref{theorem} are satisfied.
We analyse the latent variable model introduced in Section \ref{sec:clt} for the popular class of generalized linear mixed models \autocite[see e.g][]{mcculloch2005generalized}, where the observation density is of the form of an exponential family. We restrict attention here to the class of natural exponential family distributions, i.e. $T(y) = y$, with respect to the Lebesgue measure
\begin{equation}\label{eq:exp_fam}
  p(y\mid \eta) = m(y)\exp\left\{\eta ^\T y - A(\eta) \right\},
\end{equation}
where $y$ is the natural sufficient statistic and $\eta$ denotes the natural parameter, which will be set equal to the linear predictor in a generalized linear model. The function $m(y)$ is a base measure, which can be absorbed into the dominating measure. $A(\eta)$ is commonly referred to as the $\log$-partition function and we assume that $A$ is strictly convex and increasing in $\eta$ so that the $\log$-likelihood will be strictly concave. This assumption will be satisfied in the most common natural exponential family models including Poisson and Binomial models.
In the following we will allow for multiple measurements for each group, which means we have one random effect associated with multiple observations. This corresponds to the logistic mixed model of Section \ref{sec:sim_rem}. For the conditional exponential family with $J$ repeated measurements $y_t=(y_{t, 1},\ldots,y_{t, J})^{\T}$ where $\eta_{t, j}= c_{t, j}^{\T}\beta+X_t, j = 1, \ldots, J, t = 1, \ldots, T$ and the random effects are centred Gaussian variables $X \sim \mathcal{N}(0,\tau^{2})$ independent for each set of repeated measurements $y$. We will simplify the notation by dropping the subscript $t$ as the importance sampler for each $t$ can be considered in isolation.
Assume here that
\begin{equation}\label{eq:exp_model}
	g(y \mid x, \theta) = \prod_{j=1}^J m(y_j) \exp\left[\eta_j(x) y_j - A\{\eta_j(x)\}\right], \quad f(x \mid \theta) = \varphi(x; 0, \tau^2),
\end{equation}
where $\eta_j(x) = c_j^{\T}\beta + x$ and $c$ is a vector of covariates with corresponding parameter vector $\beta$.
The (full) model likelihood for every observation is now given by
\begin{equation*}
	p(y,x\mid\theta) \propto \prod_{j=1}^J m(y_j) \exp\left[\eta_j(x) y_j - A\{\eta_j(x)\}\right] \varphi(x,0,\text{\ensuremath{\tau^2)}}.
\end{equation*}
Since $X$ is unobserved, we are interested in the marginal likelihood
\begin{align*}
	p(y\mid\theta)
	&=\int p(y,x\mid\theta)\mathrm{d}x \\
	&=\int \prod_{j=1}^J m(y_j) \exp\left[\eta_j(x) y_j - A\{\eta_j(x)\}\right] \varphi(x,0,\tau^2)\mathrm{d}x.
\end{align*}
Consequently, the likelihood of a set of observations $y_{1:T}$, with $y_i = (y_{i, 1}, \ldots, y_{i, J})$ is
\begin{equation*}
	p(y_{1:T} \mid \theta) = \prod_{t=1}^T \int \prod_{j=1}^J m(y_{t, j}) \exp\left[\eta_{t, j}(x_t) y_{t, j} - A\{\eta_{t, j}(x_t)\}\right] \varphi(x_t,0,\tau^2)\mathrm{d}x_t.
\end{equation*}
We list the $\log$-partition function as well as it's first derivative $A'(x) = \partial_x A(x)$ (which will be important later) below together with the base measure.

\noindent\textbf{Binomial.} Denote $n$ the number of trials, then
\begin{equation*}
	A(\eta) = n\log\left(1 + e^{\eta} \right), \quad A'(\eta) = \frac{ne^\eta}{1+e^\eta}, \quad m(y) = {n \choose{y}}.
\end{equation*}
\textbf{Poisson.} For the Poisson family
\begin{equation*}
	A(\eta) = e^\eta, \quad A'(\eta) = e^\eta, \quad m(y) = \frac{1}{y!}.
\end{equation*}

\subsection{Asymptotic Posterior Normality}
This section establishes the Bernstein-von Mises theorem
for priors having exponentially decaying tails. Denote $\Theta\subset\mathbb{R}^{d}$
a subset of the Euclidean space, where we take $d=1$ without loss of generality. Consider the case of i.i.d.\ observations
$Y_{1},Y_{2},\ldots$ drawn from a density $Y_{i}\sim f(\cdot\mid\bar{\theta})$,
where $\bar{\theta}\in\Theta$ is assumed to be the ``true parameter''.
The measure describing the distribution of the data vector $Y_{1:T}=(Y_{1},\ldots,Y_{T})$
is written as $P_{T,\bar{\theta}}.$ Writing $\pi(\theta)$ for the
prior distribution we denote the posterior density  as
\begin{align*}
\pi_{T}(\theta) & =\pi(\theta\mid Y_{1:T})
 =\frac{\prod_{i=1}^{T}f(y_{i}\mid\theta)\pi(\theta)}{\int_{\Theta}\prod_{i=1}^{T}f(y_{i}\mid\theta)\pi(\theta)\mathrm{d}\theta}.
\end{align*}

\setcounter{theorem}{5}
\begin{theorem}
Let the experiment be differentiable in quadratic mean at $\bar{\theta}$
with non-singular Fisher information matrix $I_{\bar{\theta}},$ and
suppose that for every $\varepsilon>0$ there exist an increasing
sequence of sets $K_{1}\subset K_{2}\subset\ldots$ with $\cup_{i=1}^{\infty}K_{i}=\Theta$ with $K_T$ growing at rate $T$.
Assume there exists a sequence of tests such that
\[
E(\phi_{T})\rightarrow0,\quad\sup_{\left\{ \|\theta-\bar{\theta}\|\geq\varepsilon\right\} \cap K_{T}}E_{\theta}^{n}\left(1-\phi_{T}\right)\rightarrow0.
\]
Furthermore, let the prior measure be absolutely continuous in a neighbourhood
of $\bar{\theta}$ with a continuous positive density at $\bar{\theta}$
s.t. for $T$ large enough, we have
\[
\pi\left(\left[-T,T\right]^{\complement}\right)\leq c_{1}\exp\left(-c_{2}T\right),
\]
where $c_1$ and $c_2$ are positive constants.
Then the corresponding posterior distributions satisfy
\begin{equation}
\int\left|\tilde{\pi}_{T}(h)-\varphi\left(h,\surd T\left(\hat{\theta}_{T} - \theta_0\right),I_{\bar{\theta}}^{-1}\right)\right|\mathrm{d} h\rightarrow 0\label{eq:BVW1}
\end{equation}
in $P_{T, \bar{\theta}}$-probability where
\[
\Delta_{T}(\bar{\theta})=\frac{1}{\sqrt{T}}\sum_{i=1}^{T}\tilde{I}_{\bar{\theta}}^{-1}\frac{\partial\ell(\bar{\theta},Y_{i})}{\partial\theta}
\]
and
\[
\tilde{\pi}_{T}(h)=\frac{\pi_{T}(\bar{\theta}+h/\sqrt{T})}{T^{1/2}}
\]
is a measure on $H=\left\{ h=\surd{T}\left(\theta-\bar{\theta}\right):\theta\in\Theta\right\} .$
\end{theorem}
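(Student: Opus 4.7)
The plan is to follow the classical Bernstein--von Mises strategy (as in van der Vaart, Theorem 10.1), reducing the statement to a local asymptotic normality (LAN) expansion of the log-likelihood, with the modification that the non-compact growing sets $K_T$ together with the exponentially decaying prior tails replace the usual assumption of a compact parameter space. The core reduction has three stages: (i) prove posterior consistency $\pi_T(\{|\theta-\bar{\theta}|\geq\varepsilon\})\to 0$ in $P_{T,\bar{\theta}}$-probability; (ii) localize via $h=\surd T(\theta-\bar{\theta})$ and use the quadratic mean expansion to show the rescaled log-likelihood is asymptotically quadratic in $h$; (iii) combine these with the continuity of the prior density at $\bar{\theta}$ to obtain convergence in $L^1$ of the rescaled posterior density to the target normal density.

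For posterior consistency, I would split
\[
\int_{\{|\theta-\bar{\theta}|\geq\varepsilon\}}\prod_{i=1}^{T}\frac{f(Y_i\mid\theta)}{f(Y_i\mid\bar{\theta})}\pi(\theta)\,\mathrm{d}\theta
=\int_{\{|\theta-\bar{\theta}|\geq\varepsilon\}\cap K_T}\cdots+\int_{\{|\theta-\bar{\theta}|\geq\varepsilon\}\cap K_T^{\complement}}\cdots .
\]
The first integral is handled by the test sequence $\phi_T$: multiplying by $\phi_T+(1-\phi_T)$ and using Fubini gives, up to a term of order $E(\phi_T)$, a bound of $\sup_{\{|\theta-\bar\theta|\geq\varepsilon\}\cap K_T}E_\theta^T(1-\phi_T)$, both of which vanish by assumption. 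The second integral is bounded by the prior mass of $K_T^{\complement}$, which, since $K_T$ grows at rate $T$, is dominated by $\pi([-T,T]^{\complement})\leq c_1 \exp(-c_2 T)$. Dividing by the normalizing denominator, which is lower-bounded by a polynomial rate $T^{-d/2}$ with high probability (this follows from the LAN expansion at $\bar{\theta}$ and Fatou-type arguments), the exponential decay dominates. Thus the full numerator is $o_{P}(T^{-d/2})$ and the ratio tends to zero.

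After localization, on the shrinking set $\{|\theta-\bar{\theta}|\leq\varepsilon\}$ one writes, using differentiability in quadratic mean,
\[
\log\prod_{i=1}^{T}\frac{f(Y_i\mid\bar{\theta}+h/\surd T)}{f(Y_i\mid\bar{\theta})}= h^{\T}I_{\bar\theta}\Delta_T(\bar\theta)-\tfrac{1}{2}h^{\T}I_{\bar\theta}h+r_T(h),
\]
with $r_T(h)\to 0$ in probability uniformly on compacts in $h$. Combined with the continuity and positivity of the prior at $\bar{\theta}$, this shows pointwise convergence of the rescaled posterior density on $H$ to the density $\varphi(h;\Delta_T(\bar\theta),I_{\bar\theta}^{-1})$, and an application of Scheff\'e's lemma (together with step (i) to discard the complement) upgrades this to $L^1$ convergence. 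Finally, the identification $\hat{\theta}_T-\bar{\theta}=\Delta_T(\bar\theta)/\surd T + o_{P}(T^{-1/2})$ matches the centering stated in the theorem.

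The main technical obstacle is the interaction between the growing parameter region $K_T$ and the uniform type-II error rate: one needs the convergence $\sup_{\{|\theta-\bar\theta|\geq\varepsilon\}\cap K_T}E_\theta^T(1-\phi_T)\to 0$ to be fast enough to beat the denominator's polynomial rate $T^{-d/2}$. The remainder $K_T^{\complement}$ is then easier, precisely because the exponential prior tail overwhelms any polynomial denominator growth. Verifying the existence of such a test sequence for the specific generalized linear mixed model of Section S3 (via, e.g., a likelihood-ratio or Hellinger-distance test combined with the concavity of the log-partition function $A$) is where most of the model-specific work lies.
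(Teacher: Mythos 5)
Your overall architecture coincides with the paper's: both follow van der Vaart's Theorem 10.1, discard the region $\{\|\theta-\bar\theta\|\geq\varepsilon\}$ by splitting over $K_T$ and $K_T^{\complement}$ (tests with exponentially small type-II error on $K_T$, the exponential prior tail on $K_T^{\complement}$, each beating the polynomial lower bound on the local prior/normalizing mass), then use the LAN expansion locally and identify the centring $\surd T(\hat\theta_T-\bar\theta)=\Delta_T(\bar\theta)+o_P(1)$ via best regularity. Up to that point your proposal is sound and matches the paper step for step.

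The genuine gap is in your final step, ``pointwise convergence of the rescaled posterior density plus Scheff\'e.'' Scheff\'e's lemma, as you invoke it, requires almost-everywhere convergence of densities to a \emph{fixed} limit density; here (i) the target $\varphi\{h;\Delta_T(\bar\theta),I_{\bar\theta}^{-1}\}$ is itself random and moves with $T$, (ii) the LAN expansion only delivers convergence of the log-likelihood ratio \emph{in probability} at each fixed $h$, and (iii) pointwise convergence of the posterior density itself requires uniform control of the normalizing constant, which you address only with a loose ``Fatou-type'' remark. None of these is fatal, but each needs an explicit repair (tightness of $\Delta_T$ plus a subsequence argument, or a two-sequence version of Scheff\'e). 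The paper sidesteps all three at once: it restricts to a ball $C$ of radius $M_T\to\infty$, passes to the prior-weighted mixture law $P_{T,C}$ (mutually contiguous with $P_{T,\bar\theta}$), and writes $\tfrac12\int|\tilde\pi_T^C-\varphi^C|=\int(1-\varphi^C/\tilde\pi_T^C)^{+}\,\tilde\pi_T^C$, expanding the integrand as a ratio of posterior densities at two local parameters $g,h$ so that the normalizing constants cancel; dominated convergence under the product law $P_{T,C}(\mathrm{d}y)\,\tilde\pi_T^C(\mathrm{d}h)\,\lambda_C(\mathrm{d}g)$ together with the LAN expansion then finishes the argument. If you want to keep the Scheff\'e route you must make the subsequence and moving-limit issues explicit; otherwise adopt the $(1-\text{ratio})^{+}$ representation, which is the cleaner path.
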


\begin{proof}
The proof follows \textcite{VanderVaart2000}, Theorem 10.1, see also the lecture notes by \textcite{nickl2012statistical}.
We will show that it is enough to show convergence of the measures restricted on some arbitrarily large compact
set. In order to do so, denote
\[
P^{C}(A)=\frac{P(A\cap C)}{P(C)}
\]
for any measurable set $A$ the restriction of the probability measure
$P$ to the set $C$.
Denote $h=\surd{T}\left(\theta-\bar{\theta}\right)$. We will write $P_{T,h}$ for the posterior
distribution with data $Y_{1:T}$ and parameter \mbox{$\bar{\theta}+h/\surd{T}(=\theta)$}.
Define the prior-weighted mixture measure over a set $C$
as
\[
P_{T,C}=\int P_{T,h}\tilde{\pi}_{T}^{C}(h)\mathrm{d}h.
\]
The expectation with respect to $P_{T,C}$ is calculated as
\begin{align*}
E_{P_{T,C}}\left\{f(Y_{1:T})\right\} & =\iint f(y_{1:T})\mathrm{d}P_{T,h}(y_{1:T})\tilde{\pi}_{T}^{C}(h)\mathrm{d}h.
\end{align*}
For any sequence of sets $A_{T}$ with $P_{T,\bar{\theta}}(A_{T})\rightarrow 0$
it follows that $P_{T,B}(A_{T})\rightarrow 0$ and vice versa, where
$B$ denotes a closed ball around 0. (Two measures with this relationship
are called mutually contiguous.)

This means that we can interchange convergence in probability under
the measures $P_{T,B}$ and $P_{T,0}$. Let $C$ now denote a ball
of size $M_{T}$ around 0 where $M_{T}\rightarrow\infty$ as $T\rightarrow\infty$.
We can show that the total variation between distance between the
posterior and the posterior restricted on the set $C$ vanishes by
estimating
\[
\left\Vert \tilde{\pi}_{T}\left(B\right)-\tilde{\pi}_{T}^{C}\left(B\right)\right\Vert _{\mathrm{tv}}\leq 2\tilde{\pi}_{T}\left(C^{\complement}\right),
\]
where $\|\cdot\|_\mathrm{tv}$ denotes the total variation norm.
We will show that the left-hand side converges to zero under $P_{T,B}$
for $B$ a closed ball around 0. We can now use the tests $\phi_{T}$ to bound
\begin{align*}
E_{T,B}\left\{\tilde{\pi}_{T}(C^{\complement})\right\} & =E_{T,B}\left\{\tilde{\pi}_{T}\left(C^{\complement}\right)\left(1-\phi_{T}+\phi_{T}\right)\right\}\\
 & \leq E_{T,B}\left[\tilde{\pi}_{T}\left(C^{\complement}\right)\left(1-\phi_{T}\right)\right]+E_{T,B}\left(\phi_{T}\right),
\end{align*}
where $E_{T,B}\left(\phi_{T}\right)=o_{P_{T,B}}(1)$ by assumption. Now
\begin{align*}
 & E_{T,B}\left\{P_{H_{T}\mid Y_{1:T}}\left(C^{\complement}\right)\left(1-\phi_{T}\right)\right\}\\
 & =\int_{B}\int_{\mathbb{R}^{T}}\int_{C^{\complement}}\left(1-\phi_{T}\right)\frac{\prod_{i=1}^{T}f(\bar{\theta}+g/\surd{T},y_{i})}{\int\prod_{i=1}^{T}f(\bar{\theta}+m/\surd{T},y_{i})\mathrm{d}\tilde{\pi}(m)}\prod_{i=1}^{T}f\left(\bar{\theta}+\frac{h}{\surd{T}},y_{i}\right)\mathrm{d}y_{i}\frac{\mathrm{d}\tilde{\pi}(h)}{\tilde{\pi}(B)}\\
 & =\frac{\tilde{\pi}(C^{\complement})}{\tilde{\pi}(B)}\int_{C^{\complement}}\int_{\mathbb{R}^{T}}\int_{B}\left(1-\phi_{T}\right)\frac{\prod_{i=1}^{T}f(\bar{\theta}+\frac{h}{\surd{T}},y_{i})}{\int\prod_{i=1}^{T}f(\bar{\theta}+m/\surd{T},y_{i})\mathrm{d}\tilde{\pi}(m)}\mathrm{d}\tilde{\pi}(h)\mathrm{d}P_{g}^{T}(y)\mathrm{d}\tilde{\pi}^{C^{\complement}}(g)\\
 & =\frac{\tilde{\pi}(C^{\complement})}{\tilde{\pi}(B)}E_{T,C^{\complement}}\left\{\tilde{\pi}_{T}(B)(1-\phi_{T})\right\}.
\end{align*}
The upper bound is
\begin{align*}
\frac{\tilde{\pi}(C^{\complement})}{\tilde{\pi}(B)}E_{C^{\complement}}^{T}\tilde{\pi}_{T}(B)(1-\phi_{T}) & =\frac{\tilde{\pi}(C^{\complement})}{\tilde{\pi}(B)}\int_{C^{\complement}}\tilde{\pi}_{T}(B)(1-\phi_{T})P_{h}^{T}\frac{\mathrm{d}\tilde{\pi}(h\cap C^{\complement})}{\tilde{\pi}(C^{\complement})}\\
 & =\frac{1}{\tilde{\pi}(B)}\int_{C^{\complement}}\int_{\mathbb{R}^d}\tilde{\pi}_{T}(B)(1-\phi_{T})dP_{h}^{T}(y)d\tilde{\pi}(h\cap C^{\complement})\\
 & \leq\frac{1}{\tilde{\pi}(B)}\int_{C^{\complement}}E(1-\phi_{T})\mathrm{d}\tilde{\pi}(h\cap C^{\complement})\\
 & =\frac{1}{\tilde{\pi}(B)}\int_{C^{\complement}}E(1-\phi_{T})\mathrm{d}\tilde{\pi}(h)\\
 & =\frac{1}{\tilde{\pi}(B)}\int_{C^{\complement}\cap \widetilde{K}_{T}}E(1-\phi_{T})\mathrm{d}\tilde{\pi}(h)+\frac{1}{\tilde{\pi}(B)}\int_{C^{\complement}\cap \tilde{K}_T^{\complement}}E(1-\phi_{T})\mathrm{d}\tilde{\pi}(h),
\end{align*}
where $\tilde{K}_T = \{h = \surd{T}(\theta - \bar{\theta}): \theta \in K_T\}$. For simplicity and without loss of generality we assume $K_T = [-T,T]$ in the following.
By \textcite[][Lemma 10.3]{VanderVaart2000} the tests converge exponentially fast so with $\theta=\bar{\theta}+h/\surd{T},h=\surd{T}\left(\theta-\bar{\theta}\right),\mathrm{d}\theta=\mathrm{d}h/\surd{T}$
\begin{align*}
\frac{1}{\tilde{\pi}(B)}\int_{C^{\complement}\cap \tilde{K}_{T}}E(1-\phi_{T})\mathrm{d}\tilde{\pi}(h) & =\frac{1}{\tilde{\pi}(B)}\int_{\left\{ \|\theta-\bar{\theta}\|\geq M_{T}/\surd{T}\right\} \cap K_{T}}E_{\theta}\left(1-\phi_{T}\right)\pi(\theta)\mathrm{d}\theta\\
 & =\frac{1}{\tilde{\pi}(U)}\int_{\left\{ \|\theta-\bar{\theta}\|\geq M_{T}/\surd{T}\right\} \cap K_{T}}E_{\theta}\left(1-\phi_{T}\right)\pi(\theta)\mathrm{d}\theta\\
 & =\frac{1}{\tilde{\pi}(B)}\int_{\left\{ D'\geq\|\theta-\bar{\theta}\|\geq M_{T}/\surd{T}\right\} \cap K_{T}}E_{\theta}\left(1-\phi_{T}\right)\pi(\theta)\mathrm{d}\theta\\
 & \quad+\frac{1}{\tilde{\pi}(B)}\int_{\left\{ \|\theta-\bar{\theta}\|\geq D'\right\} \cap K_{T}}E_{\theta}\left(1-\phi_{T}\right)\pi(\theta)\mathrm{d}\theta\\
 & =c_2\int_{\left\{ D'\geq\|\theta-\bar{\theta}\|\geq M_{T}/\surd{T}\right\} \cap K_{T}}\exp\left(-DT\left\Vert \theta-\bar{\theta}\right\Vert ^{2}\right)\mathrm{d}\theta\\
 & \quad+\frac{1}{\tilde{\pi}(B)}\int_{\left\{ \|\theta-\bar{\theta}\|\geq D'\right\} \cap K_{T}}\exp(-c_3T)\pi(\theta)\mathrm{d}\theta\\
 & \leq c_2\int_{\left\{ h:h\geq M_{T}\right\} \cap K_{T}}\exp\left(-DT\left\Vert \theta-\bar{\theta}\right\Vert ^{2}\right)T^{1/2}\mathrm{d}\theta\\
 & \quad+2c_3T^{1/2}\exp(-c_4 T),
\end{align*}
where we used $\tilde{\pi}(B) \geq 1/(c_3T^{1/2})$ for some constant $c_3$ because the prior is positive and continuous at $\bar{\theta}$.
For the second part
\begin{align*}
\frac{1}{\tilde{\pi}(B)}\int_{C^{\complement}\cap \tilde{K}_{T}^{c}}E_{h}(1-\phi_{T})d\tilde{\pi}(h) & \leq\frac{1}{\tilde{\pi}(B)}\int_{C^{\complement}\cap \tilde{K}_{T}^{\complement}}\mathrm{d}\tilde{\pi}(h)\\
 & =\frac{1}{\tilde{\pi}(B)}\int_{C^{\complement}\cap \tilde{K}_{T}^{\complement}}\pi(\bar{\theta}+h/\surd{T})\mathrm{d}h\\
 & \leq c_3T^{1/2}\int_{K_{T}^{\complement}}\pi(\theta)\mathrm{d}\theta\\
 & \leq c_3T^{1/2}\cdot c_1 \exp\left(-c_2T\right).
\end{align*}
As $T\rightarrow\infty$ we have
\[
\|\tilde{\pi}_{T}-\tilde{\pi}_{T}^{C}\|_{\mathrm{tv}}\rightarrow 0
\]
in $P_{T,B}$-probability and by contiguity also in $P_{T,\bar{\theta}}$.

Similarly, for a Gaussian distribution with means $\sup\left|\mu_{T}\right|<\infty$
and variance $\sigma^{2}$ we have
\[
\left\Vert \mathcal{N}\left(\mu_{T},\sigma^{2}\right)-\mathcal{N}^{C}\left(\mu_{T},\sigma^{2}\right)\right\Vert \leq 2\mathcal{N}\left(\mu_{T},\sigma^{2}\right)\left(C^{\complement}\right).
\]
We know that $\Delta_{T,\bar{\theta}}$ is uniformly tight, i.e. for
any $\varepsilon>0$ there exists $K$ such that $\sup_{T}P\left(\left|\Delta_{T,\bar{\theta}}\right|\leq K\right)=1-\varepsilon.$
Hence, with probability $1-\varepsilon$
\[
\left\Vert \mathcal{N}\left(\Delta_{T,\bar{\theta}},I_{\bar{\theta}}^{-1}\right)-\mathcal{N}^{C}\left(\Delta_{T,\bar{\theta}},I_{\bar{\theta}}^{-1}\right)\right\Vert \leq 2\mathcal{N}\left(\Delta_{T,\bar{\theta}},I_{\bar{\theta}}^{-1}\right)(C^{\complement})
\]
by choosing $M$ (the radius of $C$) sufficiently large. Hence, by
the triangle inequality we have to show that
\[
\int\left|\tilde{\pi}_{T}^{C}(h)-\mathcal{\varphi}^{C}\left(h;\Delta_{T,\bar{\theta}},I_{\bar{\theta}}^{-1}\right)\right|\mathrm{d}h\rightarrow 0
\]
in $P_{T,0}$-probability. Denoting $x^+ = \max\{0, x\}$
\begin{align*}
 & \frac{1}{2} \int\left|\tilde{\pi}_{T}^{C}(h)-\mathcal{\varphi}^{C}\left(h;\Delta_{T,\bar{\theta}},I_{\bar{\theta}}^{-1}\right)\right|\mathrm{d}h\\
 & =\int\left(1-\frac{\mathcal{\varphi}^{C}\left(h;\Delta_{T,\bar{\theta}},I_{\bar{\theta}}^{-1}\right)}{\tilde{\pi}_{T}^{C}(h)}\right)^{+}\tilde{\pi}_{T}^{C}(h)\mathrm{d}h \\
 & =\int\left(1-\frac{\mathcal{\varphi}^{C}\left(h;\Delta_{T,\bar{\theta}},I_{\bar{\theta}}^{-1}\right)\int 1_{C}f_{T,g}^{C}(g)\pi(g)\mathrm{d}g}{1_{C}\tilde{\pi}_{T}^{C}(h)}\right)^{+}\tilde{\pi}_{T}^{C}(h)\mathrm{d}h\\
 & =\int\left(1-\int\frac{1_{C}(g)f_{T,g}^{C}(g)\pi(g)\mathcal{\varphi}^{C}\left(h;\Delta_{T,\bar{\theta}},I_{\bar{\theta}}^{-1}\right)}{1_{C}(h)f_{T,h}^{C}(h)\pi(h)\mathcal{\varphi}^{C}\left(g;\Delta_{T,\bar{\theta}},I_{\bar{\theta}}^{-1}\right)}\mathcal{\varphi}^{C}\left(g;\Delta_{T,\bar{\theta}},I_{\bar{\theta}}^{-1}\right)\mathrm{d}g\right)^{+}\tilde{\pi}_{T}^{C}(h)\mathrm{d}h\\
 & \leq\iint\left(1-\frac{f_{T,g}^{C}(g)\pi(g)\mathcal{\varphi}^{C}\left(h;\Delta_{T,\bar{\theta}},I_{\bar{\theta}}^{-1}\right)}{f_{T,h}^{C}(h)\pi(h)\mathcal{\varphi}^{C}\left(g;\Delta_{T,\bar{\theta}},I_{\bar{\theta}}^{-1}\right)}\right)^{+}\mathcal{\varphi}^{C}\left(g;\Delta_{T,\bar{\theta}},I_{\bar{\theta}}^{-1}\right)\mathrm{d}g\tilde{\pi}_{n}^{C}(h)\mathrm{d}h\\
 & \leq\left\{ \sup_{x\in C}\mathcal{\varphi}^{C}\left(x;\Delta_{T,\bar{\theta}},I_{\bar{\theta}}^{-1}\right)\right\} \iint\left(1-\frac{f_{T,g}^{C}(g)\pi(g)\mathcal{\varphi}^{C}\left(h;\Delta_{T,\bar{\theta}},I_{\bar{\theta}}^{-1}\right)}{f_{T,h}^{C}(h)\pi(h)\mathcal{\varphi}^{C}\left(g;\Delta_{T,\bar{\theta}},I_{\bar{\theta}}^{-1}\right)}\right)^{+}\mathrm{d}g\tilde{\pi}_{T}^{C}(h)\mathrm{d}h.
\end{align*}
By dominated convergence it is enough to conclude that this quantity
goes to 0 in
\begin{align*}
P_{T,C}(\mathrm{d}y)\tilde{\pi}_{T}^{C}(\mathrm{d}h)\lambda_{C}(\mathrm{d}g) & =\int P_{T,x}(\mathrm{d}y)\tilde{\pi}_{T}^{C}(h)\mathrm{d}h\lambda_{C}(\mathrm{d}g)\\
 & =\int\prod_{i=1}^{T}f\left(\theta+s/\surd{T},y_{i}\right)\frac{\prod_{i=1}^{T}f\left(\theta+h/\surd{T},y_{i}\right)\tilde{\pi}^{C}(h)\mathrm{d}h}{\int\prod_{i=1}^{T}f\left(\theta+u/\surd{T},y_{i}\right)\tilde{\pi}^{C}(u)\mathrm{d}u}\mathrm{d}s\lambda_{C}(\mathrm{d}g)\\
 & =\prod_{i=1}^{T}f\left(\theta+h/\surd{T},y_{i}\right)\tilde{\pi}^{C}(h)\mathrm{d}h\lambda_{C}(\mathrm{d}g)\\
 & =P_{T,C}(\mathrm{d}y)\tilde{\pi}^{C}(h)\mathrm{d}h\lambda_{C}(\mathrm{d}g)
\end{align*}
probability. Under Theorem 7.2 in \textcite{VanderVaart2000} mean-square differentiability
of the likelihood implies that the likelihood ratio allows for the
LAN \autocite[Definition 7.14]{VanderVaart2000} expansion
\begin{align*}
 & \frac{\prod_{i=1}^{T}f(\theta+g/\surd{T},y_{i})}{\prod_{i=1}^{T}f(\theta+h/\surd{T},y_{i})}=\\
 & =\prod_{i=1}^{T}\frac{f(\theta+g/\surd{T},y_{i})}{f(\theta,y_{i})}\bigg/\prod_{i=1}^{T}\frac{f(\theta+h/\surd{T},y_{i})}{f(\theta,y_{i})}\\
 & =\exp\left(\frac{1}{\surd{T}}\sum_{i=1}^{T}g^{T}\ell'_{\theta}(y_{i})-\frac{1}{2}g^{T}I_{\theta}g-\frac{1}{\surd{T}}\sum_{i=1}^{T}h^{T}\ell'_{\theta}(y_{i})-\frac{1}{2}h^{T}I_{\theta}h+o_{P_{\theta}}(1)\right)
\end{align*}
and thus as $T\rightarrow\infty$ and using continuity of the prior
$\pi$ at $\bar{\theta}$ we have
\[
1-\frac{f_{T,g}^{C}(g)\pi(g)\mathcal{\varphi}^{C}\left(h;\Delta_{T,\bar{\theta}},I_{\bar{\theta}}^{-1}\right)}{f_{T,h}^{C}(h)\pi(h)\mathcal{\varphi}^{C}\left(g;\Delta_{T,\bar{\theta}},I_{\bar{\theta}}^{-1}\right)}\rightarrow0
\]
which yields the result.
\end{proof}

\noindent\begin{remark}
\begin{itemize}
\item[\emph{i)}]
The centring sequence $\Delta_{T,\theta}$ can be replaced by any best regular estimator. To see this note that following \textcite[Theorem 8.14]{VanderVaart2000} any best regular estimator, $\hat{\theta}_T$, satisfies the expansion
\begin{equation*}
 \surd{T}(\hat{\theta}_T - \bar{\theta})	=\frac{1}{\surd{T}}\sum_{i=1}^{T}\tilde{I}_{\bar{\theta}}^{-1}\frac{\partial\ell(\bar{\theta},Y_{i})}{\partial\theta} + o_{P_{T,\bar{\theta}}}(1)
\end{equation*}
and thus
\[
\Delta_{T}(\bar{\theta})-\surd{T}\left(\hat{\theta}_{T}-\bar{\theta}\right)\rightarrow 0
\]
in $P_{T, \theta_0}$-probability as $T\rightarrow\infty$. Since
\[
\left\Vert \mathcal{N}\left(\Delta_{T,\bar{\theta}},\tilde{I}_{\bar{\theta}}^{-1}\right)-\mathcal{N}\left\{\surd{T}\left(\hat{\theta}_{T}-\bar{\theta}\right),\tilde{I}_{\bar{\theta}}^{-1}\right\}\right\Vert \lesssim \left\Vert \surd{T}\left(\hat{\theta}_{T}-\bar{\theta}\right)-\Delta_{T,\bar{\theta}}\right\Vert \rightarrow 0
\]
in probability.
\item[\emph{ii)}] Under regularity conditions \cite[Theorem 5.39]{VanderVaart2000} the maximum likelihood estimator is best regular and can be used as a centring sequence following the argument in \emph{i)}.
\end{itemize}
\end{remark}

We will now apply this Bernstein-von Mises result to our exponential family models. Hence, consider again the likelihood contribution of every observation $y$,
\begin{equation}\label{eq:marginal_iid}
	p(y \mid \beta, \tau) = \int \prod_{j=1}^J m(y_{j}) \exp\left\{(c_{j}^\T\beta + x) y_{j} - A(c_{j}^\T\beta + x)\right\} \varphi(x,0,\tau^2)\mathrm{d}x.
\end{equation}
For simplicity we assume that the exogenous variables $c_j$ are all identical and that $\Theta$ is a subset of $\mathbb{R}$. Let $A$ be continuously differentiable (e.g. the Binomial and Poisson models introduced above).
The prior can be easily chosen to fulfil the conditions of the updated Bernstein--von Mises theorem. The other conditions need further analysis.
In order to show differentiability in quadratic mean it is sufficient to prove that the map $\theta \mapsto p(y \mid \theta)^{1/2}$ is continuously differentiable.
By Lemma 7.6 in \textcite{VanderVaart2000} we need to show that
\[
\theta\mapsto p(y\mid\theta)^{1/2}=\left[\int m(y)\exp\left\{ \left(c^{\T}\beta+x\right)\cdot y - A(c^{\T}\beta+x)\right\} \varphi(x,0,\text{\ensuremath{\tau^2)\mathrm{d}x}}\right]^{1/2}
\]
is continuously differentiable for all $y$. Firstly,
\begin{equation*}
	\frac{\partial}{\partial\theta} p(y \mid \theta)^{1/2} = \frac{1}{2p(y\mid\theta)^{1/2}}\partial_\theta p(y\mid \theta).
\end{equation*}
It is easy to see that $\theta \mapsto p(y\mid\theta)$ and
$\theta\mapsto \partial_\theta p(y \mid\theta)$ are continuous.
The fisher information is well defined, continuous in $\theta$ and positive since
\begin{align*}
	I_\theta & = E\left[\left\{\partial_\theta\log p(Y\mid \theta)\right\}^2\right] \\
	& = \int \left\{ \partial_\theta\log p(y\mid \theta)\right\}^2 p(y \mid \theta)\mathrm{d}y > 0
\end{align*}
whenever $\partial_\theta\log p(y\mid\theta)$ is not identically 0 for all $y$.
The multivariate case is more involved and treated for example in \textcite{mukerjee2002positive} for the Binomial and Poisson case.
In order to ensure the existence of the tests consider $K_{1}\subset K_{2}\subset\ldots$ an increasing sequence of compact sets with $\cup_{i=1}^{\infty}K_{i}=\Theta$. Then, if the model is identifiable and continuous in total variation norm, Lemma 10.6 in \textcite{VanderVaart2000}, and a diagonal argument similar to that in the proof of
\cite[Lemma~10.6]{VanderVaart2000},  ensures the existence of a sequence of estimators $\hat{\theta}_{T}$ such that $\sup_{\theta\in K_{T}}P_{\theta}(|\hat{\theta}_{T}-\theta|\geq\varepsilon)\rightarrow 0$
whence we have, see for example \cite[Lemmas~1,2 in Section 2.2.3]{nickl2012statistical},
\[
E_{\bar{\theta}}(\phi_{T})\rightarrow 0,\quad\sup_{\left\{ \|\theta-\bar{\theta}\|\geq\varepsilon\right\} \cap K_{T}}E_{T,\theta}\left(1-\phi_{T}\right)\rightarrow 0.
\]
Since our model has a density with respect to the Lebesgue measure continuity in total variation is trivially the case as we can write the total variation distance as
\begin{equation*}
	\|P_\theta - P_{\theta'}\|_\mathrm{tv} = \int \left|p(y \mid \theta) - p(y \mid \theta') \right| \mathrm{d}y.
\end{equation*}
Therefore, by Scheff\'e's lemma, continuity in the parameter already implies convergence of the integral and therefore continuity in the total variation distance. To conclude that our models are indeed identifiable it is enough to ensure that
\begin{itemize}
	\item[$i)$] the integral
	$$E(Y) = E\left[A'(k + X)\right] = \int_{\mathbb{R}} A'(k + \tau x)\varphi(x; 0, 1) \mathrm{d}x< \infty,
	$$
	for all $k, \tau$ and
	\item[$ii)$] the equation
	$$
		\frac{A'(c^T\beta_1 + \tau_1 x)}{\tau_1} = \frac{A'(c^T\beta_2 + \tau_2 x)}{\tau_2} \quad \text{for all $c$ and $x$}
	$$
	has no solution,
\end{itemize}
see \textcite{labouriau2014note}. These conditions are fulfilled for the Binomial case, $A'(\eta) = ne^\eta/(1+e^\eta)$, and Poisson case $A'(\eta) = e^\eta$.

\subsection{Importance Sampling with Univariate Random Effects}
We will now consider Assumption \ref{ass3} in the context of generalized linear mixed models, which we will prove using Assumption \ref{ass4} and Theorem \ref{thm:uniform_CLT}.
In the following we will first consider a univariate random effect and a Gaussian importance sampling proposal. This will include the example of Section \ref{sec:sim_rem}. In addition we will show how fatter tails in the proposal affect the existence of moments by considering a univariate $t$-proposal. 
Recall that we are interested in bounds on
\begin{equation}\label{eq:is_weight}
E^Y\left[\sup_{\theta\in B(\overline{\theta})} E^{X \mid Y}\left\{\overline{w}(Y, X, \theta)^a \right\}\right] = E^{Y}\left[\sup_{\theta\in B(\overline{\theta})}\frac{E^{X \mid Y}\{w(Y, X, \theta)^a \}}{p(Y\mid\theta)^a}\right],
\end{equation}
where $a > 0$, $\theta = (\beta, \tau)$ and $B(\bar{\theta}) \subset \Theta$ denotes a closed $\varepsilon$-ball around $\bar{\theta}$.
For additional clarity, we write $E^Y$ and $E^{X \mid Y}$ for the expectations over $Y$ and $X$ given $Y$, respectively. Consider the Gaussian proposal centred at the mode
\begin{equation}\label{eq:gaussian_prop}
	q(x \mid y) = \varphi(x; \widehat{x}, \tau_q^2),
\end{equation}
where $\tau_q^2$ denotes the proposal variance and $\widehat{x}$ is the mode of $h(x;y) = g(y\mid x)f(x)$ and fulfils the first order condition
\begin{equation}\label{eq:first_order_condition}
	\widehat{x} = \tau^2\left\{S - \widetilde{A}'(\widehat{x})\right\},
\end{equation}
where $\widetilde{A}'(x) = \sum_{j=1}^J A'(c_j^{\T}\beta + x)$ with $A'(z) = \partial_z A(z)$ and $S = \sum_{j=1}^J y_j$. For later convenience we define the unnormalized proposal density
\[
\widetilde{q}(x;y)=\frac{q(x\mid y)}{q(\widehat{x}\mid y)},
\]
where $q(x\mid y)$ is the proposal density. For a symmetric proposal distribution centred at $\widehat{x}$ the term $q(\widehat{x}\mid y)$ is simply an inverse normalizing constant, which only involves the proposal parameters. For the Gaussian proposal
\begin{equation}\label{eq:q_gaussian}
\widetilde{q}(x; y) = \exp\left\{- \frac{(x - \widehat{x})^2}{2\tau^2}\right\}, \quad q(\widehat{x} \mid y)=\frac{1}{(2\pi\tau_q^2)^{1/2}}.
\end{equation}
Associated with this we introduce the modified weight which is
defined as
\begin{equation}
\widetilde{w}(x,y)=\frac{g(y\mid x)f(x)}{g(y\mid \widehat{x})f(\widehat{x})}%
\frac{1}{\widetilde{q}(x;y)}=\frac{h(x;y)}{h(\widehat{x};y)}\frac
{1}{\widetilde{q}(x;y)},\label{eq:mod_weight}%
\end{equation}
where $h(x;y)=g(y\mid x)f(x)$. These weights are easier to work with as
$\widetilde{w}(x,y)=1$ when $x=\widehat{x}$. It is easily seen that%
\[
\widetilde{w}(x,y)=\frac{h(x;y)}{q(x\mid y)}\frac{q(\widehat{x}%
\mid y)}{h(\widehat{x};y)}=w(x,y)\frac{q(\widehat{x}\mid y)}{h(\widehat{x};y)},
\]
so that the modified weights are proportional to the standard weights
$w(x,y)$ as a function of $x$. We can recast the expectation (\ref{eq:is_weight}) as
\begin{align}
E^Y\left[\sup_{\theta\in B(\overline{\theta})} E^{X \mid Y}\left\{\overline{w}(Y, X, \theta)^a \right\}\right]
& = E^{Y}\left[\sup_{\theta\in B(\overline{\theta})} \frac{E^{X \mid Y}\{w(X,Y, \theta)^{a}\}}{p(Y \mid \theta)^{a}}\right] \label{mod_weight_expect}\\
& = E^{Y}\left[\sup_{\theta\in B(\overline{\theta})} \frac{E^{X \mid Y}\left\{\widetilde{w}(X,Y, \theta)^{a}\right\}}{E^{X \mid Y}\left\{\widetilde{w}(X,Y, \theta)\right\}^{a}}\right].\nonumber
\end{align}
The $\log$-density of the observations is given by
\begin{align}\nonumber
\log g(y \mid x) & = \sum_{j=1}^{J}\{\log m(y_{j})+y_j\eta
_{j}-A(\eta_{j})\}\label{eq:loglik_exponentialfam}\\
&  =\sum_{j=1}^{J}\{\log m(y_{j})+y_{j}c_j^{\T}\beta+y_{j}x-A(c_j^{\T}\beta+x)\}\nonumber\\
&  =k(y)+x(J\overline{y})-\widetilde{A}(x),\nonumber
\end{align}
where $k(y)$ represents constant values (which do not depend upon
$x$), $J\overline{y}= \sum_{j=1}^{J}y_{j}$ and $\widetilde{A}(x)=\sum_{j=1}^{J}A(c_j^{\T}\beta+x)$. Hence, we get
\begin{equation}
\log h(x;y)=\log g(y\mid x)f(x)=c+x(J\overline{y})-\widetilde{A}(x)-\frac{x^{2}}{2\tau^{2}}.\label{eq:pi_exp_uni}%
\end{equation}
We will proceed by deriving bounds for the denominator and enumerator of \eqref{eq:is_weight} separately. We present the following lemma on the denominator without reference to the Gaussian proposal, because it holds for general proposal distribution.
\begin{lemma}\label{lem:inv_upper_bound}
Consider the exponential family model with repeated measurement $j = 1, \ldots, J$ and Gaussian random effects. For general proposal density $q(x \mid y)$ we have
\begin{equation*}
	\frac{1}{E^X\left\{\widetilde{w}(X,y)\right\}} \leq \frac{(2\pi)^{1/2}}{C}(b +1),
\end{equation*}
where $b = \tau \widetilde{A}'(\widehat{x})$ and $C = q(\widehat{x} \mid y) (2\pi\tau^2)^{1/2}$.
\end{lemma}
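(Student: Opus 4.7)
The plan is to reduce the bound to a proposal-free inequality about the ratio $p(y)/h(\widehat{x};y)$, and then verify that inequality by comparison with a one-sided Gaussian combined with a Mills ratio bound. For the reduction, expanding the definition of $\widetilde{w}$ and using $E^{X\mid y}\{w(X,y)\}=p(y)$ gives
\[
E^{X\mid y}\{\widetilde{w}(X,y)\} = \int \frac{h(x;y)}{h(\widehat{x};y)}\frac{q(\widehat{x}\mid y)}{q(x\mid y)}\, q(x\mid y)\,\mathrm{d}x = \frac{q(\widehat{x}\mid y)\, p(y)}{h(\widehat{x};y)},
\]
so the stated inequality is equivalent to $p(y)/h(\widehat{x};y) \geq \tau/(b+1)$, which involves only the target. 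This explains why the lemma is stated for general proposals $q$.

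I would then use \eqref{eq:pi_exp_uni} and the first-order condition \eqref{eq:first_order_condition}, which gives $J\overline{y} = \widehat{x}/\tau^{2} + \widetilde{A}'(\widehat{x})$. Substituting this into $\log h(\widehat{x}+u;y) - \log h(\widehat{x};y)$ cancels the cross term between $u$ and $\widehat{x}$ and leaves
\[
\frac{p(y)}{h(\widehat{x};y)} = \int_{-\infty}^{\infty}\exp\!\left\{u\widetilde{A}'(\widehat{x}) - [\widetilde{A}(\widehat{x}+u) - \widetilde{A}(\widehat{x})] - \frac{u^{2}}{2\tau^{2}}\right\}\mathrm{d}u.
\]
Since $A$ is increasing, so is $\widetilde{A}$, so for $u\leq 0$ the term $-[\widetilde{A}(\widehat{x}+u) - \widetilde{A}(\widehat{x})]$ in the exponent is non-negative and the integrand dominates $\exp\{u\widetilde{A}'(\widehat{x}) - u^{2}/(2\tau^{2})\}$. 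Restricting to $(-\infty,0]$ and completing the square in $u$ gives
\[
\frac{p(y)}{h(\widehat{x};y)} \geq \tau\sqrt{2\pi}\, e^{b^{2}/2}\,\Phi(-b),
\]
where $b = \tau\widetilde{A}'(\widehat{x}) \geq 0$ and $\Phi$ denotes the standard normal distribution function.

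To close, I invoke the Komatsu Mills ratio inequality $\Phi(-b)/\varphi(b) \geq 2/(b + \sqrt{b^{2}+4})$, which after squaring the relation $b+2\geq \sqrt{b^{2}+4}$ is in turn bounded below by $1/(b+1)$ for every $b\geq 0$. Hence $\sqrt{2\pi}\,e^{b^{2}/2}\,\Phi(-b) \geq 1/(b+1)$, so $p(y)/h(\widehat{x};y) \geq \tau/(b+1)$, and substituting back through the identity of the first paragraph yields the claimed bound. The main subtlety is the choice of one-sided truncation: integrating only over $u\leq 0$ lets us discard the non-negative contribution $-[\widetilde{A}(\widehat{x}+u) - \widetilde{A}(\widehat{x})]$ for free, whereas symmetric or Taylor-based bounds would introduce $\widetilde{A}''$-dependent constants that cannot be controlled uniformly in $y$; recovering the precise factor $(b+1)^{-1}$ then relies on a sharp Mills ratio lower bound rather than the cruder $\varphi(b)/b$.
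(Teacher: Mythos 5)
Your proof is correct and follows essentially the same route as the paper's: both reduce $E^{X}\{\widetilde{w}(X,y)\}$ to $q(\widehat{x}\mid y)\int h(x;y)/h(\widehat{x};y)\,\mathrm{d}x$, bound the integral from below by restricting to the left of the mode and discarding the nonnegative $\widetilde{A}$-increment via monotonicity, complete the square using the first-order condition to obtain $C e^{b^{2}/2}\Phi(-b)$, and finish with the Birnbaum/Komatsu Mills-ratio bound together with $\sqrt{b^{2}+4}\leq b+2$. Your change of variables $u=x-\widehat{x}$ and the explicit remark that the bound is proposal-free are only cosmetic refinements of the paper's argument.
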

\begin{proof}[Proof of Lemma \ref{lem:inv_upper_bound}]
For given observation $y$, the expectation of the rescaled weights is
\begin{align*}
E^{X}\left\{\widetilde{w}(X,y)\right\} & = \int \widetilde{w}(x,y) q(x \mid y) \mathrm{d}x \\
& = \int \frac{h(x;y)}{h(\widehat{x};y)}\frac{q(x \mid y)}{\widetilde{q}(x;y)} \mathrm{d}x \\
& = q(\widehat{x}\mid y)\int\frac{h(x; y)}{h(\widehat{x}; y)}\mathrm{d}x.
\end{align*}
Write again $S = J\bar{y} = \sum_{j=1}^Jy_j$. Since $\widetilde{A}$ is an increasing function we obtain for $x \leq \widehat{x}$,
\begin{align*}
\log h(x; y)-\log h(\widehat{x} ; y) & =-\widetilde{A}(x)+xS-\frac{1}{2}\frac{x^{2}}{\tau^{2}} +\widetilde{A}(\widehat{x})-\widehat{x}S+\frac{1}{2}\frac{\widehat{x}^{2}}{\tau^{2}}\\
& \geq(x-\widehat{x})S-\frac{1}{2}\frac{x^{2}}{\tau^{2}}+\frac{1}{2}\frac{\widehat{x}^{2}}{\tau^{2}}\\
& =R_{2}-\frac{1}{2}\frac{\{x-\tau^{2}S\}^{2}}{\tau^{2}},
\end{align*}
where
\[
R_{2}=\frac{\tau^{2}S^{2}}{2}-\widehat{x}S+\frac{1}{2}\frac
{\widehat{x}^{2}}{\tau^{2}}=\frac{\tau^{2}\widetilde{A}^{\prime
}(\widehat{x})^{2}}{2}\text{, }%
\]
by using the first order condition for the mode $\widehat{x}=\tau^{2}\{S-\widetilde{A}^{\prime}(\widehat{x})\}$. Therefore
\[
E^{X}\left\{\widetilde{w}(X,y)\right\} \geq
q(\widehat{x}\mid y)\left(2\pi\tau^2\right)^{1/2}\exp\left\{  \frac{\tau^{2}\widetilde{A}%
^{\prime}(\widehat{x})^{2}}{2}\right\}  \Phi\{-\tau\widetilde{A}^{\prime
}(\widehat{x})\}.
\]
Consider the inequality due to \textcite{birnbaum1942}
\begin{equation*}
	\frac{\exp(-b^2/2)}{1 - \Phi(b)} < \left(\frac{\pi}{2}\right)^{1/2} \left\{b + (b^2 + 4)^{1/2} \right\}
\end{equation*}
Setting $b = \tau \widetilde{A}'(\widehat{x})$ and $C = q(\widehat{x} \mid y) (2\pi\tau^2)^{1/2}$ gives
\begin{align*}
\frac{1}{E^X\left\{\widetilde{w}(X, y)\right\}} & \leq \frac{\exp(-b^2/2)}{C\left\{1 - \Phi(b)\right\}} \\
& \leq C^{-1}\left(\frac{\pi}{2}\right)^{1/2} \left\{b + (b^2 + 4)^{1/2} \right\} \\
& \leq \frac{\left(2\pi\right)^{1/2}}{C} (b + 1),
\end{align*}
as $(b^2 + 4)^{1/2} \leq b + 2$. 
\end{proof}
Using Lemma \ref{lem:inv_upper_bound} we can set
\begin{equation*}
	\sup_{\theta \in B(\overline{\theta})} \frac{1}{E^X\left\{\widetilde{w}(X, y, \theta)\right\}} \leq \sup_{\theta \in B(\overline{\theta})} \frac{1}{q(\widehat{x} \mid y) \tau}\left\{\tau \widetilde{A}'(\widehat{x}) + 1 \right\}.
\end{equation*}
We will use this result in the following corollary.

\begin{corollary}\label{cor:upper_exp_inv}
Assume one of the following condition holds:
\begin{itemize}
	\item[(i)] $\sup_x \widetilde{A}'(x) < \infty$,
	\item[(ii)] $E^Y(Y^a) < \infty$ and $\sup_{\theta\in B(\bar{\theta})} \widetilde{A}'(0) <\infty$.
\end{itemize}
Then taking the expectation over $Y$, we have
\begin{equation*}
	E^Y\left[\sup_{\theta\in B(\overline{\theta})} \frac{1}{E^{X\mid Y}\{\widetilde{w}(X,Y)\}^a}\right] < \infty.
\end{equation*}
\end{corollary}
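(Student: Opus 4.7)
The plan is to apply Lemma \ref{lem:inv_upper_bound} directly, take the $a$-th power, and then control the resulting upper bound first uniformly in $\theta \in B(\bar{\theta})$ and then in expectation under $Y\sim \mu$. With the Gaussian proposal \eqref{eq:q_gaussian} we have $q(\widehat{x}\mid y)(2\pi\tau^{2})^{1/2}=\tau/\tau_{q}$, and since $\tau$ is continuous and positive on $B(\bar{\theta})$, the prefactor $\{(2\pi)^{1/2}\tau_{q}/\tau\}^{a}$ arising from the $a$-th power of Lemma \ref{lem:inv_upper_bound} is uniformly bounded on $B(\bar{\theta})$. All the work is therefore in controlling
\[
H(y,\theta) := \bigl\{\tau\,\widetilde{A}'(\widehat{x}(y,\theta))+1\bigr\}^{a},
\]
where $\widehat{x}(y,\theta)$ is the mode characterized by the first-order condition \eqref{eq:first_order_condition}.

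Under assumption (i), $\widetilde{A}'$ is bounded by hypothesis (and $\tau$ is bounded on $B(\bar{\theta})$), so $H(y,\theta)$ is uniformly bounded in $(y,\theta)$, and the result is immediate after taking expectations.

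Under assumption (ii), the strategy is to turn the implicit FOC $\widehat{x}=\tau^{2}\{S-\widetilde{A}'(\widehat{x})\}$, with $S=\sum_{j=1}^{J}y_{j}\geq 0$, into an explicit upper bound for $\widetilde{A}'(\widehat{x})$ in terms of $y$ and $\theta$. Using strict convexity of $A$, the function $\widetilde{A}'$ is strictly increasing, so two cases arise: if $\widehat{x}\geq 0$, the FOC gives $\widetilde{A}'(\widehat{x})=S-\widehat{x}/\tau^{2}\leq S$; if $\widehat{x}<0$, monotonicity of $\widetilde{A}'$ yields $\widetilde{A}'(\widehat{x})\leq\widetilde{A}'(0)$. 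Combining the two cases gives the key inequality
\[
\widetilde{A}'(\widehat{x}(y,\theta))\leq S+\widetilde{A}'(0),
\]
and so $H(y,\theta)\leq\{\tau S+\tau\widetilde{A}'(0)+1\}^{a}$. On $B(\bar{\theta})$, $\tau$ and (by assumption) $\widetilde{A}'(0)$ are bounded, so a $c_{r}$-inequality yields $\sup_{\theta\in B(\bar{\theta})}H(y,\theta)\leq C_{1}S^{a}+C_{2}$ for finite constants. Finally, the moment assumption $E^{Y}(Y^{a})<\infty$ combined with another application of the $c_{r}$-inequality gives $E^{Y}(S^{a})\leq J^{a-1}\sum_{j}E^{Y}(y_{j}^{a})<\infty$, completing the proof.

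The main obstacle is the case-(ii) step: the mode $\widehat{x}$ is defined only implicitly through the FOC, and a naive bound could blow up badly with $y$. The key observation that dispatches this is the sign dichotomy for $\widehat{x}$ coupled with monotonicity of $\widetilde{A}'$, which turns the implicit relation into the clean linear-in-$S$ bound above and reduces the remaining work to a standard moment calculation.
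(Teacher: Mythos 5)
Your proposal is correct and follows essentially the same route as the paper: apply Lemma \ref{lem:inv_upper_bound}, dispose of case (i) by boundedness of $\widetilde{A}'$, and for case (ii) derive the key inequality $\widetilde{A}'(\widehat{x})\leq\max\{\widetilde{A}'(0),S\}$ from the first-order condition together with monotonicity of $\widetilde{A}'$, then finish with a standard moment bound using $E^Y(Y^a)<\infty$. Your sign dichotomy on $\widehat{x}$ is just a slightly more direct phrasing of the paper's argument that the mode lies between $0$ and the point $\widetilde{x}$ with $\widetilde{A}'(\widetilde{x})=J\overline{y}$, and the subsequent $c_r$-inequality step is an equivalent way of carrying out the paper's final expectation computation.
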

\begin{proof}
Applying Lemma \ref{lem:inv_upper_bound} with $b=\tau\widetilde{A}^{\prime}(\widehat{x})$ yields
\begin{equation}
\label{eq:applemmanine}
E^Y\left[\sup_{\theta\in B(\overline{\theta})}\frac{1}{E^X\{\widetilde{w}(X,Y)\}^a}\right] \leq E^Y\left[\sup_{\theta \in B(\overline{\theta})}\left\{ \frac{\tau \widetilde{A}'(\widehat{x}) + 1}{C \tau} \right\}^a\right],
\end{equation}
where we write $C = q(\widehat{x} \mid y)$ which only involved parameters of the proposal distribution.
The right-hand side of \eqref{eq:applemmanine} is finite provided $E^{Y}\{\sup_{\theta \in B(\overline{\theta})}\widetilde{A}^{\prime}(\widehat{x})^a\}<\infty$. This concludes the proof for \emph{(i)}. For \emph{(ii)} we need to control the function $\widetilde{A}'(\widehat{x})$. Therefore, it is useful to establish the behaviour of $\widetilde{A}^{\prime}(\widehat{x})$ in terms of the random variables $y = (y_1, \ldots, y_J)$. Recall the first order condition \eqref{eq:first_order_condition}
\begin{equation*}
\widehat{x}=\tau^{2}\{J\overline{y}-\widetilde{A}^{\prime}(\widehat{x})\},
\end{equation*}
where the sufficient statistic is $S=J\overline{y}=$ $\sum_{j=1}^{J}y_{j}$. It is easily established that $\widetilde{A}^{\prime}(\widehat{x})\leq\max\{\widetilde{A}^{\prime}(0),J\overline{y}\}$.
To see this note
\begin{equation}\label{eq:foc}
{\partial_x}\log h(x; y)=J\overline{y}-\widetilde{A}^{\prime}(x)-\frac{x}{\tau^{2}}.
\end{equation}
The function $\widetilde{A}^{\prime}(x)$ is monotonically increasing. If $\widetilde{A}^{\prime}(0)<J\overline{y}$, then at $x=0$, $\partial_{x}\log h(x; y)>0$ and at $x=\widetilde{x}$, where $\widetilde{A}^{\prime}(\widetilde{x})=J\overline{y}$, $\partial_{x}\log h(x;y)<0$ since $\widetilde{x}>0$. Similarly, if $\widetilde{A}^{\prime}(0)<J\overline{y}$ then at $x=0$, $\partial_x\log h(x; y)<0$ and at $x=\widetilde{x}$, $\partial_{x}\log h(x;y)>0$. As a consequence, the mode of the concave function $\log h(x; y)$, $\widehat{x}$ is always between $0$ and $\widetilde{x}$, where
$\widetilde{A}^{\prime}(\widetilde{x})=J\overline{y}$. This yields $\widetilde{A}^{\prime}(\widehat{x})\leq\max\{\widetilde{A}^{\prime}(0),J\overline{y}\}$ so that
\begin{align*}
E^Y\left\{\sup_{\theta \in B(\overline{\theta})}\widetilde{A}^{\prime}(\widehat{x})^a\right\} & \leq  E^Y\left[\sup_{\theta \in B(\overline{\theta})}\max\{\widetilde{A}^{\prime}%
(0),S\}^{a}\right] \\
& \leq  E^Y\left[\max\left\{\sup_{\theta \in B(\overline{\theta})}\widetilde{A}^{\prime}%
(0),S\right\}^{a}\right] \\
& = \sup_{\theta \in B(\overline{\theta})}\widetilde{A}^{\prime}(0)^a\mathbb{P}^Y\left\{S<\sup_{\theta \in B(\overline{\theta})}\widetilde{A}^{\prime}(0)\right\}+\int_{\sup_{\theta \in B(\overline{\theta})}\widetilde{A}^{\prime}(0)}^{\infty}s^{a}\mathrm{d}F_S(s).
\end{align*}
The last quantity is finite whenever $\sup_{\theta \in B(\overline{\theta})} \widetilde{A}'(0)< \infty$ and $E^Y(Y^{a})<\infty$.
\end{proof}

\begin{remark}[Examples with Gaussian proposal]\label{cor:inverse_gauss}
If the proposal is a Gaussian centred at the mode \mbox{$q(x \mid y) = \varphi(x; \widehat{x}, \tau_q^2)$} and $C$ as defined in Lemma \ref{lem:inv_upper_bound}, then $C = \tau/\tau_q$. For the Binomial case, we know that $\sup_x \widetilde{A}'(x) < \infty$ and therefore condition $(i)$ of the preceding Corollary \ref{cor:upper_exp_inv} is fulfilled.
For the Poisson case $\widetilde{A}'(x)$ is not bounded, but we can use the second part of the corollary. Note that $\widetilde{A}'(x)$ is continuous and therefore $\widetilde{A}'(0)$ can be bounded in a neighbourhood small enough. In addition, if the Poisson model is true, it is straightforward to establish that the moments $E(Y^a)$ exist for all $a > 0$ and we can therefore conclude by part \emph{(ii)}.
\end{remark}
Having established conditions to ensure
\begin{equation*}
	E^Y\left[\sup_{\theta \in B(\overline{\theta})} E^{X \mid Y}\{\widetilde{w}(X,Y)\}^{-a}\right] < \infty
\end{equation*}
we can bound (\ref{mod_weight_expect}) whenever there exists a constant $K < \infty$ such that
\begin{equation*}
	\sup_{y \in \mathsf{Y}} \sup_{\theta \in B(\overline{\theta})} E^{X \mid Y}\left\{\widetilde{w}(X,y)^a\right\} < K.
\end{equation*}
In the following we will provide conditions for Gaussian and $t$-distributed proposals.

\begin{proposition}
\label{gaussian_upper} Consider the Gaussian proposal \eqref{eq:gaussian_prop} and some exponent $a > 0$. Then
\[
E^{X}\{\widetilde{w}(X,y)^{a}\}<\infty
\]
if and only if $\tau_q^2>\frac{(a-1)}{a}\tau^{2}$, where $\tau^{2}$
is the variance of the random effects term. If this condition is satisfied then
\[
E^{X}\{\widetilde{w}(X,y)^{a}\}\leq\left\{\frac{a\tau_q^2-(a-1)\tau^{2}}{\tau^{2}}\right\}^{-\frac{1}{2}},
\]
independent of $y$. 
\end{proposition}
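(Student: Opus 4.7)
The plan is to reduce everything to a single Gaussian integral in $(x-\widehat{x})$ by bounding the normalized likelihood ratio $h(x;y)/h(\widehat{x};y)$ above by a Gaussian density and then reading off the integrability condition. The key ingredient is convexity of $\widetilde{A}$ combined with the first-order condition characterizing $\widehat{x}$.

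First I would expand, using the exponential-family form \eqref{eq:pi_exp_uni} and the first-order condition $J\overline{y}-\widetilde{A}^{\prime}(\widehat{x})=\widehat{x}/\tau^2$,
\[
\log\frac{h(x;y)}{h(\widehat{x};y)}
= -\{\widetilde{A}(x)-\widetilde{A}(\widehat{x})\}+(x-\widehat{x})J\overline{y}-\frac{x^2-\widehat{x}^2}{2\tau^2}.
\]
Because $\widetilde{A}$ is convex, $\widetilde{A}(x)-\widetilde{A}(\widehat{x})\geq \widetilde{A}^{\prime}(\widehat{x})(x-\widehat{x})$. Substituting the first-order condition to eliminate $J\overline{y}-\widetilde{A}^{\prime}(\widehat{x})$ collapses the linear terms and leaves exactly
\[
\log\frac{h(x;y)}{h(\widehat{x};y)}\leq -\frac{(x-\widehat{x})^2}{2\tau^2},
\]
which is the crucial Gaussian majorization. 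Note that it does not involve $y$ except through $\widehat{x}$, which is why the final bound is $y$-free.

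Next I would plug into the definition of the modified weight with the Gaussian proposal, using $\widetilde{q}(x;y)=\exp\{-(x-\widehat{x})^2/(2\tau_q^2)\}$ from \eqref{eq:q_gaussian}, to get
\[
\widetilde{w}(x,y)^a\leq \exp\!\left\{\frac{a(x-\widehat{x})^2}{2}\!\left(\frac{1}{\tau_q^2}-\frac{1}{\tau^2}\right)\right\}\!.
\]
Integrating against $q(x\mid y)=\varphi(x;\widehat{x},\tau_q^2)$ turns the bound into a single Gaussian integral with quadratic coefficient
\[
\frac{1}{\tau_q^2}-a\!\left(\frac{1}{\tau_q^2}-\frac{1}{\tau^2}\right)=\frac{a\tau_q^2-(a-1)\tau^2}{\tau_q^2\tau^2}.
\]
This is positive precisely when $\tau_q^2>\tfrac{a-1}{a}\tau^2$, and in that case a direct evaluation of the Gaussian integral produces the stated bound $\{(a\tau_q^2-(a-1)\tau^2)/\tau^2\}^{-1/2}$.

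The main obstacle is the necessity (``only if'') direction, since it requires a matching lower bound on $h(x;y)/h(\widehat{x};y)$ and convexity alone only gives an upper bound. To handle this I would analyse the tails of $h(x;y)$: in the exponential families of interest $\widetilde{A}(x)$ is asymptotically at most linear on one of the two tails of $x$ (for the Binomial $\widetilde{A}^{\prime}$ is bounded, for the Poisson $\widetilde{A}(x)\to 0$ as $x\to-\infty$), so there is a half-line on which $h(x;y)\geq c(y)\exp(xJ\overline{y})\varphi(x;0,\tau^2)$ for some $c(y)>0$. On that half-line $\widetilde{w}(x,y)^a q(x\mid y)$ has a Gaussian tail with the same quadratic coefficient $\{a\tau_q^2-(a-1)\tau^2\}/(\tau_q^2\tau^2)$ (the linear term $xJ\overline{y}$ affecting only a shift), so the integral diverges whenever that coefficient is non-positive, giving the converse.
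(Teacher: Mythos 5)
Your proof of the sufficiency direction and of the explicit bound is essentially the paper's own: the same convexity inequality $\widetilde{A}(x)\geq\widetilde{A}(\widehat{x})+\widetilde{A}'(\widehat{x})(x-\widehat{x})$, combined with the first-order condition $\widehat{x}/\tau^{2}=J\overline{y}-\widetilde{A}'(\widehat{x})$, yields $\log\widetilde{w}(x,y)\leq\tfrac{1}{2}(x-\widehat{x})^{2}\left(\tau_q^{-2}-\tau^{-2}\right)$, after which the paper evaluates the resulting expectation via the moment generating function of a $\chi^{2}_{1}$ variable --- the same Gaussian integral you compute directly. The one genuine difference is the necessity direction, which the paper's proof does not address at all (an upper bound can only certify finiteness, never divergence), so your attempt at a matching lower bound is a real addition. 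Two remarks on that sketch. First, you do not need the extra tail hypothesis you invoke: since the paper's standing assumption is that $A$ is increasing, $\widetilde{A}(x)\leq\widetilde{A}(x_{0})$ for all $x\leq x_{0}$, so $h(x;y)\geq c(y)\exp(xJ\overline{y})\varphi(x;0,\tau^{2})$ holds on a left half-line automatically, and your argument then settles divergence whenever $\tau_q^{2}<(a-1)\tau^{2}/a$, where the lower-bound integrand grows like $\exp(cx^{2})$ with $c>0$. Second, the boundary case $\tau_q^{2}=(a-1)\tau^{2}/a$ is not covered by this route: there the quadratic terms cancel exactly and, using the first-order condition, the integrand reduces to a constant times $\exp\{ax\widetilde{A}'(\widehat{x})-a\widetilde{A}(x)\}$, which for the Binomial and Poisson models decays exponentially in both tails, so the expectation appears to be finite at the boundary and the strict-inequality ``only if'' cannot be obtained from a tail lower bound of this form. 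That is arguably a defect of the statement as much as of your sketch, but it should be flagged rather than passed over.
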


\begin{proof}
For brevity we define the sum $S=J\overline{y}=\sum_{j=1}^{J}y_{j}$ and again have $\widetilde{A}(x)=\sum_{j=1}^{J}A(c{j}^\T\beta+x)$.
Note that $x \mapsto \widetilde{A}(x)$ is convex and thus always dominates its chord
\begin{equation*}
	\widetilde{A}(x) \geq \widetilde{A}(\widehat{x}) + \widetilde{A}'(\widehat{x})(x-\widehat{x})
\end{equation*}
for any values $x, \widehat{x}$. Then the modified proposal form $\widetilde{q}(x;y)$ is given by (\ref{eq:q_gaussian}), so
\begin{align*}
\log\widetilde{w}(x,y) &  =\log h(x; y)-\log h(\widehat{x}; y) - \log\widetilde{q}(x;y)\\
&  =xS-\widetilde{A}(x)-\frac{1}{2}\frac{x^{2}}{\tau^{2}}\\
&  \quad -\widehat{x}S+\widetilde{A}(\widehat{x})+\frac{1}{2}\frac{\widehat{x}^{2}}{\tau^{2}}+\frac{1}{2}\frac{(x-\widehat{x})^{2}}{\tau_q^{2}}.
\end{align*}
This is, by design, zero at $x=\widehat{x}$ and can be bounded as
\begin{align*}
\log\widetilde{w}(x,y) &  \leq\frac{1}{2}\frac{\widehat{x}^{2}}{\tau^{2}}+\{S-\widetilde{A}^{\prime}(\widehat{x})\}(x-\widehat{x})-\frac{1}{2}\frac{x^{2}}{\tau^{2}}+\frac{1}{2}\frac{(x-\widehat{x})^{2}}{\tau_q^{2}}\\
&  =\frac{1}{2}(x-\widehat{x})^{2}d,
\end{align*}
by noting the first order condition that ${\widehat{x}}/{\tau^2}=S-\widetilde{A}^{\prime}(\widehat{x})$. The constant $d$ is defined to be $\ $
\[
d=\frac{1}{\tau_q^{2}}-\frac{1}{\tau^{2}},
\]
and $d>0$ if we choose $\tau_q^{2}< \tau^{2}$.
Hence
\begin{equation}
E^{X}\{\widetilde{w}(X,y)^{a}\}\leq E^{X}\left[\exp\left\{\frac{ad}{2}(X-\widehat{x})^{2}\right\}  \right]
,\label{eq:expect_num}%
\end{equation}
where again the expectation is with respect to $q(x \mid y)=\varphi(x\mid\widehat{x},\tau_q^{2})$. As $a>0$, clearly the above expectation exists if $d\leq0$
which would imply choosing $\tau_q^{2}\geq\tau^{2}$. To obtain a precise condition we note that
\begin{align*}
\frac{(X-\widehat{x})^{2}}{\tau_q^{2}}  &  \sim\chi_{1}^{2}.										
\end{align*}
Considering the moment generating function of the $\chi^2$-distribution we know that the expectation (\ref{eq:expect_num}) exists provided
\begin{equation}
ad\tau_q^{2}=a\bigg(1-\frac{\tau_q^{2}}{\tau^{2}}\bigg)
<1,\quad \text{i.e.}\quad\tau_q^{2}>\frac{(a-1)}{a}\tau^{2}%
.\label{eq:condition_upper}%
\end{equation}
If this inequality holds, the moment generating function of the $\chi^{2}$-distribution exists and we have
\begin{equation*}
	E^X\left[\exp\left\{\frac{ad}{2}(X-\widehat{x})^{2}\right\}  \right] = \left(1 - ad\tau_q^2 \right)^{-1/2}.
\end{equation*}
Finally we obtain
\begin{equation}
E^X\{\widetilde{w}(X,y)^{a}\}\leq\left\{1-a\bigg(
1-\frac{\tau_q^{2}}{\tau^{2}}\bigg)\right\}^{-1/2}=\left\{
\frac{a\tau_q^{2}-(a-1)\tau^{2}}{\tau^{2}}\right\}^{-1/2}.\label{eq:upper_bnd}%
\end{equation}
as required.
\end{proof}
Note that by the upper bound in Proposition \ref{gaussian_upper} still depends on parameters via the variance term $\tau$. However, since the dependence is continuous we can find an upper bound over any compact set. Thus, we have the simple corollary.

\begin{corollary}\label{cor:gauss_uniform_upper}
Under the conditions of Proposition \ref{gaussian_upper} there exists a constant $K_1 < \infty$ such that
\begin{equation*}
	\sup_{\theta \in B(\overline{\theta})} E^{X}\{\widetilde{w}(X,y,\theta)^{a}\} \leq K_1
\end{equation*}
independent of $y$.
\end{corollary}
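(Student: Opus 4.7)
The plan is to observe that Proposition \ref{gaussian_upper} already delivers, for each fixed $\theta = (\beta, \tau) \in B(\bar\theta)$, the explicit bound
\[
E^{X}\{\widetilde{w}(X,y,\theta)^{a}\}\leq\left\{\frac{a\tau_q^2(\theta)-(a-1)\tau^{2}}{\tau^{2}}\right\}^{-1/2},
\]
which crucially does not depend on $y$. The corollary then reduces to controlling the right-hand side uniformly for $\theta$ ranging over the compact ball $B(\bar\theta)$.

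First I would note that under the hypotheses of Proposition \ref{gaussian_upper} the moment condition $\tau_q^2(\theta) > (a-1)\tau^2/a$ is assumed to hold at every $\theta$ of interest; by continuity of the maps $\theta\mapsto\tau$ and $\theta\mapsto\tau_q(\theta)$ and by shrinking the radius $\varepsilon$ of $B(\bar\theta)$ if necessary, we may assume $a\tau_q^2(\theta) - (a-1)\tau^2 \geq \delta > 0$ uniformly on $B(\bar\theta)$, and that $\tau^2$ stays bounded away from $0$ and $\infty$ on $B(\bar\theta)$.

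Next I would invoke continuity: the function
\[
\theta \mapsto \left\{\frac{a\tau_q^2(\theta)-(a-1)\tau^{2}}{\tau^{2}}\right\}^{-1/2}
\]
is continuous on $B(\bar\theta)$ under the uniform lower bound above, and $B(\bar\theta)$ is a closed ball in $\mathbb{R}^d$, hence compact. Therefore the supremum is attained at some $\theta^\star \in B(\bar\theta)$ and is a finite constant $K_1 < \infty$. Since the bound of Proposition \ref{gaussian_upper} holds uniformly in $y$, taking the supremum over $\theta$ on the left and the supremum over $\theta$ on the right gives the desired inequality
\[
\sup_{\theta\in B(\bar\theta)} E^{X}\{\widetilde{w}(X,y,\theta)^{a}\} \leq K_1,
\]
independently of $y$.

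There is no real obstacle here; the only thing one must be careful about is ensuring that the moment condition \eqref{eq:condition_upper} holds uniformly on the ball, which follows from continuity of $\tau$ and $\tau_q$ combined with compactness — precisely the mechanism that already underlies the statement of Assumption \ref{ass4} and its use in the preceding results.
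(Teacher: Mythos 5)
Your argument is correct and essentially identical to the paper's: the bound from Proposition \ref{gaussian_upper} is independent of $y$ and depends on $\theta$ only through $\tau$, so continuity plus compactness of $B(\bar{\theta})$ gives a finite uniform constant. Your extra care about the moment condition $\tau_q^2 > (a-1)\tau^2/a$ holding uniformly on the ball (shrinking $\varepsilon$ if needed) is a sensible point the paper leaves implicit, but it does not change the route.
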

We can summarize the results so far in the following theorem.
\begin{theorem}\label{theorem:7}
Consider the random effects model \eqref{eq:exp_model} and assume we have an importance sampling estimator with proposal distribution
\begin{equation*}
	q(x \mid y) = \varphi(x; \widehat{x}, \tau_q^2)
\end{equation*}
and proposal variance $\tau_q^2>\frac{(a-1)}{a}\tau^{2}$. Assume additionally that either
\begin{itemize}
\item[$i)$]
$\sup_x \widetilde{A}'(x) < \infty$ or
\item[$ii)$]
$E^Y(Y^a) < \infty$ and $\sup_{\theta\in B(\bar{\theta})} \widetilde{A}'(0) < \infty$.
\end{itemize}
Then
\begin{equation*}
E^Y\left[\sup_{\theta \in B(\overline{\theta})} E^X\left\{\overline{w}(Y, X, \theta)^a \right\}\right] < \infty.
\end{equation*}
\end{theorem}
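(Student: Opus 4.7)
The plan is to combine the two preceding corollaries by means of the modified-weight identity already derived below equation \eqref{mod_weight_expect}. First, I would note that since the ratio $\widetilde{w}(x,y,\theta)/w(x,y,\theta)= q(\widehat{x}\mid y)/h(\widehat{x};y)$ depends only on $(y,\theta)$, the common factor cancels in the normalization $w(x,y,\theta)/p(y\mid\theta)$, giving the pointwise identity
\[
E^{X\mid Y}\{\overline{w}(Y, X, \theta)^{a}\} \;=\; \frac{E^{X\mid Y}\{\widetilde{w}(X, Y, \theta)^{a}\}}{E^{X\mid Y}\{\widetilde{w}(X, Y, \theta)\}^{a}}.
\]

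Next, I would bound the supremum of the ratio by the product of suprema:
\[
\sup_{\theta\in B(\bar{\theta})} \frac{E^{X\mid Y}\{\widetilde{w}^{a}\}}{E^{X\mid Y}\{\widetilde{w}\}^{a}} \;\leq\; \Bigl\{\sup_{\theta\in B(\bar{\theta})} E^{X\mid Y}\{\widetilde{w}^{a}\}\Bigr\}\cdot\Bigl\{\sup_{\theta\in B(\bar{\theta})} E^{X\mid Y}\{\widetilde{w}\}^{-a}\Bigr\}.
\]
The condition $\tau_q^{2}>(a-1)\tau^{2}/a$ is precisely what is required to apply Corollary \ref{cor:gauss_uniform_upper}, which produces a deterministic constant $K_{1}$, independent of $Y$, bounding the first factor uniformly in $Y$ and $\theta\in B(\bar{\theta})$. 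Either assumption (i) or (ii) is tailored to Corollary \ref{cor:upper_exp_inv}, which yields
\[
E^{Y}\Bigl[\sup_{\theta\in B(\bar{\theta})} E^{X\mid Y}\{\widetilde{w}(X,Y,\theta)\}^{-a}\Bigr] \;<\; \infty.
\]
Taking $E^{Y}$ of the pointwise bound and inserting these two estimates then gives the conclusion.

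The main analytic work has already been done in the preceding lemmas: the concavity of $\log h(\cdot;y)$ underlies the numerator bound in Proposition \ref{gaussian_upper}, while a Birnbaum-type lower bound together with the first-order condition forcing $\widetilde{A}'(\widehat{x})\leq\max\{\widetilde{A}'(0),J\bar{y}\}$ controls the denominator via Lemma \ref{lem:inv_upper_bound}. Consequently, there is no real obstacle here; the only point that merits explicit verification is that $K_{1}$ is genuinely $Y$-free, which is immediate from the closed-form bound $\{(a\tau_q^{2}-(a-1)\tau^{2})/\tau^{2}\}^{-1/2}$ of Proposition \ref{gaussian_upper} and the continuity of $\tau$ over the compact set $B(\bar{\theta})$.
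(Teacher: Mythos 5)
Your proposal is correct and follows essentially the same route as the paper's own proof: the identity $E^{X\mid Y}\{\overline{w}^{a}\}=E^{X\mid Y}\{\widetilde{w}^{a}\}/E^{X\mid Y}\{\widetilde{w}\}^{a}$ from \eqref{mod_weight_expect}, the $Y$-independent uniform bound $K_1$ from Corollary \ref{cor:gauss_uniform_upper} for the numerator, and Corollary \ref{cor:upper_exp_inv} for the integrability of the denominator term. The only cosmetic difference is that you spell out the intermediate step of bounding the supremum of the ratio by a product of suprema, which the paper leaves implicit.
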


\begin{proof}
We have
\begin{align*}
	E^Y\left[\sup_{\theta \in B(\overline{\theta})} E^X\left\{\overline{w}(Y, X, \theta)^a \right\}\right] & = E^{Y}\left[\sup_{\theta\in B(\overline{\theta})} \frac{E^{X}\left\{\widetilde{w}(X,Y, \theta)^{a}\right\}}{E^{X}\left\{\widetilde{w}(X,Y, \theta)\right\}^{a}}\right] \\
	& \leq E^{Y}\left[\sup_{\theta\in B(\overline{\theta})} \frac{K_1}{E^{X}\left\{\widetilde{w}(X,Y, \theta)\right\}^{a}}\right] < \infty.
\end{align*}
where the first inequality is by Corollary \ref{cor:gauss_uniform_upper} and the second by Corollary \ref{cor:upper_exp_inv}.
\end{proof}

For the logistic model of Section 7, $\widetilde{A}^{\prime}(x)$ is bounded above by a constant. Indeed
\begin{equation*}
	\widetilde{A}'(x) = \sum_{j=1}^J A'(c_j^\T\beta + x) = \sum_{j=1}^J \frac{e^{c_j^{\T}\beta + x}}{1 + e^{c_j^{\T}\beta + x}}.
\end{equation*}
Hence, we know (see Remark \ref{cor:inverse_gauss}) that
\begin{equation*}
E^Y\left[\sup_{\theta\in B(\overline{\theta})} E^{X\mid Y}\left\{\overline{w}(X,Y)^{a}\right\}\right] < \infty
\end{equation*}
for all $a$ if we take, for example, $\tau_q^{2}=\tau^{2}$. We note, however, that the proposal
may not be particularly efficient as the proposal variance would ideally be made to be proportional to $1/J$, where $J$ represents the number of observations associated with each latent variate. Hence, taking $\tau_q^{2}=\tau^{2}$, for example, may be much too large as a choice for $\tau_q^{2}$. This naturally leads to consideration of the $t$-distribution
which has heavier tails, see for example \cite[Chapter 9]{mcbook} and so
controls the numerator term. We consider the $t$-distribution proposal centred at the mode, with scaling
$\tau_q^2$, so that $q(x\mid y)=t_{\nu}(x\mid\widehat{x},\tau_q^{2})$. For the $t$-proposal, we have
\begin{equation}\label{eq:t_prop_def}
\widetilde{q}(x; y) = \bigg\{1+\frac{(x-\widehat{x})^{2}}{\nu\tau_q^{2}}\bigg\}^{-(\nu+1)/2}, \quad q(\widehat{x} \mid y)=\frac{\sqrt{\nu\pi}\,\Gamma\left(\nu/2\right)  \tau_q}{\Gamma\left\{(\nu+1)/2\right\}}.
\end{equation}
We proceed in the same manner as in the Gaussian case. First we compute the bound from Lemma \ref{lem:inv_upper_bound} for the $t$-distribution.
Assume the proposal is a $t$-distribution centred at the mode $q(x \mid y, \theta) = t_{\nu}(x\mid\widehat{x},\tau_q^{2})$, then
\begin{equation*}
C = \frac{\tau}{\tau_q}\sqrt{\frac{2}{\nu}}\frac{\Gamma\left(
\frac{\nu+1}{2}\right)  }{\Gamma\left(  \frac{\nu}{2}\right)  }
\end{equation*}
and thus
\begin{equation*}
	\frac{1}{E^X\left\{\widetilde{w}(X,y)\right\}} \leq \frac{\tau_q}{\tau}\sqrt{\frac{\nu}{2}}\frac{\Gamma\left(  \frac{\nu}{2}\right)  }{\Gamma\left(
\frac{\nu+1}{2}\right)  }(b +1).
\end{equation*}

\begin{proposition}
\label{prop:t_upper} For the target $h(x;y)$ of (\ref{eq:pi_exp_uni}) with
$q(x\mid y, \theta)=t_{\nu}(x\mid \widehat{x},\tau_q^{2})$ specified above we shall
assume that the function $x \mapsto A(x)$ is a monotonically non-decreasing convex function. Then,
\[
E^{X\mid Y}\left\{\widetilde{w}(X,Y)^{a}\right\}  \leq K_{2}^{a},
\]
where
\[
K_{2}=\left\{\frac{\tau^{2}}{\tau_q^{2}}\frac{(\nu+1)}{\nu}\right\}
^{\frac{(\nu+1)}{2}}\exp\left\{\frac{\nu}{2}\bigg(\frac{\tau_q^{2}}{\tau^{2}}-1-\frac{1}{\nu}\bigg)  \right\},
\]
for $\tau_{q}^{2}<\frac{(\nu+1)}{\nu}\tau^{2}$ and $K_{2}=1$ for $\tau_{q}^{2}\geq\frac{(\nu+1)}{\nu}\tau^{2}$.
\end{proposition}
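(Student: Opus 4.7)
The plan is to mimic the Gaussian case in Proposition \ref{gaussian_upper} but, rather than compute a moment generating function (which need not exist for the $t$-distribution), to establish a pointwise bound on $\widetilde{w}(x,y)$ that is uniform in $x$ and $y$. First, I would expand
\[
\log\widetilde{w}(x,y)=\log h(x;y)-\log h(\widehat{x};y)-\log\widetilde{q}(x;y)
\]
using (\ref{eq:pi_exp_uni}) and (\ref{eq:t_prop_def}). Since $x\mapsto \widetilde{A}(x)=\sum_{j=1}^{J}A(c_j^{\T}\beta+x)$ is convex by hypothesis, the chord inequality gives $\widetilde{A}(\widehat{x})-\widetilde{A}(x)\leq -\widetilde{A}'(\widehat{x})(x-\widehat{x})$. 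Substituting and using the first-order condition $\widehat{x}/\tau^{2}=S-\widetilde{A}'(\widehat{x})$ to cancel the linear terms in $(x-\widehat{x})$, the cross-term $-(x^{2}-\widehat{x}^{2})/(2\tau^{2})$ combines with $(x-\widehat{x})\widehat{x}/\tau^{2}$ to yield the perfect square $-(x-\widehat{x})^{2}/(2\tau^{2})$. This gives
\[
\log\widetilde{w}(x,y)\leq -\frac{(x-\widehat{x})^{2}}{2\tau^{2}}+\frac{\nu+1}{2}\log\left\{1+\frac{(x-\widehat{x})^{2}}{\nu\tau_{q}^{2}}\right\},
\]
an estimate that depends on $x$ only through $v=(x-\widehat{x})^{2}\geq 0$ and is independent of $y$.

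Next I would maximize the right-hand side over $v\geq 0$. Denoting the bound by $\log g(v)$, differentiation gives the stationary point $v^{*}=(\nu+1)\tau^{2}-\nu\tau_{q}^{2}$. When $\tau_{q}^{2}\geq \frac{\nu+1}{\nu}\tau^{2}$ we have $v^{*}\leq 0$, the function $g$ is non-increasing on $[0,\infty)$, and the maximum is attained at $v=0$ with value $g(0)=1$, producing $K_{2}=1$. When $\tau_{q}^{2}<\frac{\nu+1}{\nu}\tau^{2}$, the stationary point is an interior maximum (easily checked by the second derivative or by the shape of $g$ at the endpoints), and direct substitution of $v^{*}$ gives
\[
g(v^{*})=\left\{\frac{\tau^{2}}{\tau_{q}^{2}}\frac{\nu+1}{\nu}\right\}^{(\nu+1)/2}\exp\!\left\{\frac{\nu}{2}\!\left(\frac{\tau_{q}^{2}}{\tau^{2}}-1-\frac{1}{\nu}\right)\right\},
\]
which is exactly $K_{2}$.

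Having shown $\widetilde{w}(x,y)\leq K_{2}$ for every $x$ and $y$, the bound $E^{X\mid Y}\{\widetilde{w}(X,Y)^{a}\}\leq K_{2}^{a}$ follows immediately for any $a>0$ by raising both sides to the power $a$ and taking expectations. I expect no substantive obstacle: the only delicate point is the algebraic bookkeeping when rewriting the exponent at $v^{*}$ to match the stated form of $K_{2}$, and checking that $v^{*}$ is indeed the global maximum on $[0,\infty)$ (which is automatic since $\log g$ is strictly concave in $v$ when $\tau_{q}^{2}<\frac{\nu+1}{\nu}\tau^{2}$, as its second derivative equals $-(\nu+1)/\{2(\nu\tau_{q}^{2}+v)^{2}\}$).
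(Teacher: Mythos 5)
Your proposal is correct and follows essentially the same route as the paper's proof: the chord inequality for the convex $\widetilde{A}$ plus the first-order condition for $\widehat{x}$ yield the pointwise bound $\log\widetilde{w}(x,y)\leq -(x-\widehat{x})^{2}/(2\tau^{2})+\tfrac{\nu+1}{2}\log\{1+(x-\widehat{x})^{2}/(\nu\tau_{q}^{2})\}$, which is then maximized over $(x-\widehat{x})^{2}$ with the same stationary point and case split on the sign of $(\nu+1)\tau^{2}-\nu\tau_{q}^{2}$. Your explicit concavity check and the final step from the uniform pointwise bound to $E^{X\mid Y}\{\widetilde{w}(X,Y)^{a}\}\leq K_{2}^{a}$ are exactly what the paper's argument implicitly relies on.
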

Unlike the Gaussian proposal above, the $t$-distributed proposal does not have any restriction on how small
the variance $\tau_q^{2}$ can be. This might be chosen, for
example, according to the second derivative of $\log h(x;y)$ at $0$ so that
$\tau_q^{-2}=\tau^{-2}+\widetilde{A}^{\prime\prime}(0)$. This would
reflect the influence of a large number of repeated observations, $J$.

\begin{proof}[Proof of Proposition \ref{prop:t_upper}]
Recall that $x \mapsto A(x)$ is convex and thus always dominates its chord
\begin{equation*}
	\widetilde{A}(x) \geq \widetilde{A}(\widehat{x}) + \widetilde{A}'(\widehat{x})(x-\widehat{x})
\end{equation*}
for any values $x, \widehat{x}$. For the modified $\log$ weight this yields
\begin{align*}
\log\widetilde{w}(x,y) &  =\log h(x; y) - \log h(\widehat{x}; y) - \log\widetilde{q}(x;y)\\
&  =xS-\widetilde{A}(x)-\frac{1}{2}\frac{x^{2}}{\tau^{2}}\\
&  \quad -\widehat{x}S+\widetilde{A}(\widehat{x})+\frac{1}{2}\frac{\widehat{x}^{2}}{\tau^{2}}+\frac{(\nu+1)}{2}\log\bigg\{  1+\frac{(x-\widehat{x})^{2}}{\nu\tau_q^{2}}\bigg\}  \\
&  \leq\frac{1}{2}\frac{\widehat{x}^{2}}{\tau^{2}}+\{S-\widetilde{A}%
^{\prime}(\widehat{x})\}(x-\widehat{x})-\frac{1}{2}\frac{x^{2}}%
{\tau^{2}} +\frac{(\nu+1)}{2}\log\bigg\{1+\frac{(x-\widehat{x})^{2}}{\nu\tau_q^{2}}\bigg\}.
\end{align*}
We recall that ${\widehat{x}}/{\tau^2}=S-\widetilde{A}^{\prime}(\widehat{x})$. Hence
\begin{align*}
\log\widetilde{w}(x,y) &  \leq -\frac{(x - \widehat{x})^{2}}{2\tau^{2}} + \frac{(\nu+1)}{2}\log\bigg\{1+\frac{(x-\widehat{x})^{2}}{\nu\tau_q^{2}}\bigg\}.
\end{align*}
Writing $\widetilde{x}=(x-\widehat{x})/\tau_q$ we obtain
\[
\log\widetilde{w}(x,y)\leq-\frac{1}{2}\frac{\tau_q^{2}}%
{\tau^{2}}\widetilde{x}^{2}+\frac{(\nu+1)}{2}\log\bigg(1+\frac{\widetilde{x}^{2}}{\nu}\bigg).
\]
The resulting symmetric function can be verified to be maximized at $\widetilde{x}^{2}%
=(\nu+1)\tau^{2}/\tau_q^{2}-\nu$, provided this expression is positive, otherwise the only maximising root is at
$\widetilde{x}=0$ and so $\log\widetilde{w}(x,y) \leq 0$. If the expression is positive we obtain an upper bound
\begin{align*}
\log\widetilde{w}(x,y) &  \leq-\frac{1}{2}\left\{  (\nu+1)-\nu
\frac{\tau_q^{2}}{\tau^{2}}\right\}  +\frac{(\nu+1)}{2}\log\left\{
\frac{(\nu +1)}{\nu}\frac{\tau^{2}}{\tau_q^{2}}\right\}.  \\
\end{align*}

\end{proof}

\begin{corollary}\label{cor:t_uniform_bound}
Under the conditions of Proposition \ref{prop:t_upper} there exists a constant $K_3 < \infty$ such that
\begin{equation*}
	\sup_{\theta \in B(\overline{\theta})} E^{X}\{\widetilde{w}(X,y)^{a}\} \leq K_3
\end{equation*}
independent of $y$.
\end{corollary}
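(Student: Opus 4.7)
The plan is to promote the pointwise bound from Proposition \ref{prop:t_upper} to a uniform one over $B(\bar{\theta})$ by exploiting continuity of the bound in $\theta$ together with compactness of $B(\bar{\theta})$. The key observation is that the bound $K_2$ in Proposition \ref{prop:t_upper} does not depend on $y$, so the only obstacle to uniformity is the dependence on $\theta$.

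First, I would apply Proposition \ref{prop:t_upper} for each fixed $\theta \in B(\bar{\theta})$ to obtain
\[
E^{X}\{\widetilde{w}(X,y,\theta)^{a}\} \leq K_{2}(\theta)^{a},
\]
where
\[
K_{2}(\theta) = \max\left\{1, \left\{\frac{\tau^{2}}{\tau_q^{2}}\frac{\nu+1}{\nu}\right\}^{(\nu+1)/2} \exp\left\{\frac{\nu}{2}\left(\frac{\tau_q^{2}}{\tau^{2}}-1-\frac{1}{\nu}\right)\right\}\right\},
\]
and where I treat $\tau_q$ as either fixed or a continuous function of $\theta$ (e.g.\ the choice $\tau_q^{-2}=\tau^{-2}+\widetilde{A}''(0)$ mentioned after Proposition \ref{prop:t_upper} is continuous in $\theta$ provided $\widetilde{A}''$ is continuous in $\beta$, which holds in the Binomial and Poisson cases).

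Next I would observe that $\theta \mapsto K_{2}(\theta)$ is continuous on $B(\bar{\theta})$: the map depends on $\theta$ only through $\tau$ (and possibly $\tau_q$), both of which are continuous, and the combining functions (power and exponential) are continuous on the strictly positive half-line. Since $\tau(\bar{\theta}) > 0$, by shrinking the radius $\varepsilon$ of $B(\bar{\theta})$ if necessary we may assume $\tau$ is bounded away from zero on the closed ball, so $K_{2}$ is finite and continuous on $B(\bar{\theta})$.

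Finally, since $B(\bar{\theta})$ is compact and $K_{2}$ is continuous and positive, $K_{2}$ attains a finite maximum on $B(\bar{\theta})$; setting $K_{3} = \sup_{\theta \in B(\bar{\theta})} K_{2}(\theta)^{a} < \infty$ yields the claim. The conclusion holds uniformly in $y$ because the bound $K_{2}(\theta)$ derived in Proposition \ref{prop:t_upper} was already independent of $y$. There is no substantive obstacle here: the argument is essentially a compactness-plus-continuity routine, and the analytic work has already been done in the proof of Proposition \ref{prop:t_upper}.
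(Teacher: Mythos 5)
Your proof is correct and follows essentially the same route as the paper: the bound from Proposition \ref{prop:t_upper} depends on $\theta$ only through $\tau$ (it is already independent of $y$), is continuous in $\tau$, and hence attains a finite supremum over the compact ball $B(\bar{\theta})$. The extra care you take with the max-form of $K_2$, with $\tau$ bounded away from zero, and with a possibly $\theta$-dependent $\tau_q$ is sensible but not needed beyond what the paper's one-line argument already asserts.
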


\begin{proof}
The constant in Proposition $\ref{prop:t_upper}$ depends on $\theta$ only through $\tau$. Moreover, the upper bound in Proposition $\ref{prop:t_upper}$ is continuous in $\tau$ and thus can be bounded over the compact set $B(\overline{\theta})$.
\end{proof}

We can summarize the results regarding the $t$-distribution in the following theorem.

\begin{theorem}\label{theorem:8}
Consider the random effects model \eqref{eq:exp_model} and assume we have an importance sampling estimator with proposal distribution
\begin{equation*}
	q(x \mid y) = t_{\nu}(x\mid\widehat{x},\tau_q^{2})
\end{equation*}
with $\tau_q^2>0$. Assume additionally that either
\begin{itemize}
\item[$i)$]
$\sup_x \widetilde{A}'(x) < \infty$ or
\item[$ii)$]
$E(Y^a) < \infty$ and $\sup_{\theta\in B(\bar{\theta})} \widetilde{A}'(0) < \infty$.
\end{itemize}
Then
\begin{equation*}
E^Y\left[\sup_{\theta \in B(\overline{\theta})} E^{X\mid Y}\left\{\overline{w}(Y, X, \theta)^a \right\}\right] < \infty.
\end{equation*}
\end{theorem}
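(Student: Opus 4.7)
The plan is to mirror the proof of Theorem~\ref{theorem:7} almost verbatim, substituting the numerator bound from the Gaussian case with the corresponding $t$-proposal bound established in Corollary~\ref{cor:t_uniform_bound}. First I would rewrite the quantity of interest using the modified weights $\widetilde{w}$ exactly as in~\eqref{mod_weight_expect}:
\[
E^Y\left[\sup_{\theta \in B(\overline{\theta})} E^{X\mid Y}\left\{\overline{w}(Y, X, \theta)^a \right\}\right] = E^{Y}\left[\sup_{\theta\in B(\overline{\theta})} \frac{E^{X\mid Y}\left\{\widetilde{w}(X,Y, \theta)^{a}\right\}}{E^{X\mid Y}\left\{\widetilde{w}(X,Y, \theta)\right\}^{a}}\right].
\]

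For the numerator, Corollary~\ref{cor:t_uniform_bound} supplies a constant $K_3<\infty$ with $\sup_{\theta\in B(\bar{\theta})} E^{X\mid Y}\{\widetilde{w}(X,y,\theta)^a\}\leq K_3$, uniformly in $y$. Crucially, Proposition~\ref{prop:t_upper} places no lower bound on $\tau_q^2$: unlike in the Gaussian case, the heavier tails of the $t$-proposal dominate the Gaussian target for every scale, so the only structural hypothesis needed is that $\widetilde{A}$ be monotone and convex, which is automatic for the exponential families under consideration.

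For the denominator I would reuse Corollary~\ref{cor:upper_exp_inv} unchanged. This corollary depends on the proposal only through the quantity $C=q(\widehat{x}\mid y)(2\pi\tau^2)^{1/2}$ coming from Lemma~\ref{lem:inv_upper_bound}; for the $t$-proposal, $q(\widehat{x}\mid y)$ is the explicit normalising constant in~\eqref{eq:t_prop_def}, which depends continuously on $\theta$ through $(\nu,\tau_q,\tau)$ and is therefore bounded away from zero uniformly on the compact set $B(\bar{\theta})$. Consequently, under either condition $(i)$ or $(ii)$, Corollary~\ref{cor:upper_exp_inv} yields
\[
E^Y\left[\sup_{\theta\in B(\overline{\theta})} \frac{1}{E^{X\mid Y}\{\widetilde{w}(X,Y,\theta)\}^a}\right] < \infty.
\]

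Combining the two bounds gives the claim immediately. The proof is therefore essentially routine once Corollaries~\ref{cor:upper_exp_inv} and~\ref{cor:t_uniform_bound} are in hand; the only point requiring care is verifying that Lemma~\ref{lem:inv_upper_bound} and its corollary are genuinely proposal-agnostic. A quick inspection confirms this: the argument there rests only on the convexity of $\widetilde{A}$, the first-order condition $\widehat{x}/\tau^2 = S-\widetilde{A}'(\widehat{x})$, and Birnbaum's tail inequality applied to the Gaussian \emph{prior} on $X$---the proposal enters only through the constant $C$, which is explicit for the $t$-distribution. Hence the anticipated obstacle is minor and the statement follows.
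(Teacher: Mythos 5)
Your proof is correct and follows essentially the same route as the paper's: rewrite the expectation in terms of the modified weights as in \eqref{mod_weight_expect}, bound the numerator uniformly by the constant $K_3$ from Corollary \ref{cor:t_uniform_bound}, and control the denominator via Corollary \ref{cor:upper_exp_inv}, which (as the paper notes when stating Lemma \ref{lem:inv_upper_bound}) is proposal-agnostic and enters the $t$-case only through the explicit normalising constant $q(\widehat{x}\mid y)$ in \eqref{eq:t_prop_def}. Your additional check that this constant is bounded away from zero on $B(\bar{\theta})$ is a sensible piece of diligence the paper leaves implicit.
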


\begin{proof}
We can bound
\begin{align*}
	E^{Y}\left[\sup_{\theta \in B(\overline{\theta})} E^{X\mid Y}\left\{\overline{w}(Y, X, \theta)^a \right\}\right] & = E^{Y}\left[\sup_{\theta\in B(\overline{\theta})} \frac{E^{X\mid Y}\left\{\widetilde{w}(X,Y, \theta)^{a}\right\}}{E^{X\mid Y}\left\{\widetilde{w}(X,Y, \theta)\right\}^{a}}\right] \\
	& \leq E^{Y}\left[\sup_{\theta\in B(\overline{\theta})} \frac{K_3}{E^{X\mid Y}\left\{\widetilde{w}(X,Y, \theta)\right\}^{a}}\right] < \infty.
\end{align*}
where the first inequality is by Corollary \ref{cor:t_uniform_bound} and the second by Corollary \ref{cor:upper_exp_inv}.
\end{proof}

Theorem \ref{theorem:7} and Theorem \ref{theorem:8} provide simple and verifiable conditions for Assumption \ref{ass4} to hold in the case of generalized linear mixed models when using a Gaussian proposal or a $t$-distribution. We have established these conditions by formulating assumptions on the models and the proposal.
The assumptions that are required for the model are fulfilled in the Binomial and Poisson cases as pointed out in Remark \ref{cor:inverse_gauss}.
Gaussian proposals require that the variance is large enough, namely
$$\tau_q^2>\frac{1+\Delta}{2+\Delta}\tau^{2},$$
where $0 < \Delta < 1$ corresponds to the quantity in Assumption \ref{ass4}. When one proposes from a $t$-distribution instead, no such restriction is required.

\renewcommand{\thesection}{S4}
\section{Further Simulation studies}\label{sec:furthersim}

\subsection{Toy example}
We consider first a simple Gaussian latent variable model where
\begin{align*}
X_{t}\sim\mathcal{N}(\theta,1) & ,\qquad Y_{t}\mid X_{t}=x\sim\text{\ensuremath{\mathcal{N}}}(x,1).
\end{align*}
Here $X_t, (t=1, \ldots, T)$ are assumed to be independent. In this case, the likelihood associated to $T$ observations can be
computed exactly as $p(y_{1:T}\mid\theta)=\prod_{t=1}^{T}\varphi(y_{t};\theta,2)$.
This makes it an easy example to examine Assumption \ref{ass1}. The maximum likelihood estimator and Fisher information are given by
\begin{align*}
	\hat{\theta}_{T}^{\omega} & =\frac{1}{T}\sum_{t=1}^{T}Y_{t},\qquad I_{T}\left(\theta\right)=I_{T}=\frac{T}{2}.
\end{align*}
If we assign a zero mean Gaussian prior to $\theta$ of variance $\sigma_{0}^{2}$
then the posterior is also normal with mean $\mu_{\mathrm{post}}$
and variance $\sigma_{\mathrm{post}}^{2}$ given by
\begin{align*}
\mu_{\mathrm{post}} & =\bigg(\frac{1}{\sigma_{0}^{2}}+\frac{T}{2}\bigg)^{-1}\bigg(\frac{\sum_{t=1}^{T}Y_{t}}{2}\bigg),\qquad\sigma_{\mathrm{post}}^{2}=\bigg(\frac{1}{\sigma_{0}^{2}}+\frac{T}{2}\bigg)^{-1}.
\end{align*}
Assume the data are arising from the model with true parameter value
$\bar{\theta}$.  It follows readily from Pinsker's inequality that
the Bernstein-von Mises theorem holds for $\Sigma=2$ as we have as
$T\rightarrow\infty$
\begin{align*}
 & \int\left|\pi_{T}^{\omega}(\theta)-\varphi\left(\theta;\hat{\theta}_{T}^{\omega},I_{T}^{-1}\right)\right|\mathrm{d}\theta=\int\left|\varphi\left(\theta;\mu_{\mathrm{post}},\sigma_{\mathrm{post}}^{2}\right)-\varphi\left(\theta,\hat{\theta}_{T}^{\omega},\frac{2}{T}\right)\right|\mathrm{d}\theta\overset{\mathbb{P}^{Y}}{\longrightarrow}0.
\end{align*}
Hence this model fulfils Assumption 1.
To estimate the likelihood we simulate data from the model with $\bar{\theta}=0.5$ and use $\sigma_{0}^{2}=10^{10}$.
The likelihood is estimated using importance sampling%
\[
\hat{p}(y_{1:T}\mid\theta,U)=\prod_{t=1}^{T}\frac{1}{N}\sum_{i=1}^{N}\varphi(y_{t}-U_{t,i};\theta,1),\quad U_{t,i}\sim\mathcal{N}(0,1).
\]
In order to prove that Assumption 3 is fulfilled we show the stronger Assumption 4, i.e. for some $\Delta > 0$
\begin{align*}
	E\left[\sup_{\theta\in B(\bar{\theta})} E\left\{\overline{w}(y,U,\theta)^{2+\Delta}\right\}\right] & = E\left[\sup_{\theta\in B(\bar{\theta})} E\left\{\frac{\varphi(y-U;\theta,1)^{2+\Delta}}{\varphi(y; \theta, 2)^{2+\Delta}}\right\}\right] < \infty
\end{align*}
In a first step we compute for $a>0$
\begin{align*}
	E\left\{\frac{\varphi(y-U;\theta,1)^{a}}{\varphi(y; \theta, 2)^{a}}\right\} & = \left(\frac{2^{a-1}}{\pi}\right)^{1/2}\int_{-\infty}^{\infty}\exp\left(-\frac{a(y-x-\theta)^{2}}{2}+\frac{a(y-\theta)^{2}}{4}-\frac{x^{2}}{2}\right)\mathrm{d}x\\
	& = \left(\frac{2^{a-1}}{\pi}\right)^{1/2}\int_{-\infty}^{\infty}\exp\left(-\frac{2a(y-x-\theta)^{2}-a(y-\theta)^{2}+2x^{2}}{4}\right)\mathrm{d}x.
\end{align*}
Completing the square yields
\begin{align*}
	& 2a(y-x-\theta)^{2}-a(y-\theta)^{2}+2x^{2} \\
	& = 2(a+1)\left(x-\frac{a}{\left(a+1\right)}(y-\theta)\right)^{2}-\frac{a\left(a-1\right)}{a+1}(y-\theta)^{2}
\end{align*}
and
\begin{align*}
	E\left\{\frac{\varphi(y-U;\theta,1)^{a}}{\varphi(y; \theta, 2)^{a}}\right\} & = \left(\frac{2^a}{a+1}\right)^{1/2}\exp\left\{\frac{a\left(a-1\right)}{4(a+1)}(y-\theta)^{2} \right\}.
\end{align*}
We now consider
\[
\sup_{\theta\in B(\bar{\theta})}E\left\{\overline{w}(y, U, \theta)^{a}\right\}=\left(\frac{2^a}{a+1}\right)^{1/2}\exp\left\{  \frac{a\left(a-1\right)}{4(a+1)}\sup_{\theta\in B(\bar
{\theta})}\{(y-\theta)^{2}\}\right\}.
\]
Now let us write
\begin{align*}
(y-\theta)^{2}  & =(y-\bar{\theta}+\bar{\theta}-\theta)^{2}\\
& =(y-\bar{\theta})^{2}+2(y-\bar{\theta})(\bar{\theta}%
-\theta)+(\bar{\theta}-\theta)^{2}%
\end{align*}
and consider $\theta\in B(\overline{\theta})$ corresponding to  $\left\vert
\theta-\overline{\theta}\right\vert \leq\varepsilon$, where $\varepsilon>0$. It is clear then that $(y-\theta)^{2}$ is optimised over $B(\overline{\theta
})$ at either $\theta=\overline{\theta}+\varepsilon$ or $\theta=\overline{\theta
}+\varepsilon$. Let us denote $y_{D}=y-\overline{\theta}$ and $d=\overline
{\theta}-\theta$ for simplicity so that
\[
(y-\theta)^{2}=y_{D}^{2}+2y_{D}d+d^{2},
\]
Then we consider an upper
bound on this which is quadratic in $y_{D}$ as
\[
(1+\alpha)y_{D}^{2}+(1+\varepsilon^{2}),
\]
where we need to determine $\alpha$ to achieve bounding for all values
$\left\vert d\right\vert \leq\varepsilon$. By symmetry of the left-hand side,
we need only consider the supremum case $d=\varepsilon$ so that%
\[
(1+\alpha)y_{D}^{2}+(1+\varepsilon^{2})\geq y_{D}^{2}+2y_{D}\varepsilon
+\varepsilon^{2},
\]
in which case, examining the roots of the resulting quadratic in $y_{D}$, it
is required that $4\varepsilon^{2}-4\alpha\leq0$, so $\alpha\geq
\varepsilon^{2}$. Taking $\alpha=$ $\varepsilon^{2}$ and using
the bounding quadratic expression we obtain,
\begin{align*}
\sup_{\theta\in B(\overline{\theta})}E[\overline{w}(y, U, \theta)^{a}] &
=\left(  \frac{2^{a}}{a+1}\right)  ^{\frac{1}{2}}\exp\left\{  \frac
{a(a-1)}{4(a+1)}\sup_{\theta\in B(\overline{\theta})}\{(y-\theta
)^{2}\}\right\}  \\
&  \leq\left(  \frac{2^{a}}{a+1}\right)  ^{\frac{1}{2}}\exp\left\{
\frac{a(a-1)}{4(a+1)}(y-\overline{\theta})^{2}(1+\varepsilon^{2}%
)+\frac{a(a-1)}{4(a+1)}(1+\varepsilon^{2})\right\}  \\
&  =g(y;a).
\end{align*}
So finally it is required that
\[
E_{Y}\{g(y;a)\}=
\int\nolimits_{-\infty}^{\infty}
g(y;a)\varphi(y;\overline{\theta},2)\mathrm{d}y<\infty,
\]
for $a=2 + \Delta$ for some $\Delta>0$. The above integral is finite when%
\[
\frac{a(a-1)}{(a+1)}(1+\varepsilon^{2})<1.
\]
Hence with $a=\Delta+2$,
\[
\varepsilon^{2}<\frac{(3+\Delta)}{(2+\Delta)(1+\Delta)}-1,
\]
with the right-hand side always positive provided $\Delta<\sqrt{2}-1$. 

We apply the pseudo-marginal method to this model to demonstrate how our result can approximate its characteristics. For the Markov chain, we use a random walk proposal with variance equal to the inverse Fisher
information $I_{T}^{-1}$ scaled by $\ell=2$. For each $T$, we run
a pseudo-marginal chain for various $N$ to sample the posterior for
$250000$ iterations as well as the limit Markov chain of
kernel $\tilde{P}_{\ell,\sigma}$. In Table \ref{tab:comparison2}
we summarize the simulations results. As expected, we find that both
the average acceptance probability and the integrated autocorrelation
time for $f\left(\theta\right)=\theta$ of the pseudo-marginal algorithm
converge to those of the limiting Markov chain as $T$ increases.

\begin{table}
{\centering
\begin{tabular}{clrrrrr}
Data $T$ & Particles $N$ & $\hat{\sigma}$ & $\widehat{\textsc{iat}}$ & $\hat{\mathrm{pr}}_{\mathrm{acc}}$ & $\widehat{\textsc{iat}}\left(\tilde{P}_{\ell=2,\sigma=\hat{\sigma}}\right)$ & $\hat{\mathrm{pr}}_{\mathrm{acc}}\left(\tilde{P}_{\ell=2,\sigma=\hat{\sigma}}\right)$\tabularnewline
$T=20$  & 6  & $1$$\cdot$$70$ & $17$$\cdot$$55$ & $18$$\cdot$$69\%$ & $31$$\cdot$$25$ & $15$$\cdot$$32\%$\tabularnewline
				& 8  & $1$$\cdot$$44$ & $12$$\cdot$$34$ & $23$$\cdot$$14\%$ & $17$$\cdot$$62$ & $20$$\cdot$$27\%$\tabularnewline
				& 10 & $1$$\cdot$$24$ & $10$$\cdot$$76$ & $26$$\cdot$$34\%$ & $12$$\cdot$$44$ & $24$$\cdot$$25\%$\tabularnewline
				& 12 & $1$$\cdot$$12$ &  $8$$\cdot$$98$ & $28$$\cdot$$78\%$ & $10$$\cdot$$02$ & $27$$\cdot$$19\%$\tabularnewline
$T=30$  & 8  & $1$$\cdot$$83$ & $27$$\cdot$$70$ & $15$$\cdot$$41\%$ & $46$$\cdot$$57$ & $13$$\cdot$$17\%$\tabularnewline
				& 11 & $1$$\cdot$$47$ & $16$$\cdot$$32$ & $20$$\cdot$$24\%$ & $18$$\cdot$$64$ & $19$$\cdot$$61\%$\tabularnewline
				& 14 & $1$$\cdot$$30$ & $12$$\cdot$$04$ & $24$$\cdot$$03\%$ & $12$$\cdot$$74$ & $23$$\cdot$$29\%$\tabularnewline
				& 17 & $1$$\cdot$$16$ & $10$$\cdot$$85$ & $26$$\cdot$$68\%$ & $9$$\cdot$$91$ & $26$$\cdot$$09\%$\tabularnewline
$T=50$  & 20 & $1$$\cdot$$85$ & $30$$\cdot$$46$ & $13$$\cdot$$94\%$ & $41$$\cdot$$53$ & $13$$\cdot$$10\%$\tabularnewline
				& 30 & $1$$\cdot$$48$ & $18$$\cdot$$59$ & $19$$\cdot$$58\%$ & $17$$\cdot$$53$ & $19$$\cdot$$51\%$\tabularnewline
				& 40 & $1$$\cdot$$29$ & $13$$\cdot$$30$ & $23$$\cdot$$59\%$ & $11$$\cdot$$63$ & $23$$\cdot$$34\%$\tabularnewline
				& 50 & $1$$\cdot$$16$ & $10$$\cdot$$51$ & $26$$\cdot$$86\%$ & $9$$\cdot$$91$ & $26$$\cdot$$09\%$\tabularnewline
$T=100$ & 20 & $1$$\cdot$$86$ & $34$$\cdot$$64$ & $13$$\cdot$$01\%$ & $41$$\cdot$$04$ & $12$$\cdot$$81\%$\tabularnewline
				& 30 & $1$$\cdot$$51$ & $17$$\cdot$$98$ & $19$$\cdot$$15\%$ & $18$$\cdot$$73$ & $18$$\cdot$$93\%$\tabularnewline
				& 40 & $1$$\cdot$$32$ & $14$$\cdot$$56$ & $23$$\cdot$$15\%$ & $13$$\cdot$$59$ & $22$$\cdot$$99\%$\tabularnewline
				& 50 & $1$$\cdot$$16$ & $10$$\cdot$$51$ & $26$$\cdot$$33\%$ & $9$$\cdot$$91$  & $26$$\cdot$$09\%$\tabularnewline
$T=200$ & 80 & $1$$\cdot$$83$ & $38$$\cdot$$35$ & $13$$\cdot$$11\%$ & $46$$\cdot$$57$ & $13$$\cdot$$17\%$\tabularnewline
				& 120 & $1$$\cdot$$52$ & $20$$\cdot$$65$ & $18$$\cdot$$90\%$ & $20$$\cdot$$42$ & $18$$\cdot$$58\%$\tabularnewline
				& 160 & $1$$\cdot$$30$ & $13$$\cdot$$87$ & $22$$\cdot$$94\%$ & $12$$\cdot$$74$ & $23$$\cdot$$29\%$\tabularnewline
				& 200 & $1$$\cdot$$17$ & $11$$\cdot$$15$ & $26$$\cdot$$07\%$ & $9$$\cdot$$73$ & $26$$\cdot$$05\%$\tabularnewline
\end{tabular}
\caption{For $T$ data and $N$ particles: standard deviation $\hat{\sigma}$
of the log-likelihood estimator at $\bar{\theta}$, integrated autocorrelation
time $\hat{\tau}$ and average acceptance probability $\hat{p}_{\mathrm{acc}}$
for pseudo-marginal kernel with $\ell=2$ and limiting kernel $\tilde{P}_{\ell=2,\hat{\sigma}}$.}
\label{tab:comparison2}
}
\end{table}

\subsection{Stochastic Lotka-Volterra Model}
Assumption \ref{ass3} is difficult to verify in state space models.
To illustrate the applicability of our results beyond latent variable
models we investigate here a stochastic kinetic Lotka-Volterra model
arising in systems biology. Such models are used to describe interacting
species in a predator and prey setting. In particular we consider
the model with transition equations given by

\begin{align*}
\mathbb{P}\left(X_{1,t+h}-X_{1,t}=1,X_{2,t+h}-X_{2,t}=0\mid X_{1,t}=x_{1,t},X_{2,t}=x_{2,t}\right) & =\beta_{1}x_{1,t}+o(h)\\
\mathbb{P}\left(X_{1,t+h}-X_{1,t}=-1,X_{2,t+h}-X_{2,t}=1\mid X_{1,t}=x_{1,t},X_{2,t}=x_{2,t}\right) & =\beta_{2}x_{1,t}x_{2,t}+o(h)\\
\mathbb{P}\left(X_{1,t+h}-X_{1,t}=0,X_{2,t+h}-X_{2,t}=-1\mid X_{1,t}=x_{1,t},X_{2,t}=x_{2,t}\right) & =\beta_{3}x_{2,t}+o(h),
\end{align*}

where $X_{1,t}$ and $X_{2,t}$ denotes the number of preys and predators
at time $t\in[0,T]$. This model has been previously investigated,
for example in \autocite{andrieu:doucet:holenstein2009} and \autocite{wilkinson2012}.
We assume independent gamma priors for the kinetic rate parameter
vector $\beta=(\beta_{1},\beta_{2},\beta_{3})$ with
\[
\beta_{1}\sim\Gamma(5,5),\quad\beta_{2}\sim\Gamma(1{\cdot}5,10),\quad\beta_{3}\sim\Gamma(3{\cdot}5,5).
\]
In our simulations we assume we are only able to observe predator
and prey $X_{t}=(X_{1,t},X_{2,t})$ at discrete equidistant time points
with independent measurement error $Y_{i,t}=X_{i,t}+W_{i,t},\:i=1,2,\:t=0,\ldots,50$
where $W_{i,t}\sim\mathcal{N}(0,10^{2})$.
The artificial data have been generated using the Gillespie algorithm
\autocite{Gillespie1977} for the rate constants $\beta=(1,0$$\cdot$$005,0$$\cdot$$6)$.

In this context, it is difficult to develop standard MCMC algorithms
to sample the posterior distribution while the pseudo-marginal algorithm
can be easily applied as an unbiased estimate of the likelihood can
be computed using a bootstrap particle filter; see, e.g., \autocite{andrieu:doucet:holenstein2009}
and \autocite[Chapter 10]{wilkinson2012}. We use a multivariate Gaussian
random walk proposal with scaling factor $\ell=2$$\cdot$$17$ and covariance
matrix close to the posterior covariance, which we estimated in a
short preliminary run. This can efficiently implemented in R \autocite{R_lang} using the package \textbf{smfsb} \autocite{wilkinson2012} and the example code which can be found on the author's blog.

The algorithm is then run for $250000$ iterations. We collect
acceptance rate and computing time $\textsc{ct}(N)=\textsc{iat}(N)\cdot N$
for a range of particles $N$, see Table \ref{tab:lotka_volterra}.
In practice we do not choose $\sigma\left(\bar{\theta}\right),$ but
the number of particles, $N$, which is also displayed in Table \ref{tab:lotka_volterra}.
For comparison we also give an estimate of $\sigma\left(\bar{\theta}\right)$
for given $N$.

The computing time is optimized at $N=225$ for all rates, $\beta_{1}$,
$\beta_{2}$ and $\beta_{3}$. We estimate $\sigma\left(\bar{\theta}\right)$
to be $1$$\cdot$$44$, slightly above the results of Table \ref{tab:multi_dim}
suggesting $\sigma=1$$\cdot$$24.$ The corresponding acceptance rate of $18$$\cdot$$57\%$
is in accordance with the one suggested by our theory, which for parameter
dimension $d=3$ yields an asymptotically optimal rate of around $19$$\cdot$$30\%$ $(\ell=2{\cdot}17, \sigma=1{\cdot}24)$.
We conjecture that the deviation from the results obtained in the
limiting case are due to the fact that the posterior is not very concentrated
around $\bar{\theta}$.

\begin{table}
\centering
\label{tab:lotka_volterra}
{
\begin{tabular}{cccccc}
Particles $N$ & Acceptance Rate & $\textsc{ct}(\beta_{1})$ & $\textsc{ct}(\beta_{2})$ & $\textsc{ct}(\beta_{3})$ & $\text{\ensuremath{\hat{\sigma}}(\ensuremath{\bar{\theta}})}$\tabularnewline
100 & 8$\cdot$92\%  & 7375  & 9035  & 7564  & 2$\cdot$38 \tabularnewline
125 & 11$\cdot$17\%  & 6668  & 6717  & 6580  & 2$\cdot$10 \tabularnewline
150 & 13$\cdot$44\%  & 5805  & 5903  & 6208  & 1$\cdot$84 \tabularnewline
175 & 15$\cdot$62\%  & 5688  & 6137  & 6101  & 1$\cdot$68 \tabularnewline
200 & 17$\cdot$03\%  & 5564  & 5632  & 5744  & 1$\cdot$55 \tabularnewline
225 & 18$\cdot$57\%  & 5178  & 5452  & 5122  & 1$\cdot$44 \tabularnewline
250 & 19$\cdot$54\%  & 6107  & 6958  & 5831  & 1$\cdot$36 \tabularnewline
275 & 20$\cdot$82\%  & 5473  & 6087  & 5248  & 1$\cdot$30 \tabularnewline
300 & 21$\cdot$47\%  & 6436  & 6340  & 5959  & 1$\cdot$22 \tabularnewline
325 & 22$\cdot$41\%  & 5771  & 6586  & 6178  & 1$\cdot$19 \tabularnewline
350 & 23$\cdot$20\%  & 6406  & 6234  & 6393  & 1$\cdot$13 \tabularnewline
\end{tabular}
}
\caption{Comparison of the computing time for different numbers of particles
in the stochastic Lotka-Volterra model.}
\end{table}

\begin{figure}
\centering\includegraphics[width=\linewidth]{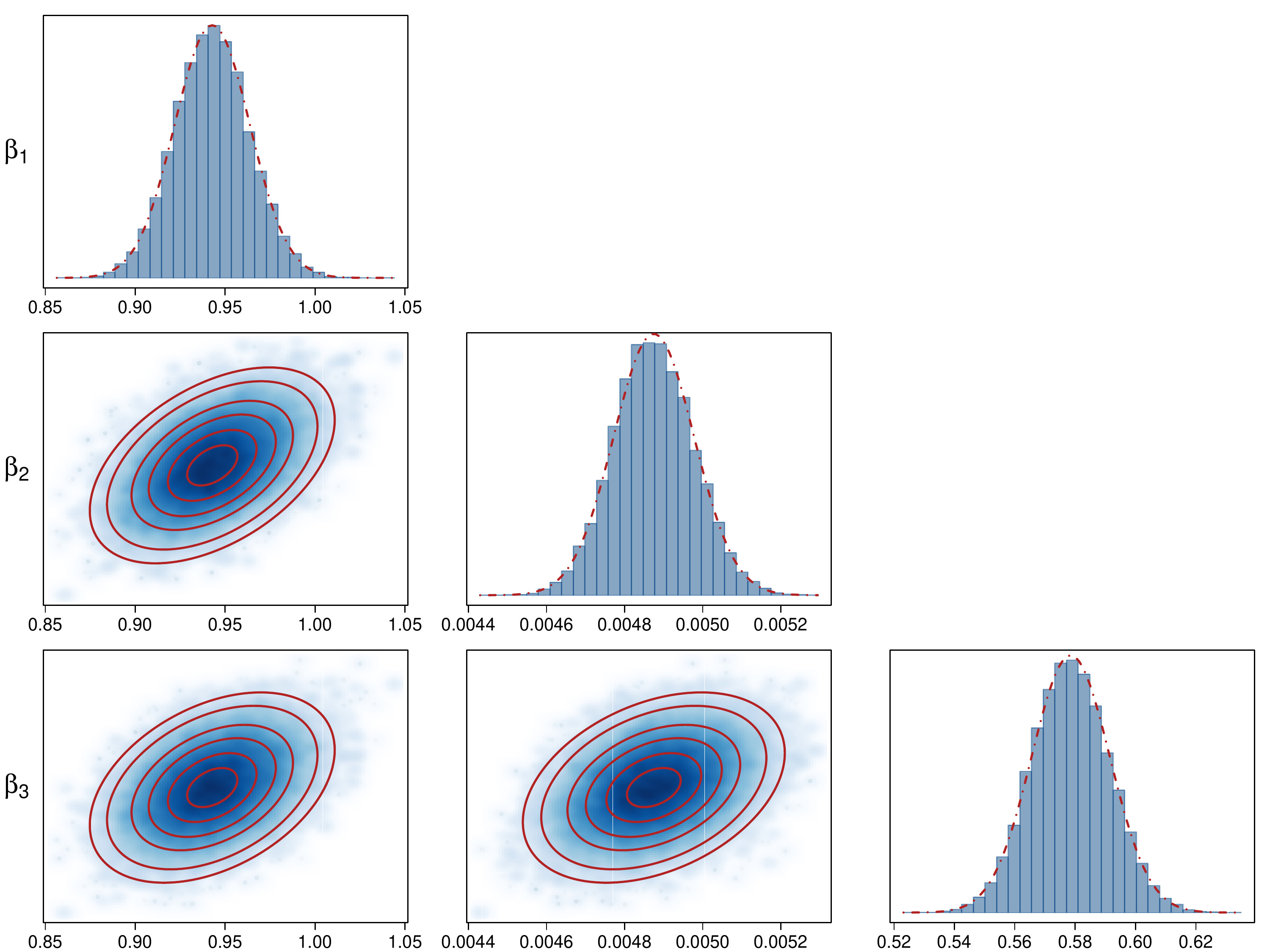}

\caption{Histogram of marginal posterior $p(\beta_{i}\mid y_{1:T}),\,i=1,2,3$
on the diagonal with Gaussian approximation (line) using sample mean
and variance. In addition, we show density estimates of the projections
to the plane. The ellipses indicate the contour lines of a Gaussian
with sample mean and sample covariance matrix. It is clear from the
plots that the posterior is very close to a Gaussian.}
\end{figure}

\textcite{Sherlock2015efficiency} carry out Bayesian inference for a 5-dimensional
stochastic Lotka-Volterra model using the pseudo-marginal
algorithm based on a data set with $T=50$ observations. The
authors optimize over a grid of values for both $\sigma$ and $\ell$.
Experimentally, it was found that the optimal standard deviation was
$\sigma\approx1$$\cdot$45 and the optimal tuning for the random walk achieved
at $\ell=2$$\cdot$048 with an associated optimal jumping rate of 15$\cdot$39\%.
This is slightly above our guidelines with the values $\hat{\sigma}_{\mathrm{opt}}=1$$\cdot$$30,\hat{\ell}_\mathrm{opt}=2$$\cdot$$17$
and $\pr_{\mathrm{acc}}(\hat{\sigma}_{\mathrm{opt}},\hat{\ell}_{\mathrm{opt}})=17$$\cdot$35\%
obtained in Table \ref{tab:multi_dim}.

\printbibliography[heading=subbibliography] 
\end{refsection}

\end{document}